\titleformat{\section}[block]{\large\scshape\centering}{§\,\thesection}{1em}{}[\HRule{3pt}] 
\titleformat{\subsection}[block]{\normalsize\bfseries}{\thesubsection}{1em}{}
\titleformat{\subsubsection}[block]{\normalsize\bfseries}{\thesubsubsection}{1em}{}
\newcommand{\beq}{\begin{equation}}
	\newcommand{\eeq}{\end{equation}}
\newcommand{\beqa}{\begin{eqnarray}}
	\newcommand{\eeqa}{\end{eqnarray}}
\newcommand{\bigcdot}{\boldsymbol{\cdot}}
\newcommand*\circled[1]{\tikz[baseline=(char.base)]{
		\node[shape=circle,draw,inner sep=2pt] (char) {#1};}}
\newcommand{\HRule}[1]{\rule{\linewidth}{#1}} 	
\newcommand{\comp}{\circ}
\newcommand{\gq}{\mathfrak{q}}
\newcommand{\tens}{\otimes}
\newcommand{\ddsum}{\bigoplus}
\newcommand{\cals}[1]{\mathcal{#1}}
\newcommand{\fraks}[1]{\mathfrak{#1}}
\newcommand*{\defeq}{\stackrel{\text{def}}{=}}
\newcommand{\normord}[1]{:\mathrel{#1}:}
\newcommand{\bb}[1]{\mathbb{#1}}
\numberwithin{equation}{section}
\newtheorem{thm}{Theorem}[section]
\newtheorem{prop}[thm]{Proposition}
\newtheorem{lem}[thm]{Lemma}
\newtheorem{flem}[thm]{Fundamental lemma}
\newtheorem{cor}[thm]{Corollary}
\newtheorem{conj}[thm]{Conjecture}
\newtheorem{dfn}[thm]{Definition}
\newtheorem{exa}[thm]{Example}
\theoremstyle{definition}
\newtheorem{rem}[thm]{Remark}
\xpatchcmd{\proof}{\itshape}{\normalfont\proofnamefont}{}{}
\newcommand{\proofnamefont}{}
\renewcommand{\proofnamefont}{\bfseries}
\begin{document}
	
	\baselineskip = 18pt 
	
	\begin{titlepage}
		
		\bigskip
		\hfill\vbox{\baselineskip12pt
			\hbox{}
		}
		\bigskip
		\begin{center}
			\Large{ \scshape
				\HRule{3pt}
				Elliptic Deformation of the Gaiotto-Rap\v{c}\'{a}k Corner VOA
				and the associated partially symmetric polynoimals 
				\HRule{3pt}
			}
		\end{center}
		\bigskip
		\bigskip
		
		\begin{center}
			\large 
			Panupong Cheewaphutthisakun$^{a}$\footnote{panupong.cheewaphutthisakun@gmail.com},
			Jun'ichi Shiraishi$^{b}$\footnote{shiraish@ms.u-tokyo.ac.jp},
			Keng Wiboonton$^{a}$\footnote{kwiboonton@gmail.com}
			\\
			\bigskip
			\bigskip
			$^a${\small {\it Department of Mathematics and Computer Science, Faculty of Science, Chulalongkorn University, 
					\\
					Bangkok, 10330, Thailand}}\\
			$^b${\small {\it Graduate School of Mathematical Sciences, University of Tokyo, Komaba, Tokyo 153-8914, Japan}} \\
		\end{center}
		{\vskip 6cm}
		{\small
			\begin{quote}
				\noindent {\textbf{\textit{Abstract}}.}
				We construct the elliptic Miura transformation and use it to obtain the expression of the currents of elliptic corner VOA. 
				We subsequently prove a novel combinatorial formula that is essential for deriving the quadratic relations of the currents.
				In addition, we give a conjecture that relates the correlation function of the currents of elliptic corner VOA to a certain family of partially symmetric polynomials. The elliptic Macdonald polynomials, constructed recently by Awata-Kanno- Mironov-Morozov-Zenkevich, and Fukuda-Ohkubo-Shiraishi, can be obtained as a particular case of this family. 
		\end{quote}}

	\end{titlepage}
	
	
	\section{Introduction}
	\label{intro}
	
	
	Quantum deformation of universal enveloping algebra (QUEA) $U_q(\fraks{g})$ was originally proposed by Drinfeld  \cite{DR} and Jimbo \cite{Jim-original} as an algebraic tool for studying quantum integrable systems. The quantization programme is so significant that it has its own place in modern mathematical research. 	
	In the seminal works \cite{AKOS-950} \cite{FF-quantum} \cite{SKAO95} , 
	the quantum deformation program was extended to the family of $W_N$ algebras $(N \geq 2)$ \cite{BPZ}  \cite{BS1993} \cite{Zamo} which is the symmetry algebra of conformal field theories containing higher-spin fields of spin $2, \dots, N$. The quantum $W_N$ algebra was determined by relations called quadratic relations schematically written below:
		\begin{align}
		&f_{r,m}\left(
		\frac{w}{z}
		\right)
		\widetilde{T}_r(z)\widetilde{T}_m(w) 
		- f_{m,r}\left(\frac{z}{w}\right)
		\widetilde{T}_m(w)\widetilde{T}_r(z)
		\notag \\
		&= 
		\sum_{k = 1}^{r}
		\biggl\{
		(\cdots)
		\delta\left(q_3^k\frac{w}{z}\right)
		\widetilde{T}_{r-k}(q_3^{-k}z)
		\widetilde{T}_{m+k}(q_3^kw)
		- 
		(\cdots)
		\delta\left(q_3^{r-m-k}\frac{w}{z}\right)
		\widetilde{T}_{r-k}(z)
		\widetilde{T}_{m+k}(w)
		\biggr\}. 
	\end{align}
	Here $\{\widetilde{T}_i\}$ is the set of generating currents whose expression in terms of the free boson can be deduced from Miura transformation. Note that in the above quadratic relation, we use the convention that for $i > N$, $\widetilde{T}_i = 0$. 
	
	
	In \cite{FHSSY},
	it was shown that quantum $W_N$ algebra can be regarded as a truncation of quantum toroidal $\fraks{gl}_1$ algebra \cite{Awata-note} \cite{DI} \cite{Miki} \cite{BS} \cite{Sch} \cite{FHH}. 
	Since it is known from the five-dimensional version \cite{AY} \cite{AY2} \cite{Taki} of AGT correspondence \cite{AGT} \cite{Wyl} that the chiral conformal block of quantum $W_N$ algebra is related to the 
	Nekrasov partition function of five-dimensional $SU(N)$ gauge theories
	(a.k.a. $K$-theoretic Nekrasov partition function) \cite{nekrasov}, the role of quantum toroidal $\fraks{gl}_1$ algebra in gauge theories and string theory has been studied intensively since then \cite{AFS, AKMM16, AKMM16-2, AKMM17, AKMM17-2, AKMM18, BFM16, BFM17, BFM17-2, BJ, B22, CK, fu17, Ma, Noshita, Zen17, Zen18, Zen20, Zen22}. 
	
	It was also shown in \cite{Nieri} \cite{Iqbal} that the similar correspondence can be lifted to six-dimensional gauge theory by considering the elliptic Virasoro algebra (elliptic $W_2$ algebra) which can be deduced by truncating the elliptic toroidal $\fraks{gl}_1$ algebra constructed by Saito \cite{Saito} \cite{Saito2}.
	Here it is important to note that actually there are two elliptic versions of the toroidal $\fraks{gl}_1$ algebra. The first one is the one constructed by Saito as we described previously, while the another one is the one that is obtained through quasi-Hopf twisting \cite{Jimbo:1999zz} \cite{Jimbo:1998bi} (see also \cite{CK2}). However, in this paper, we do not consider the latter one because it is known that the truncation of the quasi-Hopf twisted toroidal $\fraks{gl}_1$ algebra is nothing but the quantum $W_N$ algebra \cite{KO}. 
	
	Recently, a family of vertex operator algebras $Y_{L,M,N}[\Psi] \,\, (L,M,N \in \bb{Z}^{\geq 0}, \Psi \in \bb{C})$, called corner vertex operator algebra (corner VOA), which generalizes the family of $W_N$ algebras was introduced by Gaiotto and Rap\v{c}\'{a}k  by considering the system of $D5, NS5$ and $(-1,-1)$ 5-brane filled out between by three collections of $D3$ branes of numbers $L, M,N$, respectively \cite{GR17} (see also \cite{	EP19, HM ,PR17, PR18	}). More precisely, it was shown that $Y_{0,0,N} = W_N \times \widehat{\fraks{gl}}_1$. Thus, by decoupling $\widehat{\fraks{gl}}_1$, we can reproduce $W_N$ algebra from the corner VOA $Y_{0,0,N}$. 
	
	Motivated from quantization and ellipticization of $W_N$ algebra, it is natural to ask whether it is possible to quantum-mechanically/elliptically deform the corner VOA. The problem of quantum deformation of corner VOA was addressed in \cite{HMNW} in which the trigonometric Miura transformation of quantum corner VOA was proposed. Taking the limit of quantum deformation parameter go to $0$, the Miura transformation of corner VOA was recovered. 
	
	As the quantum/elliptic $W_N$ algebra can be determined by its quadratic relations, it is natural to anticipate that the quantum/elliptic corner VOA can be also determined from its quadratic relations. These quadratic relations was conjectured in \cite{HMNW}, and a proof of a particular case of this conjecture was also demonstrated in the appendix A of the same paper. The authors of \cite{HMNW} have also commented that the fusion relations of quantum corner VOA might be necessary in order to derive the quadratic relation of quantum corner VOA, similar to how the quadratic relation of $W_{q,t}(\fraks{sl}(2|1))$ algebra may be deduced from its fusion relations \cite{Kojima-2021}. The fusion relations mentioned here schematically take the following form: for $r \leq m$, 
	\begin{align}
		&\lim_{w \rightarrow q_3^{-r}z}
		f_{m,r}\left(q_3^{\frac{m-r}{2}}\frac{z}{w} ; p\right)
		\widetilde{T}_{m}(w ; p)\widetilde{T}_{r}(z ; p) 
		\sim 
		\widetilde{T}_{m+r}(z ; p),
	\end{align}
	\begin{align}
		&\lim_{z \rightarrow q_3^{-m}w}
		f_{r,m}\left(q_3^{\frac{r - m}{2}}\frac{w}{z} ; p\right)\widetilde{T}_{r}(z ; p)\widetilde{T}_{m}(w ; p)
		\sim
		\widetilde{T}_{m+r}(w ; p). 
	\end{align}

	To our knowledge, the fusion relations of the quantum/elliptic corner VOA has not yet been rigorously deduced. 
	In this paper, we first construct the elliptic Miura transformation for elliptic corner VOA (\textbf{Definition \ref{dfellcorner}}), and then derive the fusion relations of the currents obtained from this transformation (\textbf{Theorems \ref{thm45-16-1714}} and \textbf{\ref{fusion2-1615}}). During the derivation, we can see that the combinatorial formula given in \textbf{Lemma \ref{thm44-1334}} is necessary. This lemma is a fundamental result of this paper. By taking the limit $p \rightarrow 0$ of these fusion relations, we obtain the fusion relations of quantum corner VOA. Also, we apply these fusion relations to derive the quadratic relations of elliptic corner VOA (\textbf{Theorem \ref{thm51-1228}}), following the idea of \cite{Kojima-2021}. This is the first goal of this paper. 
	
	The second goal of this paper is to state the relation between the correlation function of the elliptic corner VOA and a family of partially symmetric polynomials (\textbf{Conjecture \ref{conj64-1623}}). We also conjecture that, under a certain appropriate map, the partially symmetric polynomials corresponding to a certain class of elliptic corner VOA will reduce to the elliptic Macdonald polynomials constructed recently in \cite{awata2020-2} \cite{fu20} \cite{Zen21} (\textbf{Conjecture \ref{conj68-1258}}). 
	
	\textit{
		Note add : While preparing this paper, we realize the work  \cite{Kojima-2023}, which attempts to deduce a quadratic relation of quantum corner VOA (the $p \rightarrow 0$ limit of \textbf{Theorem \ref{thm51-1228}}). Nonetheless, there are certain distinctions between the approach used in our paper and \cite{Kojima-2023}. In  \cite{Kojima-2023}, a certain form of fusion relations of the quantum corner VOA were imposed and employed to define the currents $\widetilde{T}_{m}(z)$, while we defined the currents by using Miura transformation, and then derived the fusion relations. In our approach, the combinatorial formula in \textbf{Lemma \ref{thm44-1334}} is essential.   }
	
	\subsubsection*{Organization of material}
	The content of this paper are organized as follows: In section \ref{sec2}, the definition of elliptic toroidal $\fraks{gl}_1$ algebra and its coproduct formula are reviewed. Then we discuss the horizontal Fock representation of the elliptic toroidal $\fraks{gl}_1$ algebra. Subsequently, we utilize this representation to construct the (decoupled) elliptic vertex operators and compute their operator product. We end this section by introducing the elliptic structure function and discuss some of their properties. 
	In section \ref{sec3VOA}, we define the elliptic Miura transformation, and prove its validity by demonstrating that certain properties that hold for the quantum Miura transformation are still valid. 
	After that, we apply this elliptic Miura transformation to obtain the expression for the currents of the decoupled elliptic corner VOA. In section \ref{sec4fusion}, we provide a complete derivation of the fusion relations which, in turn, are crucial for deriving the quadratic relation of decoupled elliptic corner VOA in section \ref{sec5-quad}. 
	In section \ref{sec6-poly}, we state a conjecture which relates the correlation function of the decoupled elliptic corner VOA $\widetilde{Y}_{0,0,N}$ to the elliptic Macdonald polynomials. We also show that for general decoupled elliptic corner VOA $\widetilde{Y}_{L,M,N}$, we can associate a family of partially symmetric polynomials.

	\subsubsection*{Theta function}
	In this paper we frequently deal with theta function $\theta_p(z)$. So, for convenience of the reader, we collect the definition and some basic properties of the theta function here. The theta function $\theta_p(z)$ is defined by
	\begin{align}
	\theta_p(z) := (z;p)_{\infty}(pz^{-1} ; p)_{\infty}
	= 
	\exp\left(
	-\sum_{n \neq 0}
	\frac{z^n}{n(1 - p^n)}
	\right),
	\end{align}
	where $(x;q)_{\infty} := \prod_{k = 0}^{\infty}(1 - q^kx)$. 
	
	One can show that the theta function $\theta_p(z)$ enjoys the following properties:
	\begin{align}
	\theta_p(z) &= (-z)\theta_p(z^{-1}),
	\\
	\theta_p(p^nz) &= (-z)^{-n}p^{-\frac{n}{2}(n-1)}\theta_p(z). 
	\end{align}
	Next we give the definition of elliptic gamma function $\Gamma(z;q,p)$ \cite{rui1997}. The elliptic gamma function is defined by 
	\begin{align}
		\Gamma(z;q,p) 
		:= 
		\frac{
			(qpz^{-1};q,p)_{\infty}
		}{
			(z;q,p)_{\infty}
		}
		= 
		\exp\left(
		\sum_{n \neq 0}
		\frac{z^n}{n(1 - q^n)(1 - p^n)}
		\right), 
	\end{align}
	where $(x ; q,p)_{\infty} := \prod_{k,\ell = 0}^{\infty}(1 - q^kp^\ell x)$. It can be shown that the elliptic gamma function obeys the following relations : 
	\begin{align}
	\Gamma(qz;q,p) 
	&= 
	\theta_p(z)\Gamma(z;q,p),
	\label{eqn18-19j-1249}
	\\
	\lim_{p \rightarrow 0}
	\Gamma(z;q,p) 
	&=
	\frac{1}{(z ; q)_{\infty}}.
	\label{eqn19-19j-1249}
	\end{align}
	
	Finally, we provide definition of the function $\Theta(u;q,p)_n$. This function appears in the elliptic Miura transformation and the definition of elliptic Macdonald polynomials. 
	\begin{align}
		\Theta(z;q,p)_n := \frac{
			\Gamma(q^nz;q,p)
		}{
			\Gamma(z;q,p)
		}.
	\end{align}
	From \eqref{eqn18-19j-1249}, one can show that 
	\begin{align}
	\Theta(z;q,p)_n
	= 
	\begin{cases}
	\prod_{k = 0}^{n-1}
	\theta_p(q^kz)
	\hspace{1.3cm}
	\text{ for } n \geq 0 
	\\
	\prod_{k = 0}^{n-1}
	\theta_p(q^{-k-1}z)^{-1}
	\hspace{0.3cm}
	\text{ for } n < 0 
	\end{cases}
	\end{align}
	Also, from \eqref{eqn19-19j-1249}, one can see that 
	\begin{align}
	\lim_{p \rightarrow 0}\Theta(z;q,p)_n
	= 
	\frac{
	(z ; q)_{\infty}
	}{
	(q^nz ; q)_{\infty}
	}.
	\end{align}

	\subsubsection*{Convention/Notation}
	\begin{enumerate}[(a)]
		\item For any $b < a$, the summation $\sum_{i = a}^{b}$ means the summation over the empty set, hence it is equal to $0$. Similarly, the product $\prod_{i = a}^{b}$ means the product over the empty set. So, its value is equal to $1$. 
		\item The group of permutation of $n$ letters will be denoted by $S_n$. 
	\end{enumerate}

	\subsubsection*{Acknowledgement}
	We would like to thank Hiroaki Kanno for useful discussion. This research project is supported by the Second Century Fund (C2F), Chulalongkorn University.

\section{Elliptic toroidal $\fraks{gl}_1$ algebra}
\label{sec2}
In this section, we provide a quick review of some basic facts on elliptic toroidal algebra that is required for this paper. For more thorough explanation, we refer the readers to the papers \cite{Saito} \cite{Saito2}. 

\subsection{Definition of elliptic toroidal $\fraks{gl}_1$ algebra}

Throughout this paper, we assume that $q_1,q_2,q_3$ are complex numbers which satisfy the genericity conditions described below: 
\begin{enumerate}
	\item $q_1q_2q_3 = 1$
	\item For any $L,M,N \in \bb{Z}$, if $q_1^Lq_2^Mq_3^N = 1$, then $L = M = N$. 
\end{enumerate}

\begin{dfn}[\cite{Saito}, \cite{Saito2}]
Elliptic toroidal $\fraks{gl}_1$ algebra is defined to be a unital associative algebra generated by 
\begin{align*}
	E_n(p), F_n(p), K^\pm_n(p) \,\, (n \in \bb{Z}), C
\end{align*}
which are subjects to the following defining relations:
\begin{gather}
C \text{ is central element}
\\
K^{\pm}(z;p)K^{\pm}(w;p) = 		K^{\pm}(w;p)K^{\pm}(z;p),
\label{2.3main}
\\
K^{+}(z;p)K^-(w;p) = \frac{\cals{G}(w/C z;p)}{\cals{G}(C w/z;p)} K^-(w;p)K^{+}(z;p),
\\
K^+(z;p)E(w;p) = \cals{G}(w/z;p)E(w;p)K^+(z;p),
\\
K^-(\gq z;p)E(w;p) = \cals{G}(w/z;p)E(w;p)K^-(\gq z;p),
\\
K^+(\gq z;p)F(w;p) = \cals{G}(w/z;p)^{-1}F(w;p)K^+(\gq z;p),
\\
K^-(z;p)F(w;p) = \cals{G}(w/z;p)^{-1}F(w;p)K^-(z;p),
\\
\label{EEexchage}
E(z;p)E(w;p) = \cals{G}(w/z;p)E(w;p)E(z;p),
\end{gather}
\begin{gather}
	F(z;p)F(w;p) = \cals{G}(w/z;p)^{-1}F(w;p)F(z;p),
	\\
	[E(z;p),F(w;p)] = \tilde{g}
	\bigg(
	\delta\Big(\frac{Cw}{z}\Big)K^+(z;p) - \delta\Big(\frac{Cz}{w}\Big)K^-(w;p) 
	\bigg),
	\label{2.11main}
\end{gather}
where
\begin{gather}
	\cals{G}(x;p) := \frac{
		\theta_p(q_1^{-1}x)\theta_p(q_2^{-1}x)\theta_p(q_3^{-1}x)
	}{
		\theta_p(q_1x)\theta_p(q_2x)\theta_p(q_3x)
	}, 
\end{gather}	
and
\begin{align}
	\tilde{g} = \frac{1}{(q_1 - 1)(q_2 - 1)(q_3 - 1)}. 
\end{align}
In this paper, we denote elliptic toroidal $\fraks{gl}_1$ algebra by the notation $U^{\text{ell}}_{q_1,q_2,q_3}(\widehat{\widehat{\fraks{gl}}}_1;p) $.
\end{dfn}

\subsection{Hopf algebra structure}

The elliptic toroidal $\fraks{gl}_1$ algebra possesses a Hopf algebra structure. This indicates the existence of its coproduct, counit, and antipode. But all we need for this paper is the coproduct formula. Therefore, we merely write down the coproduct formula and leave out the counit and antipode formulas. 

\begin{prop}[\cite{Saito} \cite{Zhu}]
The map $\Delta : U^{\text{ell}}_{q_1,q_2,q_3}(\widehat{\widehat{\fraks{gl}}}_1;p) \rightarrow U^{\text{ell}}_{q_1,q_2,q_3}(\widehat{\widehat{\fraks{gl}}}_1;p) \tens U^{\text{ell}}_{q_1,q_2,q_3}(\widehat{\widehat{\fraks{gl}}}_1;p)$
defined by 
\begin{align}
	\Delta\big(E(z;p)\big) &= E(z;p) \tens 1 + K^-(C_1z;p) \tens E(C_1z;p),
	\label{copro1}
	\\
	\Delta\big(F(z;p)\big) &= F(C_2z;p) \tens K^+(C_2z;p) + 1 \tens F(z;p),
	\\
	\Delta\big(K^+(z;p)\big) &= K^+(z;p) \tens K^+(C_1^{-1}z;p),
	\\
	\Delta\big(K^-(z;p)\big) &= K^-(C_2^{-1}z;p) \tens K^-(z;p),
	\label{copro4}
\end{align}
where $C_1 := C \tens 1, C_2 := 1 \tens C$, is an algebra homomorphism. 
\end{prop}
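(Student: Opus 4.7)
The plan is to verify that $\Delta$ is an algebra homomorphism by checking that the images $\Delta(E(z;p))$, $\Delta(F(z;p))$, $\Delta(K^\pm(z;p))$ satisfy every defining relation \eqref{2.3main}--\eqref{2.11main} in $U^{\text{ell}}_{q_1,q_2,q_3}(\widehat{\widehat{\fraks{gl}}}_1;p) \tens U^{\text{ell}}_{q_1,q_2,q_3}(\widehat{\widehat{\fraks{gl}}}_1;p)$. The systematic procedure is: expand $\Delta(X(z;p))\Delta(Y(w;p))$ for each pair $X,Y \in \{E,F,K^\pm\}$, commute the factors across the tensor slots using the original defining relations of the algebra, and confirm that the result equals $\cals{G}(w/z;p)^{\pm 1}$ times $\Delta(Y(w;p))\Delta(X(z;p))$ (or the appropriate delta-function combination in the $[E,F]$ case). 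Centrality of $C$, which means $C_1 = C \tens 1$ and $C_2 = 1 \tens C$ commute with every generator in the ambient tensor product, will be used implicitly throughout to freely shift spectral arguments.

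The relations among the $K^\pm$'s are the easiest, since $\Delta(K^\pm)$ are pure tensor products of $K^\pm$'s and the structure-function factors from the two tensor slots multiply out to the required $\cals{G}(w/Cz;p)/\cals{G}(Cw/z;p)$. For the mixed $K^\pm E$ and $K^\pm F$ relations, one expands $\Delta(K^\pm(z;p))\Delta(E(w;p))$ and finds that the nontrivial tensor factor is $K^\pm(*) E(C_1 w;p)$ in the second slot and $K^\pm(*) K^-(C_1 w;p)$ in the first: the combined $\cals{G}$-factors telescope using $K^+K^- $ and $K^\pm E$ exchange relations, and the shifts by $C_1$ cancel against each other so only $\cals{G}(w/z;p)$ survives. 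The $EE$ and $FF$ relations are handled similarly: expansion gives four terms, the two ``diagonal'' terms each yield $\cals{G}(w/z;p)$ by \eqref{EEexchage} applied within each slot, while the two ``cross'' terms are matched by using the $K^-E$ exchange relation in the first slot, which produces exactly the $\cals{G}$-factor needed to bring the cross terms of $\Delta(E(z;p))\Delta(E(w;p))$ and $\cals{G}(w/z;p)\Delta(E(w;p))\Delta(E(z;p))$ into agreement.

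The main obstacle is verifying the $[E,F]$ bracket \eqref{2.11main}. Expanding $[\Delta(E(z;p)),\Delta(F(w;p))]$ using \eqref{copro1} and the analogous formula for $F$ produces four bilinear terms of which two vanish identically by the $EF$ commutator in a single tensor slot whose arguments lie off the support of the deltas, while the remaining two terms combine to yield the delta-function contributions. One must then use the identity $\delta(Cw/z)f(z,w) = \delta(Cw/z) f(z,Cw/z \cdot z) $ to evaluate the spectral arguments on the support of $\delta(C_1 C_2 w/z) = \delta(Cw/z)$, and check that the $K^-$ in the first slot together with the $K^+$ in the second slot, after appropriate shifts by $C_1, C_2$, reassemble precisely into $\Delta(K^+(z;p)) = K^+(z;p) \tens K^+(C_1^{-1}z;p)$, and symmetrically for $\Delta(K^-(w;p))$. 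The delicate point is bookkeeping the $C_1, C_2$ shifts under the deltas: the formula for $\Delta(E)$ uses the shift $C_1$ while $\Delta(F)$ uses $C_2$, and the proof hinges on the fact that these shifts combine to give $C = C_1 C_2$ in the arguments of the deltas, matching exactly the central element appearing in \eqref{2.11main}. Once this matching is established, the rest of the verification is a direct computation.
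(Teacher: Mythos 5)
The paper itself does not prove this proposition: it is quoted from \cite{Saito} and \cite{Zhu}, so there is no internal proof to compare against, and your plan of directly verifying that $\Delta$ preserves each defining relation \eqref{2.3main}--\eqref{2.11main} in the tensor square is the standard (and correct) route. Your treatment of the $K^\pm K^\pm$, $K^\pm E$, $K^\pm F$, $EE$ and $FF$ relations is sound as sketched; note only that the matching of the cross terms in the $EE$ and $FF$ checks, and of the mixed term discussed below, rests on the identity $\cals{G}(x;p)\,\cals{G}(x^{-1};p)=1$, which itself uses $q_1q_2q_3=1$, and that the $\gq$-shifted relations for $K^-E$ and $K^+F$ are part of the defining set and must be checked as well (they follow by the same telescoping of $\cals{G}$-factors and central shifts).

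The $[E,F]$ step is where your description would not survive being written out as stated. Expanding $[\Delta(E(z;p)),\Delta(F(w;p))]$ gives four differences, and none of them vanishes because \emph{arguments lie off the support of the deltas}: the difference pairing $E(z;p)\tens 1$ with $1\tens F(w;p)$ vanishes trivially because the two currents act in different tensor slots, while the difference pairing $K^-(C_1z;p)\tens E(C_1z;p)$ with $F(C_2w;p)\tens K^+(C_2w;p)$ vanishes because the $K^-F$ exchange factor $\cals{G}(C_2w/C_1z;p)^{-1}$ in the first slot and the $K^+E$ exchange factor $\cals{G}(C_1z/C_2w;p)^{-1}$ in the second slot are reciprocal. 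The two remaining differences are genuine one-slot $[E,F]$ commutators and produce \emph{four} delta terms, not two: besides $\tilde{g}\,\delta(C_1C_2w/z)\,K^+(z;p)\tens K^+(C_2w;p)$ and $-\tilde{g}\,\delta(C_1C_2z/w)\,K^-(C_1z;p)\tens K^-(w;p)$, one also gets $-\tilde{g}\,\delta(C_1z/C_2w)\,K^-(C_2w;p)\tens K^+(C_2w;p)$ and $+\tilde{g}\,\delta(C_2w/C_1z)\,K^-(C_1z;p)\tens K^+(C_1z;p)$, and these last two must be shown to cancel each other on the support $C_1z=C_2w$ before anything reassembles. After that cancellation your endgame is right: on the support of $\delta(C_1C_2w/z)$ one has $K^+(z;p)\tens K^+(C_2w;p)=\Delta\big(K^+(z;p)\big)$, symmetrically for $\Delta\big(K^-(w;p)\big)$, with the deltas carrying $\Delta(C)=C_1C_2$ as in \eqref{2.11main}. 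So the overall strategy is viable, but the stated mechanism for the vanishing terms is wrong, and the cancellation of the two mixed $K^-\tens K^+$ delta terms is a necessary step your outline omits.
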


For $n \geq 2$, the algebra homomorphism $\Delta^{(n)}$ is defined recursively by 
\begin{align}
\Delta^{(n)} := (\Delta \tens \underbrace{		1 \tens \cdots \tens 1		}_{n-1} ) \comp \Delta^{(n-1)}. 
\end{align}
Before ending this subsection, we record a proposition which will be useful for future calculation. 

\begin{prop}\mbox{}
\begin{align}
	\Delta^{(n-1)}\left(E(z;p)\right)
	= 
	\sum_{k = 1}^{n}
	\underbrace{		K^-(C_1z;p) \tens \cdots \tens K^-(C_1\cdots C_{k-1}z;p) 		}_{k-1}
	\tens E(C_1\cdots C_{k-1}z;p) \tens 
	\underbrace{	1 \tens \cdots \tens 1				}_{n - k}. 
	\label{427-1149}
\end{align}
\end{prop}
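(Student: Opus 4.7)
The plan is to prove the formula by induction on $n$, using the recursive definition $\Delta^{(n-1)} = (\Delta \otimes 1^{\otimes(n-2)}) \circ \Delta^{(n-2)}$. The base case $n=2$ is exactly $\Delta^{(1)} = \Delta$, whose right-hand side is the $k=1$ term $E(z;p) \otimes 1$ (with empty product of $K^-$ factors) together with the $k=2$ term $K^-(C_1 z;p) \otimes E(C_1 z;p)$, matching \eqref{copro1}.

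For the inductive step I assume the statement for $\Delta^{(n-2)}(E(z;p))$ and apply $\Delta \otimes 1^{\otimes(n-2)}$ to each of its $n-1$ summands. The summand of index $k=1$ has $E(z;p)$ in its first tensor position; applying \eqref{copro1} there produces two new terms, which are precisely the $k'=1$ and $k'=2$ summands of the target formula for $\Delta^{(n-1)}(E(z;p))$. For each summand of index $k \in \{2,\ldots,n-1\}$, the first tensor position carries $K^-(C_1 z;p)$, to which I apply the coproduct \eqref{copro4} of $K^-$; I will show that this summand becomes exactly the $k'=k+1$ summand of the target.

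The main obstacle is the bookkeeping of the central elements under the re-indexing of tensor factors induced by $\Delta \otimes 1^{\otimes(n-2)}$. Because $C$ is grouplike, i.e.\ $\Delta(C) = C \otimes C$, the old central element of position $1$ becomes $C_1 C_2$ in the new labelling, while the old $C_j$ for $j \geq 2$ becomes $C_{j+1}$. Substituting into \eqref{copro4} gives
\[
\Delta(K^-(C_1 z;p)) = K^-\bigl(C_2^{-1}(C_1 C_2) z;p\bigr) \otimes K^-(C_1 C_2 z;p) = K^-(C_1 z;p) \otimes K^-(C_1 C_2 z;p),
\]
so the cancellation of the $C_2^{-1}$ of the coproduct against the $C_2$ coming from $\Delta(C)$ restores precisely the first two $K^-$ factors required by the $k'=k+1$ summand. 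The remaining factors in that summand have their central-element indices uniformly shifted up by one, so the whole summand aligns with the rest of the claimed term. Assembling the contributions from $k=1$ (producing $k'=1, 2$) and from $k=2,\ldots,n-1$ (each producing $k'=k+1$) recovers all summands $k' \in \{1,\ldots,n\}$ with the correct coefficients, closing the induction.
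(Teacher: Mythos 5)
Your proof is correct. The paper states this proposition without proof, and your induction on $n$ — using the recursive definition of $\Delta^{(n)}$, the coproduct formulas \eqref{copro1} and \eqref{copro4}, and the grouplike property $\Delta(C)=C\otimes C$ to track the central elements — is exactly the standard argument the authors implicitly rely on; in particular, your relabelling $C_1\mapsto C_1C_2$, $C_j\mapsto C_{j+1}$ for $j\geq 2$, with the resulting cancellation $K^-(C_2^{-1}C_1C_2z;p)=K^-(C_1z;p)$, is the key bookkeeping step and is handled correctly.
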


\begin{rem}
It is important to note that, due to Hopf algebra structure, tensor product of representations of the elliptic toroidal $\fraks{gl}_1$ algebra remains a representation of the elliptic toroidal $\fraks{gl}_1$ algebra. 
\end{rem}

\subsection{Horizontal Fock representation}
\label{subsec43-1151}

In this subsection, we review the so-called horizontal Fock representation of elliptic toroidal algebra. This representation plays a very important role in this paper. 

\begin{dfn}[\cite{Saito}]
For each $i \in \{1,2,3\}$, the elliptic Heisenberg algebra $\cals{B}^{(i)}$ is defined to be an algebra generated by $\left\{a^{(i)}_{\pm n}~|~ n \in \bb{Z}^{\geq 1}\right\}$ and $\left\{b^{(i)}_{\pm n}~|~ n \in \bb{Z}^{\geq 1}\right\}$ subject to the following defining relations: 
\begin{align}
	[a_n^{(i)},a_m^{(i)}] &= \frac{n}{\kappa_n}(q_i^{n/2} - q_i^{-n/2})
	(1 - p^{|n|})
	\delta_{n+m,0},
	\\
	[b_n^{(i)},b_m^{(i)}] &= \frac{n}{\kappa_n}(q_i^{n/2} - q_i^{-n/2})\frac{1 - p^{|n|}}{p^{|n|}}
	\delta_{n+m,0},
	\\
	[a_n^{(i)},b_m^{(i)}] &= 0,
\end{align}
where $\kappa_n := (q^n_1 - 1)(q^n_2 - 1)(q^n_3 - 1)$. 
\end{dfn}

Define vector space $\cals{H}^{(i)}$ to be 
\begin{align*}
\cals{H}^{(i)} := \operatorname{span}
\left\{
a^{(i)}_{-\lambda_1}\cdots a^{(i)}_{-\lambda_m}
b^{(i)}_{-\mu_1}\cdots b^{(i)}_{-\mu_n}|0\rangle
\;\middle\vert\;
\begin{array}{@{}l@{}}
(1) \,\,  m, n \in \bb{Z}^{\geq 0}\\
(2) \,\, \lambda_1 \geq \cdots \geq \lambda_m \geq 1\\
(3) \,\, \mu_1 \geq \cdots \geq \mu_n \geq 1
\end{array}
\right\}.
\end{align*}
where $|0\rangle$ is annihilated by the positive mode generators. 
It is clear that $\cals{B}^{(i)}$ acts naturally on the vector space $\cals{H}^{(i)}$. In other word, $\cals{H}^{(i)}$ is a left $\cals{B}^{(i)}$-module.

\begin{prop}[\cite{Saito}]
For each $i \in \{1,2,3\}$, the map $\rho_{H,u}^{(i),\text{ell}} : U^{\text{ell}}_{q_1,q_2,q_3}(\widehat{\widehat{\fraks{gl}}}_1;p)
\rightarrow \operatorname{End}(\cals{H}^{(i)})$ defined by 
\begin{align}
	\rho_{H,u}^{(i),\text{ell}}\left(
	K^+(z;p)
	\right)
	&= 
	\exp\left(- \sum_{n = 1}^{\infty}\frac{p^n}{1 - p^n}\frac{\kappa_n}{n}q^{-n/4}_{i}b^{(i)}_{n}z^n\right)
	\exp\left(
	\sum_{n = 1}^{\infty}\frac{1}{1 - p^n}\frac{\kappa_n}{n}q^{n/4}_ia^{(i)}_{n}z^{-n}
	\right),
	\\
	\rho_{H,u}^{(i),\text{ell}}
	\left(
	K^-(z;p)
	\right)
	&= 
	\exp\left(
	\sum_{n = 1}^{\infty}\frac{p^n}{1 - p^n}\frac{\kappa_n}{n}q^{n/4}_i
	b^{(i)}_{-n}z^{-n}
	\right)
	\exp\left(
	- \sum_{n = 1}^{\infty}\frac{1}{1 - p^n}\frac{\kappa_n}{n}q^{-n/4}_{i}
	a^{(i)}_{-n}z^n
	\right),
	\\
	\rho_{H,u}^{(i),\text{ell}}
	\left(
	E(z;p)
	\right)
	&= u\widetilde{d}_1
	\exp\left(
	\sum_{n = 1}^{\infty}
	\frac{p^n}{1 - p^n}\frac{\kappa_n}{n}
	\frac{q^{n/4}_i}{q^{n/2}_{i} - q^{-n/2}_{i}}
	b^{(i)}_{-n}z^{-n}
	\right)
	\notag \\
	&\hspace{0.3cm}\times
	\exp\left(
	\sum_{n = 1}^{\infty}
	\frac{1}{1 - p^n}\frac{\kappa_n}{n}
	\frac{q^{-n/4}_i}{q^{n/2}_{i} - q^{-n/2}_{i}}
	a^{(i)}_{-n}z^n
	\right)
	\notag \\
	&\hspace{0.3cm}\times
	\exp\left(
	\sum_{n = 1}^{\infty}
	\frac{p^n}{1 - p^n}\frac{\kappa_n}{n}
	\frac{q^{-n/4}_i}{	q^{-n}_{i} - 1			}
	b^{(i)}_{n}z^{n}
	\right)
	\exp\left(
	-\sum_{n = 1}^{\infty}
	\frac{1}{1 - p^n}\frac{\kappa_n}{n}
	\frac{q^{n/4}_i}{q^{n}_{i}  - 1}
	a^{(i)}_{n}z^{-n}
	\right),
	\\
	\rho_{H,u}^{(i),\text{ell}}
	\left(
	F(z;p)
	\right)
	&= 
	u^{-1}\widetilde{d}_2
	\exp\left(
	-\sum_{n = 1}^{\infty}
	\frac{p^n}{1 - p^n}
	\frac{\kappa_n}{n}
	\frac{q^{-n/4}_i}{q^{n/2}_{i} - q^{-n/2}_{i}}
	b^{(i)}_{-n}z^{-n}
	\right)
	\notag \\
	&\hspace{0.3cm}\times
	\exp\left(
	-\sum_{n = 1}^{\infty}\frac{1}{1 - p^n}
	\frac{\kappa_n}{n}
	\frac{q^{n/4}_i}{q^{n/2}_{i} - q^{-n/2}_{i}}
	a^{(i)}_{-n}z^{n}
	\right)
	\notag \\
	&\hspace{0.3cm}\times
	\exp\left(
	\sum_{n = 1}^{\infty}
	\frac{p^n}{1 - p^n}
	\frac{\kappa_n}{n}
	\frac{q^{-n/4}_i}{q^{n/2}_{i} - q^{-n/2}_{i}}
	b^{(i)}_{n}z^{n}
	\right)
	\exp\left(
	\sum_{n = 1}^{\infty}\frac{1}{1 - p^n}
	\frac{\kappa_n}{n}
	\frac{q^{n/4}_i}{q^{n/2}_{i} - q^{-n/2}_{i}}
	a^{(i)}_{n}z^{-n}
	\right),
	\\
	\rho_{H,u}^{(i),\text{ell}}
	\left(
	C
	\right)
	&= q^{1/2}_{i},
\end{align}
where $\widetilde{d}_1, \widetilde{d}_2$ are
\begin{align}
	\widetilde{d}_1 &= 
	\frac{
		(p;p)_{\infty}
	}{
		\theta_p(q_{i^\prime})
		\theta_p(q_{i^{\prime\prime}})
	}, 
	\label{eqn315-2135}
	\\
	\widetilde{d}_2 &=
	\frac{
		(p;p)_{\infty}\theta_p(q^{-1}_i)
	}{
		(q_1 - 1)(q_2 - 1)(q_3 - 1)
	},
	\label{eqn316-2136}
\end{align}
is an algebra homomorphism. Here $i^\prime, i^{\prime\prime}$ stand for the numbers in the set $\{1,2,3\}$ that are different from $i$. This representation is called horizontal Fock representation. 
\label{prp24-1540}
\end{prop}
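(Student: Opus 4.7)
The plan is to verify the defining relations \eqref{2.3main}--\eqref{2.11main} one by one using free-boson contraction calculus for the elliptic Heisenberg algebras $\mathcal{B}^{(i)}$. Since every image under $\rho_{H,u}^{(i),\text{ell}}$ is a product of exponentials of linear combinations of the modes $a_{\pm n}^{(i)}, b_{\pm n}^{(i)}$, each proposed relation of the form $X(z)Y(w) = \mathcal{G}(w/z;p)^{\pm 1}Y(w)X(z)$ collapses to a computation of the scalar factor produced when the annihilation modes of $X(z)$ are moved past the creation modes of $Y(w)$ via the identity $e^A e^B = e^{[A,B]} e^B e^A$, which is valid because every pairwise commutator of these mode-exponentials is central. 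The image of $C$ is the scalar $q_i^{1/2}$, so its centrality is automatic.

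First I would compute, once and for all, the two basic contraction series. The $aa$-contraction of two generic mode-exponentials of the shape appearing in $\rho_{H,u}^{(i),\text{ell}}$ produces a sum of the form $\sum_{n\ge 1}\frac{1}{n(1-p^n)}(\cdots)(w/z)^n$, while the $bb$-contraction produces a sum of the form $\sum_{n\ge 1}\frac{p^n}{n(1-p^n)}(\cdots)(w/z)^n$. The specific $1-p^n$ and $p^n/(1-p^n)$ denominators built into the commutators of $\mathcal{B}^{(i)}$ are tuned precisely so that, after the sums are assembled, the $a$- and $b$-sectors combine into a single series running over all nonzero $n \in \mathbb{Z}$, weighted by $\frac{1}{n(1-p^{|n|})}$ times a factor of the shape $(q_1^n - 1)(q_2^n - 1)(q_3^n - 1)/\kappa_n$ with appropriate shifts in powers of $q_i$. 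Exponentiating this combined series produces a ratio of six theta-functions that is exactly $\mathcal{G}(w/z;p)$. Once this master identity is in hand, the $KK$, $K^+K^-$, $K^\pm E$, $K^\pm F$, $EE$ and $FF$ relations each reduce to routine bookkeeping of $q_i$-powers and of shifts by the level $C \mapsto q_i^{1/2}$.

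The main obstacle will be the commutator relation, which under the representation must read
\[
[E(z;p), F(w;p)] = \tilde g\bigl(\delta(q_i^{1/2}w/z)K^+(z;p) - \delta(q_i^{1/2}z/w)K^-(w;p)\bigr).
\]
I would normal-order both $E(z;p)F(w;p)$ and $F(w;p)E(z;p)$ and observe that they agree as rational functions of $w/z$ up to a prefactor whose only singularities are simple poles located at $q_i^{1/2}w = z$ and $q_i^{1/2}z = w$, matching the specialization of $Cw=z$ and $Cz=w$. Taking the residues at these two points and comparing them to $\pm\tilde g$ times the specializations of $\rho_{H,u}^{(i),\text{ell}}(K^\pm(\cdot;p))$ evaluated on the residue locus — while using the theta-function identities $\theta_p(x) = -x\,\theta_p(x^{-1})$ and $\theta_p(p^n x) = (-x)^{-n}p^{-n(n-1)/2}\theta_p(x)$ recorded in the introduction to collapse the resulting theta-products — will reveal that the normalizations $\widetilde d_1, \widetilde d_2$ in \eqref{eqn315-2135}--\eqref{eqn316-2136} are precisely the factors needed for the two residues to close up into $\tilde g K^\pm$. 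Establishing this residue match, and in particular verifying the appearance of $(p;p)_\infty$ and $\theta_p(q_i^{-1})$ in $\widetilde d_1, \widetilde d_2$, is the technical heart of the proof.
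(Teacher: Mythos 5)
The paper does not prove this proposition at all: it is imported verbatim from Saito's work (the citation \cite{Saito} in the statement), and no verification is given anywhere in the text. So there is no in-paper argument to compare against; what you propose — checking the defining relations \eqref{2.3main}--\eqref{2.11main} directly by free-boson contraction calculus, with the exchange relations reducing to the scalar $\cals{G}(w/z;p)$ via $e^Ae^B=e^{[A,B]}e^Be^A$, and the $[E,F]$ relation handled by comparing the two orderings and matching the delta-function coefficients against $\widetilde{d}_1,\widetilde{d}_2$ — is exactly the standard route, and it is essentially how the result is established in the cited source. You also correctly locate the technical heart: the $1/(1-p^n)$ and $p^n/(1-p^n)$ weights in the $a$- and $b$-sectors are what reassemble the contractions into theta-function ratios, and the normalizations $\widetilde{d}_1,\widetilde{d}_2$ together with $(p;p)_\infty$ are fixed precisely by the residue matching in the $[E,F]$ relation with $C\mapsto q_i^{1/2}$.

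One caution on the step you describe as the heart of the proof: in the elliptic setting the scalar prefactor of $E(z;p)F(w;p)$ is a ratio of theta functions, so its singularities are not only the two points $z=q_i^{1/2}w$ and $w=q_i^{1/2}z$ but their entire $p$-shifted families $z=q_i^{\pm 1/2}p^k w$. The commutator nevertheless collapses to exactly two formal delta functions because the two orderings are expansions of the same theta-ratio in complementary regions, and the difference of such expansions is governed by an elliptic delta identity of the same type as \eqref{iden53} in the paper (this is where the $(p;p)_\infty^2$ in the normalization enters). Your outline would go through once this identity is invoked explicitly in place of the naive two-pole residue picture; as written, that claim is the only imprecise point, and the remaining work is the (lengthy but routine) bookkeeping you defer.
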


\begin{rem}
When $i = 3$, the horizontal Fock representation in \textbf{Proposition \ref{prp24-1540}} coincides with those in \cite{Saito}. 
\end{rem}

\begin{prop}
Given $c_1, \dots, c_n \in \{1,2,3\}$ and $u_1,\dots,u_n \in \bb{C}$, the map 
\begin{align}
	\rho^{\vec{c},\text{ell}}_{H,\vec{u}}
	: 
	U^{\text{ell}}_{q_1,q_2,q_3}(\widehat{\widehat{\fraks{gl}}}_1;p)
	\rightarrow \operatorname{End}(\cals{H}^{(c_1)} \tens \cdots \tens \cals{H}^{(c_n)}), 
\end{align}
defined by the following composition 
\begin{align}
	\rho^{\vec{c},\text{ell}}_{H,\vec{u}}
	:= 
	\left(
	\rho_{H,u_1}^{(c_1),\text{ell}} \tens \rho_{H,u_2}^{(c_2),\text{ell}} \tens \cdots \tens \rho_{H,u_n}^{(c_n),\text{ell}}	
	\right) \comp 
	\Delta^{(n-1)}, 
\end{align}
is an algebra homomorphism. 
\end{prop}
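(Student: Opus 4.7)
The plan is to reduce the statement to two already-established facts: (i) each horizontal Fock map $\rho_{H,u_i}^{(c_i),\text{ell}}$ is an algebra homomorphism (Proposition \ref{prp24-1540}), and (ii) the coproduct $\Delta$ is an algebra homomorphism. Everything else is bookkeeping about composition and tensor products of homomorphisms.

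First I would observe that the codomain should be read through the canonical algebra embedding
\begin{align*}
\operatorname{End}(\cals{H}^{(c_1)}) \tens \cdots \tens \operatorname{End}(\cals{H}^{(c_n)}) \hookrightarrow \operatorname{End}(\cals{H}^{(c_1)} \tens \cdots \tens \cals{H}^{(c_n)}),
\end{align*}
so that the tensor-product map
\begin{align*}
\rho_{H,u_1}^{(c_1),\text{ell}} \tens \cdots \tens \rho_{H,u_n}^{(c_n),\text{ell}} : U^{\text{ell}}_{q_1,q_2,q_3}(\widehat{\widehat{\fraks{gl}}}_1;p)^{\tens n} \rightarrow \operatorname{End}(\cals{H}^{(c_1)} \tens \cdots \tens \cals{H}^{(c_n)})
\end{align*}
is a well-defined algebra homomorphism: on elementary tensors $a_1\tens\cdots\tens a_n$ and $a_1'\tens\cdots\tens a_n'$ the two sides multiply component-wise, and the factorwise homomorphism property of each $\rho_{H,u_i}^{(c_i),\text{ell}}$ gives multiplicativity.

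Second, I would establish that $\Delta^{(n-1)} : U^{\text{ell}}_{q_1,q_2,q_3}(\widehat{\widehat{\fraks{gl}}}_1;p) \to U^{\text{ell}}_{q_1,q_2,q_3}(\widehat{\widehat{\fraks{gl}}}_1;p)^{\tens n}$ is an algebra homomorphism by induction on $n$. The base case $n=2$ is Proposition 2.2 (the coproduct is a homomorphism). For the inductive step, I use the recursive definition
\begin{align*}
\Delta^{(n)} = (\Delta \tens \underbrace{1 \tens \cdots \tens 1}_{n-1}) \comp \Delta^{(n-1)}.
\end{align*}
Assuming $\Delta^{(n-1)}$ is a homomorphism, the map $\Delta \tens 1^{\tens (n-1)}$ is a tensor product of homomorphisms (again using the natural multiplication on tensor product algebras), hence a homomorphism, and the composition of two homomorphisms is a homomorphism.

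Finally, composing the two homomorphisms from the previous paragraphs yields $\rho^{\vec{c},\text{ell}}_{H,\vec{u}}$, which is therefore itself an algebra homomorphism. There is no serious obstacle here: the only point requiring any care is to distinguish the abstract tensor product algebra $\operatorname{End}(\cdot)^{\tens n}$ from $\operatorname{End}$ of the tensor product, which is harmless because the canonical map between them respects multiplication. All substantive representation-theoretic content has already been absorbed into Proposition \ref{prp24-1540}.
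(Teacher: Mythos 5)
Your proposal is correct and follows exactly the route the paper (implicitly) takes: the paper gives no written proof, relying on its Remark that the Hopf algebra structure makes tensor products of representations into representations, which is precisely the argument you spell out via Proposition \ref{prp24-1540}, the coproduct homomorphism property, and the canonical embedding of $\operatorname{End}(\cals{H}^{(c_1)})\tens\cdots\tens\operatorname{End}(\cals{H}^{(c_n)})$ into $\operatorname{End}(\cals{H}^{(c_1)}\tens\cdots\tens\cals{H}^{(c_n)})$. Nothing is missing.
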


\subsection{Elliptic vertex operataors}
Vertex operators of quantum toroidal $\fraks{gl}_1$ algebra are defined via the tensor product of horizontal Fock representations of the quantum toroidal $\fraks{gl}_1$ algebra \cite{FHSSY} \cite{HMNW}. Following this idea, the elliptic vertex operators are defined in a similar way via the tensor product of horizontal Fock representations of the elliptic toroidal algebra.

\begin{dfn}[Elliptic vertex operator]
For each $\vec{c} = (c_1,\dots,c_n)$, the elliptic vertex operators $\Lambda^{\vec{c}}_i(z;p) \,\, (i = 1,\dots,n)$ are defined by the following relation: 
\begin{align}
	\rho^{\vec{c},\text{ell}}_{H,\vec{u}^\prime}\left(E(z;p)\right)
	:=
	\sum_{j = 1}^{n}
	y_j(p)
	\Lambda^{\vec{c}}_j(z;p),
	\label{427-1353}
\end{align}
where 
\begin{align}
\vec{u}^\prime := \left(u^\prime_k\right)_{k = 1}^n = \left(
q^{-1/2}_{c_k}q^{1/2}_{3} 
\frac{
\theta_p(q_1)
\theta_p(q_2)
}{
(p;p)_{\infty}
}
u_k
\right)_{k = 1}^{n}, 
\end{align}
and 
\begin{align}
	y_j(p) 
	:=
	\left(-q^{1/2}_{c_j}q^{1/2}_{3}\right)
	\frac{
		\theta_p\left(q^{-1}_{c_j}\right)
	}{
		\theta_p(q_3)
	}.
\label{232-1124}
\end{align}
\end{dfn}

From the above definition, it can be shown that the elliptic vertex operator $\Lambda^{\vec{c}}_j(z;p)$ takes the following form:
\begin{align}
&\Lambda^{\vec{c}}_j(z;p) 
= 
u_j
\varphi^{- \, (c_1)}(q_{c_1}^{1/2}z;p) \tens \cdots \tens \varphi^{- \, (c_{j-1})}(q_{c_1}^{1/2} \cdots q_{c_{j-1}}^{1/2}z;p) 	
\tens \eta^{(c_j)}(q_{c_1}^{1/2} \cdots q_{c_{j-1}}^{1/2}z;p) \tens 
\underbrace{	1 \tens \cdots \tens 1				}_{n - j}, 
\end{align}
where 
\begin{align}
\varphi^{- \, (i)}(z;p) &= 
\exp\left(
\sum_{n = 1}^{\infty}\frac{p^n}{1 - p^n}\frac{\kappa_n}{n}q^{n/4}_{i}b^{(i)}_{-n}z^{-n}
\right)
\exp\left(
- \sum_{n = 1}^{\infty}\frac{1}{1 - p^n}\frac{\kappa_n}{n}q^{-n/4}_{i}
a^{(i)}_{-n}z^n
\right),
\\
\eta^{(i)}(z;p) &= 
\exp\left(
\sum_{n = 1}^{\infty}
\frac{p^n}{1 - p^n}\frac{\kappa_n}{n}
\frac{q^{n/4}_{i}}{q^{n/2}_{i} - q^{-n/2}_{i}}
b^{(i)}_{-n}z^{-n}
\right)
\exp\left(
\sum_{n = 1}^{\infty}
\frac{1}{1 - p^n}\frac{\kappa_n}{n}
\frac{q^{-n/4}_{i}}{q^{n/2}_{i} - q^{-n/2}_{i}}
a^{(i)}_{-n}z^n
\right)
\notag \\
&\hspace{0.3cm}\times
\exp\left(
\sum_{n = 1}^{\infty}
\frac{p^n}{1 - p^n}\frac{\kappa_n}{n}
\frac{q^{-n/4}_{i}}{	q^{-n}_{i} - 1			}
b^{(i)}_{n}z^{n}
\right)
\exp\left(
-\sum_{n = 1}^{\infty}
\frac{1}{1 - p^n}\frac{\kappa_n}{n}
\frac{q^{n/4}_{i}}{q^{n}_{i}  - 1}
a^{(i)}_{n}z^{-n}
\right). 
\end{align}

\begin{prop}
The elliptic vertex operators satisfy the following relation: 
\mbox{}
\vspace{0.2cm}
\begin{align}
	\Lambda^{\vec{c}}_i(z;p)\Lambda^{\vec{c}}_j(w;p) = 
	\begin{cases}
		\displaystyle
		\frac{
		\theta_p\left(q_1^{-1}\frac{w}{z}\right)
		\theta_p\left(q_2^{-1}\frac{w}{z}\right)
		\theta_p\left(q_3^{-1}\frac{w}{z}\right)
		}{
		\theta_p\left(q_1\frac{w}{z}\right)
		\theta_p\left(q_2\frac{w}{z}\right)
		\theta_p\left(q_3\frac{w}{z}\right)
		}
		\normord{
			\Lambda^{\vec{c}}_i(z;p)\Lambda^{\vec{c}}_j(w;p)
		}
		\text{ for } i < j 
		\\
		\displaystyle
		\frac{
			\theta_p\left(	q_{c^\prime_i}q_{c^{\prime\prime}_i}		\frac{w}{z}	\right)
			\theta_p\left(	\frac{w}{z}	\right)
		}{
			\theta_p\left(	q_{c^\prime_i}	\frac{w}{z}		\right)
			\theta_p\left(	q_{c^{\prime\prime}_i}		\frac{w}{z}		\right)
		}
		\normord{
			\Lambda^{\vec{c}}_i(z;p)\Lambda^{\vec{c}}_i(w;p)
		}
		\text{ for } i = j 
		\\
		\displaystyle 
		\normord{
			\Lambda^{\vec{c}}_i(z;p)\Lambda^{\vec{c}}_j(w;p)
		}
		\text{ for } i > j 
	\end{cases}
\end{align}
\label{prp210-1230-18}
\end{prop}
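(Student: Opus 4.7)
The plan is to reduce the question to a slot-by-slot operator product computation in the tensor product of horizontal Fock spaces. The elliptic vertex operator $\Lambda^{\vec c}_j(z;p)$ is a tensor product of single-slot operators: identities at slots $>j$, a single $\eta^{(c_j)}$ at slot $j$, and dressings $\varphi^{-(c_k)}$ at slots $k<j$. Since $\varphi^{-(c)}(z;p)$ is built purely from negative-mode oscillators and $\eta^{(c)}(z;p)$ is itself internally normal-ordered (negatives on the left, positives on the right), the only source of a nontrivial scalar when computing $\Lambda^{\vec c}_i(z;p)\Lambda^{\vec c}_j(w;p)$ is a slot in which an $\eta$ appears on the left and either an $\eta$ or a $\varphi^-$ of the same colour appears on the right.

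Setting $\bar Q_i:=q_{c_1}^{1/2}\cdots q_{c_{i-1}}^{1/2}$, the first step is therefore to identify the contributing slot in each of the three cases. For $i<j$ only slot $i$ contributes, through $\eta^{(c_i)}(\bar Q_i z)\,\varphi^{-(c_i)}(\bar Q_i q_{c_i}^{1/2}w)$; for $i=j$ only slot $i$ contributes, through $\eta^{(c_i)}(\bar Q_i z)\,\eta^{(c_i)}(\bar Q_i w)$; and for $i>j$ no slot contributes, because at slot $j$ one finds the already-normal-ordered product $\varphi^{-(c_j)}\eta^{(c_j)}$, which immediately settles the third branch. For the remaining two cases I would apply the Baker--Campbell--Hausdorff identity $e^Ae^B = e^{[A,B]}e^Be^A$ for scalar $[A,B]$ and exploit $[a^{(c)}_n,b^{(c)}_m]=0$ to split the contraction into independent $a$-$a$ and $b$-$b$ contributions.

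For the $i<j$ case, the ratio of paired arguments is $q_{c_i}^{1/2}(w/z)$ rather than $w/z$. The raw $a$-$a$ and $b$-$b$ contractions produce an exponent of the shape $\sum_{n\ge 1}\tfrac{\kappa_n}{n(1-p^n)}q_{c_i}^{-n/2}(W/Z)^n - \sum_{n\ge 1}\tfrac{p^n\kappa_n}{n(1-p^n)}q_{c_i}^{n/2}(Z/W)^n$, where the simplifications $\tfrac{q_{c_i}^{n/2}-q_{c_i}^{-n/2}}{q_{c_i}^n-1} = q_{c_i}^{-n/2}$ and $\tfrac{q_{c_i}^{n/2}-q_{c_i}^{-n/2}}{q_{c_i}^{-n}-1} = -q_{c_i}^{n/2}$ are the key algebraic identities. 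The extra $q_{c_i}^{1/2}$ in $W/Z$ precisely cancels these colour-dependent prefactors, and after rewriting with $u:=w/z$ I would recognise the result, via the logarithmic expansion $\log\theta_p(x) = -\sum_{n\ne 0}\tfrac{x^n}{n(1-p^n)}$ together with the identity $\sum_{i=1}^3 (q_i^n-q_i^{-n}) = \kappa_n$ (a direct consequence of $q_1q_2q_3=1$), as $\log\cals G(w/z;p)$, which is exactly the first branch.

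For the $i=j$ case both arguments share the same shift $\bar Q_i$, so no extra $q_{c_i}$ factors survive; combining the $a$-$a$ and $b$-$b$ contractions yields an exponent proportional to $\kappa_n/(q_{c_i}^n-1)$. Factoring $\kappa_n = (q_{c_i}^n-1)(q_{c_i'}^n-1)(q_{c_i''}^n-1)$ and using $q_{c_i'}q_{c_i''} = q_{c_i}^{-1}$, I would match this term-by-term with the logarithmic expansion of the claimed ratio $\theta_p(q_{c_i'}q_{c_i''}w/z)\theta_p(w/z)/[\theta_p(q_{c_i'}w/z)\theta_p(q_{c_i''}w/z)]$, establishing the second branch. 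The main obstacle I anticipate is not conceptual but bookkeeping: $\eta^{(c)}(z;p)$ is built from four exponential factors, and the two coproduct-induced shifts $\bar Q_i$ and $q_{c_i}^{1/2}$ leave considerable room for sign errors and mistakes in half-integer powers of $q_{c_i}$. Once the shifts are correctly identified, the remaining verification reduces to the simplifying identities listed above.
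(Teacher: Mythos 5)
Your proposal is correct and is essentially the argument the paper intends (the proposition is stated there without a written proof, being the standard free-field computation from the explicit formulas for $\varphi^{-\,(c)}$ and $\eta^{(c)}$): the only contractions arise in slot $\min(i,j)$, namely $\eta^{(c_i)}\varphi^{-\,(c_i)}$ for $i<j$ (with argument ratio $q_{c_i}^{1/2}w/z$ cancelling the colour-dependent factors via $\tfrac{q^{n/2}-q^{-n/2}}{q^{n}-1}=q^{-n/2}$, $\tfrac{q^{n/2}-q^{-n/2}}{q^{-n}-1}=-q^{n/2}$, and $\sum_{i=1}^{3}(q_i^{n}-q_i^{-n})=\kappa_n$ reproducing $\mathcal{G}(w/z;p)$), $\eta^{(c_i)}\eta^{(c_i)}$ for $i=j$ (where $\kappa_n/(q_{c_i}^{n}-1)=(q_{c_i'}^{n}-1)(q_{c_i''}^{n}-1)$ together with $q_{c_i'}q_{c_i''}=q_{c_i}^{-1}$ matches the stated theta ratio, the extra $q_{c_i}^{n}$ in the $p^{n}$-part being exactly what the negative-power expansion of that ratio requires), and none for $i>j$ since $\varphi^{-}\eta$ is already normal ordered. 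Your computations check out against the commutation relations of $a_n^{(i)},b_n^{(i)}$ and the expansion $\log\theta_p(x)=-\sum_{n\neq 0}\tfrac{x^{n}}{n(1-p^{n})}$, so the proof is complete once the bookkeeping you flag is carried through as you indicate.
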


\begin{cor}
\label{cor211-14}
\mbox{}
For $i < j$, we have 
\begin{align}
\Lambda^{\vec{c}}_i(q_3^{-1}z;p)\Lambda^{\vec{c}}_j(z;p) = 0. 
\end{align}
\end{cor}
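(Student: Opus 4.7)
The plan is to read off the corollary directly from Proposition \ref{prp210-1230-18} by specializing the arguments so that one of the theta factors in the numerator vanishes. Concretely, I would take the first case of Proposition \ref{prp210-1230-18} (the case $i < j$) and substitute $z \mapsto q_3^{-1} z$ while keeping the second argument as $z$, so that the ratio $w/z$ appearing in the prefactor becomes $z/(q_3^{-1}z) = q_3$.

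Under this substitution, the scalar prefactor in front of the normal-ordered product becomes
\begin{align*}
\frac{\theta_p(q_1^{-1}q_3)\,\theta_p(q_2^{-1}q_3)\,\theta_p(q_3^{-1}q_3)}{\theta_p(q_1 q_3)\,\theta_p(q_2 q_3)\,\theta_p(q_3 q_3)}.
\end{align*}
The factor $\theta_p(q_3^{-1}q_3) = \theta_p(1) = (1;p)_\infty (p;p)_\infty$ vanishes because $(1;p)_\infty = 0$. The denominator, by the genericity conditions on $q_1,q_2,q_3$ (no nontrivial monomial equals $1$), is a nonzero finite complex number and stays away from zeros of $\theta_p$. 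Hence the scalar prefactor is identically $0$.

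To finish, I would observe that the normal-ordered product $\normord{\Lambda^{\vec{c}}_i(q_3^{-1}z;p)\Lambda^{\vec{c}}_j(z;p)}$ is well-defined (the exponentials of the creation/annihilation modes are regular under this specialization, since normal ordering removes the singular OPE contribution that produced the theta prefactor). Therefore multiplying a regular operator by the vanishing scalar prefactor yields $0$, proving the corollary.

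The only subtle step is the last one, namely verifying that the normal-ordered expression is genuinely regular at $w = q_3^{-1}z$ rather than developing a compensating pole. This follows from the explicit exponential form of $\Lambda^{\vec{c}}_i$ and $\Lambda^{\vec{c}}_j$ derived above Proposition \ref{prp210-1230-18}: all singularities in the product of the two vertex operators have been factored out into the scalar contraction coefficient, and no further $w/z = q_3$ pole can arise in the normal-ordered piece. I expect this regularity check to be the only nonroutine point of the argument.
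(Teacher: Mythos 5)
Your argument is correct and is exactly the intended derivation: the paper states the corollary as an immediate consequence of Proposition \ref{prp210-1230-18}, where specializing $w/z$ to $q_3$ makes the numerator factor $\theta_p(q_3^{-1}\cdot q_3)=\theta_p(1)=0$ while the denominator factors $\theta_p(q_1q_3)=\theta_p(q_2^{-1})$, $\theta_p(q_2q_3)=\theta_p(q_1^{-1})$, $\theta_p(q_3^2)$ stay nonzero and the normal-ordered product is regular. Your regularity remark is the right (and only) point needing care, so nothing is missing.
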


\subsection{Decoupled elliptic vertex operator}

In this subsection, we introduce the decoupled elliptic vertex operators. 
In section \ref{sec3VOA} below, we will demonstrate how the currents $T^{\vec{c}}_m(z;p)$ appearing in the fusion relations can be expressed in terms of these decoupled elliptic vertex operators. 

\begin{dfn}[\cite{Nieri}]
Define 
\begin{align}
	t(z;p) := \alpha(z;p)E(z;p)\beta(z;p),
\end{align}
where $\alpha(z;p)$ and $\beta(z;p)$ are
\begin{align}
	\alpha(z;p) &= 
	\exp\left(
	-\sum_{r = 1}^{\infty}\frac{1}{C^r - C^{-r}}\widetilde{b}_{-r}z^r
	\right)
	\exp\left(
	-\sum_{r = 1}^{\infty}\frac{1}{C^r - C^{-r}}\widetilde{b}^\prime_{-r}z^{-r}
	\right),
	\\
	\beta(z;p) &=
	\exp\left(
	\sum_{r = 1}^{\infty}\frac{1}{C^r - C^{-r}}\widetilde{b}_rz^{-r}
	\right)
	\exp\left(
	\sum_{r = 1}^{\infty}\frac{1}{C^r - C^{-r}}\widetilde{b}^{\prime}_{r}z^r
	\right),
\end{align}
provided that $\widetilde{b}_{\pm r}, \widetilde{b}^{\prime}_{\pm r}$ are generators of the elliptic toroidal $\fraks{gl}_1$ algebra defined by 
\begin{align}
	K^+(z;p) &= 
	\exp\left(
	\sum_{r = 1}^{\infty}\widetilde{b}_rC^{r}z^{-r}
	\right)\exp\left(
	-\sum_{r = 1}^{\infty}\widetilde{b}^{\prime}_{r}C^{-r}z^r
	\right),
	\\
	K^-(z;p) &=
	\exp\left(
	-\sum_{r = 1}^{\infty}\widetilde{b}_{-r}z^r
	\right)\exp\left(
	\sum_{r = 1}^{\infty}\widetilde{b}^{\prime}_{-r}z^{-r}
	\right).
\end{align}
\end{dfn}

\begin{dfn}[Decoupled elliptic vertex operator]\mbox{}
For each $\vec{c} = (c_1,\dots,c_n)$, the decoupled elliptic vertex operators $\widetilde{\Lambda}^{\vec{c}}_i(z;p) \,\, (i = 1,\dots,n)$ are defined by the following relation: 
\begin{align}
	\rho^{\vec{c},\text{ell}}_{H,\vec{u}}\left(t(z;p)\right) = \sum_{i = 1}^{n}y_i(p)\widetilde{\Lambda}^{\vec{c}}_i(z;p). 
\end{align}
where $y_i(p) \,\, (i = 1,\dots,n)$ were defined as in \eqref{232-1124}. 
\end{dfn}

Next, in analogy to \textbf{Proposition \ref{prp210-1230-18}}, we establish the relationship between the product of the decoupled elliptic vertex operators and its normal ordering. To express these relations, the factors $f_{r,m}(z;p)$, introduced in the subsequent definition, are useful.

\begin{dfn}
For $r \leq m$ and for $\vec{c} = (c_1,\dots,c_n)$ such that $q_{\vec{c}} := \prod_{k = 1}^{n}q_{c_k} \neq 1$, 
\begin{align}
	f^{\vec{c}}_{r,m}(z;p)
	&:=
	\exp\bigg[
	\sum_{k = 1}^{\infty}\frac{z^k}{k}
	\frac{1}{1 - p^k}
	(q_3^{\frac{r}{2}k} - q_3^{-\frac{r}{2}k})(q_{\vec{c}}^{\frac{k}{2}}q^{-\frac{m}{2}k}_{3}
	- q_{\vec{c}}^{-\frac{k}{2}}q^{\frac{m}{2}k}_{3})
	\frac{
		(q^{\frac{k}{2}}_1 - q^{-\frac{k}{2}}_1)(q^{\frac{k}{2}}_2 - q^{-\frac{k}{2}}_2)
	}{
		(q^{\frac{k}{2}}_{\vec{c}} - q^{-\frac{k}{2}}_{\vec{c}})(q^{\frac{k}{2}}_3 - q^{-\frac{k}{2}}_3)
	}
	\bigg]
	\notag \\
	&\hspace{0.4cm}
	\times
	\exp\bigg[
	\sum_{k = 1}^{\infty}\frac{z^{-k}}{k}
	\frac{p^k}{1 - p^k}
	(q_3^{\frac{r}{2}k} - q_3^{-\frac{r}{2}k})(q_{\vec{c}}^{\frac{k}{2}}q^{-\frac{m}{2}k}_{3}
	- q_{\vec{c}}^{-\frac{k}{2}}q^{\frac{m}{2}k}_{3})
	\frac{
		(q^{\frac{k}{2}}_1 - q^{-\frac{k}{2}}_1)(q^{\frac{k}{2}}_2 - q^{-\frac{k}{2}}_2)
	}{
		(q^{\frac{k}{2}}_{\vec{c}} - q^{-\frac{k}{2}}_{\vec{c}})(q^{\frac{k}{2}}_3 - q^{-\frac{k}{2}}_3)
	}
	\bigg],
\end{align}
and
\begin{align}
	f^{\vec{c}}_{m,r}(z;p)
	&:=
	f^{\vec{c}}_{r,m}(z;p). 
\end{align}
\label{df213-1157-18}
\end{dfn}

\begin{rem}
The factor $f^{\vec{c}}_{r,m}(z;p)$ defined in \textbf{Definition \ref{df213-1157-18}} can be regarded as an elliptic analogue of the factor $f^{\vec{c}}_{r,m}(z)$ defined in \cite{HMNW}. More precisely, if we take the limit $p \rightarrow 0$ of $f^{\vec{c}}_{r,m}(z;p)$, we then obtain $f^{\vec{c}}_{r,m}(z)$. 
\end{rem}

\begin{prop}
The decoupled elliptic vertex operators obey the following relations: 
\begin{align}
	\widetilde{\Lambda}^{\vec{c}}_i(z;p)			\widetilde{\Lambda}^{\vec{c}}_j(w;p)		
	= 
	\begin{cases}
		\displaystyle
		f^{\vec{c}}_{11}\left(\frac{w}{z}	 ; p 	\right)^{-1}
		\Delta\left(	q_3^{\frac{1}{2}}\frac{w}{z}	; p 		\right)
		\normord{		\widetilde{\Lambda}^{\vec{c}}_i(z;p)	 	\widetilde{\Lambda}^{\vec{c}}_j(w;p)						}
		\text{ for } i < j 
		\\
		\displaystyle
		f^{\vec{c}}_{11}\left(\frac{w}{z} ; p \right)^{-1}
		\gamma_{c_i}\left(\frac{w}{z} ; p\right)
		\normord{		\widetilde{\Lambda}^{\vec{c}}_i(z;p)		\widetilde{\Lambda}^{\vec{c}}_i(w;p)						}
		\text{ for } i = j
		\\
		\displaystyle
		f^{\vec{c}}_{11}\left(\frac{w}{z} ; p\right)^{-1}
		\Delta\left(	q_3^{-\frac{1}{2}}\frac{w}{z} ; p			\right)
		\normord{		\widetilde{\Lambda}^{\vec{c}}_i(z ; p)		\widetilde{\Lambda}^{\vec{c}}_j(w ; p)				}
		\text{ for } i > j 
	\end{cases}
\end{align}
where
\begin{align}
	\Delta(z;p) 
	&:= 
	\frac{
		\theta_p(q_1q_3^{\frac{1}{2}}z)
		\theta_p(q^{-1}_1q_3^{-\frac{1}{2}}z)
	}{
		\theta_p(q_3^{\frac{1}{2}}z)
		\theta_p(q_3^{-\frac{1}{2}}z)
	},
	\\
	\gamma_{c_i}(z;p) 
	&:= 
	\frac{
		\theta_p(q_{c_i}z)
		\theta_p(q^{-1}_{c_i}z)
	}{
		\theta_p(q_{3}z)
		\theta_p(q^{-1}_{3}z)
	}. 
\end{align} 
\label{prp216-2149}
\end{prop}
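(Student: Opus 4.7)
My plan is to reduce the desired OPE to \textbf{Proposition \ref{prp210-1230-18}} by factoring off the dressings $\alpha,\beta$ that convert $E(z;p)$ into $t(z;p)$. Setting $A(z;p) := \rho^{\vec{c},\text{ell}}_{H,\vec{u}}(\alpha(z;p))$ and $B(z;p) := \rho^{\vec{c},\text{ell}}_{H,\vec{u}}(\beta(z;p))$, the definitions of $t(z;p)$ and of $\widetilde{\Lambda}^{\vec{c}}_i(z;p)$ yield
\begin{equation*}
\widetilde{\Lambda}^{\vec{c}}_i(z;p) \;=\; A(z;p)\,\Lambda^{\vec{c}}_i(z;p)\,B(z;p)
\end{equation*}
as operators on $\cals{H}^{(c_1)}\tens\cdots\tens\cals{H}^{(c_n)}$, and crucially the dressings $A,B$ do not depend on the component index $i$. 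Consequently $\widetilde{\Lambda}^{\vec{c}}_i(z;p)\widetilde{\Lambda}^{\vec{c}}_j(w;p)$ will decompose, after commuting all dressings into canonical position, into a universal scalar factor multiplied by $A(z;p)A(w;p)\,\Lambda^{\vec{c}}_i(z;p)\Lambda^{\vec{c}}_j(w;p)\,B(z;p)B(w;p)$, the $\Lambda\Lambda$ piece of which is handled by \textbf{Proposition \ref{prp210-1230-18}}.

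The rearrangement requires three scalar OPEs, all independent of $i,j$. First, one expects $B(z;p)A(w;p) = F(w/z;p)\,A(w;p)B(z;p)$: since $\alpha$ involves only the negative-mode generators $\widetilde{b}_{-r},\widetilde{b}'_{-r}$ and $\beta$ only the positive-mode ones $\widetilde{b}_r,\widetilde{b}'_r$, this is the only non-trivial cross-contraction among the dressings, and the commutators $[\widetilde{b}_r,\widetilde{b}_{-r}]$ and $[\widetilde{b}'_r,\widetilde{b}'_{-r}]$ can be read off from the $K^+K^-$ exchange relation by expanding the decompositions that define $\widetilde{b}_{\pm r}$ and $\widetilde{b}'_{\pm r}$. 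Second, $B(z;p)\Lambda^{\vec{c}}_j(w;p) = G_+(w/z;p)\,\Lambda^{\vec{c}}_j(w;p)B(z;p)$ and, symmetrically, $\Lambda^{\vec{c}}_i(z;p)A(w;p) = G_-(w/z;p)\,A(w;p)\Lambda^{\vec{c}}_i(z;p)$: since the relevant modes of $\alpha,\beta$ are essentially those of $K^\pm$ (up to the $(C^r-C^{-r})$ renormalisation) and the action of $K^\pm$ on $E$ is governed by $\cals{G}$, the factors $G_\pm$ are explicit exponentials built from $\cals{G}$. Applying these three commutations reduces everything to the scalar product $F\,G_+\,G_-$ times the prefactor supplied by \textbf{Proposition \ref{prp210-1230-18}}.

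The only non-routine step, and therefore the main obstacle, is to verify the scalar identity
\begin{equation*}
F(w/z;p)\,G_+(w/z;p)\,G_-(w/z;p)\cdot(\text{prefactor of Prop.~\ref{prp210-1230-18}}) = \begin{cases} f^{\vec{c}}_{11}(w/z;p)^{-1}\,\Delta(q_3^{1/2}w/z;p), & i<j,\\ f^{\vec{c}}_{11}(w/z;p)^{-1}\,\gamma_{c_i}(w/z;p), & i=j,\\ f^{\vec{c}}_{11}(w/z;p)^{-1}\,\Delta(q_3^{-1/2}w/z;p), & i>j.\end{cases}
\end{equation*}
This reduces to matching the exponential mode expansions of $\cals{G}$, $\theta_p$, $f^{\vec{c}}_{11}$, $\Delta$, and $\gamma_{c_i}$, using the standard identities $\theta_p(z) = (-z)\theta_p(z^{-1})$ and $\theta_p(p^nz) = (-z)^{-n}p^{-n(n-1)/2}\theta_p(z)$ from the introduction. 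The shifts $q_3^{\pm 1/2}$ appearing inside $\Delta$ should trace back to the $q_3^{n/4}$ factors that distinguish the tensor slots in $\Lambda^{\vec{c}}_i$, while the case $i=j$, which produces $\gamma_{c_i}$ instead of $\Delta$, should capture the residual $c_i$-dependence of the single-slot OPE $\eta^{(c_i)}(z;p)\eta^{(c_i)}(w;p)$ via the $\theta_p(q_{c'_i}q_{c''_i}w/z)\theta_p(w/z)/\theta_p(q_{c'_i}w/z)\theta_p(q_{c''_i}w/z)$ factor in \textbf{Proposition \ref{prp210-1230-18}}. Once the identity is verified in one case, the other two follow by the same reorganisation together with the $z\leftrightarrow w$ symmetry built into $f^{\vec{c}}_{r,m}(z;p)$.
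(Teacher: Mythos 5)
The paper states \textbf{Proposition \ref{prp216-2149}} without proof, so there is no argument of the authors to compare against line by line; judged on its own terms, your dressing strategy (writing $\widetilde{\Lambda}^{\vec{c}}_i = A\,\Lambda^{\vec{c}}_i\,B$ with $A=\rho(\alpha)$, $B=\rho(\beta)$ and reducing everything to \textbf{Proposition \ref{prp210-1230-18}}) is the natural route suggested by the very definition of $t(z;p)$, and the target is internally consistent with it: using $q_1q_2q_3=1$ one checks $\Delta(q_3^{1/2}x;p)/\Delta(q_3^{-1/2}x;p)=\cals{G}(x;p)$ and $\gamma_{c_i}(x;p)\,\theta_p(q_{c_i'}x)\theta_p(q_{c_i''}x)\big/\big(\theta_p(q_{c_i'}q_{c_i''}x)\theta_p(x)\big)=\Delta(q_3^{-1/2}x;p)$, so the three cases indeed collapse to a single universal requirement $F\,G_+\,G_- = f^{\vec{c}}_{11}(x;p)^{-1}\Delta(q_3^{-1/2}x;p)$, exactly as your reduction demands.

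However, two genuine gaps remain. First, the index-independence of $G_\pm$ is asserted but not justified: the relations $K^{\pm}E=\cals{G}\,E\,K^{\pm}$, pushed through $\Delta^{(n-1)}$ and the representation, only control the exchange of $A(z),B(z)$ with the full sum $\rho(E(w))=\sum_j y_j\Lambda^{\vec{c}}_j(w)$, not with each $\Lambda^{\vec{c}}_j(w)$ separately; since conjugation by $B(z)$ multiplies each $\Lambda^{\vec{c}}_j(w)$ by \emph{some} scalar $g_j(w/z)$, you still need an argument (triangularity in the tensor slots, e.g.\ $\Lambda^{\vec{c}}_n$ is the only term acting nontrivially in slot $n$, then descend; or a direct slot-by-slot contraction) to force all $g_j$ equal. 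The same remark is needed to pass from the defining sum identities to the term-by-term statement $\widetilde{\Lambda}^{\vec{c}}_i\propto A\,\Lambda^{\vec{c}}_i\,B$, and the $\vec{u}$ versus $\vec{u}'$ normalisation mismatch, while harmless for the OPE, should be acknowledged. Second, and more seriously, the decisive quantitative step is only promised, not performed: you never extract the mode commutators of $\widetilde{b}_{\pm r},\widetilde{b}'_{\pm r}$ (where the central element enters as $q_{\vec{c}}^{1/2}$, which is the only way the $q_{\vec{c}}$-dependence of $f^{\vec{c}}_{11}$ can arise), never write $F,G_\pm$ explicitly, and never verify $F\,G_+\,G_-=f^{\vec{c}}_{11}(x;p)^{-1}\Delta(q_3^{-1/2}x;p)$ — and that identity \emph{is} the content of the proposition. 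As written, the proposal is a viable program with the main computation outstanding, not a proof.
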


Before ending this section, we present a proposition which give various properties of the factor $f^{\vec{c}}_{r,m}(z ; p)$. These properties will be useful in subsequent sections.

\begin{prop}\mbox{}
\begin{enumerate}[(1)]
	\item \label{217-1}
	For $r \leq m $ we have
	\begin{align}
		f^{\vec{c}}_{r,m}(z ; p) = f^{\vec{c}}_{m,r}(z ; p)  = \prod_{a = 0}^{r - 1}f^{\vec{c}}_{1,m}(q_3^{\frac{r - 1 - 2a}{2}}z ; p).
	\end{align}
	\item \label{217-2} For $m \in \bb{Z}^{\geq 1}$ we obtain that 
	\begin{align}
		f^{\vec{c}}_{1,m}(z;p)
		=
		\frac{
			\prod_{k = 1}^{m}
			f^{\vec{c}}_{1,1}
			\left(	q_3^{\frac{1}{2}(m+1-2k)}z ; p			\right)
		}{
			\prod_{k = 1}^{m-1}
			\Delta\left(	q_3^{\frac{1}{2}(m - 2k)}z	; p 	\right)
		}. 
	\end{align}
	\item For $r \leq s$, we have 
	\begin{gather}
		f^{\vec{c}}_{1,s}(z ;p)f^{\vec{c}}_{r,s}(q_3^{-\frac{r+1}{2}}z ; p) = f^{\vec{c}}_{r+1,s}(q_3^{-\frac{r}{2}}z ; p),
		\label{248-1624-16}
		\\
		f^{\vec{c}}_{1,s}(z ; p)f^{\vec{c}}_{r,s}(q_3^{\frac{r+1}{2}}z ; p) = f^{\vec{c}}_{r+1,s}(q_3^{\frac{r}{2}}z ; p),
		\\
		f^{\vec{c}}_{1,r}(z ; p)\frac{f^{\vec{c}}_{s,r}(q_3^{-\frac{s+1}{2}}z ; p)}{
			f^{\vec{c}}_{s+1,r}(q_3^{-\frac{s}{2}}z ; p)
		}
		= 
		\frac{
			\theta_p( q_1^{\frac{1}{2}}q_2^{-\frac{1}{2}}q_3^{-\frac{r}{2}}z)
			\theta_p( q_1^{-\frac{1}{2}}q_2^{\frac{1}{2}}q_3^{-\frac{r}{2}}z)
		}{
			\theta_p( q_1^{\frac{1}{2}}q_2^{\frac{1}{2}}q_3^{-\frac{r}{2}}z)
			\theta_p( q_1^{-\frac{1}{2}}q_2^{-\frac{1}{2}}q_3^{-\frac{r}{2}}z)
		},
		\label{250-1624-16}
		\\
		f^{\vec{c}}_{1,r}(z ; p)\frac{f^{\vec{c}}_{s,r}(q_3^{\frac{s+1}{2}}z ; p)}{
			f^{\vec{c}}_{s+1,r}(q_3^{\frac{s}{2}}z ; p)
		}
		= 
		\frac{
			\theta_p( q_1^{\frac{1}{2}}q_2^{-\frac{1}{2}}q_3^{\frac{r}{2}}z)
			\theta_p( q_1^{-\frac{1}{2}}q_2^{\frac{1}{2}}q_3^{\frac{r}{2}}z)
		}{
			\theta_p( q_1^{\frac{1}{2}}q_2^{\frac{1}{2}}q_3^{\frac{r}{2}}z)
			\theta_p( q_1^{-\frac{1}{2}}q_2^{-\frac{1}{2}}q_3^{\frac{r}{2}}z)
		},
		\label{251-1624-16}
		\\
		f^{\vec{c}}_{1,r}(z ; p)
		\frac{
			f^{\vec{c}}_{1,s}(q_3^{\frac{r+s}{2}}z ; p)
		}{
			f^{\vec{c}}_{1,r+s}(q_3^{\frac{s}{2}}z ; p)
		}
		= 
		\frac{
			\theta_p( q_1^{\frac{1}{2}}q_2^{-\frac{1}{2}}q_3^{\frac{r}{2}}z)
			\theta_p( q_1^{-\frac{1}{2}}q_2^{\frac{1}{2}}q_3^{\frac{r}{2}}z)
		}{
			\theta_p( q_1^{\frac{1}{2}}q_2^{\frac{1}{2}}q_3^{\frac{r}{2}}z)
			\theta_p( q_1^{-\frac{1}{2}}q_2^{-\frac{1}{2}}q_3^{\frac{r}{2}}z)
		},
		\\
		f^{\vec{c}}_{1,r}(z ; p)\frac{
			f^{\vec{c}}_{1,s}(q_3^{-\frac{r+s}{2}}z ; p)
		}{
			f^{\vec{c}}_{1,r+s}(q_3^{-\frac{s}{2}}z ; p)
		}
		= 
		\frac{
			\theta_p( q_1^{\frac{1}{2}}q_2^{-\frac{1}{2}}q_3^{-\frac{r}{2}}z)
			\theta_p( q_1^{-\frac{1}{2}}q_2^{\frac{1}{2}}q_3^{-\frac{r}{2}}z)
		}{
			\theta_p( q_1^{\frac{1}{2}}q_2^{\frac{1}{2}}q_3^{-\frac{r}{2}}z)
			\theta_p( q_1^{-\frac{1}{2}}q_2^{-\frac{1}{2}}q_3^{-\frac{r}{2}}z)
		}.
	\end{gather}
	\item 
	For $i, j \in \bb{Z}^{\geq 1}$ and for $n \in \bb{Z}^{\geq 0}$ such that $i - n, j+n \in \bb{Z}^{\geq 1}$, we have 
	\begin{align}
		f^{\vec{c}}_{1,i}(z ; p)f^{\vec{c}}_{1,j}(q_3^{\pm \frac{i - j - 2n}{2}}z ; p)
		= 
		f^{\vec{c}}_{1,i-n}(q_3^{\mp \frac{n}{2}}z ; p)
		f^{\vec{c}}_{1,j+n}(q_3^{\pm \frac{i - j - n}{2}}z ; p). 
	\end{align}
\end{enumerate}
\label{217}
\end{prop}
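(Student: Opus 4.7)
The plan is to reduce every identity in \textbf{Proposition \ref{217}} to a Laurent polynomial identity in $q_1^{k/2}, q_2^{k/2}, q_3^{k/2}, q_{\vec{c}}^{k/2}$, exploiting the uniform structure of the exponential representations. Write
\begin{align*}
P^{\vec{c}}_{r,m}(k) := (q_3^{\frac{rk}{2}} - q_3^{-\frac{rk}{2}})(q_{\vec{c}}^{\frac{k}{2}}q_3^{-\frac{mk}{2}} - q_{\vec{c}}^{-\frac{k}{2}}q_3^{\frac{mk}{2}})\frac{(q_1^{\frac{k}{2}}-q_1^{-\frac{k}{2}})(q_2^{\frac{k}{2}}-q_2^{-\frac{k}{2}})}{(q_{\vec{c}}^{\frac{k}{2}}-q_{\vec{c}}^{-\frac{k}{2}})(q_3^{\frac{k}{2}}-q_3^{-\frac{k}{2}})},
\end{align*}
so that $f^{\vec c}_{r,m}(z;p) = \exp\bigl[\sum_{k\geq 1}\frac{z^k}{k}\frac{P^{\vec c}_{r,m}(k)}{1-p^k} + \sum_{k\geq 1}\frac{z^{-k}}{k}\frac{p^k P^{\vec c}_{r,m}(k)}{1-p^k}\bigr]$. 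The theta factors appearing on the right-hand sides of (\ref{250-1624-16})--(\ref{251-1624-16}) and of parts (\ref{217-2}) admit, via the formula $\log\theta_p(\alpha z) = -\sum_{k\geq 1}\frac{\alpha^k z^k}{k(1-p^k)} - \sum_{k\geq 1}\frac{p^k\alpha^{-k}z^{-k}}{k(1-p^k)}$, an analogous bilateral expansion with the \emph{same} $p$-dependence $\frac{1}{1-p^k}$ (resp.\ $\frac{p^k}{1-p^k}$) in the $z^k$ (resp.\ $z^{-k}$) mode. Consequently each equation in \textbf{Proposition \ref{217}}, after taking logarithms, splits into identical identities in the positive and negative Fourier modes, and it suffices to check a single Laurent polynomial identity in the variables $q_i^{\pm k/2}, q_{\vec c}^{\pm k/2}$ for every $k\geq 1$.

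For part (\ref{217-1}), the symmetry $f^{\vec c}_{r,m}=f^{\vec c}_{m,r}$ holds by definition when $r>m$, and otherwise the product formula follows from
\begin{align*}
\sum_{a=0}^{r-1} q_3^{\frac{(r-1-2a)k}{2}} = \frac{q_3^{\frac{rk}{2}}-q_3^{-\frac{rk}{2}}}{q_3^{\frac{k}{2}}-q_3^{-\frac{k}{2}}},
\end{align*}
which multiplies $P^{\vec c}_{1,m}(k)$ into $P^{\vec c}_{r,m}(k)$. Part (\ref{217-2}) is established by computing the coefficient in $\log \theta_p$ of $\log\Delta(q_3^{(m-2k)/2}z;p)$ and of $\log f^{\vec c}_{1,1}(q_3^{(m+1-2k)/2}z;p)$; both reduce to elementary manipulations of $q_1^{k/2} - q_1^{-k/2}$ and $q_2^{k/2} - q_2^{-k/2}$ factors using $q_1 q_2 q_3 = 1$, and after telescoping the $k$-sums over $k=1,\dots,m$, the residual factor matches $P^{\vec c}_{1,m}(k)$.

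The shift identities in part (3) are of the same nature: for instance, (\ref{248-1624-16}) amounts to $P^{\vec c}_{1,s}(k) + q_3^{-\frac{(r+1)k}{2}} P^{\vec c}_{r,s}(k) = q_3^{-\frac{rk}{2}} P^{\vec c}_{r+1,s}(k)$, which using the factored form of $P^{\vec c}_{r,s}(k)$ reduces to the telescoping
\begin{align*}
1 + q_3^{-\frac{(r+1)k}{2}}\cdot\frac{q_3^{\frac{rk}{2}}-q_3^{-\frac{rk}{2}}}{q_3^{\frac{k}{2}}-q_3^{-\frac{k}{2}}} = q_3^{-\frac{rk}{2}}\cdot\frac{q_3^{\frac{(r+1)k}{2}}-q_3^{-\frac{(r+1)k}{2}}}{q_3^{\frac{k}{2}}-q_3^{-\frac{k}{2}}},
\end{align*}
valid by a one-line calculation. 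The remaining identities in (3), involving ratios, require additionally canceling $P^{\vec c}_{s,r}(k)/P^{\vec c}_{s+1,r}(k)$ down to $q_3$-power differences and then matching them against the $\theta_p$-expansion of the right-hand side, where the combination $(q_1^{1/2}q_2^{\mp 1/2}-q_1^{-1/2}q_2^{\pm 1/2})$ reassembles after using $q_1 q_2 q_3 = 1$. Part (4) is the analogous telescoping $P^{\vec c}_{1,i}(k) + q_3^{\pm\frac{(i-j-2n)k}{2}}P^{\vec c}_{1,j}(k) = q_3^{\mp \frac{nk}{2}}P^{\vec c}_{1,i-n}(k) + q_3^{\pm\frac{(i-j-n)k}{2}}P^{\vec c}_{1,j+n}(k)$, which is straightforward after expanding the two factors $(q_3^{-\frac{ik}{2}}q_{\vec c}^{\frac{k}{2}} - q_3^{\frac{ik}{2}}q_{\vec c}^{-\frac{k}{2}})$ and $(q_3^{-\frac{jk}{2}}q_{\vec c}^{\frac{k}{2}} - q_3^{\frac{jk}{2}}q_{\vec c}^{-\frac{k}{2}})$.

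The main obstacle is purely bookkeeping: keeping the numerous $q_3$-shifts synchronized when verifying part (\ref{217-2}) and the ratio identities (\ref{250-1624-16})--(\ref{251-1624-16}), where the $\theta_p$ factors on the right-hand side require combining several $\log\theta_p$ expansions. Once one adopts the uniform reduction to the bilateral Fourier coefficients, no genuinely new ingredient beyond the geometric-series/telescoping manipulations above is needed, so the proof is entirely mechanical.
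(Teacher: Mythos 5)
Your proposal is correct and is essentially the computation the paper intends: Proposition \ref{217} is stated without proof as a direct consequence of \textbf{Definition \ref{df213-1157-18}}, and your reduction to mode-by-mode Laurent-polynomial identities in $q_1^{k/2},q_2^{k/2},q_3^{k/2},q_{\vec c}^{k/2}$ (geometric-series/telescoping manipulations, plus the $\log\theta_p$ expansion for the right-hand sides) is exactly that verification. One small imprecision: for the shifted identities the negative Fourier modes do not give \emph{literally} the same relation as the positive ones but rather its image under $k\mapsto -k$; since you establish each positive-mode relation as an identity of Laurent polynomials (hence valid under inversion of the variables), this does not affect the argument.
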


\section{Elliptic corner VOA} 
\label{sec3VOA}

In this section, we give the definition of elliptic Miura transformation and then prove its validity.

\begin{dfn}
For given $L, M, N \in \bb{Z}^{\geq 0}$, define $\vec{c}$ to be 
\begin{align}
	\vec{c} = (c_1,\dots,c_n) = (\underbrace{		1,\dots,1			}_{L},\underbrace{	2,\dots,2			}_{M},\underbrace{	3,\dots,3		}_{N})
	= 
	(1^L2^M3^N). 
\end{align}
The elliptic corner VOA, denoted by $qY^{\text{ell}}_{L,M,N}[\Psi;p]$, is determined by the currents 
$\left\{T^{\vec{c}}_m(z;p)\right\}_{m \in  \bb{Z}^{\geq 0}}$ whose expression can be obtained from the elliptic Miura transformation below: 
\begin{align}
	\normord{
		R^{(c_1)}_{1}(z;p)R^{(c_2)}_{2}(z;p)\cdots R^{(c_n)}_{n}(z;p)
	}
	\,\, 
	= \sum_{m = 0}^{\infty}(-1)^mT^{\vec{c}}_m(z;p)q_3^{-mD_z},
	\label{431-1247}
\end{align}
provided that 
\begin{align}
	R^{(c)}_{i}(z;p) 
	&:=
	\sum_{k = 0}^{\infty}
	(-1)^kq_3^{\frac{1}{2}k^2}
	q_c^{-\frac{k}{2}}
	\frac{
		\Theta(  q_3^{1-k}q_c		; q_3, p)_k
	}{
		\Theta(  q_3 ; q_3, p)_k
	}
	\normord{
		\Lambda_i^{\vec{c}}(z;p)
		\Lambda_i^{\vec{c}}(q_3^{-1}z;p)
		\cdots
		\Lambda_i^{\vec{c}}(q_3^{-(k-1)}z;p)
	}
	q_3^{-kD_z}, 
	\label{232-eqn}
\end{align}
where $\Lambda^{\vec{c}}_i(z;p)$ is the elliptic vertex operator defined in \textbf{Definition \ref{232-1124}}. The function $\Theta(u;q,p)_n$ is defined as in the introduction, while the shift operator $q_3^{D_z}$ acts as follows: for any function $g(z)$, we have $q_3^{D_z}g(z) = g(q_3z)$.
\label{dfellcorner}
\end{dfn}

Taking the limit $p \rightarrow 0$, we obtain the trigonometric Miura transformation for quantum corner VOA $qY_{L,M,N}[\Psi]$ constructed in \cite{HMNW}. From the \textbf{Definition \ref{dfellcorner}}, we see the connection between the horizontal Fock representation of the elliptic toroidal $\fraks{gl}_1$ algebra and the elliptic corner VOA. That is, we use the elliptic vertex operator 
$\Lambda^{\vec{c}}_i(z;p)$ defined by tensor product of horizontal Fock representations in the construction of the elliptic corner VOA. 

Thus, it is natural to ask whether it is possible to obtain the expression of currents $T^{\vec{c}}_m(z;p)$ directly from the tensor product of horizontal Fock representations. In \cite{HMNW}, it was shown that 
\begin{align}
	\rho^{\vec{c}}_{H,\vec{u}}\left(E_m(z)\right) = T^{\vec{c}}_{m}(z).
	\label{eqn34}
\end{align}
where $E_m(z) := E(q_3^{-m+1}z)E(q_3^{-m+2}z)\cdots E(z)$. 
Therefore, if the \textbf{Definition \ref{dfellcorner}} is correct, we should obtain the elliptic version of \eqref{eqn34}.
We show in \textbf{Proposition \ref{prp32-1may}} below that the elliptic version of \eqref{eqn34} holds, and this justify the \textbf{Definition \ref{dfellcorner}}. 

\begin{prop}
\label{prp32-1may}
For each $m \in \bb{Z}^{\geq 1}$, define $E_m(z;p) := E(q_3^{-m+1}z;p)E(q_3^{-m+2}z;p)\cdots E(z;p)$. 
Then, we get that 
\begin{align}
\rho^{\vec{c},\text{ell}}_{H,\vec{u}}\left(E_m(z;p)\right) = T^{\vec{c}}_{m}(z;p). 
\end{align}
\end{prop}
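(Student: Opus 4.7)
The plan is to expand both sides as sums of normal-ordered products of elliptic vertex operators, parametrised by multiplicity vectors, and match scalar coefficients blockwise. Starting from $E_m(z;p) = \prod_{a=1}^{m}E(q_3^{-m+a}z;p)$ and the elliptic vertex operator expansion \eqref{427-1353}, I would write
\[
\rho^{\vec{c},\text{ell}}_{H,\vec{u}}\bigl(E_m(z;p)\bigr) = \sum_{(j_1,\dots,j_m)\in\{1,\dots,n\}^m}\prod_{a=1}^{m}y_{j_a}(p)\prod_{a=1}^{m}\Lambda^{\vec{c}}_{j_a}(q_3^{-m+a}z;p).
\]
By \textbf{Corollary \ref{cor211-14}} applied to adjacent factors, only weakly decreasing tuples $j_1\geq j_2\geq\cdots\geq j_m$ contribute. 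Such tuples are in bijection with multiplicity vectors $(k_1,\dots,k_n)\in\mathbb{Z}_{\geq 0}^{n}$ satisfying $\sum_{i}k_i=m$, where $k_i$ counts the number of entries equal to $i$.

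Next, I would normal-order each surviving product using \textbf{Proposition \ref{prp210-1230-18}}: pairs $(a,b)$ with $a<b$ and $j_a>j_b$ (case $i>j$) yield no OPE factor, while pairs of equal index $i$ at separation $d=b-a$ contribute $\theta_p(q_{c_i}^{-1}q_3^{d})\theta_p(q_3^{d})/[\theta_p(q_{c_i^\prime}q_3^{d})\theta_p(q_{c_i^{\prime\prime}}q_3^{d})]$. In parallel, I would expand $\normord{R^{(c_1)}_1(z;p)\cdots R^{(c_n)}_n(z;p)}$ using \eqref{232-eqn}, commute each shift $q_3^{-k_iD_z}$ to the right so that the arguments of the $i$-th block become $q_3^{-K_{i-1}-j}z$ for $j=0,\dots,k_i-1$ with $K_i:=k_1+\cdots+k_i$, and extract the coefficient of $q_3^{-mD_z}$ together with the sign $(-1)^m$ dictated by \eqref{431-1247}. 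A direct check shows that the normal-ordered products on both sides agree term by term, and the factor $(-1)^m$ cancels against $\prod_i(-1)^{k_i}$.

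It remains to match the scalar prefactors for each $(k_1,\dots,k_n)$. Because both expressions factorise as products over $i=1,\dots,n$, the claim reduces to the single-block identity
\[
y_c(p)^{k}\prod_{d=1}^{k-1}\left[\frac{\theta_p(q_c^{-1}q_3^{d})\theta_p(q_3^{d})}{\theta_p(q_{c^\prime}q_3^{d})\theta_p(q_{c^{\prime\prime}}q_3^{d})}\right]^{k-d}=q_3^{k^2/2}q_c^{-k/2}\frac{\Theta(q_3^{1-k}q_c;q_3,p)_k}{\Theta(q_3;q_3,p)_k}
\]
for every $c\in\{1,2,3\}$ and $k\geq 1$. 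I would establish this by induction on $k$. The base case $k=1$ follows at once from \eqref{232-1124} and the inversion $\theta_p(q_c^{-1})=-q_c^{-1}\theta_p(q_c)$. For the inductive step, comparing ratios at levels $k$ and $k-1$ and using $\Theta(q_3^{1-k}q_c;q_3,p)_k=\theta_p(q_3^{1-k}q_c)\Theta(q_3^{2-k}q_c;q_3,p)_{k-1}$, the problem reduces to $y_c(p)\prod_{d=1}^{k-1}F_d=q_3^{(2k-1)/2}q_c^{-1/2}\theta_p(q_3^{1-k}q_c)/\theta_p(q_3^k)$, which follows by a double telescoping in the product $\prod_{d}F_d$---one collapse uses $q_{c^{\prime\prime}}q_3^{d}=q_3^{d+1}$ when $c^{\prime\prime}=3$, and the other uses $q_{c^\prime}q_3^{d}=q_{c^{\prime\prime}}^{-1}q_3^{d-1}$ via $q_1q_2q_3=1$---combined with one final application of the inversion formula. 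The main obstacle is precisely this bookkeeping, together with verifying the degenerate case $c=3$, $k\geq 2$, where the telescoping is not available: there both sides vanish identically because $\theta_p(1)=0$ appears in $\Theta(q_3^{2-k};q_3,p)_k$ on the right and in the $d=1$ factor $\theta_p(q_3^{d-1})$ on the left, so consistency must be checked directly rather than by cancelling potentially zero factors.
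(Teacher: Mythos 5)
Your proposal is correct and takes essentially the same route as the paper: expand $\rho^{\vec{c},\text{ell}}_{H,\vec{u}}(E_m(z;p))$ via \eqref{427-1353}, discard non-weakly-decreasing tuples by Corollary \ref{cor211-14}, normal-order blockwise with Proposition \ref{prp210-1230-18}, expand the Miura side \eqref{431-1247} with the shifts pushed through, and match coefficients composition by composition; the single-block theta identity you isolate (proved by telescoping, with the degenerate $c=3$, $k\geq 2$ case where both sides vanish because of $\theta_p(1)=0$) is precisely the step the paper compresses into ``we can show that,'' the common intermediate form being $\prod_{\ell=1}^{k}\bigl(-q_c^{1/2}q_3^{1/2}\bigr)\theta_p(q_3^{\ell-1}q_c^{-1})/\theta_p(q_3^{\ell})$. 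One small slip: in your second telescoping the relation should read $q_{c'}q_3^{d}=q_c^{-1}q_3^{d-1}$ (from $q_cq_{c'}q_3=1$ when $c''=3$), not $q_{c''}^{-1}q_3^{d-1}$.
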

\begin{proof}
\begin{align}
	\rho^{\vec{c},\text{ell}}_{H,\vec{u}}\left(E_m(z;p)\right) 
	&= 
	\rho^{\vec{c},\text{ell}}_{H,\vec{u}}\left(E(q_3^{-m+1}z;p)\right)
	\rho^{\vec{c},\text{ell}}_{H,\vec{u}}\left(E(q_3^{-m+2}z;p)\right)
	\cdots
	\rho^{\vec{c},\text{ell}}_{H,\vec{u}}\left(E(z;p)\right)
	\notag \\
	&= 
	\sum_{i_{m} = 1}^{n}
	\cdots
	\sum_{i_1 = 1}^{n}
	y_{i_{m}}(p) \cdots y_{i_1}(p)
	\Lambda^{\vec{c}}_{i_{m}}(q_3^{-m+1}z;p) \cdots \Lambda^{\vec{c}}_{i_1}(z;p)
	\notag \\
	&= 
	\underbrace{	\sum_{i_{m} = 1}^{n}	\cdots
		\sum_{i_1 = 1}^{n}			}_{
		i_m \geq \cdots \geq i_2 \geq i_1
	}
	y_{i_{m}}(p) \cdots y_{i_1}(p)
	\Lambda^{\vec{c}}_{i_{m}}(q_3^{-m+1}z;p) \cdots \Lambda^{\vec{c}}_{i_1}(z;p).
	\label{434-1253-5}
\end{align}
In the last equality, we apply corollary \ref{cor211-14}. We can rewrite \eqref{434-1253-5} as 
\begin{align}
	\rho^{\vec{c},\text{ell}}_{H,\vec{u}}\left(E_m(z;p)\right) = 
	\underbrace{			\sum_{m_1,\dots,m_n \in \bb{Z}^{\geq 0}}		}_{m_1 + \cdots + m_n = m}
	&\left(
	\prod_{i_n = 1}^{m_n}y_n(p)\Lambda^{\vec{c}}_n(q_3^{-m+i_n}z;p)
	\right)
	\times
	\left(
	\prod_{i_{n-1} = 1}^{m_{n-1}}y_{n-1}(p)\Lambda^{\vec{c}}_{n-1}(q_3^{-m+m_n+i_{n-1}}z;p)
	\right)
	\\ \notag 
	&\times
	\cdots
	\left(
	\prod_{i_1 = 1}^{m_1}y_1(p)\Lambda^{\vec{c}}_1(q_3^{-m+m_n+\cdots+m_2+i_1}z;p)
	\right).
\end{align}
According to \textbf{Proposition \ref{prp210-1230-18}}, we can show that 
\begin{align}
	\prod_{i_n = 1}^{m_n}y_n(p)\Lambda^{\vec{c}}_n(q_3^{-m+i_n}z;p)
	= 
	\prod_{\ell = 1}^{m_n}
	\left[
	\left(-q^{1/2}_{c_n}q^{1/2}_{3}\right)
	\frac{
		\theta_p\left(q_3^{\ell - 1}q_{c_n}^{-1}\right)
	}{
		\theta_p\left(q_3^{\ell}\right)
	}
	\right]
	\normord{
		\prod_{i_n = 1}^{m_n}\Lambda^{\vec{c}}_n(q_3^{-m+i_n}z;p) 
	}. 
\end{align}
Hence, 
\begin{align}
	&\rho^{\vec{c},\text{ell}}_{H,\vec{u}}\left(E_m(z;p)\right) 
	\notag \\
	&= 
	\underbrace{			\sum_{m_1,\dots,m_n \in \bb{Z}^{\geq 0}}		}_{m_1 + \cdots + m_n = m}
	\left(
	\prod_{k = 1}^{n}
	\prod_{\ell_k = 1}^{m_k}
	\left(-q^{1/2}_{c_k}q^{1/2}_{3}\right)
	\frac{
		\theta_p\left(q_3^{\ell_k - 1}q_{c_k}^{-1}\right)
	}{
		\theta_p\left(q_3^{\ell_k}\right)
	}
	\right)
	\normord{
		\prod_{k = 1}^{n}
		\prod_{i_k = 1}^{m_k}
		\Lambda^{\vec{c}}_k(q_3^{- \sum_{j = 1}^{k}m_j+ i_k}z;p)
	}. 
\end{align}
In the above equation, we used the fact that the product 
\begin{align}
\normord{
	\prod_{i_n = 1}^{m_n}\Lambda^{\vec{c}}_n(q_3^{-m+i_n}z;p) 
}
\times
\cdots
\times
\normord{
	\prod_{i_1 = 1}^{m_1}\Lambda^{\vec{c}}_1(q_3^{-m+m_n+\cdots+m_2+i_1}z;p)
}
\end{align}
is already in normal-ordered form. Thus, 
\begin{align}
	&\normord{
		\prod_{i_n = 1}^{m_n}\Lambda^{\vec{c}}_n(q_3^{-m+i_n}z;p) 
	}
	\times
	\cdots
	\times
	\normord{
		\prod_{i_1 = 1}^{m_1}\Lambda^{\vec{c}}_1(q_3^{-m+m_n+\cdots+m_2+i_1}z;p)
	}
	\notag	\\
	&= 
	\normord{
		\prod_{i_n = 1}^{m_n}\Lambda^{\vec{c}}_n(q_3^{-m+i_n}z;p) 
		\times
		\cdots
		\times
		\prod_{i_1 = 1}^{m_1}\Lambda^{\vec{c}}_1(q_3^{-m+m_n+\cdots+m_2+i_1}z;p)
	}. 
\end{align}
Since 
\begin{align}
	\normord{
		\prod_{i_k = 1}^{m_k}
		\Lambda^{\vec{c}}_k(q_3^{- \sum_{j = 1}^{k}m_j+ i_k}z;p)
	}
	&=
	\normord{
		\prod_{i_k = 1}^{m_k}
		\Lambda^{\vec{c}}_k(q_3^{- \sum_{j = 1}^{k-1}m_j - (m_k - i_k)}z;p)
	}
	\notag	\\
	&=
	\normord{
		\prod_{i_k = 1}^{m_k}
		\Lambda^{\vec{c}}_k(q_3^{- \sum_{j = 1}^{k-1}m_j -  i_k + 1}z;p)
	},
\end{align}
we obtain that 
\begin{align}
	&\rho^{\vec{c}}_{H,\vec{u}}\left(E_m(z;p)\right) 
	= 
	\underbrace{			\sum_{m_1,\dots,m_n \in \bb{Z}^{\geq 0}}		}_{m_1 + \cdots + m_n = m}
	\left(
	\prod_{k = 1}^{n}
	\prod_{\ell_k = 1}^{m_k}
	\left(-q^{1/2}_{c_k}q^{1/2}_{3}\right)
	\frac{
		\theta_p\left(q_3^{\ell_k - 1}q_{c_k}^{-1}\right)
	}{
		\theta_p\left(q_3^{\ell_k}\right)
	}
	\right)
	\normord{
		\prod_{k = 1}^{n}
		\prod_{i_k = 1}^{m_k}
		\Lambda^{\vec{c}}_k(q_3^{- \sum_{j = 1}^{k-1}m_j -  i_k + 1}z;p)
	}.
\end{align}
On the other side, according to \eqref{431-1247}, we can show that 
\begin{align}
	T^{\vec{c}}_m(z;p)
	&=
	\underbrace{			\sum_{k_1, \dots,k_n \in \bb{Z}^{\geq 0}}			}_{k_1 + \cdots + k_n = m}
	\left[
	\prod_{i = 1}^{n}
	\prod_{j_i = 1}^{k_i}
	\left(
	-(q^{1/2}_{c_i}q^{1/2}_{3})
	\frac{\theta_p(q_3^{j_i-1}q_{c_i}^{-1})}{\theta_p(q_3^{j_i})}
	\right)
	\right]
	\normord{
		\prod_{i = 1}^{n}\prod_{j_i = 1}^{k_i}
		\Lambda^{\vec{c}}_i(q_3^{- \sum_{\ell = 1}^{i-1}k_\ell -j_i + 1}z;p)
	}.
\end{align}
So we have proved the proposition. 
\end{proof}

Similar to the \textbf{Definition \ref{dfellcorner}}, the decoupled currents $\left\{\widetilde{T}^{\vec{c}}_m(z;p)\right\}_{m \in  \bb{Z}^{\geq 0}}$ can be defined via the relation \eqref{431-1247}, but this time we have to replace the elliptic vertex operator $\Lambda^{\vec{c}}_i(z;p)$ in $R^{(c)}_{i}(z;p)$ by the decoupled elliptic vertex operator $\widetilde{\Lambda}^{\vec{c}}_i(z;p)$. Clearly, 
\begin{align}
	\widetilde{T}^{\vec{c}}_m(z;p)
	&=
	\underbrace{			\sum_{k_1, \dots,k_n \in \bb{Z}^{\geq 0}}			}_{k_1 + \cdots + k_n = m}
	\left[
	\prod_{i = 1}^{n}
	\prod_{j_i = 1}^{k_i}
	\left(
	-(q^{1/2}_{c_i}q^{1/2}_{3})
	\frac{\theta_p(q_3^{j_i-1}q_{c_i}^{-1})}{\theta_p(q_3^{j_i})}
	\right)
	\right]
	\normord{
		\prod_{i = 1}^{n}\prod_{j_i = 1}^{k_i}
		\widetilde{		\Lambda		}^{\vec{c}}_i(q_3^{- \sum_{\ell = 1}^{i-1}k_\ell -j_i + 1}z;p)
	}. 
\label{eqn314-1625}
\end{align}

\begin{dfn}
The algebra determined by the decoupled currents $\left\{\widetilde{T}^{\vec{c}}_m(z;p)\right\}_{m \in  \bb{Z}^{\geq 0}}$, subjected to operator product relations,is called the decoupled elliptic corner VOA, and is denoted by $q\widetilde{Y}^{\text{ell}}_{L,M,N}[\Psi;p]$. 
\end{dfn}

Since our goal is to determine the fusion relations of the decoupled elliptic corner VOA, from now on, we focus primarily to the decoupled elliptic corner VOA $q\widetilde{Y}^{\text{ell}}_{L,M,N}[\Psi;p]$.

\section{Fusion relations of decoupled elliptic corner VOA}
\label{sec4fusion}

In this section, we deduce the two relations which are known as the fusion rules of the elliptic corner VOA. 
Throughout this section we fix a $\vec{c} = (c_1,\dots,c_n)$ such that $q_{\vec{c}} \neq 1$. 

\subsection{Fusion relation (I)}

\begin{lem}\mbox{}
\begin{align}
	&f^{\vec{c}}_{m,r}\left(q_3^{\frac{m-r}{2}}\frac{z}{w} ; p\right)\widetilde{T}^{\vec{c}}_{m}(w ; p)\widetilde{T}^{\vec{c}}_{r}(z ; p) 
	\notag \\
	&= 
	\sum_{j = 1}^{r}
	\underbrace{		\sum_{p_1,\dots,p_j \in \bb{Z}^{\geq 1}}			}_{1 \leq p_1 < p_2 < \cdots < p_j \leq n}
	\underbrace{	\sum_{m_{p_1},\dots,m_{p_j} \in \bb{Z}^{\geq 1}}				}_{
		\substack{
m_{p_1} + \cdots + m_{p_j} = r
		}
	}
	\underbrace{		\sum_{m^\prime_1, \dots, m^\prime_n \in \bb{Z}^{\geq 0}}			}_{m^\prime_1 + \cdots + m^\prime_n = m}
	\biggl\{
	A(m_{p_1},c_{p_1} ; p)\cdots A(m_{p_j},c_{p_j} ; p)
	\times
	\prod_{k = 1}^{n}A(m^\prime_k,c_k ; p)
	\notag \\
	&\hspace{0.3cm}\times \prod_{\ell = 1}^{j}
	\prod_{u_{p_\ell} = 1}^{m_{p_\ell}}
	\textbf{O}^{\prime \,\, (m^\prime_1,\dots,m^\prime_n)}_{p_\ell}
	\left(
	(q_3^{-1})^{\sum_{k = 1}^{\ell - 1}m_{p_k} + u_{p_\ell} - 1}
	\frac{z}{w} ; p
	\right)
	\notag \\
	&\hspace{0.3cm}\times
	\normord{
		\prod_{\ell = 1}^{n}
		\prod_{u_\ell = 1}^{m^\prime_\ell}
		\widetilde{\Lambda}_\ell^{\vec{c}}
		\left(			
		(q_3^{-1})^{\sum_{k = 1}^{\ell - 1}m^\prime_k + u_\ell - 1}w ; p
		\right)
		\prod_{\ell = 1}^{j}
		\prod_{u_{p_\ell} = 1}^{m_{p_\ell}}
		\widetilde{\Lambda}^{\vec{c}}_{p_\ell}\left((q_3^{-1})^{\sum_{k = 1}^{\ell - 1}m_{p_k} + u_{p_\ell} - 1			}z ; p\right)
	}\biggr\},
	\label{prp14-13mar}
\end{align}
provided that 
\begin{align}
	&\textbf{O}^{\prime \,\, (m^\prime_1,\dots,m^\prime_n)}_{p_\ell}(z/w ; p)
	\notag \\
	&= \frac{1}{
		\prod_{k = 1}^{m-1}
		\Delta\left(	q_3^{m - k}q_3^{-\frac{1}{2}}		\frac{z}{w}	; p \right)
	}
	\notag \\
	&\times
	\prod_{\ell = 1}^{\sum_{k = 1}^{p_\ell-1}m^\prime_k}
	\Delta\left(	q_3^{-\frac{1}{2}}q_3^\ell	\frac{z}{w}	; p\right)
	\times
	\prod_{\ell = \sum_{k = 1}^{p_\ell}m^\prime_k}^{m - 1}
	\Delta\left(	q_3^{-\frac{1}{2}}q_3^\ell	\frac{z}{w} ; p	\right)
	\times
	\prod_{u = 1}^{m^\prime_{p_\ell}}\gamma_{c_{p_\ell}}\left(
	q^{\sum_{k = 1}^{p_\ell - 1}m^\prime_k + u - 1}_{3}
	\frac{z}{w} ; p
	\right).
	\label{eqn43-1301-18-4}
\end{align}
\label{lemm41-0920}
\end{lem}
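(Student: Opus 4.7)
The plan is to prove the identity by direct expansion. I would begin by substituting the explicit formula \eqref{eqn314-1625} for both $\widetilde{T}^{\vec{c}}_m(w;p)$ and $\widetilde{T}^{\vec{c}}_r(z;p)$ into the left-hand side, indexing the outer sums by $(m'_1,\dots,m'_n) \in (\bb{Z}^{\geq 0})^n$ with $\sum_i m'_i = m$ and by $(k_1,\dots,k_n) \in (\bb{Z}^{\geq 0})^n$ with $\sum_i k_i = r$. The key reparameterization is that each $(k_1,\dots,k_n)$ corresponds bijectively to a subset $\{p_1<\cdots<p_j\} := \{i : k_i > 0\} \subseteq \{1,\dots,n\}$ together with positive multiplicities $m_{p_\ell} := k_{p_\ell}$; this exactly reproduces the outer summation scheme on the right-hand side of \eqref{prp14-13mar}.

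With the sums matched, it remains to identify the summand. After substitution, each term is a product of two normal-ordered strings, one from $\widetilde{T}^{\vec{c}}_m(w;p)$ and one from $\widetilde{T}^{\vec{c}}_r(z;p)$, which must be renormal-ordered by moving every operator $\widetilde{\Lambda}^{\vec{c}}_{p_\ell}$ (with a $q_3$-shifted argument of $z$) past every $\widetilde{\Lambda}^{\vec{c}}_k$ (with a $q_3$-shifted argument of $w$) via \textbf{Proposition \ref{prp216-2149}}. The three cases $p_\ell > k$, $p_\ell = k$, and $p_\ell < k$ produce OPE factors $f^{\vec{c}}_{1,1}(\cdot;p)^{-1}\Delta(q_3^{-1/2}\cdot;p)$, $f^{\vec{c}}_{1,1}(\cdot;p)^{-1}\gamma_{c_{p_\ell}}(\cdot;p)$, and $f^{\vec{c}}_{1,1}(\cdot;p)^{-1}\Delta(q_3^{1/2}\cdot;p)$ respectively, with the appropriate shifts. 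In parallel, I would expand the prefactor $f^{\vec{c}}_{m,r}(q_3^{(m-r)/2}\tfrac{z}{w};p)$ by iterating items (\ref{217-1}) and (\ref{217-2}) of \textbf{Proposition \ref{217}}, writing it as a product of $mr$ factors $f^{\vec{c}}_{1,1}$ in the numerator divided by $(m-1)r$ factors $\Delta$ in the denominator, each evaluated at a specific shift of $z/w$. The numerator is arranged so that its $mr$ factors of $f^{\vec{c}}_{1,1}$ cancel exactly the $f^{\vec{c}}_{1,1}(\cdot;p)^{-1}$ coming from the $r\cdot m$ pairwise OPEs.

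What remains after this cancellation is a product of $\Delta$ and $\gamma$ factors that I would then recognize as the explicit expression of $\textbf{O}'$ in \eqref{eqn43-1301-18-4}: the denominator $\prod_{k=1}^{m-1}\Delta(q_3^{m-k}q_3^{-1/2}\tfrac{z}{w};p)^{-1}$ per move is inherited from the prefactor expansion, while the two surviving $\Delta$-products over $\ell = 1,\dots,\sum_{k<p_\ell}m'_k$ and $\ell = \sum_{k\leq p_\ell}m'_k,\dots,m-1$, together with the $\gamma_{c_{p_\ell}}$-product, respectively collect the contractions of $\widetilde{\Lambda}^{\vec{c}}_{p_\ell}$ against the $\widetilde{\Lambda}^{\vec{c}}_k$ for $k<p_\ell$, $k>p_\ell$, and $k=p_\ell$. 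The main obstacle here is purely bookkeeping: the $q_3$-shifts built into the definitions of $\widetilde{T}^{\vec{c}}_m$ and $\widetilde{T}^{\vec{c}}_r$, together with the shift $q_3^{(m-r)/2}$ in the prefactor, must all align so that the cancellations and regroupings produce exactly \eqref{eqn43-1301-18-4}. This can be verified by inspecting a generic summand and tracking how each pairwise OPE shift of the form $q_3^{\bullet}\tfrac{z}{w}$ fits into the $q_3$-grids appearing in \textbf{Proposition \ref{217}}.
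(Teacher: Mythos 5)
Your proposal is correct and follows essentially the same route as the paper's proof: expand both currents via \eqref{eqn314-1625} (reparameterizing the $\widetilde{T}^{\vec{c}}_r$-sum by the support $\{p_1<\cdots<p_j\}$ with positive multiplicities), normal-order with \textbf{Proposition \ref{prp216-2149}}, split $f^{\vec{c}}_{m,r}$ into $r$ copies of $f^{\vec{c}}_{1,m}$ and then into $f^{\vec{c}}_{11}$'s over $\Delta$'s via \textbf{Proposition \ref{217}} (\ref{217-1}),(\ref{217-2}), cancel the $mr$ factors of $f^{\vec{c}}_{11}$ against the OPE factors, and identify the remainder with $\textbf{O}'$. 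Your labelling of the two $\Delta$-cases agrees with the paper's intermediate factor once the argument convention is fixed (using $\Delta(x;p)=\Delta(x^{-1};p)$), so the only work you defer is exactly the shift bookkeeping the paper carries out explicitly.
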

\begin{proof}
According to \eqref{eqn314-1625}, we know that 
\begin{align}
	\widetilde{T}^{\vec{c}}_{m}(w;p) &= \underbrace{\sum_{m^\prime_{1},\dots,m^\prime_n \in \bb{Z}^{\geq 0}}}_{
		m^\prime_1 + \cdots + m^\prime_n = m
	}
	\left(
	\prod_{k = 1}^{n}A(m^\prime_k,c_k ; p)
	\right)
	\normord{
		\prod_{\ell = 1}^{n}
		\prod_{u_\ell = 1}^{m^\prime_\ell}
		\widetilde{\Lambda}_\ell^{\vec{c}}
		\left(
		(q_3^{-1})^{\sum_{k = 1}^{\ell - 1}m^\prime_k + u_\ell - 1}w ; p
		\right)
	},
	\\
	\widetilde{T}^{\vec{c}}_{r}(z ; p)
	&= 
	\sum_{j = 1}^{r}
	\underbrace{		\sum_{p_1,\dots,p_j \in \bb{Z}^{\geq 1}}			}_{1 \leq p_1 < p_2 < \cdots < p_j \leq n}
	\underbrace{	\sum_{m_{p_1},\dots,m_{p_j} \in \bb{Z}^{\geq 1}}				}_{
		\substack{
m_{p_1} + \cdots + m_{p_j} = r
		}
	}
	\bigg(
	\prod_{\ell = 1}^{j}A(m_{p_\ell},c_{p_\ell} ; p)
	\bigg)
	\normord{
		\prod_{\ell = 1}^{j}
		\prod_{u_{p_\ell} = 1}^{m_{p_\ell}}
		\widetilde{\Lambda}^{\vec{c}}_{p_\ell}\left((q_3^{-1})^{\sum_{k = 1}^{\ell - 1}m_{p_k} + u_{p_\ell} - 1			}z ; p\right)
	},
\end{align}
where 
\begin{align}
	A(m_k,c_k;p)  = 
	\prod_{\ell = 1}^{m_k}
	\left(
	-(q^{1/2}_{c_k}q^{1/2}_{3})
	\frac{\theta_p(q_3^{\ell-1}q_{c_k}^{-1})}{\theta_p(q_3^{\ell})}
	\right).
\end{align}
By employing \textbf{Proposition \ref{prp216-2149}}, we can show that 
\begin{align}
	&\widetilde{T}^{\vec{c}}_{m}(w ; p)\widetilde{T}^{\vec{c}}_{r}(z ; p)
	\notag \\
	&= 
	\sum_{j = 1}^{r}
	\underbrace{		\sum_{p_1,\dots,p_j \in \bb{Z}^{\geq 1}}			}_{1 \leq p_1 < p_2 < \cdots < p_j \leq n}
	\underbrace{	\sum_{m_{p_1},\dots,m_{p_j} \in \bb{Z}^{\geq 1}}				}_{
		\substack{
m_{p_1} + \cdots + m_{p_j} = r
		}
	}
	\underbrace{		\sum_{m^\prime_1, \dots, m^\prime_n \in \bb{Z}^{\geq 0}}			}_{m^\prime_1 + \cdots + m^\prime_n = m}
	\biggl\{
	A(m_{p_1},c_{p_1} ; p)\cdots A(m_{p_j},c_{p_j} ; p)
	\times
	\prod_{k = 1}^{n}A(m^\prime_k,c_k ; p)
	\notag \\
	&\hspace{0.3cm}
	\times
	\prod_{\ell = 1}^{j}
	\prod_{u_{p_\ell} = 1}^{m_{p_\ell}}
	\textbf{Factor}^{\prime \,\, (m^\prime_1,\dots,m^\prime_n)}_{p_\ell}
	\left(
	(q_3^{-1})^{\sum_{k = 1}^{\ell - 1}m_{p_k} + u_{p_\ell} - 1}
	\frac{z}{w} ; p
	\right)
	\notag \\
	&\hspace{0.3cm}
	\times
	\normord{
		\prod_{\ell = 1}^{n}
		\prod_{u_\ell = 1}^{m^\prime_\ell}
		\widetilde{\Lambda}^{\vec{c}}_\ell
		\left(			
		(q_3^{-1})^{\sum_{k = 1}^{\ell - 1}m^\prime_k + u_\ell - 1}w ; p
		\right)
		\prod_{\ell = 1}^{j}
		\prod_{u_{p_\ell} = 1}^{m_{p_\ell}}
		\widetilde{\Lambda}^{\vec{c}}_{p_\ell}\left((q_3^{-1})^{\sum_{k = 1}^{\ell - 1}m_{p_k} + u_{p_\ell} - 1			}z ; p\right)
	}
	\biggr\},
\end{align}
where 
\begin{align}
	&\text{\textbf{Factor}}^{\prime \,\, (m^\prime_1,\dots,m^\prime_n)}_{p_\ell}(z/w ; p)
	\notag \\
	&= \bigg[
	\prod_{\ell = 0}^{\sum_{i = 1}^{n}m^\prime_i - 1}
	f^{\vec{c}}_{11}\left(q^{\ell}_3\frac{z}{w} ; p\right)^{-1}
	\bigg] 
	\times
	\bigg[
	\prod_{\ell = 1}^{p_\ell - 1}\prod_{u_\ell = 1}^{m^\prime_\ell}
	\Delta\left(q^{1/2}_{3}q^{\sum_{k = 1}^{\ell - 1}m^\prime_k + u_\ell - 1}_{3}\frac{z}{w} ; p\right)
	\bigg]
	\notag \\
	&\hspace{0.3cm} \times 
	\bigg[
	\prod_{u = 1}^{m^\prime_{p_\ell}}\gamma_{c_{p_\ell}}\left(
	q^{\sum_{k = 1}^{p_\ell - 1}m^\prime_k + u - 1}_{3}
	\frac{z}{w} ; p
	\right)
	\bigg]
	\times
	\bigg[
	\prod_{\ell = p_\ell+1}^{n}\prod_{u_\ell = 1}^{m^\prime_\ell}
	\Delta\left(q^{-1/2}_{3}q^{\sum_{k = 1}^{\ell - 1}m^\prime_k + u_\ell - 1}_{3}\frac{z}{w} ; p\right)
	\bigg].
\label{eqn44}
\end{align}
According to \textbf{Proposition \ref{217} \eqref{217-1}}, we can show that 
\begin{align}
&f^{\vec{c}}_{m,r}\left(q_3^{\frac{m-r}{2}}\frac{z}{w} ; p\right)
\times
\prod_{\ell = 1}^{j}
\prod_{u_{p_\ell} = 1}^{m_{p_\ell}}
\textbf{Factor}^{\prime \,\, (m^\prime_1,\dots,m^\prime_n)}_{p_\ell}
\left(
(q_3^{-1})^{\sum_{k = 1}^{\ell - 1}m_{p_k} + u_{p_\ell} - 1}
\frac{z}{w} ; p
\right)
\notag \\
&=
\prod_{\ell = 1}^{j}
\prod_{u_{p_\ell} = 1}^{m_{p_\ell}}
f^{\vec{c}}_{1,m}\left(q_3^{\frac{m - 1}{2}}(q_3^{-1})^{\sum_{k = 1}^{\ell - 1}m_{p_k} + u_{p_\ell} - 1}
\frac{z}{w}  ; p\right)
\textbf{Factor}^{\prime \,\, (m^\prime_1,\dots,m^\prime_n)}_{p_\ell}
\left(
(q_3^{-1})^{\sum_{k = 1}^{\ell - 1}m_{p_k} + u_{p_\ell} - 1}
\frac{z}{w} ; p
\right).
\end{align}
Note that here we use the fact that $m_{p_1} + \cdots + m_{p_j} = r$. 
From \textbf{Proposition \ref{217} \eqref{217-2}}, equation \eqref{eqn44} and the fact that $m^\prime_1 + \cdots + m^\prime_n = m$, it is not hard to see that 
\begin{align}
&f^{\vec{c}}_{1,m}\left(q_3^{\frac{m - 1}{2}}\frac{z}{w} ; p\right)
\textbf{Factor}^{\prime \,\, (m^\prime_1,\dots,m^\prime_n)}_{p_\ell}(z/w ; p)
\notag \\
&= \frac{1}{
	\prod_{k = 1}^{m-1}
	\Delta\left(	q_3^{m - k}q_3^{-\frac{1}{2}}		\frac{z}{w}	; p \right)
}
\notag \\
&\hspace{0.3cm}
\times
\prod_{\ell = 1}^{\sum_{k = 1}^{p_\ell-1}m^\prime_k}
\Delta\left(	q_3^{-\frac{1}{2}}q_3^\ell	\frac{z}{w}	; p\right)
\times
\prod_{\ell = \sum_{k = 1}^{p_\ell}m^\prime_k}^{m - 1}
\Delta\left(	q_3^{-\frac{1}{2}}q_3^\ell	\frac{z}{w} ; p	\right)
\times
\prod_{u = 1}^{m^\prime_{p_\ell}}\gamma_{c_{p_\ell}}\left(
q^{\sum_{k = 1}^{p_\ell - 1}m^\prime_k + u - 1}_{3}
\frac{z}{w} ; p
\right).
\end{align}
So we have proved \textbf{Lemma \ref{lemm41-0920}}. 
\end{proof}

\begin{lem}\mbox{}
\begin{align}
	\textbf{O}^{\prime \,\, (m^\prime_1,\dots,m^\prime_n)}_{p_\ell}(z/w ; p)
	= 
	\begin{cases}
		\displaystyle
		\frac{
			\theta_p\left(q_{c_{p_\ell}}q_3^{\sum_{i = 1}^{p_\ell-1}m^\prime_i}\frac{z}{w}\right)
			\theta_p\left(q_{c^\prime_{p_\ell}}q_3^{\sum_{i = 1}^{p_\ell}m^\prime_i}\frac{z}{w}\right)
		}{
			\theta_p\left(q_3^{-1}q_3^{\sum_{i = 1}^{p_\ell-1}m^\prime_i}\frac{z}{w}\right)
			\theta_p\left(q_3^{\sum_{i = 1}^{p_\ell}m^\prime_i}\frac{z}{w}\right)
		}
		\text{ if } c_{p_\ell} = 1, 2
		\\
		\displaystyle
		\frac{
			\theta_p\left(q_1q_3^{\sum_{k = 1}^{p_\ell-1} m^\prime_k}\frac{z}{w}\right)
			\theta_p\left( q_2q_3^{\sum_{k = 1}^{p_\ell-1} m^\prime_k}\frac{z}{w}\right)
		}{
			\theta_p\left( q_3^{\sum_{k = 1}^{p_\ell-1} m^\prime_k}\frac{z}{w}\right)
			\theta_p\left( q_3^{-1}q_3^{\sum_{k = 1}^{p_\ell-1} m^\prime_k}\frac{z}{w}\right)
		}
		\text{ if } c_{p_\ell} = 3 \text{ and } m^\prime_{p_\ell} = 0
		\\
		\displaystyle
		\prod_{k = \sum_{i = 1}^{p_\ell-1}m^\prime_i + 1}^{\sum_{i = 1}^{p_\ell}m^\prime_i - 1}
		\frac{
			\theta_p\left( q_3^k\frac{z}{w}\right)
			\theta_p\left( q_3^{k-1}\frac{z}{w}\right)
		}{
			\theta_p\left( q_1q_3^k\frac{z}{w}\right)
			\theta_p\left( q_2q_3^k\frac{z}{w}\right)
		}
		\text{ if } c_{p_\ell} = 3 \text{ and } m^\prime_{p_\ell} \neq 0
	\end{cases}
\end{align}
\label{lemm42-0950}
\end{lem}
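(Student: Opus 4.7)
The plan is to reduce the four-fold product defining $\textbf{O}^{\prime\,(m^\prime_1,\ldots,m^\prime_n)}_{p_\ell}$ to a few surviving theta factors by first collapsing the three $\Delta$-products against each other and then combining what remains with the $\gamma_{c_{p_\ell}}$-product. Throughout I abbreviate $S := \sum_{k=1}^{p_\ell-1} m^\prime_k$ and $\Delta_\ell := \Delta(q_3^{\ell-1/2} z/w ; p)$.

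First, I would reindex the denominator by $\ell \mapsto m - \ell$ to rewrite $\prod_{k=1}^{m-1} \Delta(q_3^{m-k} q_3^{-1/2} z/w ; p)$ as $\prod_{\ell=1}^{m-1}\Delta_\ell$. The two numerator $\Delta$-products range over $\ell \in \{1,\ldots,S\}$ and $\ell \in \{S+m^\prime_{p_\ell},\ldots,m-1\}$. Dividing, the $\Delta$-contribution collapses to $\Delta_S$ when $m^\prime_{p_\ell} = 0$ (because the index $\ell = S$ is then double-counted in the numerator) and to $\prod_{\ell = S+1}^{S+m^\prime_{p_\ell}-1}\Delta_\ell^{-1}$ when $m^\prime_{p_\ell} \geq 1$ (because the middle range $\{S+1,\ldots,S+m^\prime_{p_\ell}-1\}$ is absent from the numerator).

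Next I would expand each $\Delta_\ell$ via $q_1 q_2 q_3 = 1$, which yields $\Delta(q_3^{\ell-1/2} z/w ; p) = \frac{\theta_p(q_1 q_3^\ell z/w)\,\theta_p(q_2 q_3^\ell z/w)}{\theta_p(q_3^\ell z/w)\,\theta_p(q_3^{\ell-1} z/w)}$. For $c_{p_\ell} = 3$ we have $\gamma_3 \equiv 1$ identically, so the reduction above immediately produces the second case (when $m^\prime_{p_\ell} = 0$) and the third case (when $m^\prime_{p_\ell} \geq 1$). For $c_{p_\ell} \in \{1,2\}$ with $m^\prime_{p_\ell} = 0$ the $\gamma$-product is empty and the lone $\Delta_S$ again gives the first case, with $c^\prime_{p_\ell}$ understood as the element of $\{1,2\}$ other than $c_{p_\ell}$.

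The remaining case $c_{p_\ell} \in \{1,2\}$ with $m^\prime_{p_\ell} \geq 1$ is the main technical step. Here I would expand each $\gamma_{c_{p_\ell}}(q_3^{S+u-1} z/w ; p)$ by the same $q_1 q_2 q_3 = 1$ identity so that every theta-argument lies on the union of the grids $q_{c_{p_\ell}} q_3^k z/w$, $q_{c^\prime_{p_\ell}} q_3^k z/w$, and $q_3^k z/w$, and then multiply by $\prod_{\ell=S+1}^{S+m^\prime_{p_\ell}-1}\Delta_\ell^{-1}$. A telescoping computation will leave exactly $\theta_p(q_{c_{p_\ell}} q_3^S z/w)$ and $\theta_p(q_{c^\prime_{p_\ell}} q_3^{S+m^\prime_{p_\ell}} z/w)$ upstairs and $\theta_p(q_3^{S-1} z/w)\,\theta_p(q_3^{S+m^\prime_{p_\ell}} z/w)$ downstairs, matching the first case of the statement. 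The boundary value $m^\prime_{p_\ell} = 1$, where the $\Delta^{-1}$ product is empty, must be checked directly from $\gamma_{c_{p_\ell}}(q_3^S z/w ; p)$; I expect this telescoping bookkeeping in the $c_{p_\ell} \in \{1,2\}$, $m^\prime_{p_\ell} \geq 1$ case to be the main obstacle, everything else being routine reorganisation.
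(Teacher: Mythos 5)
Your proposal is correct and is essentially the computation the paper intends, since the paper's proof is just the remark that the lemma follows directly from the defining formula \eqref{eqn43-1301-18-4}: reindexing the denominator, collapsing the three $\Delta$-products to $\Delta(q_3^{S-\frac12}\frac{z}{w};p)$ or to $\prod_{\ell=S+1}^{S+m'_{p_\ell}-1}\Delta(q_3^{\ell-\frac12}\frac{z}{w};p)^{-1}$, using $q_1q_2q_3=1$ to write $\Delta$ and $\gamma_{c_{p_\ell}}$ on the integer $q_3$-grid, and telescoping against the $\gamma$-product indeed yields the three displayed cases. Only a cosmetic caveat: when $S=0$ the collapse to $\Delta_S$ holds not by double-counting of $\ell=S$ but because the second numerator product then starts at $\ell=0$, which lies outside the denominator range; the stated conclusion is unchanged.
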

\begin{proof}
The proof follows directly from equation \eqref{eqn43-1301-18-4}.
\end{proof}

Next, we compute the quantity 
\begin{align}
	&\lim_{w \rightarrow q_3^{-r}z}
	\theta_p\left( q_3^{-r}\frac{z}{w}\right)
	f^{\vec{c}}_{m,r}\left(q_3^{\frac{m-r}{2}}\frac{z}{w} ; p\right)\widetilde{T}^{\vec{c}}_{m}(w ; p)\widetilde{T}^{\vec{c}}_{r}(z ; p). 
\end{align}

\begin{prop}
	\mbox{}
	\begin{align}
		&\lim_{w \rightarrow q_3^{-r}z}
		\theta_p\left( q_3^{-r}\frac{z}{w}\right)
		f^{\vec{c}}_{m,r}\left(q_3^{\frac{m-r}{2}}\frac{z}{w} ; p\right)\widetilde{T}^{\vec{c}}_{m}(w ; p)\widetilde{T}^{\vec{c}}_{r}(z ; p) 
		\notag \\
		&=
		\sum_{j = 1}^{r}
		\underbrace{		\sum_{p_1,\dots,p_j \in \bb{Z}^{\geq 1}}			}_{
			\substack{
		1 \leq p_1 < p_2 < \cdots < p_j \leq n
			}
		}
		\underbrace{	\sum_{m_{p_1},\dots,m_{p_j} \in \bb{Z}^{\geq 1}}				}_{
			\substack{
m_{p_1} + \cdots + m_{p_j} = r
			}
		}
		\underbrace{		\sum_{m^\prime_1, \dots, m^\prime_n \in \bb{Z}^{\geq 0}}			}_{
			\substack{
				(1) \,\, m^\prime_1 + \cdots + m^\prime_n = m
				\\
				(2) \sum_{i = 1}^{p_j - 1}m^\prime_i = 0
			}
		}
		\biggl\{
		\prod_{k = 1}^{j-1}	A(m_{p_k},c_{p_k} ; p)		
		\times
		A(m^\prime_{p_j} + m_{p_j},c_{p_j} ; p)
		\notag	\\
		&\hspace{0.4cm}\times
		\underbrace{		\prod_{k = 1}^{n}			}_{k \neq p_1,\dots,p_j}
		A(m^\prime_k,c_k ; p)
		\times
		\prod_{\ell = 1}^{r}
		\frac{
			\theta_p( q_1q_3^\ell)
			\theta_p( q_2q_3^\ell)
		}{
			\theta_p( q_3^\ell)
		}
		\times
		\prod_{\ell = 2}^{r}
		\frac{
			1
		}{
			\theta_p( q_3^{-1}q_3^\ell)
		}
		\notag	\\
		&\hspace{0.4cm}\times
		\normord{
			\prod_{\ell = 1}^{n}
			\prod_{u_\ell = 1}^{m^\prime_\ell}
			\widetilde{\Lambda}_\ell^{\vec{c}}
			\left(
			(q_3^{-1})^{\sum_{k = 1}^{\ell - 1}m^\prime_k + u_\ell - 1}q_3^{-r}z ; p
			\right)
			\prod_{\ell = 1}^{j}
			\prod_{u_{p_\ell} = 1}^{m_{p_\ell}}
			\widetilde{\Lambda}^{\vec{c}}_{p_\ell}\left((q_3^{-1})^{\sum_{k = 1}^{\ell - 1}m_{p_k} + u_{p_\ell} - 1			}z ; p\right)
		}\biggr\}.
	\label{eqn4.6}
	\end{align}
	\label{prp221-1004}
\end{prop}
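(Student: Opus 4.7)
The strategy is to start from \textbf{Lemma \ref{lemm41-0920}}, multiply both sides by $\theta_p(q_3^{-r}z/w)$, and analyze which terms survive the limit $w\to q_3^{-r}z$. Setting $x:=z/w$, the prefactor $\theta_p(q_3^{-r}x)$ has a simple zero at $x=q_3^r$, so a term contributes to the limit only if its product $\prod_{\ell,u_{p_\ell}}\textbf{O}^{\prime}_{p_\ell}((q_3^{-1})^{b}x;p)$, with $b:=\sum_{k=1}^{\ell-1}m_{p_k}+u_{p_\ell}-1$ running over $\{0,1,\dots,r-1\}$, develops a compensating simple pole at $x=q_3^r$.

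Using the three explicit cases of \textbf{Lemma \ref{lemm42-0950}} and the genericity of $q_1,q_2,q_3,p$, a direct case-by-case inspection identifies the only denominator factor that can vanish at $x=q_3^r$: it is $\theta_p(q_3^{-1+\sum_{i=1}^{p_\ell-1}m^\prime_i-b}x;p)$, and its vanishing forces both $b=r-1$ (so $\ell=j$ and $u_{p_j}=m_{p_j}$) and $\sum_{i=1}^{p_j-1}m^\prime_i=0$. Case (iii) of \textbf{Lemma \ref{lemm42-0950}} produces no such pole; but whenever case (iii) applies one has $c_{p_j}=3$ with $m_{p_j}+m^\prime_{p_j}\geq2$, and the coefficient $A(m_{p_j}+m^\prime_{p_j},3;p)$ vanishes because $\theta_p(q_3^{\ell-2})=0$ at $\ell=2$. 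Hence case (iii) contributes nothing to either side, and the surviving configurations are exactly those with $\sum_{i=1}^{p_j-1}m^\prime_i=0$, which is precisely constraint $(2)$ in \eqref{eqn4.6}.

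For each surviving configuration I compute the residue $\lim_{x\to q_3^r}\theta_p(q_3^{-r}x)\cdot\textbf{O}^{\prime}_{p_j}(q_3^{-(r-1)}x;p)$ directly from case (i) or (ii) of \textbf{Lemma \ref{lemm42-0950}}, and evaluate the remaining $r-1$ factors $\textbf{O}^{\prime}_{p_\ell}(q_3^{-b}x;p)$ at $x=q_3^r$. Since $m^\prime_i=0$ for $i<p_j$, the factors with $\ell<j$ collapse (independently of whether $c_{p_\ell}\in\{1,2\}$ or $c_{p_\ell}=3$) to the uniform quotient $\frac{\theta_p(q_1q_3^{r-b})\theta_p(q_2q_3^{r-b})}{\theta_p(q_3^{r-b})\theta_p(q_3^{r-b-1})}$, while the factors with $\ell=j$, $u_{p_j}<m_{p_j}$ retain the $m^\prime_{p_j}$-dependence. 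After reindexing $\ell=r-b$ and using $q_{c_{p_j}}^{-1}=q_{c^\prime_{p_j}}q_3$ (from $q_1q_2q_3=1$) to align the $A$-coefficients with the surviving $\theta$-quotients, the whole product telescopes into the overall prefactor $\prod_{\ell=1}^{r}\frac{\theta_p(q_1q_3^\ell)\theta_p(q_2q_3^\ell)}{\theta_p(q_3^\ell)}\prod_{\ell=2}^{r}\frac{1}{\theta_p(q_3^{\ell-1})}$ multiplied by the ratio $\frac{A(m_{p_j}+m^\prime_{p_j},c_{p_j};p)}{A(m_{p_j},c_{p_j};p)A(m^\prime_{p_j},c_{p_j};p)}$, and this ratio fuses $A(m_{p_j},c_{p_j};p)A(m^\prime_{p_j},c_{p_j};p)$ into $A(m_{p_j}+m^\prime_{p_j},c_{p_j};p)$. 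Substituting $w=q_3^{-r}z$ in the normal-ordered product of the $\widetilde{\Lambda}$-operators then reproduces the right-hand side of \eqref{eqn4.6}.

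\textbf{Main obstacle.} The delicate step is the telescoping just described: one must carefully track the $r$ theta-function factors coming from the residue and from the surviving $\textbf{O}^{\prime}$-values and verify that they assemble into the compact combination of the $A$-coefficient ratio and the overall prefactor, treating the branches $c_{p_j}\in\{1,2\}$ and $c_{p_j}=3$ (with $m^\prime_{p_j}=0$) in parallel. The pole identification in paragraph two and the suppression of case (iii) are routine once \textbf{Lemma \ref{lemm42-0950}} is in hand.
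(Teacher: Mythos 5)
Your proposal is correct and takes essentially the same route as the paper's proof: starting from \textbf{Lemmas \ref{lemm41-0920}} and \textbf{\ref{lemm42-0950}}, you isolate the terms whose $\textbf{O}^{\prime}$-factors supply a simple pole at $w=q_3^{-r}z$ (forcing $\ell=j$, $u_{p_j}=m_{p_j}$ and $\sum_{i=1}^{p_j-1}m^\prime_i=0$), discard the $c_{p_j}=3$, $m^\prime_{p_j}\neq 0$ configurations, and fuse $A(m_{p_j},c_{p_j};p)A(m^\prime_{p_j},c_{p_j};p)$ into $A(m^\prime_{p_j}+m_{p_j},c_{p_j};p)$ exactly as in the paper's identity \eqref{eqn411}. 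The only cosmetic difference is that the paper splits the sum into three explicit sub-sums and removes the $c_{p_j}=3$, $m^\prime_{p_j}\geq 2$ terms on the left via $A(m^\prime_{p_j},3;p)=0$, whereas you invoke the matching vanishing of $A(m^\prime_{p_j}+m_{p_j},3;p)$ on the right; both observations are valid and lead to the same surviving configurations and prefactor.
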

\begin{proof}
The operation $\lim_{w \rightarrow q_3^{-r}z}
\theta_p\left( q_3^{-r}\frac{z}{w}\right)$ annihilates all terms in $f^{\vec{c}}_{m,r}\left(q_3^{\frac{m-r}{2}}\frac{z}{w} ; p\right)\widetilde{T}^{\vec{c}}_{m}(w ; p)\widetilde{T}^{\vec{c}}_{r}(z ; p) $ except those that contains $\theta_p\left( q_3^{-r}\frac{z}{w}\right)$ in denominator. From \textbf{Lemma \ref{lemm41-0920}} and \textbf{\ref{lemm42-0950}}, we can see that 
\begin{align}
\textbf{\textit{O}}^{\prime \,\, (m^\prime_1,\dots,m^\prime_n)}_{p_\ell}
\left(
(q_3^{-1})^{\sum_{k = 1}^{\ell - 1}m_{p_k} + u_{p_\ell} - 1}
\frac{z}{w} ; p
\right)
\end{align}
will contain $\theta_p\left( q_3^{-r}\frac{z}{w}\right)$ in denominator if and only if $\ell = j, u_{p_j} = m_{p_j}$ and one of the following conditions holds : 
\begin{enumerate}[(a)]
	\item \label{cona} $m^\prime_{p_j} \geq 1$, $c_{p_j} \in \left\{1,2\right\}$, and $\sum_{i = 1}^{p_j - 1}m^\prime_i = 0$,
	\item \label{conb} $m^\prime_{p_j} = 0$, and $\sum_{i = 1}^{p_j - 1}m^\prime_i = 0$.
\end{enumerate}

Motivated from this, we decompose the summation as follows: 
\begin{align}
	&f^{\vec{c}}_{m,r}\left(q_3^{\frac{m-r}{2}}\frac{z}{w} ; p\right)
	\widetilde{T}^{\vec{c}}_{m}(w ; p)\widetilde{T}^{\vec{c}}_{r}(z ; p) 
	\notag \\
	&= 
	\Bigg[
	\sum_{j = 1}^{r}
	\underbrace{		\sum_{p_1,\dots,p_j \in \bb{Z}^{\geq 1}}			}_{
		\substack{
			(1) \,\, 1 \leq p_1 < p_2 < \cdots < p_j \leq n
			\\
			(2) \,\, c_{p_j} \neq 3
		}
	}
	\underbrace{	\sum_{m_{p_1},\dots,m_{p_j} \in \bb{Z}^{\geq 1}}				}_{
		\substack{
m_{p_1} + \cdots + m_{p_j} = r
		}
	}
	\underbrace{		\sum_{m^\prime_1, \dots, m^\prime_n \in \bb{Z}^{\geq 0}}			}_{m^\prime_1 + \cdots + m^\prime_n = m}
	+
	\sum_{j = 1}^{r}
	\underbrace{		\sum_{p_1,\dots,p_j \in \bb{Z}^{\geq 1}}			}_{
		\substack{
			(1) \,\, 1 \leq p_1 < p_2 < \cdots < p_j \leq n
			\\
			(2) \,\, c_{p_j} = 3
		}
	}
	\underbrace{	\sum_{m_{p_1},\dots,m_{p_j} \in \bb{Z}^{\geq 1}}				}_{
		\substack{
m_{p_1} + \cdots + m_{p_j} = r
		}
	}
	\underbrace{		\sum_{m^\prime_1, \dots, m^\prime_n \in \bb{Z}^{\geq 0}}			}_{
		\substack{
			(1) \,\, m^\prime_1 + \cdots + m^\prime_n = m
			\\
			(2) \,\, m^\prime_{p_j} = 0
		}
	}
	\notag \\
	&+ \sum_{j = 1}^{r}
	\underbrace{		\sum_{p_1,\dots,p_j \in \bb{Z}^{\geq 1}}			}_{
		\substack{
			(1) \,\, 1 \leq p_1 < p_2 < \cdots < p_j \leq n
			\\
			(2) \,\, c_{p_j} = 3
		}
	}
	\underbrace{	\sum_{m_{p_1},\dots,m_{p_j} \in \bb{Z}^{\geq 1}}				}_{
		\substack{
m_{p_1} + \cdots + m_{p_j} = r
		}
	}
	\underbrace{		\sum_{m^\prime_1, \dots, m^\prime_n \in \bb{Z}^{\geq 0}}			}_{
		\substack{
			(1) \,\, m^\prime_1 + \cdots + m^\prime_n = m
			\\
			(2) \,\, m^\prime_{p_j} = 1
		}
	}
	\Bigg]
	\biggl\{
	A(m_{p_1},c_{p_1} ; p)\cdots A(m_{p_j},c_{p_j} ; p)
	\notag \\
	&\hspace{0.3cm} 
	\times 
	\prod_{k = 1}^{n}A(m^\prime_k,c_k ; p)
	\times
	\prod_{\ell = 1}^{j}
	\prod_{u_{p_\ell} = 1}^{m_{p_\ell}}
	\textbf{\textit{O}}^{\prime \,\, (m^\prime_1,\dots,m^\prime_n)}_{p_\ell}
	\left(
	(q_3^{-1})^{\sum_{k = 1}^{\ell - 1}m_{p_k} + u_{p_\ell} - 1}
	\frac{z}{w} ; p
	\right)
	\notag	\\
	&\hspace{0.3cm}
	\times
	\normord{
		\prod_{\ell = 1}^{n}
		\prod_{u_\ell = 1}^{m^\prime_\ell}
		\widetilde{\Lambda}_\ell^{\vec{c}}
		\left(
		(q_3^{-1})^{\sum_{k = 1}^{\ell - 1}m^\prime_k + u_\ell - 1}w ; p
		\right)
		\prod_{\ell = 1}^{j}
		\prod_{u_{p_\ell} = 1}^{m_{p_\ell}}
		\widetilde{\Lambda}^{\vec{c}}_{p_\ell}\left((q_3^{-1})^{\sum_{k = 1}^{\ell - 1}m_{p_k} + u_{p_\ell} - 1			}z ; p\right)
	}
	\biggr\}.
\end{align}
Note that there is no need to consider the case $m^\prime_{p_j} \geq 2$. This is because when $m^\prime_{p_j} \geq 2$, we get that $A(m^\prime_{p_j},3 ; p) = 0$. For convenience, we introduce the following notations: 
\begin{align*}
	\circled{1} &= 
	\sum_{j = 1}^{r}
	\underbrace{		\sum_{p_1,\dots,p_j \in \bb{Z}^{\geq 1}}			}_{
		\substack{
			(1) \,\, 1 \leq p_1 < p_2 < \cdots < p_j \leq n
			\\
			(2) \,\, c_{p_j} \neq 3
		}
	}
	\underbrace{	\sum_{m_{p_1},\dots,m_{p_j} \in \bb{Z}^{\geq 1}}				}_{
		\substack{
m_{p_1} + \cdots + m_{p_j} = r
		}
	}
	\underbrace{		\sum_{m^\prime_1, \dots, m^\prime_n \in \bb{Z}^{\geq 0}}			}_{m^\prime_1 + \cdots + m^\prime_n = m}
	\cdots,
	\\
	\circled{2} &= 
	\sum_{j = 1}^{r}
	\underbrace{		\sum_{p_1,\dots,p_j \in \bb{Z}^{\geq 1}}			}_{
		\substack{
			(1) \,\, 1 \leq p_1 < p_2 < \cdots < p_j \leq n
			\\
			(2) \,\, c_{p_j} = 3
		}
	}
	\underbrace{	\sum_{m_{p_1},\dots,m_{p_j} \in \bb{Z}^{\geq 1}}				}_{
		\substack{
m_{p_1} + \cdots + m_{p_j} = r
		}
	}
	\underbrace{		\sum_{m^\prime_1, \dots, m^\prime_n \in \bb{Z}^{\geq 0}}			}_{
		\substack{
			(1) \,\, m^\prime_1 + \cdots + m^\prime_n = m
			\\
			(2) \,\, m^\prime_{p_j} = 0
		}
	}
	\cdots,
	\\
	\circled{3} &= 
	\sum_{j = 1}^{r}
	\underbrace{		\sum_{p_1,\dots,p_j \in \bb{Z}^{\geq 1}}			}_{
		\substack{
			(1) \,\, 1 \leq p_1 < p_2 < \cdots < p_j \leq n
			\\
			(2) \,\, c_{p_j} = 3
		}
	}
	\underbrace{	\sum_{m_{p_1},\dots,m_{p_j} \in \bb{Z}^{\geq 1}}				}_{
		\substack{
m_{p_1} + \cdots + m_{p_j} = r
		}
	}
	\underbrace{		\sum_{m^\prime_1, \dots, m^\prime_n \in \bb{Z}^{\geq 0}}			}_{
		\substack{
			(1) \,\, m^\prime_1 + \cdots + m^\prime_n = m
			\\
			(2) \,\, m^\prime_{p_j} = 1
		}
	}
	\cdots.
\end{align*}
According to the conditions \eqref{cona} \eqref{conb} explained above, it is clear that 
\begin{align}
	\lim_{w \rightarrow q_3^{-r}z}
	\left[
	\theta_p\left( q_3^{-r}\frac{z}{w}\right)\circled{3} 
	\right]
	= 0. 
\end{align}
Also, from these conditions \eqref{cona} \eqref{conb}, we obtain that 
\begin{align}
	&\lim_{w \rightarrow q_3^{-r}z}
	\left[
	\theta_p\left( q_3^{-r}\frac{z}{w}\right)\circled{1} 
	\right]
	\notag \\
	&=
	\sum_{j = 1}^{r}
	\underbrace{		\sum_{p_1,\dots,p_j \in \bb{Z}^{\geq 1}}			}_{
		\substack{
			(1) \,\, 1 \leq p_1 < p_2 < \cdots < p_j \leq n
			\\
			(2) \,\, c_{p_j} \neq 3
		}
	}
	\underbrace{	\sum_{m_{p_1},\dots,m_{p_j} \in \bb{Z}^{\geq 1}}				}_{
		\substack{
m_{p_1} + \cdots + m_{p_j} = r
		}
	}
	\underbrace{		\sum_{m^\prime_1, \dots, m^\prime_n \in \bb{Z}^{\geq 0}}			}_{
		\substack{			
			(1) \,\, m^\prime_1 + \cdots + m^\prime_n = m
			\\
			(2) \,\, \sum_{i = 1}^{p_j - 1} m^\prime_i = 0
		}
	}
	\biggl\{
	A(m_{p_1},c_{p_1} ; p)\cdots A(m_{p_j},c_{p_j} ; p)
	\notag \\
	&\hspace{0.3cm}\times
	\prod_{k = 1}^{n}A(m^\prime_k,c_k ; p)
	\times
	\prod_{\ell = 1}^{m_{p_j}}
	\frac{
		\theta_p\left( q_{c_{p_j}}q_3^\ell\right)
		\theta_p\left( q_{c^\prime_{p_j}}q_3^{m^\prime_{p_j}}q_3^\ell\right)
	}{
		\theta_p\left( q_3^{m^\prime_{p_j}}q_3^\ell\right)
	}
	\times
	\prod_{u_{p_j} = 1}^{m_{p_j} - 1}
	\frac{
		1
	}{
		\theta_p\left( q_3^{-1}		(q_3^{-1})^{\sum_{k = 1}^{j - 1}m_{p_k} + u_{p_j} - 1}
		q_3^r			\right)
	}
	\notag	\\
	&\hspace{0.3cm}\times
	\prod_{\ell = 1}^{j - 1}
	\prod_{u_{p_\ell} = 1}^{m_{p_\ell}}
	\frac{
		\theta_p\left( q_1	(q_3^{-1})^{\sum_{k = 1}^{\ell - 1}m_{p_k} + u_{p_\ell} - 1}
		q_3^r				\right)
		\theta_p\left( q_2		(q_3^{-1})^{\sum_{k = 1}^{\ell - 1}m_{p_k} + u_{p_\ell} - 1}
		q_3^r			\right)
	}{
		\theta_p\left( 	(q_3^{-1})^{\sum_{k = 1}^{\ell - 1}m_{p_k} + u_{p_\ell} - 1}
		q_3^r				\right)
		\theta_p\left( q_3^{-1}		(q_3^{-1})^{\sum_{k = 1}^{\ell - 1}m_{p_k} + u_{p_\ell} - 1}
		q_3^r			\right)
	}
	\notag	\\
	&\hspace{0.3cm}
	\times
	\normord{
		\prod_{\ell = 1}^{n}
		\prod_{u_\ell = 1}^{m^\prime_\ell}
		\widetilde{\Lambda}_\ell^{\vec{c}}
		\left(				
		(q_3^{-1})^{\sum_{k = 1}^{\ell - 1}m^\prime_k + u_\ell - 1}q_3^{-r}z ; p
		\right)
		\prod_{\ell = 1}^{j}
		\prod_{u_{p_\ell} = 1}^{m_{p_\ell}}
		\widetilde{\Lambda}^{\vec{c}}_{p_\ell}\left((q_3^{-1})^{\sum_{k = 1}^{\ell - 1}m_{p_k} + u_{p_\ell} - 1			}z ; p\right)
	}
	\biggr\}.
\end{align}
Since $\sum_{i = 1}^{p_j - 1} m^\prime_i = 0$, we get that 
\begin{align}
\prod_{k = 1}^{n}A(m^\prime_k,c_k ; p) = A(m^{\prime}_{p_j},c_{p_j} ; p) 
\times
\underbrace{		\prod_{d = 1}^{n}		}_{d \neq p_1, \dots, p_j}
A(m^\prime_d,c_d ; p). 
\label{eqn410}
\end{align}
Because of $c_{p_j} \neq 3$, we see that 
\begin{align}
	A(m_{p_j},c_{p_j})  
	A(m^{\prime}_{p_j},c_{p_j} ; p)
	\times
	\prod_{\ell = 1}^{m_{p_j}}
	\frac{
		\theta_p\left( q_{c^\prime_{p_j}}q_3^{m^\prime_{p_j}}q_3^\ell\right)
	}{
		\theta_p\left( q_3^{m^\prime_{p_j}}q_3^\ell\right)
	}
	= 
	A(m^{\prime}_{p_j} + m_{p_j} ,c_{p_j} ; p)
	\times
	\prod_{\ell = 1}^{m_{p_j}}
	\frac{
		\theta_p\left(		 q_{c^\prime_{p_j}}q_3^\ell		\right)
	}{
		\theta_p\left(			 q_3^\ell		\right)
	}.
	\label{eqn411}
\end{align}
Here $c^\prime_{p_j}$ denotes the element of $\{1,2,3\}$ that is different from $c_{p_j}$ and $3$. 
From \eqref{eqn410} and  \eqref{eqn411}, we obtain that 
\begin{align}
	&\lim_{w \rightarrow q_3^{-r}z}
	\left[
	\theta_p\left( q_3^{-r}\frac{z}{w}\right)\circled{1} 
	\right]
	\notag \\
	&=
	\sum_{j = 1}^{r}
	\underbrace{		\sum_{p_1,\dots,p_j \in \bb{Z}^{\geq 1}}			}_{
		\substack{
			(1) \,\, 1 \leq p_1 < p_2 < \cdots < p_j \leq n
			\\
			(2) \,\, c_{p_j} \neq 3
		}
	}
	\underbrace{	\sum_{m_{p_1},\dots,m_{p_j} \in \bb{Z}^{\geq 1}}				}_{
		\substack{
m_{p_1} + \cdots + m_{p_j} = r
		}
	}
	\underbrace{		\sum_{m^\prime_1, \dots, m^\prime_n \in \bb{Z}^{\geq 0}}			}_{
		\substack{			
			(1) \,\, m^\prime_1 + \cdots + m^\prime_n = m
			\\
			(2) \,\, \sum_{i = 1}^{p_j - 1} m^\prime_i = 0
		}
	}
	\biggl\{
	\prod_{\ell = 1}^{j-1}A(m_{p_\ell},c_{p_\ell} ; p)
	\times
	A(m^{\prime}_{p_j} + m_{p_j} ,c_{p_j} ; p)
	\notag	\\
	&\hspace{0.3cm}\times			
	\underbrace{		\prod_{d = 1}^{n}		}_{d \neq p_1, \dots, p_j}
	A(m^\prime_d,c_d ; p)
	\times
	\prod_{\ell = 2}^{r}
	\frac{
		1
	}{
		\theta_p\left( q_3^{-1}		q_3^\ell			\right)
	}
	\times
	\prod_{\ell = 1}^{r}
	\frac{
		\theta_p\left( q_{1}q_3^\ell\right)
		\theta_p( q_{2}q_3^\ell)}{\theta_p( q_3^\ell)}
	\notag	\\
	&\hspace{0.3cm} \times
	\normord{
		\prod_{\ell = 1}^{n}
		\prod_{u_\ell = 1}^{m^\prime_\ell}
		\widetilde{\Lambda}_\ell^{\vec{c}}
		\left(			
		(q_3^{-1})^{\sum_{k = 1}^{\ell - 1}m^\prime_k + u_\ell - 1}q_3^{-r}z ; p
		\right)
		\prod_{\ell = 1}^{j}
		\prod_{u_{p_\ell} = 1}^{m_{p_\ell}}
		\widetilde{\Lambda}^{\vec{c}}_{p_\ell}\left((q_3^{-1})^{\sum_{k = 1}^{\ell - 1}m_{p_k} + u_{p_\ell} - 1			}z ; p\right)
	}\biggr\}.
\label{eqn412}
\end{align} 
Following the same line of argument, we can show that 
\begin{align}
	&\lim_{w \rightarrow q_3^{-r}z}
	\left[
	\theta_p\left( q_3^{-r}\frac{z}{w}\right)\circled{2} 
	\right]
	\notag	\\
	&=
	\sum_{j = 1}^{r}
	\underbrace{		\sum_{p_1,\dots,p_j \in \bb{Z}^{\geq 1}}			}_{
		\substack{
			(1) \,\, 1 \leq p_1 < p_2 < \cdots < p_j \leq n
			\\
			(2) \,\, c_{p_j} = 3
		}
	}
	\underbrace{	\sum_{m_{p_1},\dots,m_{p_j} \in \bb{Z}^{\geq 1}}				}_{
		\substack{
m_{p_1} + \cdots + m_{p_j} = r
		}
	}
	\underbrace{		\sum_{m^\prime_1, \dots, m^\prime_n \in \bb{Z}^{\geq 0}}			}_{
		\substack{
			(1) \,\, m^\prime_1 + \cdots + m^\prime_n = m
			\\
			(2) \sum_{i = 1}^{p_j}m^\prime_i = 0
		}
	}
	\biggl\{
	\prod_{\ell = 1}^{j - 1}A(m_{p_\ell},c_{p_\ell} ; p)					
	\times
	A(m^\prime_{p_j} + m_{p_j},c_{p_j} ; p)
	\notag		\\
	&\hspace{0.3cm}\times
	\underbrace{		\prod_{k = 1}^{n}			}_{k \neq p_1,\dots,p_j}
	A(m^\prime_k,c_k ; p)
	\times
	\prod_{\ell = 1}^{r}
	\frac{
		\theta_p( q_1q_3^\ell)
		\theta_p( q_2q_3^\ell)
	}{
		\theta_p( q_3^\ell)
	}
	\times
	\prod_{\ell = 2}^{r}
	\frac{
		1
	}{
		\theta_p( q_3^{-1}q_3^\ell)
	}
	\notag		\\
	&\hspace{0.3cm}
	\times
	\normord{
		\prod_{\ell = 1}^{n}
		\prod_{u_\ell = 1}^{m^\prime_\ell}
		\widetilde{\Lambda}_\ell^{\vec{c}}
		\left(
		(q_3^{-1})^{\sum_{k = 1}^{\ell - 1}m^\prime_k + u_\ell - 1}q_3^{-r}z ; p
		\right)
		\prod_{\ell = 1}^{j}
		\prod_{u_{p_\ell} = 1}^{m_{p_\ell}}
		\widetilde{\Lambda}^{\vec{c}}_{p_\ell}\left((q_3^{-1})^{\sum_{k = 1}^{\ell - 1}m_{p_k} + u_{p_\ell} - 1			}z ; p\right)
	}
	\biggr\}.
\label{eqn413}	
\end{align}
Summing \eqref{eqn412} and  \eqref{eqn413}, we obtain the RHS of \eqref{eqn4.6}. So we have proved the \textbf{Proposition \ref{prp221-1004}}. 
\end{proof}

Next, we would like to rewrite the RHS of \eqref{eqn4.6} in terms of the current $\widetilde{T}^{\vec{c}}_{m+r}(z ; p)$. To do this, the following lemma is necessary. 

\begin{flem}
Define sets
\begin{align}
	\vec{P}^{(n)}_{m}
	&:=
	\left\{
	(m_1,\dots,m_n) \in (\bb{Z}^{\geq 0})^n
	\;\middle\vert\;
	\begin{array}{@{}l@{}}
		m_1 + \cdots + m_n = m
	\end{array}
	\right\},
	\\
	\bb{I}^{(n)}_{m,p}
	&:=
	\left\{
	(m_1,\dots,m_n) \in (\bb{Z}^{\geq 0})^n
	\;\middle\vert\;
	\begin{array}{@{}l@{}}
		(1) \,\, m_1 + \cdots + m_n = m
		\\
		(2) \,\,	m_1 = \dots = m_{p-1} = 0
		\\
		(3) \,\,	m_p \neq 0
	\end{array}
	\right\},
\end{align}
and for each $a_{j_1}, a_{j_2}, \dots, a_{j_k}	\in \bb{Z}^{\geq 0}$, 
define a map 
$\varphi^{(	a_{j_1}, a_{j_2}, \dots, a_{j_k}		)}_{j_1,j_2,\dots,j_k} : (\bb{Z}^{\geq 0})^n \rightarrow (\bb{Z}^{\geq 0})^n$ 
by
\begin{align}
	\varphi^{(	a_{j_1}, a_{j_2}, \dots, a_{j_k}		)}_{j_1,j_2,\dots,j_k}(m_1,\dots,m_{j_1},\dots,m_{j_2},\dots,m_n) := 
	(m_1,\dots,m_{j_1} + a_{j_1},\dots,m_{j_2} + a_{j_2},\dots,m_n).
\end{align}
Then, 
	\begin{align}
		\vec{P}^{(n)}_{m+r} = 
		\bigsqcup_{k = 1}^{r}
		\bigsqcup_{p = 1}^{n}
		\underbrace{		\bigsqcup_{j_1,j_2,\dots,j_k \in \bb{Z}^{\geq 1}	}		}_{
		\substack{	
			1 \leq j_1 < j_2 < \cdots < j_k \leq p
		}
		}
		\underbrace{	\bigsqcup_{a_{j_1}, a_{j_2}, \dots, a_{j_k} \in \bb{Z}^{\geq 1}}					}_{
			\substack{	
a_{j_1} + a_{j_2} + \cdots + a_{j_k} = r
		}
		}
		\varphi^{(	a_{j_1}, a_{j_2}, \dots, a_{j_k}		)}_{j_1,j_2,\dots,j_k}
		(\bb{I}^{(n)}_{m,p}). 
		\label{eqn48-2057}
	\end{align}
Here the notation $\bigsqcup$ denotes the disjoint union. 
\label{thm44-1334}
\end{flem}
\begin{proof}
The proof of this theorem is relegated to Appendix \ref{appA-1323}. 
\end{proof}

From \textbf{Lemma \ref{thm44-1334}}, we then see that\footnote{
To avoid confusion with elliptic deformation parameter $p$, we rename the index $p$ in equation \eqref{eqn48-2057} to be $b$. 
}
\begin{align}
	&\widetilde{T}^{\vec{c}}_{m+r}(z ; p) 
	\notag \\
	&=
	\sum_{k = 1}^{r}
	\sum_{b = 1}^{n}
	\underbrace{		\sum_{	j_1,j_2,\dots,j_k	\in \bb{Z}^{\geq 1}		}					}_{
	\substack{	
		1 \leq j_1 < j_2 < \cdots < j_k \leq b
	}
	}
	\underbrace{	\sum_{a_{j_1}, a_{j_2}, \dots, a_{j_k} \in \bb{Z}^{\geq 1}}						}_{
	\substack{	
		\\
a_{j_1} + a_{j_2} + \cdots + a_{j_k} = r
	}
	}
	\sum_{
		m_{1},\dots,m_n \in
		\varphi^{(	a_{j_1}, a_{j_2}, \dots, a_{j_k}		)}_{j_1,j_2,\dots,j_k}
		(\bb{I}^{(n)}_{m,b})
	}
	\biggl\{
	\prod_{d = 1}^{n}A(m_d,c_d ; p)
	\notag \\
	&\hspace{0.3cm}
	\times
	\normord{
		\prod_{\ell = 1}^{n}
		\prod_{u_\ell = 1}^{m_\ell}
		\widetilde{\Lambda}^{\vec{c}}_\ell\big(
		(q_3^{-1})^{\sum_{k = 1}^{\ell - 1}m_k + u_\ell - 1}z ; p
		\big)
	}
	\biggr\}. 
\end{align}
For convenience in calculation, we introduce the notation 
\begin{align}
	\textbf{N}^{\vec{c}}(m_1,\dots,m_n;z ; p) 
	:=
	\normord{
		\prod_{\ell = 1}^{n}
		\prod_{u_\ell = 1}^{m_\ell}
		\widetilde{\Lambda}^{\vec{c}}_\ell\big(
		(q_3^{-1})^{\sum_{k = 1}^{\ell - 1}m_k + u_\ell - 1}z ; p
		\big)
	}.
	\label{eqn16-3feb}
\end{align}
Consequently, 
\begin{align}
	&\widetilde{T}^{\vec{c}}_{m+r}(z ; p) 
	\notag \\
	&=
	\sum_{k = 1}^{r}
	\sum_{b = 1}^{n}
	\underbrace{	\sum_{j_1,j_2,\dots,j_k	\in \bb{Z}^{\geq 1}}				}_{
	\substack{	
		1 \leq j_1 < j_2 < \cdots < j_k \leq b
	}
	}
	\underbrace{	\sum_{		a_{j_1}, a_{j_2}, \dots, a_{j_k} \in \bb{Z}^{\geq 1}				}					}_{
	\substack{	
a_{j_1} + a_{j_2} + \cdots + a_{j_k} = r
	}
	}
	\sum_{
		m_{1},\dots,m_n \in \bb{I}^{(n)}_{m,b}
	}
	\biggl\{
	\underbrace{	\prod_{d = 1}^{n}			}_{d \neq j_1,j_2,\dots,j_k} A(m_d,c_d ; p)
	\times
	\prod_{\ell = 1}^{k}A(m_{j_\ell} + a_{j_\ell},c_{j_\ell} ; p)
	\notag	\\
	&\hspace{0.3cm}
	\times
	\textbf{N}^{\vec{c}}(m_1,\dots,m_{j_1} + a_{j_1},\dots,m_{j_k} + a_{j_k}, \dots, m_n;z ; p) 
	\biggr\}.
\end{align}
Since $(m_1,\dots,m_n) \in \bb{I}^{(n)}_{m,b}$, we get that  $m_{j_1} = m_{j_2} = \dots = m_{j_{k-1}} = 0$. Thus,
\begin{align}
	&\widetilde{T}^{\vec{c}}_{m+r}(z ; p) 
	\notag	\\
	&=
	\sum_{k = 1}^{r}
	\sum_{b = 1}^{n}
	\underbrace{	\sum_{		j_1,j_2,\dots,j_k	\in \bb{Z}^{\geq 1}			}					}_{
	\substack{	
		1 \leq j_1 < j_2 < \cdots < j_k \leq b
	}
	}
	\underbrace{		\sum_{		a_{j_1}, a_{j_2}, \dots, a_{j_k} \in \bb{Z}^{\geq 1}					}					}_{
	\substack{	
a_{j_1} + a_{j_2} + \cdots + a_{j_k} = r
	}
	}
	\sum_{
		m_{1},\dots,m_n \in \bb{I}^{(n)}_{m,b}
	}
	\biggl\{
	\underbrace{	\prod_{d = 1}^{n}			}_{d \neq j_1,j_2,\dots,j_k} A(m_d,c_d ; p)
	\times
	\prod_{\ell = 1}^{k-1}A(a_{j_\ell},c_{j_\ell} ; p)
	\notag	\\
	&\hspace{0.3cm}
	\times 
	A(m_{j_k} + a_{j_k},c_{j_k} ; p)
	\times
	\textbf{N}^{\vec{c}}(m_1,\dots,a_{j_1},\dots, a_{j_{k-1}}, \dots, m_{j_k} + a_{j_k}, \dots, m_n;z;p) 
	\biggr\}.
\end{align}
By using the identity 
\begin{align}
	\sum_{b = 1}^{n}
	\underbrace{	\sum_{	j_1,j_2,\dots,j_k	\in \bb{Z}^{\geq 1}			}					}_{
	\substack{	
		\\
		1 \leq j_1 < j_2 < \cdots < j_k \leq b
	}
	}
	= 
	\underbrace{		\sum_{	j_1,j_2,\dots,j_k	\in \bb{Z}^{\geq 1}						}					}_{
	\substack{	
		1 \leq j_1 < j_2 < \cdots < j_k \leq n
	}
	}
	\sum_{b = j_k}^{n}, 
\end{align}
and the fact that 
\begin{align}
	\sum_{b = j_k}^{n}
	\sum_{
		m_{1},\dots,m_n \in \bb{I}^{(n)}_{m,b}
	}
	= 
	\underbrace{		\sum_{m_1,\dots,m_n \in \bb{Z}^{\geq 0 }}			}_{
		\substack{
			(1) \,\, m_1 + \cdots + m_n = m
			\\
			(2) \,\, m_1 + \cdots + m_{j_k - 1} = 0
		}
	},
\end{align}
we obtain that 
\begin{align}
	&\widetilde{T}^{\vec{c}}_{m+r}(z ; p) 
	\notag	\\
	&=
	\sum_{k = 1}^{r}
	\underbrace{	\sum_{j_1,j_2,\dots,j_k	\in \bb{Z}^{\geq 1}}				}_{
		\substack{	
		1 \leq j_1 < j_2 < \cdots < j_k \leq n
	}
	}
	\underbrace{		\sum_{		a_{j_1}, a_{j_2}, \dots, a_{j_k} \in \bb{Z}^{\geq 1}			}						}_{
	\substack{	
		a_{j_1} + a_{j_2} + \cdots + a_{j_k} = r
	}
	}
	\underbrace{		\sum_{m_1,\dots,m_n \in \bb{Z}^{\geq 0 }}			}_{
		\substack{
			(1) \,\, m_1 + \cdots + m_n = m
			\\
			(2) \,\, m_1 + \cdots + m_{j_k - 1} = 0
		}
	}
	\biggl\{
	\underbrace{	\prod_{d = 1}^{n}			}_{d \neq j_1,j_2,\dots,j_k} A(m_d,c_d ; p)
	\times
	\prod_{\ell = 1}^{k-1}A(a_{j_\ell},c_{j_\ell} ; p) 
	\notag	\\
	&\hspace{0.3cm}
	\times 
	A(m_{j_k} + a_{j_k},c_{j_k} ; p)
	\times 
	\textbf{N}^{\vec{c}}(0,\dots,a_{j_1},\dots, a_{j_{k-1}}, \dots, m_{j_k} + a_{j_k}, \dots, m_n;z;p) 
	\biggr\}
	\notag	\\
	&= \sum_{j = 1}^{r}
	\underbrace{	\sum_{p_1,p_2,\dots,p_j \in \bb{Z}^{\geq 1}}					}_{
	\substack{	
		1 \leq p_1 < p_2 < \cdots < p_j \leq n
	}
	}
	\underbrace{	\sum_{		m_{p_1}, m_{p_2}, \dots, m_{p_j} \in \bb{Z}^{\geq 1}			}						}_{
	\substack{	
		m_{p_1} + \cdots + m_{p_j}  = r
	}
	}
	\underbrace{		\sum_{m^\prime_1, \dots, m^\prime_n \in \bb{Z}^{\geq 0}}			}_{
		\substack{
			(1) \,\, m^\prime_1 + \cdots + m^\prime_n = m
			\\
			(2) \sum_{i = 1}^{p_j - 1}m^\prime_i = 0
		}	
	}
	\biggl\{
	\underbrace{	\prod_{d = 1}^{n}			}_{d \neq p_1,p_2,\dots,p_j} A(m^\prime_d,c_d ; p)
	\times
	\prod_{\ell = 1}^{j-1}A(m_{p_\ell},c_{p_\ell} ; p)
	\notag \\
	&\hspace{0.3cm}
	\times 
	A(m^\prime_{p_j} + m_{p_j},c_{p_j} ; p)
	\times
	\textbf{N}^{\vec{c}}(0, \dots ,m_{p_1},\dots, m_{p_{j-1}}, \dots, m^\prime_{p_j} + m_{p_j}, \dots, m^\prime_n;z;p) 
	\biggr\}.
	\label{eqn416-1720-18-4}
\end{align}
In the last equality, we merely rename the summation indices. Comparing \eqref{eqn416-1720-18-4} to \textbf{Proposition \ref{prp221-1004}}, we immediately obtain the following theorem. 

\begin{thm}[Fusion rule (I)]
	For $r \leq m$, 
	\begin{align}
		&\lim_{w \rightarrow q_3^{-r}z}
		\theta_p\left( q_3^{-r}\frac{z}{w}\right)
		f^{\vec{c}}_{m,r}\left(q_3^{\frac{m-r}{2}}\frac{z}{w} ; p\right)
		\widetilde{T}^{\vec{c}}_{m}(w ; p)\widetilde{T}^{\vec{c}}_{r}(z ; p) 
		\notag	\\
		&
		= 
		\frac{\theta_p(q_1^{-1})\theta_p(q_2^{-1})}{\theta_p(q_3)}
		\times
		\prod_{\ell = 1}^{r-1}
		\frac{
			\theta_p( q_2q_3^{-\ell})
			\theta_p( q_1q_3^{-\ell})
		}{
			\theta_p( q_3^{-\ell-1})
			\theta_p( q_3^{-\ell})
		}
		\times
		\widetilde{T}^{\vec{c}}_{m+r}(z ; p). 
	\end{align}
\label{thm45-16-1714}
\end{thm}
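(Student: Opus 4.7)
The plan is to combine Proposition \ref{prp221-1004}, which explicitly computes the pole at $w=q_3^{-r}z$ of $f^{\vec c}_{m,r}\widetilde T^{\vec c}_m(w;p)\widetilde T^{\vec c}_r(z;p)$, with a dual expansion of $\widetilde T^{\vec c}_{m+r}(z;p)$ furnished by the Fundamental Lemma \ref{thm44-1334}. After both sides are written as sums over the same index set, everything cancels except a constant scalar that must be recognized as the one claimed.

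First, I would invoke Proposition \ref{prp221-1004} verbatim: this expresses the left-hand side as a fourfold sum over $j\in\{1,\ldots,r\}$, strictly increasing tuples $1\le p_1<\cdots<p_j\le n$, compositions $(m_{p_1},\ldots,m_{p_j})$ of $r$, and compositions $(m'_1,\ldots,m'_n)$ of $m$ subject to the constraint $\sum_{i=1}^{p_j-1}m'_i=0$. Each summand carries the product of $A(\cdot,\cdot;p)$ factors
\[
\Bigl(\prod_{k=1}^{j-1}A(m_{p_k},c_{p_k};p)\Bigr)\cdot A(m'_{p_j}+m_{p_j},c_{p_j};p)\cdot\underbrace{\prod_{d=1}^{n}}_{d\ne p_1,\ldots,p_j}A(m'_d,c_d;p),
\]
the $(m,r)$-independent scalar
\[
S_r(p):=\prod_{\ell=1}^{r}\frac{\theta_p(q_1q_3^{\ell})\theta_p(q_2q_3^{\ell})}{\theta_p(q_3^{\ell})}\cdot\prod_{\ell=2}^{r}\frac{1}{\theta_p(q_3^{-1}q_3^{\ell})},
\]
and the same normal-ordered product of $\widetilde\Lambda^{\vec c}_\bullet$ operators at shifted arguments of $q_3^{-r}z$ and $z$.

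Second, I would unfold $\widetilde T^{\vec c}_{m+r}(z;p)$ using \eqref{eqn314-1625} and then regroup the summation over $\vec P^{(n)}_{m+r}$ via the disjoint-union decomposition \eqref{eqn48-2057}. After relabelling $(b,k)\to(p_j,j)$ and $(j_i,a_{j_i})\to(p_i,m_{p_i})$, exactly as executed in \eqref{eqn416-1720-18-4}, the resulting sum is indexed by the same data that appears in Proposition \ref{prp221-1004}, with matching $A$-factors and matching normal-ordered operators $\textbf{N}^{\vec c}(\cdots;z;p)$. At this point the two expressions differ only by the scalar $S_r(p)$.

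Third, I would verify the identity
\[
S_r(p)=\frac{\theta_p(q_1^{-1})\theta_p(q_2^{-1})}{\theta_p(q_3)}\prod_{\ell=1}^{r-1}\frac{\theta_p(q_2q_3^{-\ell})\theta_p(q_1q_3^{-\ell})}{\theta_p(q_3^{-\ell-1})\theta_p(q_3^{-\ell})},
\]
which is a short manipulation using $q_1q_2q_3=1$ and the reflection identity $\theta_p(z)=(-z)\theta_p(z^{-1})$; each factor on the left can be reflected into the corresponding factor on the right, with the $(-z)$ prefactors cancelling in pairs thanks to the constraint $q_1q_2q_3=1$. The genuine obstacle is the combinatorial bijection encoded in Lemma \ref{thm44-1334}: ensuring that the partition of $\vec P^{(n)}_{m+r}$ into the blocks $\varphi^{(a_{j_1},\ldots,a_{j_k})}_{j_1,\ldots,j_k}(\bb I^{(n)}_{m,p})$ is both surjective and disjoint is what matches the two sums term-for-term, and this is precisely why the identity holds with the exact index set cut out by the pole condition in Proposition \ref{prp221-1004}. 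Once Lemma \ref{thm44-1334} is granted, the remaining verification is bookkeeping.
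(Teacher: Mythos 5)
Your proposal is correct and follows essentially the same route as the paper: the paper likewise obtains the theorem by comparing Proposition \ref{prp221-1004} with the re-expansion \eqref{eqn416-1720-18-4} of $\widetilde{T}^{\vec{c}}_{m+r}(z;p)$ obtained from the Fundamental Lemma \ref{thm44-1334}, the scalar identification via $q_1q_2q_3=1$ and $\theta_p(z)=(-z)\theta_p(z^{-1})$ being the same bookkeeping step you describe.
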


\subsection{Fusion relation (II)}

In this subsection, we will omit several proofs since the idea of proofs for the propositions and lemmas in this section is almost the same as those in the previous section. 

\begin{lem}\mbox{}
\begin{align}
	&f^{\vec{c}}_{r,m}\left(q_3^{\frac{r - m}{2}}\frac{w}{z} ; p\right)\widetilde{T}^{\vec{c}}_{r}(z; p )\widetilde{T}^{\vec{c}}_{m}(w ; p)
	\notag \\
	&= 
	\sum_{j = 1}^{r}
	\underbrace{		\sum_{p_1,\dots,p_j \in \bb{Z}^{\geq 1}}			}_{1 \leq p_1 < p_2 < \cdots < p_j \leq n}
	\underbrace{	\sum_{m_{p_1},\dots,m_{p_j} \in \bb{Z}^{\geq 1}}				}_{
		\substack{
m_{p_1} + \cdots + m_{p_j} = r
		}
	}
	\underbrace{		\sum_{m^\prime_1, \dots, m^\prime_n \in \bb{Z}^{\geq 0}}			}_{m^\prime_1 + \cdots + m^\prime_n = m}
	\biggl\{
	A(m_{p_1},c_{p_1} ; p)\cdots A(m_{p_j},c_{p_j} ; p)
	\notag \\
	&\hspace{0.3cm}
	\times
	\prod_{k = 1}^{n}A(m^\prime_k,c_k ; p)
	\times
	\prod_{\ell = 1}^{j}
	\prod_{u_{p_\ell} = 1}^{m_{p_\ell}}
	\widetilde{\textbf{O}}^{(m^\prime_1,\dots,m^\prime_n)}_{p_\ell}
	\left(
	q_3^{\sum_{k = 1}^{\ell - 1}m_{p_k} + u_{p_\ell} - 1}\frac{w}{z} ; p
	\right)
	\notag \\
	&\hspace{0.3cm} 
	\times
	\normord{
		\prod_{\ell = 1}^{j}
		\prod_{u_{p_\ell} = 1}^{m_{p_\ell}}
		\widetilde{\Lambda}^{\vec{c}}_{p_\ell}\left((q_3^{-1})^{\sum_{k = 1}^{\ell - 1}m_{p_k} + u_{p_\ell} - 1			}z ; p\right)
		\prod_{\ell = 1}^{n}
		\prod_{u_\ell = 1}^{m^\prime_\ell}
		\widetilde{\Lambda}_\ell^{\vec{c}}
		\left(
		(q_3^{-1})^{\sum_{k = 1}^{\ell - 1}m^\prime_k + u_\ell - 1}w ; p
		\right)
	}
	\biggr\}.
	\label{336-0958}
\end{align}
where 
\begin{align}
	&\widetilde{\textbf{O}}^{(m^\prime_1,\dots,m^\prime_n)}_{p_\ell}(w/z ; p)
	\notag	\\
	&= 
	\frac{
		1
	}{
		\prod_{k = 1}^{m - 1}\Delta\left(	q_3^{\frac{1}{2}}q_3^{-k}		\frac{w}{z} ; p\right)
	}
	\times
	\prod_{\ell = 1}^{	\sum_{k = 1}^{p_\ell-1}m^\prime_k				}
	\Delta\left(
	q^{1/2}_{3}(q^{-1}_{3})^{\ell}\frac{w}{z} ; p
	\right)
	\times
	\prod_{\ell = \sum_{k = 1}^{p_\ell}m^\prime_k}^{m - 1}
	\Delta\left(
	q^{1/2}_{3}(q^{-1}_{3})^{\ell}\frac{w}{z} ; p
	\right)
	\notag \\
	&\hspace{0.3cm}
	\times
	\prod_{u = 1}^{m^\prime_{p_\ell}}
	\gamma_{c_{p_\ell}}\left((q_3^{-1})^{\sum_{k = 1}^{p_\ell-1}m^\prime_k + u - 1}\frac{w}{z} ; p		\right).
\end{align}
\label{lemm46-1317}
\end{lem}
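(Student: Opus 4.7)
The plan is to mirror the argument of \textbf{Lemma \ref{lemm41-0920}} with the two currents in the opposite order. First, I would expand $\widetilde{T}^{\vec{c}}_r(z;p)$ and $\widetilde{T}^{\vec{c}}_m(w;p)$ via the explicit formula \eqref{eqn314-1625}, so that each is written as a sum over $(m_{p_1},\dots,m_{p_j})$ and $(m'_1,\dots,m'_n)$ respectively, of normal-ordered products of decoupled vertex operators $\widetilde{\Lambda}^{\vec c}_i$ weighted by the $A(\cdot,\cdot;p)$-factors. The key difference with \textbf{Lemma \ref{lemm41-0920}} is that now the operators from $\widetilde{T}_r(z;p)$ sit on the \emph{left} of those from $\widetilde{T}_m(w;p)$, so when I bring the product into normal-ordered form I will apply \textbf{Proposition \ref{prp216-2149}} to pairs of the shape $\widetilde{\Lambda}^{\vec c}_{p_\ell}(\cdots z;p)\widetilde{\Lambda}^{\vec c}_\ell(\cdots w;p)$.

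Second, I would collect the three types of contractions according to \textbf{Proposition \ref{prp216-2149}}: the case $p_\ell<\ell$ produces a factor $\Delta(q_3^{\frac{1}{2}}\cdot\frac{w}{z};p)$, the case $p_\ell=\ell$ produces $\gamma_{c_{p_\ell}}(\frac{w}{z};p)$, and the case $p_\ell>\ell$ produces $\Delta(q_3^{-\frac{1}{2}}\cdot\frac{w}{z};p)$; each pairing also contributes one $f^{\vec c}_{11}(\,\cdot\,;p)^{-1}$. Note that compared to equation \eqref{eqn44}, the arguments $z/w$ are replaced by $w/z$ and the roles of $q_3^{\pm\frac{1}{2}}$ are interchanged, which is exactly the swap observed between \eqref{eqn43-1301-18-4} and the $\widetilde{\textbf{O}}$ to be derived. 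This produces an expression analogous to \eqref{eqn44}, which I can package as a preliminary ``$\widetilde{\textbf{Factor}}$''.

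Third, I would multiply by $f^{\vec c}_{r,m}\!\bigl(q_3^{\frac{r-m}{2}}\tfrac{w}{z};p\bigr)$ and use the symmetry $f^{\vec c}_{r,m}=f^{\vec c}_{m,r}$ together with \textbf{Proposition \ref{217}}\,\eqref{217-1}\,\eqref{217-2} to decompose it into a product of $f^{\vec c}_{1,1}$'s divided by $\Delta$'s, and then absorb the $f^{\vec c}_{1,1}$'s into the $f^{\vec c}_{11}^{-1}$'s coming from the pairings, exactly as was done in the final step of the proof of \textbf{Lemma \ref{lemm41-0920}}. Using $m_{p_1}+\cdots+m_{p_j}=r$ and $m'_1+\cdots+m'_n=m$, the surviving $\Delta$-factors reassemble into the denominator $\prod_{k=1}^{m-1}\Delta(q_3^{\frac{1}{2}}q_3^{-k}\tfrac{w}{z};p)^{-1}$ together with the two $\Delta$-products indexed by $\ell\leq\sum_{k=1}^{p_\ell-1}m'_k$ and $\ell\geq\sum_{k=1}^{p_\ell}m'_k$, while the $\gamma_{c_{p_\ell}}$-factors are left over; this is precisely $\widetilde{\textbf{O}}^{(m'_1,\dots,m'_n)}_{p_\ell}(w/z;p)$.

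The main obstacle is the bookkeeping: \textbf{Proposition \ref{prp216-2149}} is sensitive to which variable sits on the left, and with several nested products indexed by $\ell,u_\ell,u_{p_\ell}$ it is easy to misplace a sign in the exponent of $q_3$. Once one verifies carefully that the reversal of order in \textbf{Proposition \ref{prp216-2149}} precisely flips $q_3^{\pm\frac{1}{2}}$ and swaps $z/w\leftrightarrow w/z$, the rest of the computation is a direct transcription of the proof of \textbf{Lemma \ref{lemm41-0920}}.
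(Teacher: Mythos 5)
Your proposal is correct and is exactly the route the paper intends: the paper omits this proof, stating that it is "almost the same" as that of \textbf{Lemma \ref{lemm41-0920}}, and your mirrored argument (expand via \eqref{eqn314-1625}, contract with \textbf{Proposition \ref{prp216-2149}} noting the left/right reversal swaps $q_3^{\pm\frac{1}{2}}$ and $z/w\leftrightarrow w/z$, then absorb $f^{\vec{c}}_{r,m}\bigl(q_3^{\frac{r-m}{2}}\tfrac{w}{z};p\bigr)$ using \textbf{Proposition \ref{217}} \eqref{217-1}, \eqref{217-2} so the leftover $\Delta$'s and $\gamma$'s assemble into $\widetilde{\textbf{O}}$) is precisely that analogue. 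The bookkeeping you flag does work out, since each $\Delta(q_3^{-\frac{1}{2}}q_3^{-j}\tfrac{w}{z};p)$ from a right operator of colour below $p_\ell$ rewrites trivially as $\Delta(q_3^{\frac{1}{2}}q_3^{-(j+1)}\tfrac{w}{z};p)$, reproducing the stated form of $\widetilde{\textbf{O}}^{(m^\prime_1,\dots,m^\prime_n)}_{p_\ell}$.
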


\begin{lem}
\begin{align}
	\widetilde{\textbf{O}}^{(m^\prime_1,\dots,m^\prime_n)}_{p_\ell}(w/z ; p)
	= 
	\begin{cases}
		\displaystyle
		\frac{
		\theta_p\left( q^{-1}_{c_{p_\ell}}q^{-\sum_{i = 1}^{p_\ell-1}m^\prime_i}_{3}\frac{w}{z}\right)
		\theta_p\left( q_{c^\prime_{p_\ell}}^{-1}q^{-\sum_{i = 1}^{p_\ell}m^\prime_i}_{3}\frac{w}{z}\right)
		}{
		\theta_p\left( q_{3}q^{-\sum_{i = 1}^{p_\ell-1}m^\prime_i}_{3}\frac{w}{z}\right)
		\theta_p\left( q^{-\sum_{i = 1}^{p_\ell}m^\prime_i}_{3}\frac{w}{z}\right)
		}
		\text{ if } c_{p_\ell} = 1, 2
		\\
		\displaystyle
		\frac{
		\theta_p\left( q_1^{-1}q_3^{-\sum_{i = 1}^{p_\ell-1} m^\prime_i}\frac{w}{z}\right)
		\theta_p\left( q_2^{-1}q_3^{-\sum_{i = 1}^{p_\ell-1} m^\prime_i}\frac{w}{z}\right)
		}{
		\theta_p\left( q_3q_3^{-\sum_{i = 1}^{p_\ell-1} m^\prime_i}\frac{w}{z}\right)
		\theta_p\left( q_3^{-\sum_{i = 1}^{p_\ell-1} m^\prime_i}\frac{w}{z}\right)
		}
		\text{ if } c_{p_\ell} = 3 \text{ and } m^\prime_{p_\ell} = 0
		\\
		\displaystyle
		\prod_{
			k = \sum_{i = 1}^{p_\ell-1}m^\prime_i + 1
		}^{\sum_{i = 1}^{p_\ell}m^\prime_i - 1}
		\frac{
		\theta_p\left(	 q_3q_3^{-k}\frac{w}{z}		\right)
		\theta_p\left(	 q_3^{-k}\frac{w}{z}			\right)
		}{
		\theta_p\left(	 q_1^{-1}q_3^{-k}\frac{w}{z}		\right)
		\theta_p\left(	 q_2^{-1}q_3^{-k}\frac{w}{z}		\right)
		}
		\text{ if } c_{p_\ell} = 3 \text{ and } m^\prime_{p_\ell} \neq 0
	\end{cases}
	\label{338-1014-2}
\end{align}
\end{lem}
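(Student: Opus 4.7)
The proof proceeds by direct substitution and case analysis, paralleling the derivation of its $z/w$ counterpart \textbf{Lemma \ref{lemm42-0950}}. Writing $x := w/z$, $A := \sum_{k=1}^{p_\ell-1} m'_k$, and $B := \sum_{k=1}^{p_\ell} m'_k = A + m'_{p_\ell}$, the first step is to combine the two $\Delta$-products in the numerator with the denominator $\prod_{k=1}^{m-1}\Delta(q_3^{1/2-k}x;p)$. Since the numerator index set is $\{1,\ldots,A\}\cup\{B,\ldots,m-1\}$, cancellation against $\{1,\ldots,m-1\}$ yields $\prod_{\ell=A+1}^{B-1}\Delta(q_3^{1/2-\ell}x;p)^{-1}$ when $B>A$, and the single surviving factor $\Delta(q_3^{1/2-A}x;p)$ when $B=A$ (the term $\ell=A$ being duplicated across the two numerator products).

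When $m'_{p_\ell}=0$ (so $B=A$), the $\gamma$-product is empty; one expands $\Delta(q_3^{1/2-A}x;p)$ and applies $q_1q_2q_3=1$ to rewrite $q_1q_3=q_2^{-1}$, producing $\frac{\theta_p(q_1^{-1}q_3^{-A}x)\theta_p(q_2^{-1}q_3^{-A}x)}{\theta_p(q_3\cdot q_3^{-A}x)\theta_p(q_3^{-A}x)}$. When $c_{p_\ell}\in\{1,2\}$ this matches the first-case formula (since $\{q_{c_{p_\ell}}^{-1},q_{c'_{p_\ell}}^{-1}\}=\{q_1^{-1},q_2^{-1}\}$ and $\sum_{i=1}^{p_\ell-1}m'_i=\sum_{i=1}^{p_\ell}m'_i$); when $c_{p_\ell}=3$ it is precisely the second-case formula. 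When $m'_{p_\ell}\geq 1$ and $c_{p_\ell}=3$, one checks that $\gamma_3\equiv 1$ directly from its definition, so one is left with $\prod_{\ell=A+1}^{B-1}\Delta(q_3^{1/2-\ell}x;p)^{-1}$; substituting the definition of $\Delta$ and using $q_1q_3=q_2^{-1}$ and $q_2q_3=q_1^{-1}$ to rewrite the theta factors gives the third-case formula after relabelling $k=\ell$.

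The remaining case $m'_{p_\ell}\geq 1$ with $c_{p_\ell}\in\{1,2\}$ is the main technical step. One writes $\widetilde{\textbf{O}}=\gamma_{c_{p_\ell}}(q_3^{-A}x;p)\cdot\prod_{j=1}^{m'_{p_\ell}-1}\frac{\gamma_{c_{p_\ell}}(q_3^{-A-j}x;p)}{\Delta(q_3^{1/2-A-j}x;p)}$, expands each ratio using the definitions, and observes that the $\theta_p(q_{c_{p_\ell}}^{\pm 1}\cdot)$ factor of $\gamma_{c_{p_\ell}}$ cancels against the matching factor in $\Delta$ (using $q_1q_3=q_2^{-1}$ or $q_2q_3=q_1^{-1}$ as appropriate to convert one of the theta factors in $\Delta$ into a $q_{c_{p_\ell}}$-form). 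What remains are ratios of the form $\theta_p(q_{c'_{p_\ell}}^{-1}q_3^{-j-1}x)/\theta_p(q_{c'_{p_\ell}}^{-1}q_3^{-j}x)$ and $\theta_p(q_3^{-j}x)/\theta_p(q_3^{-j-1}x)$, both of which telescope over $j=1,\ldots,m'_{p_\ell}-1$. Combining the telescoped product with the outer $\gamma_{c_{p_\ell}}(q_3^{-A}x;p)$ and performing one last cancellation via $q_{c_{p_\ell}}q_{c'_{p_\ell}}q_3=1$ produces the first-case formula. The main obstacle is the bookkeeping of this telescoping, in particular verifying that $q_3^{-A}$ pairs with $q_{c_{p_\ell}}^{-1}$ while $q_3^{-B}$ pairs with $q_{c'_{p_\ell}}^{-1}$ in the final answer; all other identifications follow routinely from the genericity relation $q_1q_2q_3=1$.
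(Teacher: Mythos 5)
Your proposal is correct and follows essentially the paper's own route: the paper omits the proof, pointing to the analogous Lemma \ref{lemm42-0950} whose proof is exactly the direct substitution into the defining expression for $\widetilde{\textbf{O}}^{(m'_1,\dots,m'_n)}_{p_\ell}$, which is what you carry out, with the correct index bookkeeping (surviving factor $\Delta(q_3^{1/2-A}x;p)$ when $B=A$, the inverse product over $\ell=A+1,\dots,B-1$ when $B>A$, $\gamma_3\equiv 1$, and the telescoping via $q_1q_2q_3=1$). The final pairing you flag ($q_3^{-A}$ with $q_{c_{p_\ell}}^{-1}$ and $q_3^{-B}$ with $q_{c'_{p_\ell}}^{-1}$) indeed checks out.
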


\begin{prop}\mbox{}
\begin{align}
	&\lim_{z \rightarrow q_3^{-m}w}	\theta_p\left( q_3^{-m}\frac{w}{z}\right)
	f^{\vec{c}}_{r,m}\left(q_3^{\frac{r - m}{2}}\frac{w}{z} ; p\right)\widetilde{T}^{\vec{c}}_{r}(z ; p)\widetilde{T}^{\vec{c}}_{m}(w ; p)
	\notag \\
	&=
	\sum_{j = 1}^{r}
	\underbrace{		\sum_{p_1,\dots,p_j \in \bb{Z}^{\geq 1}}			}_{1 \leq p_1 < p_2 < \cdots < p_j \leq n}
	\underbrace{	\sum_{m_{p_1},\dots,m_{p_j} \in \bb{Z}^{\geq 1}}				}_{
		\substack{
m_{p_1} + \cdots + m_{p_j} = r
		}
	}
	\underbrace{		\sum_{m^\prime_1, \dots, m^\prime_n \in \bb{Z}^{\geq 0}}			}_{
		\substack{
			(1) \,\, m^\prime_1 + \cdots + m^\prime_n = m
			\\
			(2) \,\, m^\prime_1 + \cdots + m^\prime_{p_1} = m
		}	
	}
	\biggl\{
	\prod_{k = 2}^{j}
	A(m_{p_k},c_{p_k} ; p)
	\times
	A(m^\prime_{p_1} + m_{p_1},c_{p_1} ; p)
	\notag \\
	&\times
	\underbrace{		\prod_{d =1}^{n}		}_{d \neq p_1,\dots,p_j}
	A(m^\prime_d,c_d ; p)
	\times
	\prod_{\ell = 1}^{
		r
	} 
	\frac{
	\theta_p\left( q_2				q_3^{\ell}				\right)
	\theta_p\left( q_1		q_3^{\ell}		\right)
	}{
	\theta_p\left( 			q_3^{\ell}					\right)
	}
	\times
	\prod_{\ell = 1}^{
		r- 1
	} 
	\frac{
		1
	}{
	\theta_p\left( 	q_3^\ell  \right)
	}
	\notag \\
	&\times
	\normord{
		\prod_{\ell = 1}^{j}
		\prod_{u_{p_\ell} = 1}^{m_{p_\ell}}
		\widetilde{\Lambda}^{\vec{c}}_{p_\ell}\left((q_3^{-1})^{\sum_{k = 1}^{\ell - 1}m_{p_k} + u_{p_\ell} - 1			}	q_3^{-m}w	; p	\right)
		\prod_{\ell = 1}^{n}
		\prod_{u_\ell = 1}^{m^\prime_\ell}
		\widetilde{\Lambda}^{\vec{c}}_\ell\big(
		(q_3^{-1})^{\sum_{k = 1}^{\ell - 1}m^\prime_k + u_\ell - 1}w ; p
		\big)
	} 
	\biggr\}.
	\label{349-2-1232e1}
\end{align}
\end{prop}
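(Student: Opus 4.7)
The approach mirrors the strategy used in the proof of Proposition 4.3, but with the roles of $z$ and $w$ (equivalently of $r$ and $m$) interchanged. The plan is to start from Lemma 4.6, which expresses $f^{\vec{c}}_{r,m}(q_3^{(r-m)/2}w/z;p)\widetilde{T}^{\vec{c}}_r(z;p)\widetilde{T}^{\vec{c}}_m(w;p)$ as a multiple sum whose summands carry the factor $\widetilde{\textbf{O}}^{(m'_1,\dots,m'_n)}_{p_\ell}(q_3^{\sum_{k=1}^{\ell-1}m_{p_k}+u_{p_\ell}-1}w/z;p)$. Since $\lim_{z\to q_3^{-m}w}\theta_p(q_3^{-m}w/z)$ annihilates every contribution that does not already carry a simple pole at $z=q_3^{-m}w$, the task reduces to determining which summands survive.

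Using the explicit evaluation of $\widetilde{\textbf{O}}$ given in (3.38), I would perform a case-by-case check to locate the pole. In the $c_{p_\ell}\in\{1,2\}$ case, the denominator factor $\theta_p(q_3^{-\sum_{i=1}^{p_\ell}m'_i}w/z)$ becomes $\theta_p(q_3^{-m}w/z)$ precisely when $\ell=1$, $u_{p_1}=1$, and $\sum_{i=1}^{p_1}m'_i=m$. In the $c_{p_\ell}=3$ case with $m'_{p_\ell}=0$, the analogous analysis yields the same condition (with $\sum_{i=1}^{p_1}m'_i=m$ reducing to $\sum_{i=1}^{p_1-1}m'_i=m$ because $m'_{p_1}=0$). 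The remaining sub-case $c_{p_\ell}=3$ with $m'_{p_\ell}\geq 1$ contributes nothing: for $m'_{p_\ell}\geq 2$ one has $A(m'_{p_\ell},3;p)=0$ because of a factor $\theta_p(1)$, while for $m'_{p_\ell}=1$ the third branch of (3.38) is an empty product and produces no pole. Thus the surviving summands are exactly those captured by the single condition $m'_1+\cdots+m'_{p_1}=m$ in the statement.

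With the relevant terms isolated, I would split the sum in Lemma 4.6 into three pieces (namely $c_{p_1}\ne 3$; $c_{p_1}=3$ with $m'_{p_1}=0$; and $c_{p_1}=3$ with $m'_{p_1}=1$) in direct analogy with the decomposition employed in the proof of Proposition 4.3, discard the last piece, and take the limit in the first two. The $A$-coefficients of the two surviving pieces merge via the identity $A(m'_{p_1},c_{p_1};p)A(m_{p_1},c_{p_1};p)\cdot(\text{residual theta ratio})=A(m'_{p_1}+m_{p_1},c_{p_1};p)\cdot(\text{universal ratio})$ used at (4.11), which reorganises the remaining theta factors into the stated overall prefactor $\prod_{\ell=1}^{r}\theta_p(q_1q_3^\ell)\theta_p(q_2q_3^\ell)/\theta_p(q_3^\ell)$ together with $\prod_{\ell=1}^{r-1}\theta_p(q_3^\ell)^{-1}$.

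The main obstacle is essentially clerical: one must keep precise control of the many shifted theta factors produced by the arguments $q_3^{\sum m_{p_k}+u_{p_\ell}-1}w/z$ and verify that all cancellations yield exactly the claimed expression. The combinatorial Lemma 4.4 is not required at this stage; it will enter only in the subsequent step of rewriting the right-hand side in terms of $\widetilde{T}^{\vec{c}}_{m+r}(w;p)$, mirroring the derivation of Theorem 4.5 from Proposition 4.3.
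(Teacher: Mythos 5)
Your proposal is correct and coincides with the paper's own (deliberately omitted) argument: the paper states that the proofs in the fusion relation (II) subsection mirror those of the previous subsection, i.e.\ one starts from Lemma \ref{lemm46-1317}, uses the explicit evaluation of $\widetilde{\textbf{O}}^{(m^\prime_1,\dots,m^\prime_n)}_{p_\ell}$ to see that the pole $\theta_p\left(q_3^{-m}\tfrac{w}{z}\right)$ occurs only for $\ell=1$, $u_{p_1}=1$ with $m^\prime_1+\cdots+m^\prime_{p_1}=m$, discards the $c_{p_1}=3$, $m^\prime_{p_1}\geq 1$ contributions via $A(k,3;p)=0$ for $k\geq 2$ and the empty product, and merges the $A$-coefficients exactly as in \eqref{eqn411}. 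Your observation that the Fundamental Lemma \ref{thm44-1334} enters only afterwards, in passing from this Proposition to Theorem \ref{fusion2-1615}, also matches the paper.
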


Next, we write the RHS of \eqref{349-2-1232e1} in terms of $T^{(n)}_{m+r}(w ; p)$. According to \textbf{Lemma \ref{thm44-1334}}, we know that 
\begin{align}
\vec{P}^{(n)}_{m+r} = 
\bigsqcup_{k = 1}^{r}
\bigsqcup_{p = 1}^{n}
\underbrace{		\bigsqcup_{j_1,j_2,\dots,j_k \in \bb{Z}^{\geq 1}	}		}_{
	\substack{	
		1 \leq j_1 < j_2 < \cdots < j_k \leq p
	}
}
\underbrace{	\bigsqcup_{a_{j_1}, a_{j_2}, \dots, a_{j_k} \in \bb{Z}^{\geq 1}}					}_{
	\substack{	
a_{j_1} + a_{j_2} + \cdots + a_{j_k} = r
	}
}
\varphi^{(	a_{j_1}, a_{j_2}, \dots, a_{j_k}		)}_{j_1,j_2,\dots,j_k}
(\bb{I}^{(n)}_{m,p}). 
\end{align}
If we define the map $\cals{R} : \left(\bb{Z}^{\geq 0}\right)^n \rightarrow  \left(\bb{Z}^{\geq 0}\right)^n$ by 
$\cals{R}(b_1,\dots,b_n) = (b_n,\dots,b_1)$, we then get that 
\begin{align}
	\vec{P}^{(n)}_{m+r} = 
	\bigsqcup_{k = 1}^{r}
	\bigsqcup_{p = 1}^{n}
	\underbrace{	\bigsqcup_{j_1,j_2,\dots,j_k	\in \bb{Z}^{\geq 1}}				}_{
		\substack{	
			1 \leq j_1 < j_2 < \cdots < j_k \leq p
		}
	}
	\underbrace{	\bigsqcup_{a_{j_1}, a_{j_2}, \dots, a_{j_k} \in \bb{Z}^{\geq 1}		}					}_{
		\substack{
			a_{j_1} + a_{j_2} + \cdots + a_{j_k} = r
		}
	}
	\cals{R}
	\left(
	\varphi^{(	a_{j_1}, a_{j_2}, \dots, a_{j_k}		)}_{j_1,j_2,\dots,j_k}
	(\bb{I}^{(n)}_{m,p})
	\right). 
\end{align}
Therefore, 
\begin{align}
	&\widetilde{T}^{\vec{c}}_{m+r}(w ; p)
	\notag \\
	&=
	\sum_{k = 1}^{r}
	\sum_{b = 1}^{n}
	\underbrace{		\sum_{	j_1, j_2,\dots,j_k	\in \bb{Z}^{\geq 1}	}			}_{
		\substack{
			1 \leq j_1 < j_2 < \dots < j_k \leq b 
		}
	}
	\underbrace{		\sum_{	a_{j_1}, a_{j_2}, \dots, a_{j_k} \in \bb{Z}^{\geq 1}	}					}_{
		\substack{
			a_{j_1} + a_{j_2} + \dots + a_{j_k} = r 
		}
	}
	\sum_{
		(m_1,\dots,m_n) \in \cals{R}
		\left(
		\varphi^{(	a_{j_1}, a_{j_2}, \dots, a_{j_k}		)}_{j_1,j_2,\dots,j_k}
		(\bb{I}^{(n)}_{m,b})
		\right)
	}
	\biggl\{
	\prod_{k = 1}^{n}A(m_k,c_k ; p)
	\notag \\
	&\hspace{0.3cm}\times
	\textbf{N}^{\vec{c}}(m_1,\dots,m_n;w ; p) 
	\biggr\}
	\notag \\
	&=
	\sum_{k = 1}^{r}
	\sum_{b = 1}^{n}
	\underbrace{		\sum_{j_1, j_2,\dots,j_k \in \bb{Z}^{\geq 1} }			}_{
		\substack{
			b \leq j_k < j_{k-1} < \dots < j_1 \leq n
		}
	}
	\underbrace{	\sum_{a_{j_1}, a_{j_2}, \dots, a_{j_k} \in \bb{Z}^{\geq 1}		}			}_{
		\substack{
			a_{j_1} + a_{j_2} + \dots + a_{j_k} = r 
		}
	}
	\sum_{
		(m_1,\dots,m_n) \in \widetilde{\bb{I}}^{(n)}_{m,b}
	}
	\biggl\{
	\underbrace{	\prod_{d = 1}^{n}			}_{d \neq j_k, \dots, j_1}			A(m_d,c_d ; p)
	\notag \\
	&\hspace{0.3cm}
	\times
	\prod_{\ell = 1}^{k} A(m_{j_\ell} + a_{j_\ell},c_{j_\ell} ; p)
	\times
	\textbf{N}^{\vec{c}}(m_1,\dots, m_{j_k} + a_{j_k}, \dots, m_{j_1} + a_{j_1}, \dots m_n;w ; p) 
	\biggr\}.
\end{align}
where 
\begin{align}
	\widetilde{\bb{I}}^{(n)}_{m,b}
	&\defeq 
	\left\{
	(m_1,\dots,m_n) \in (\bb{Z}^{\geq 0})^n
	\;\middle\vert\;
	\begin{array}{@{}l@{}}
		(1) \,\, m_1 + \cdots + m_n = m
		\\
		(2) \,\,	m_{b+1} = \cdots = m_n = 0
		\\
		(3) \,\,	m_b \neq 0
	\end{array}
	\right\}.
\end{align}
Using the relations
\begin{align}
	\sum_{b = 1}^{n}
	\underbrace{	\sum_{	j_1, j_2,\dots,j_k \in \bb{Z}^{\geq 1}		}			}_{
		\substack{
			b \leq j_k < j_{k-1} < \dots < j_1 \leq n
		}
	}
	= 
	\underbrace{		\sum_{	j_1, j_2,\dots,j_k \in \bb{Z}^{\geq 1}	}				}_{
		\substack{
			1 \leq j_k < j_{k-1} < \dots < j_1 \leq n
		}
	}
	\sum_{b = 1}^{j_k},
\end{align}
we can show that 
\begin{align}
	&\widetilde{T}^{\vec{c}}_{m+r}(w ; p)
	\notag \\
	&=
	\sum_{j = 1}^{r}
	\underbrace{	\sum_{p_1,\dots,p_j \in \bb{Z}^{\geq 1}}				}_{
		\substack{
			1 \leq p_1 < p_2 < \dots < p_j \leq n 
		}
	}
	\underbrace{		\sum_{m_{p_1}, \dots, m_{p_j} \in \bb{Z}^{\geq 1}}			}_{
		\substack{
m_{p_1} + \cdots + m_{p_j} = r
		}
	}
	\underbrace{		\sum_{m^\prime_1, \dots, m^\prime_n \in \bb{Z}^{\geq 0}}			}_{
		\substack{
			(1) \,\, m^\prime_1 + \cdots + m^\prime_n = m
			\\
			(2) \,\, m^{\prime}_{p_1 + 1} = \cdots = m^\prime_n = 0 
		}
	}
	\biggl\{
	\underbrace{		\prod_{d =1}^{n}		}_{d \neq p_1,\dots,p_j}
	A(m^\prime_d,c_d ; p)
	\times
	A(m^\prime_{p_1} + m_{p_1},c_{p_1} ; p)
	\notag \\
	&\hspace{0.3cm}
	\times
	\prod_{\ell = 2}^{j}
	A( m_{p_\ell},c_{p_\ell} ; p)
	\times
	\textbf{N}^{\vec{c}}(m^\prime_1,\dots, m^\prime_{p_1} + m_{p_1}, \dots, m_{p_j}, \dots m^\prime_n;w ; p)
	\biggr\}.
	\label{358-1402}
\end{align}
Here we renamed the summation indices. Comparing \eqref{358-1402} to the RHS of \eqref{349-2-1232e1}, we obtain the following theorem. 

\begin{thm}[Fusion rule (II)]
\label{fusion2-1615}
\begin{align}
&\lim_{z \rightarrow q_3^{-m}w}	\theta_p\left( q_3^{-m}\frac{w}{z}\right)
f^{\vec{c}}_{r,m}\left(q_3^{\frac{r - m}{2}}\frac{w}{z} ; p\right)\widetilde{T}^{\vec{c}}_{r}(z ; p)\widetilde{T}^{\vec{c}}_{m}(w ; p)
\notag \\
&
= 
\frac{\theta_p(q_1^{-1})\theta_p(q_2^{-1})}{\theta_p(q_3)}
\times
\prod_{\ell = 1}^{r-1}
\frac{
	\theta_p( q_2q_3^{-\ell})
	\theta_p( q_1q_3^{-\ell})
}{
	\theta_p( q_3^{-\ell-1})
	\theta_p( q_3^{-\ell})
}
\times
\widetilde{T}^{\vec{c}}_{m+r}(w ; p). 
\end{align}
\end{thm}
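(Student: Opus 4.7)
The plan is to mirror the derivation of Theorem \ref{thm45-16-1714} with the roles of $z$ and $w$ swapped and with the operator ordering reversed. The excerpt already sketches the three intermediate steps (Lemma \ref{lemm46-1317}, the explicit formula \eqref{338-1014-2} for $\widetilde{\textbf{O}}^{(m^\prime_1,\dots,m^\prime_n)}_{p_\ell}$, and the limit formula \eqref{349-2-1232e1}), so what remains is to match \eqref{349-2-1232e1} against an expansion of $\widetilde{T}^{\vec{c}}_{m+r}(w;p)$ obtained from the Fundamental Lemma \ref{thm44-1334}.

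Concretely, I would first verify that under the operation $\lim_{z\to q_3^{-m}w}\theta_p(q_3^{-m}w/z)$, the only surviving contributions in \eqref{336-0958} are those that carry a factor $\theta_p(q_3^{-m}w/z)$ in the denominator of $\widetilde{\textbf{O}}^{(m^\prime_1,\dots,m^\prime_n)}_{p_\ell}$; inspection of \eqref{338-1014-2} shows these require $\ell=1$, $u_{p_1}=1$, together with the ``pivot-at-$p_1$'' condition $m^\prime_1+\cdots+m^\prime_{p_1}=m$ (equivalently $m^\prime_{p_1+1}=\cdots=m^\prime_n=0$) and a splitting of the $A$-factor $A(m^\prime_{p_1},c_{p_1};p)A(m_{p_1},c_{p_1};p)$ into $A(m^\prime_{p_1}+m_{p_1},c_{p_1};p)$ times the leftover theta quotients. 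This yields exactly \eqref{349-2-1232e1}.

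Next, to identify the right-hand side with $\widetilde{T}^{\vec{c}}_{m+r}(w;p)$, I would apply the reversal map $\cals{R}(b_1,\dots,b_n)=(b_n,\dots,b_1)$ to the Fundamental Lemma \ref{thm44-1334}: the partition of $\vec{P}^{(n)}_{m+r}$ obtained in \eqref{eqn48-2057} places the ``pivot'' at the largest index $p_j$, whereas for Fusion rule (II) we need it at the smallest index $p_1$. After composing with $\cals{R}$, relabelling indices and passing to the sets $\widetilde{\bb{I}}^{(n)}_{m,b}$, formula \eqref{358-1402} expresses $\widetilde{T}^{\vec{c}}_{m+r}(w;p)$ in precisely the same summation shape as \eqref{349-2-1232e1}, and term-by-term comparison produces the prefactor $\tfrac{\theta_p(q_1^{-1})\theta_p(q_2^{-1})}{\theta_p(q_3)}\prod_{\ell=1}^{r-1}\tfrac{\theta_p(q_2q_3^{-\ell})\theta_p(q_1q_3^{-\ell})}{\theta_p(q_3^{-\ell-1})\theta_p(q_3^{-\ell})}$.

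The main obstacle, as in the proof of Theorem \ref{thm45-16-1714}, is the combinatorial bookkeeping required to show that after the fusion limit the surviving ``pivot-at-$p_1$'' configurations reassemble \emph{exactly} into the expansion of $\widetilde{T}^{\vec{c}}_{m+r}(w;p)$ with no missing or overcounted strata; this is governed precisely by the disjoint-union decomposition in Lemma \ref{thm44-1334}, and the action of $\cals{R}$ converts that decomposition into the form adapted to the ordering $\widetilde{T}^{\vec{c}}_r(z;p)\widetilde{T}^{\vec{c}}_m(w;p)$. Once this combinatorial identification is made, the surviving theta-factors collapse by the same identities used in the previous subsection, and the theorem follows without further computation.
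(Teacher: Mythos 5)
Your proposal is correct and follows essentially the same route as the paper: the paper likewise isolates the surviving $\ell=1$, $u_{p_1}=1$ terms with $m^\prime_1+\cdots+m^\prime_{p_1}=m$ to obtain \eqref{349-2-1232e1}, and then rewrites $\widetilde{T}^{\vec{c}}_{m+r}(w;p)$ by composing the decomposition of Lemma \ref{thm44-1334} with the reversal map $\cals{R}$ (passing to the sets $\widetilde{\bb{I}}^{(n)}_{m,b}$ and reindexing) to reach \eqref{358-1402}, after which term-by-term comparison gives the stated prefactor.
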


\section{Quadratic relations of elliptic corner VOA}
\label{sec5-quad}
In  \cite{Kojima-2021}, Kojima used the fusion rules of $W_{q,t}(\fraks{sl}(2|1))$ algebra to deduce the quadratic relations of $W_{q,t}(\fraks{sl}(2|1))$. Following his work, we apply the fusion rules of the elliptic corner VOA that were deduced in the previous section to obtain the quadratic relation of the elliptic corner VOA. This quadratic relation is written in the following theorem:

\begin{thm}
\label{thm51-1228}
\mbox{}
		For $\vec{c} = (c_1,\dots,c_n)$ such that $q_{\vec{c}} \neq 1$, and for any $r, m \in \bb{Z}^{\geq 0} \,\, (r \leq m)$, 
		\begin{align}
			&f^{\vec{c}}_{r,m}\left(
			q^{\frac{r - m}{2}}_{3}\frac{w}{z} ; p
			\right)
			\widetilde{	T	}^{\vec{c}}_r(z ; p)\widetilde{	T		}^{\vec{c}}_m(w ; p) 
			- f^{\vec{c}}_{m,r}\left(
			q^{\frac{m - r}{2}}_{3}\frac{z}{w} ; p
			\right)
			\widetilde{T}^{\vec{c}}_m(w ; p)\widetilde{T}^{\vec{c}}_r(z ; p)
			\notag \\
			&= 
			\frac{
				\theta_p(q_1)\theta_p(q_2)
			}{
				(p;p)^2_{\infty}\theta_p(q_3^{-1})
			}
			\sum_{k = 1}^{r}
			\left(
			\prod_{\ell = 1}^{k-1}
			\frac{\theta_p( q_1q_3^{-\ell})\theta_p( q_2q_3^{-\ell})}{\theta_p( q_3^{-\ell - 1})\theta_p( q_3^{-\ell})}
			\right)
			\biggl\{
			\delta\left(q_3^k\frac{w}{z}\right)f^{\vec{c}}_{r-k,m+k}(q_3^{\frac{r-m}{2}} ; p)\widetilde{T}^{\vec{c}}_{r-k}(q_3^{-k}z ; p)\widetilde{T}^{\vec{c}}_{m+k}(q_3^kw ; p)
			\notag \\ 
			&\hspace{0.3cm}- 
			\delta\left(q_3^{r-m-k}\frac{w}{z}\right)f^{\vec{c}}_{r-k,m+k}(q_3^{\frac{m-r}{2}} ; p)\widetilde{T}^{\vec{c}}_{r-k}(z ; p)\widetilde{T}^{\vec{c}}_{m+k}(w ; p)
			\biggr\}. 
		\end{align}
This relation is called quadratic relations for elliptic corner VOA. 
\label{thm51-1432}
\end{thm}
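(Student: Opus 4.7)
The plan is to view the left-hand side as a formal meromorphic object in the single variable $w/z$ acting on the horizontal Fock representation, locate its poles, and identify the residues with the delta-function terms on the right-hand side via the fusion rules of Section \ref{sec4fusion}, following the strategy of \cite{Kojima-2021}.

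First I would explicitly compute $\widetilde{T}^{\vec{c}}_r(z;p)\widetilde{T}^{\vec{c}}_m(w;p)$ by expanding each current according to \eqref{eqn314-1625} and applying \textbf{Proposition \ref{prp216-2149}} to collect all the scalar factors produced by reordering the decoupled vertex operators into a single normal-ordered product. Multiplying by $f^{\vec{c}}_{r,m}(q_3^{(r-m)/2}w/z;p)$ and systematically applying the simplifications in \textbf{Proposition \ref{217}} (of the form used in \textbf{Lemma \ref{lemm41-0920}}), the resulting coefficient is a meromorphic function in $w/z$ whose only poles are simple poles located at $w/z = q_3^{k}$ for $k=1,\dots,r$. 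A parallel calculation, using \textbf{Lemma \ref{lemm46-1317}} instead of \textbf{Lemma \ref{lemm41-0920}}, shows that $f^{\vec{c}}_{m,r}(q_3^{(m-r)/2}z/w;p)\widetilde{T}^{\vec{c}}_m(w;p)\widetilde{T}^{\vec{c}}_r(z;p)$ has simple poles only at $w/z = q_3^{r-m-k}$ for $k=1,\dots,r$.

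Next I would compute the residue at each of these poles. At $w/z=q_3^{k}$ the product partially fuses: only $k$ of the ``lower-weight'' vertex operators fuse with the $r$-current, leaving the remaining $r-k$ unfused. The partial fusion is handled by the same combinatorial mechanism as in \textbf{Theorem \ref{fusion2-1615}} but applied to $k$ factors rather than to all $r$; crucially \textbf{Lemma \ref{thm44-1334}} continues to package the resulting sums of normal-ordered products into the bilinear $\widetilde{T}^{\vec{c}}_{r-k}(q_3^{-k}z;p)\widetilde{T}^{\vec{c}}_{m+k}(q_3^k w;p)$, up to the factor $f^{\vec{c}}_{r-k,m+k}(q_3^{(r-m)/2};p)^{-1}$ which cancels the reordering factor that is needed to restore the bilinear normal ordering. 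The constants produced collapse, by the same manipulation as in the proof of \textbf{Theorem \ref{thm45-16-1714}}, to $\frac{\theta_p(q_1)\theta_p(q_2)}{(p;p)_\infty^2\,\theta_p(q_3^{-1})}\prod_{\ell=1}^{k-1}\frac{\theta_p(q_1q_3^{-\ell})\theta_p(q_2q_3^{-\ell})}{\theta_p(q_3^{-\ell-1})\theta_p(q_3^{-\ell})}$. The residues at $w/z=q_3^{r-m-k}$ for the opposite ordering are computed symmetrically using \textbf{Theorem \ref{thm45-16-1714}}.

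Finally I would invoke the standard formal identity that a function analytic on $|w|<|z|$ and another analytic on $|z|<|w|$ which agree as rational functions differ, as formal distributions, by $\sum \mathrm{Res}\cdot\delta(\cdot)$ at their common poles. Combining this with the residue computations yields the desired identity. The main obstacle is the partial-fusion step: \textbf{Theorems \ref{thm45-16-1714}} and \textbf{\ref{fusion2-1615}} as stated only handle complete fusion ($k=r$), so intermediate $k$ requires redoing the argument of \textbf{Lemma \ref{lemm41-0920}}--\textbf{Lemma \ref{lemm42-0950}} while tracking which factors of the form $\theta_p(q_3^{-j}w/z)^{\pm 1}$ become singular versus which stay regular when only $k$ out of $r$ contractions are taken to the pole. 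The combinatorial bookkeeping is delicate because the indices $p_1<\cdots<p_j$ and occupation numbers $m_{p_\ell}$ must split compatibly with the chosen $k$, and \textbf{Lemma \ref{thm44-1334}} must be invoked a second time in this refined form to reassemble the result as the bilinear $\widetilde{T}^{\vec{c}}_{r-k}\widetilde{T}^{\vec{c}}_{m+k}$ at the correct shifted arguments.
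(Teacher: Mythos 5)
Your outline defers, rather than resolves, the two points on which the whole argument hinges, so as it stands it has a genuine gap. First, the claim that after multiplying by $f^{\vec{c}}_{r,m}$ the coefficient of each normal-ordered product has simple poles only at $w/z=q_3^{k}$, $k=1,\dots,r$ (and at $w/z=q_3^{r-m-k}$ for the opposite ordering) is not true term by term: by \textbf{Lemmas \ref{lemm41-0920}} and \textbf{\ref{lemm42-0950}} the individual factors $\textbf{O}^{\prime}$, $\widetilde{\textbf{O}}$ have poles at many intermediate points $w/z=q_3^{-k}$ depending on the partial sums $\sum_i m^{\prime}_i$, and the statement you need is a cancellation of all those spurious residues in the full sum over $(p_1,\dots,p_j;m_{p_\ell};m^{\prime}_k)$. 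Even for $r=1$ this cancellation is nontrivial (it is exactly the vanishing of the coefficients of $\delta(q_3^{-k}w/z)$, $1\le k\le m-1$, invoked in the basis step with reference to \cite{HMNW}); for general $r$ you give no argument. Second, the residue identification at an intermediate pole — that only $k$ of the $r$ operators fuse and the result reassembles into $f^{\vec{c}}_{r-k,m+k}(q_3^{\frac{r-m}{2}};p)\widetilde{T}^{\vec{c}}_{r-k}(q_3^{-k}z;p)\widetilde{T}^{\vec{c}}_{m+k}(q_3^{k}w;p)$ with the stated constant — is precisely the ``partial fusion'' statement that \textbf{Theorems \ref{thm45-16-1714}} and \textbf{\ref{fusion2-1615}} do not give (they treat complete fusion only), and which \textbf{Lemma \ref{thm44-1334}} in its present form does not package; you acknowledge this as ``the main obstacle'' but do not supply the refined lemma, so the core of the proof is missing.

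There is also a technical flaw in your final step: the ``standard formal identity'' comparing two expansions of a common \emph{rational} function is not directly applicable here, because the structure functions are ratios of theta functions. The correct mechanism is the elliptic partial-fraction identity \eqref{iden53}, whose use is exactly what produces the factor $\theta_p(q_1)\theta_p(q_2)/\bigl((p;p)^2_{\infty}\theta_p(q_3^{-1})\bigr)$ in the answer; without it your residue bookkeeping cannot even generate the $(p;p)^2_{\infty}$. Note that the paper deliberately avoids the general partial-fusion analysis you propose: it proves the $r=1$ case directly via \eqref{iden53} (equation \eqref{bassss-1612}) and then inducts on $r$ by applying $\lim_{z^{\prime}\to q_3^{-(r-1)}z}\theta_p\bigl(q_3^{-(r-1)}z/z^{\prime}\bigr)f^{\vec{c}}_{1,r-1}f^{\vec{c}}_{1,m}\widetilde{T}^{\vec{c}}_1(z^{\prime};p)$ to both sides of the $(r-1,m)$ relation \eqref{eqn510-1505} and using only the \emph{complete} fusion rules together with \textbf{Proposition \ref{217}} (\textbf{Lemmas \ref{lemm52}} and \textbf{\ref{lemm53}}). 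If you want to pursue your direct route, you would first have to prove an intermediate-pole residue lemma generalizing \textbf{Lemmas \ref{lemm41-0920}}--\textbf{\ref{lemm42-0950}} and a correspondingly refined version of \textbf{Lemma \ref{thm44-1334}}; otherwise the inductive bootstrap is the shorter path.
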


We prove this theorem by mathematical induction on $r$. 
\subsection{Basis step}
A detailed explanation of the proof of basis step for the quantum corner VOA is provided in appendix A of paper \cite{HMNW}. 
Since the proof of basis step for the elliptic corner VOA is nearly identical to those of quantum corner VOA, here, we merely present a sketch of the proof. For readers interested in comprehensive proof, we recommend consulting appendix A of paper \cite{HMNW}. 

According to \textbf{Lemmas \ref{lemm41-0920}} and \textbf{\ref{lemm46-1317}}, we see that 
\begin{align}
	&f^{\vec{c}}_{1,m}\left(q_3^{\frac{1 - m}{2}}\frac{w}{z} ; p \right)
	\widetilde{T}^{\vec{c}}_{1}(z;p)
	\widetilde{T}^{\vec{c}}_{m}(w;p)
	- 
	f^{\vec{c}}_{m,1}\left(q_3^{\frac{m - 1}{2}}\frac{z}{w} ; p\right)
	\widetilde{T}^{\vec{c}}_{m}(w;p)
	\widetilde{T}^{\vec{c}}_{1}(z;p)
	\notag	\\
	&=
	\underbrace{			\sum_{p_1 \in \bb{Z}^{\geq 0}}					}_{1 \leq p_1 \leq n}
	\underbrace{			\sum_{	m^\prime_1, \dots, m^\prime_n \in \bb{Z}^{\geq 0}		}							}_{
		m^\prime_1 + \cdots + m^\prime_n = m
	}
	\biggl\{
	A(1,c_{p_1} ; p)
	\bigg(	\prod_{k = 1}^{n}A(m^\prime_k,c_k;p)					\bigg)
	\times
	\bigg[
	\widetilde{\textbf{\textit{O}}}^{(m^\prime_1,\dots,m^\prime_n)}_{p_1}(w/z ; p) - \text{\textbf{\textit{O}}}^{\prime \,\, (m^\prime_1,\dots,m^\prime_n)}_{p_1}(z/w ; p)
	\bigg]
	\notag	\\
	&\hspace{4cm}
	\times
	\normord{
		\widetilde{		\Lambda		}^{\vec{c}}_{p_1}(z;p)
		\bigcdot
		\prod_{\ell = 1}^{n}
		\prod_{u_\ell = 1}^{m^\prime_\ell}
		\widetilde{\Lambda}^{\vec{c}}_\ell\big(
		(q_3^{-1})^{\sum_{k = 1}^{\ell - 1}m^\prime_k + u_\ell - 1}w ; p
		\big)
	}
	\biggr\}.
\label{eqn52-}
\end{align}
By using the identity, 
\begin{align}
	&\frac{
		\theta_p(t^{\frac{1}{2}}q^{-1}z)
		\theta_p(t^{-\frac{1}{2}}qz)
	}{
		\theta_p(t^{\frac{1}{2}}z)
		\theta_p(t^{-\frac{1}{2}}z)
	}
	- 
	\frac{
		\theta_p(t^{\frac{1}{2}}q^{-1}z^{-1})
		\theta_p(t^{-\frac{1}{2}}qz^{-1})
	}{
		\theta_p(t^{\frac{1}{2}}z^{-1})
		\theta_p(t^{-\frac{1}{2}}z^{-1})
	}
	= - 
	\frac{
		\theta_p(q)\theta_p(tq^{-1})
	}{
		(p;p)^2_{\infty}\theta_p(t)
	}
	\bigg[
	\delta(t^{\frac{1}{2}}z) - \delta(t^{-\frac{1}{2}}z)
	\bigg], 
\label{iden53}
\end{align} 
one can show that 
\begin{align}
&\widetilde{\textbf{\textit{O}}}^{(m^\prime_1,\dots,m^\prime_n)}_{p_1}(w/z ; p) 
- \text{\textbf{\textit{O}}}^{\prime \,\, (m^\prime_1,\dots,m^\prime_n)}_{p_1}(z/w ; p)
\notag \\
&= 
\begin{cases}
\displaystyle
- 
\frac{
	\theta_p\left(		q_3^{-m^\prime_{p_1}}q_{c_{p_1}}					\right)
	\theta_p\left(		q_{c^\prime_{p_1}}			\right)
}{
	(p;p)^2_{\infty}
	\theta_p\left(		(q_3^{-1})^{m^\prime_{p_1} + 1}		\right)
}
\bigg[
\delta\left(		q_3^{-\sum_{i = 1}^{p_1}m^\prime_i}\frac{w}{z}		\right)
- 
\delta\left(	q_3q_3^{-\sum_{i = 1}^{p_1 - 1}m^\prime_i}\frac{w}{z}				\right)
\bigg]
&\text{ if } c_{p_1} = 1,2 
\\
\displaystyle
- 
\frac{
	\theta_p(q_1)
	\theta_p(q_2)
}{
	(p;p)^2_{\infty}\theta_p(q_3^{-1})
}
\bigg[
\delta\left(		q_3^{-\sum_{i = 1}^{p_1 - 1}m^\prime_i}\frac{w}{z}		\right)
- 
\delta\left(	q_3q_3^{-\sum_{i = 1}^{p_1 - 1}m^\prime_i}\frac{w}{z}				\right)
\bigg]
&\text{ if $c_{p_1} = 3$ and $m^\prime_{p_1} = 0$} 
\\
0
&\text{ if $c_{p_1} = 3$ and $m^\prime_{p_1} = 1$} 
\end{cases}
\end{align}
For the proof of the identity \eqref{iden53}, we encourage the reader to read \cite{Nieri}. From the fact that $A(m^\prime_{p_1},3 ; p) = 0$ for $m^\prime_{p_1} \geq 2$, we can rewrite equation \eqref{eqn52-} as 
\begin{align}
	&f^{\vec{c}}_{1,m}\left(q_3^{\frac{1 - m}{2}}\frac{w}{z} ; p \right)
	\widetilde{T}^{\vec{c}}_{1}(z;p)
	\widetilde{T}^{\vec{c}}_{m}(w;p)
	- 
	f^{\vec{c}}_{m,1}\left(q_3^{\frac{m - 1}{2}}\frac{z}{w} ; p\right)
	\widetilde{T}^{\vec{c}}_{m}(w;p)
	\widetilde{T}^{\vec{c}}_{1}(z;p)
	\notag	\\
	&=
	\underbrace{			\sum_{p_1 \in \bb{Z}^{\geq 0}}					}_{
		\substack{				
			(1) \,\, 1 \leq p_1 \leq n
			\\
			(2) \,\, c_{p_1} \neq 3
		}
	}
	\underbrace{			\sum_{	m^\prime_1, \dots, m^\prime_n \in \bb{Z}^{\geq 0}		}							}_{
		m^\prime_1 + \cdots + m^\prime_n = m
	}
	\biggl\{
	A(1,c_{p_1} ; p)
	\times
	\bigg(	\prod_{k = 1}^{n}A(m^\prime_k,c_k;p)					\bigg)
	\times
	- 
	\frac{
		\theta_p\left(		q_3^{-m^\prime_{p_1}}q_{c_{p_1}}					\right)
		\theta_p\left(		q_{c^\prime_{p_1}}			\right)
	}{
		(p;p)^2_{\infty}
		\theta_p\left(		(q_3^{-1})^{m^\prime_{p_1} + 1}		\right)
	}
	\notag	\\
	&\hspace{0.3cm}
	\times
	\bigg[
	\delta\left(		q_3^{-\sum_{i = 1}^{p_1}m^\prime_i}\frac{w}{z}		\right)
	- 
	\delta\left(	q_3q_3^{-\sum_{i = 1}^{p_1 - 1}m^\prime_i}\frac{w}{z}				\right)
	\bigg]
	\times
	\normord{
		\widetilde{		\Lambda		}^{\vec{c}}_{p_1}(z;p)
		\bigcdot
		\prod_{\ell = 1}^{n}
		\prod_{u_\ell = 1}^{m^\prime_\ell}
		\widetilde{\Lambda}^{\vec{c}}_\ell\big(
		(q_3^{-1})^{\sum_{k = 1}^{\ell - 1}m^\prime_k + u_\ell - 1}w ; p
		\big)
	}
	\biggr\}
	\notag	\\
	&+
	\underbrace{			\sum_{p_1 \in \bb{Z}^{\geq 0}}					}_{
		\substack{				
			(1) \,\, 1 \leq p_1 \leq n
			\\
			(2) \,\, c_{p_1} = 3
		}
	}
	\underbrace{			\sum_{	m^\prime_1, \dots, m^\prime_n \in \bb{Z}^{\geq 0}		}							}_{
		\substack{				
			(1) \,\, m^\prime_1 + \cdots + m^\prime_n = m 
			\\
			(2) \,\, m^\prime_{p_1} = 0
		}
	}	
	\biggl\{
	A(1,c_{p_1} ; p)
	\times
	\bigg(	\prod_{k = 1}^{n}A(m^\prime_k,c_k;p)					\bigg)
	\times
	- 
	\frac{
		\theta_p(q_1)
		\theta_p(q_2)
	}{
		(p;p)^2_{\infty}\theta_p(q_3^{-1})
	}
	\notag	\\
	&\hspace{0.3cm}
	\times
	\bigg[
	\delta\left(		q_3^{-\sum_{i = 1}^{p_1 - 1}m^\prime_i}\frac{w}{z}		\right)
	- 
	\delta\left(	q_3q_3^{-\sum_{i = 1}^{p_1 - 1}m^\prime_i}\frac{w}{z}				\right)
	\bigg]
	\times
	\normord{
		\widetilde{		\Lambda		}^{\vec{c}}_{p_1}(z;p)
		\bigcdot
		\prod_{\ell = 1}^{n}
		\prod_{u_\ell = 1}^{m^\prime_\ell}
		\widetilde{\Lambda}^{\vec{c}}_\ell\big(
		(q_3^{-1})^{\sum_{k = 1}^{\ell - 1}m^\prime_k + u_\ell - 1}w ; p
		\big)
	}
	\biggr\}. 
	\label{217-0957}
\end{align}
The formal delta functions appearing in the above equation are $\delta\left(q_3^{-m}\frac{w}{z}\right), \delta\left(q_3\frac{w}{z}\right)$ and $\delta\left(q_3^{-k}\frac{w}{z}\right) \,\, (1 \leq k \leq m - 1)$. It can be shown that for each $1 \leq k \leq m - 1$, the coefficient in front of $\delta\left(q_3^{-k}\frac{w}{z}\right)$ vanishes \cite{HMNW}. Moreover, 
it is evident from equations \eqref{eqn416-1720-18-4} \eqref{358-1402} that the parts 
$\delta\left(q_3^{-m}\frac{w}{z}\right)$ and $\delta\left(q_3\frac{w}{z}\right)$ are equal to 
\begin{align}
- 
\frac{
	\theta_p(q_1)
	\theta_p(q_2)
}{
	(p;p)^2_{\infty}\theta_p(q_3^{-1})
}
\times
\delta\left(q_3^{-m}\frac{w}{z}\right)
\times
\widetilde{T}^{\vec{c}}_{m+1}(w;p), 
\end{align}
and 
\begin{align}
	\frac{
		\theta_p(q_1)
		\theta_p(q_2)
	}{
		(p;p)^2_{\infty}\theta_p(q_3^{-1})
	}
	\times
	\delta\left(q_3\frac{w}{z}\right)
	\times
	\widetilde{T}^{\vec{c}}_{m+1}(z;p), 
\end{align}
respectively. Therefore, we obtain that 
\begin{align}
	&f^{\vec{c}}_{1,m}\left(q_3^{\frac{1 - m}{2}}\frac{w}{z} ; p\right)
	\widetilde{T}^{\vec{c}}_{1}(z;p)
	\widetilde{T}^{\vec{c}}_{m}(w;p)
	- 
	f^{\vec{c}}_{m,1}\left(q_3^{\frac{m-1}{2}}\frac{z}{w} ; p\right)
	\widetilde{T}^{\vec{c}}_{m}(w;p)
	\widetilde{T}^{\vec{c}}_{1}(z;p)
	\notag	\\
	&=
	\frac{
		\theta_p(q_1)\theta_p(q_2)
	}{
		(p;p)^2_{\infty}\theta_p(q_3^{-1})
	}
	\delta\left(q_3\frac{w}{z}\right)
	\widetilde{T}^{\vec{c}}_{m+1}(z;p)
	- 
	\frac{
		\theta_p(q_1)\theta_p(q_2)
	}{
		(p;p)^2_{\infty}\theta_p(q_3^{-1})
	}
	\delta\left(q_3^{-m}\frac{w}{z}\right)
	\widetilde{	T		}^{\vec{c}}_{m+1}(w ; p).
\label{bassss-1612}
\end{align}
So the basis step of induction is done. 

\subsection{Inductive step}
As an induction hypothesis, we assume that the relation below holds for any $i = 1, \dots, r-1 \,\, (r \geq 2)$ and for $j \geq i$ : 

\begin{align}
		&f^{\vec{c}}_{i,j}\left(
		q^{\frac{i - j}{2}}_{3}\frac{w}{z} ; p
		\right)
		\widetilde{	T	}^{\vec{c}}_i(z ; p)\widetilde{	T		}^{\vec{c}}_j(w ; p) 
		- f^{\vec{c}}_{j,i}\left(
		q^{\frac{j - i}{2}}_{3}\frac{z}{w} ; p
		\right)
		\widetilde{T}^{\vec{c}}_j(w ; p)\widetilde{T}^{\vec{c}}_i(z ; p)
		\notag \\
		&= 
		\frac{
			\theta_p(q_1)\theta_p(q_2)
		}{
			(p;p)^2_{\infty}\theta_p(q_3^{-1})
		}
		\sum_{k = 1}^{i}
		\left(
		\prod_{\ell = 1}^{k-1}
		\frac{\theta_p( q_1q_3^{-\ell})\theta_p( q_2q_3^{-\ell})}{\theta_p( q_3^{-\ell - 1})\theta_p( q_3^{-\ell})}
		\right)
		\biggl\{
		\delta\left(q_3^k\frac{w}{z}\right)f^{\vec{c}}_{i-k,j+k}(q_3^{\frac{i-j}{2}} ; p)\widetilde{T}^{\vec{c}}_{i-k}(q_3^{-k}z ; p)\widetilde{T}^{\vec{c}}_{j+k}(q_3^kw ; p)
		\notag \\ 
		&\hspace{0.3cm}- 
		\delta\left(q_3^{i-j-k}\frac{w}{z}\right)f^{\vec{c}}_{i-k,j+k}(q_3^{\frac{j-i}{2}} ; p)\widetilde{T}^{\vec{c}}_{i-k}(z ; p)\widetilde{T}^{\vec{c}}_{j+k}(w ; p)
		\biggr\}. 
\end{align}
Now, let $m \geq r$. From the induction hypothesis, we get that 
\begin{align}
	&f^{\vec{c}}_{r - 1,m}\left(
	q^{\frac{r - 1 - m}{2}}_{3}\frac{w}{z} ; p
	\right)
	\widetilde{	T	}^{\vec{c}}_{r-1}(z ; p)\widetilde{	T		}^{\vec{c}}_m(w ; p) 
	- f^{\vec{c}}_{m,r - 1}\left(
	q^{\frac{m - r + 1}{2}}_{3}\frac{z}{w} ; p
	\right)
	\widetilde{T}^{\vec{c}}_m(w ; p)\widetilde{T}^{\vec{c}}_{r-1}(z ; p)
	\notag \\
	&= 
	\frac{
		\theta_p(q_1)\theta_p(q_2)
	}{
		(p;p)^2_{\infty}\theta_p(q_3^{-1})
	}
	\sum_{k = 1}^{r-1}
	\biggl\{
	\prod_{\ell = 1}^{k-1}
	\frac{\theta_p( q_1q_3^{-\ell})\theta_p( q_2q_3^{-\ell})}{\theta_p( q_3^{-\ell - 1})\theta_p( q_3^{-\ell})}
	\times
	\notag	\\
	&\hspace{0.3cm}\times
	\bigg[
	\delta\left(q_3^k\frac{w}{z}\right)f^{\vec{c}}_{r-k - 1,m+k}(q_3^{\frac{r-m - 1}{2}} ; p)\widetilde{T}^{\vec{c}}_{r-k - 1}(q_3^{-k}z ; p)\widetilde{T}^{\vec{c}}_{m+k}(q_3^kw ; p)
	\notag \\ 
	&\hspace{1.4cm}- 
	\delta\left(q_3^{r-m-k - 1}\frac{w}{z}\right)f^{\vec{c}}_{r-k - 1,m+k}(q_3^{\frac{m-r + 1}{2}} ; p)\widetilde{T}^{\vec{c}}_{r-k - 1}(z ; p)\widetilde{T}^{\vec{c}}_{m+k}(w ; p)
	\bigg]\biggr\}. 
\label{eqn510-1505}
\end{align}
Next, we operate both sides of equation \eqref{eqn510-1505} by 
\begin{align}
\lim_{z^\prime \rightarrow q_3^{-(r-1)}z}
\bigg[
\displaystyle \theta_p\left(	q_3^{-(r-1)}\frac{z}{z^\prime}			\right)f^{\vec{c}}_{1,r-1}\left(	q_3^{\frac{2-r}{2}}		\frac{z}{z^\prime} ; p	\right)f^{\vec{c}}_{1,m}\left(		q_3^{\frac{1 - m}{2}} \frac{w}{z^\prime} ; p			\right)\widetilde{T}^{\vec{c}}_1(z^\prime ; p)
\times 
\cdots
\bigg]. 
\end{align}
We then obtain the following lemma. 

\begin{lem}
\label{lemm52}
\mbox{}
\begin{align}
	&\lim_{z^\prime \rightarrow q_3^{-(r-1)}z}
	\bigg[
	\displaystyle \theta_p\left(	q_3^{-(r-1)}\frac{z}{z^\prime}			\right)f^{\vec{c}}_{1,r-1}\left(	q_3^{\frac{2-r}{2}}		\frac{z}{z^\prime} ; p	\right)f^{\vec{c}}_{1,m}\left(		q_3^{\frac{1 - m}{2}} \frac{w}{z^\prime} ; p			\right)\widetilde{T}^{\vec{c}}_1(z^\prime ; p)
	\times 
	\operatorname{LHS \eqref{eqn510-1505}}
	\bigg]
	\notag	\\
	&=
	-
	\frac{
		\theta_p(q_1)\theta_p(q_2)
	}{
		\theta_p(q_3^{-1})
	}
	f^{\vec{c}}_{r,m}\left(q^{\frac{r  - m}{2}}_{3}\frac{w}{z} ; p\right)
	\widetilde{	T	}^{\vec{c}}_{r}(z ; p)
	\widetilde{	T		}^{\vec{c}}_m(w ; p) 
	+ 
	\frac{
		\theta_p(q_1)\theta_p(q_2)
	}{
		\theta_p(q_3^{-1})
	}
	f^{\vec{c}}_{m,r}\left(q^{\frac{m  - r}{2}}_{3}\frac{z}{w} ; p\right)
	\widetilde{	T		}^{\vec{c}}_m(w ; p)
	\widetilde{	T	}^{\vec{c}}_{r}(z ; p)
	\notag	\\
	&+ 
	\delta\left(q_3^r\frac{w}{z}\right)
	\frac{
		\theta_p(q_1)\theta_p(q_2)
	}{
		(p;p)^2_{\infty}\theta_p(q_3^{-1})
	}
	\frac{
		\theta_p\left(	q_1q_3^{1 - r}		\right)
		\theta_p\left(	q_2q_3^{1 - r}		\right)
	}{
		\theta_p\left(	q_3^{1 - r}		\right)
		\theta_p\left(	q_3^{-r}		\right)
	}
	\frac{
		\theta_p(q_1)\theta_p(q_2)
	}{
		\theta_p(q_3^{-1})
	}
	\times
	\prod_{\ell = 1}^{r - 2}
	\frac{
		\theta_p(q_1q_3^{-\ell})
		\theta_p(q_2q_3^{-\ell})
	}{
		\theta_p(q_3^{-\ell - 1})
		\theta_p(q_3^{-\ell})
	}
	\times 
	\widetilde{	T	}^{\vec{c}}_{m+r}(z ; p)
	\notag	\\
	&- 
	\delta\left(q_3^{r - m - 1} \frac{w}{z}\right)
	\frac{
		\theta_p(q_1)\theta_p(q_2)
	}{
		(p;p)^2_{\infty}\theta_p(q_3^{-1})
	}
	\frac{
		\theta_p(q_1)\theta_p(q_2)
	}{
		\theta_p(q_3^{-1})
	}
	f^{\vec{c}}_{m+1,r-1}\left(q_3^{\frac{r - m}{2}} ; p\right)
	\widetilde{T}^{\vec{c}}_{m+1}(w ; p)
	\widetilde{	T	}^{\vec{c}}_{r-1}(z ; p). 
\end{align}
\end{lem}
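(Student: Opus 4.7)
The plan is to left-multiply the induction hypothesis \eqref{eqn510-1505} by
\[
\mathcal{O}(z'):=\theta_p\!\left(q_3^{-(r-1)}\tfrac{z}{z'}\right)f^{\vec c}_{1,r-1}\!\left(q_3^{(2-r)/2}\tfrac{z}{z'};p\right)f^{\vec c}_{1,m}\!\left(q_3^{(1-m)/2}\tfrac{w}{z'};p\right)\widetilde{T}^{\vec c}_1(z';p)
\]
and then take $z'\to q_3^{-(r-1)}z$. Writing the LHS of \eqref{eqn510-1505} as $\mathcal{A}-\mathcal{B}$, where $\mathcal{A}=f^{\vec c}_{r-1,m}(q_3^{(r-1-m)/2}w/z;p)\widetilde{T}^{\vec c}_{r-1}(z;p)\widetilde{T}^{\vec c}_m(w;p)$ and $\mathcal{B}=f^{\vec c}_{m,r-1}(q_3^{(m-r+1)/2}z/w;p)\widetilde{T}^{\vec c}_m(w;p)\widetilde{T}^{\vec c}_{r-1}(z;p)$, I treat $\mathcal{A}$, the exchange part of $\mathcal{B}$, and the two delta-function contributions from $\mathcal{B}$ separately; these four pieces will assemble into the four summands on the right-hand side.

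For $\mathcal{A}$, the current $\widetilde{T}^{\vec c}_1(z')$ is already adjacent to $\widetilde{T}^{\vec c}_{r-1}(z;p)$ on the left and $\widetilde{T}^{\vec c}_m(w;p)$ carries no $z'$-dependence, so \textbf{Theorem \ref{fusion2-1615}} (Fusion rule (II)) evaluates the limit at once: the resulting factor $\tfrac{\theta_p(q_1^{-1})\theta_p(q_2^{-1})}{\theta_p(q_3)}$ reduces to $-\tfrac{\theta_p(q_1)\theta_p(q_2)}{\theta_p(q_3^{-1})}$ via $\theta_p(z)=-z\theta_p(z^{-1})$ together with $q_1q_2q_3=1$, while the surviving $f$-factors combine through identity \eqref{248-1624-16} of \textbf{Proposition \ref{217}} into $f^{\vec c}_{r,m}(q_3^{(r-m)/2}w/z;p)$, reproducing the first RHS term. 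For $\mathcal{B}$, I first apply the basis-step quadratic relation \eqref{bassss-1612} to rewrite $f^{\vec c}_{1,m}(q_3^{(1-m)/2}w/z';p)\widetilde{T}^{\vec c}_1(z')\widetilde{T}^{\vec c}_m(w)$ as $f^{\vec c}_{m,1}(q_3^{(m-1)/2}z'/w;p)\widetilde{T}^{\vec c}_m(w)\widetilde{T}^{\vec c}_1(z')$ plus two delta summands. The exchange (non-delta) piece, after pulling $\widetilde{T}^{\vec c}_m(w;p)$ past the $z'$-limit and applying Fusion rule (II) once more to $\widetilde{T}^{\vec c}_1(z')\widetilde{T}^{\vec c}_{r-1}(z;p)$, yields---once the residual factors $f^{\vec c}_{m,1}(q_3^{(m-2r+1)/2}z/w;p)$ and $f^{\vec c}_{m,r-1}(q_3^{(m-r+1)/2}z/w;p)$ are combined by the symmetry $f^{\vec c}_{i,j}=f^{\vec c}_{j,i}$ and the product identities of \textbf{Proposition \ref{217}}---the second RHS term $+\tfrac{\theta_p(q_1)\theta_p(q_2)}{\theta_p(q_3^{-1})}f^{\vec c}_{m,r}(q_3^{(m-r)/2}z/w;p)\widetilde{T}^{\vec c}_m(w;p)\widetilde{T}^{\vec c}_r(z;p)$.

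The two remaining RHS terms come from the delta summands of \eqref{bassss-1612}. For the contribution carrying $\delta(q_3w/z')\widetilde{T}^{\vec c}_{m+1}(z')$ times $f^{\vec c}_{m,r-1}(q_3^{(m-r+1)/2}z/w;p)\widetilde{T}^{\vec c}_{r-1}(z;p)$, the delta support $z'=q_3w$ combined with the outer limit $z'\to q_3^{-(r-1)}z$ produces a residual $\delta(q_3^rw/z)$ in the $(z,w)$ variables. At this coincident support the pair $\widetilde{T}^{\vec c}_{m+1}(q_3w)\widetilde{T}^{\vec c}_{r-1}(z;p)$ sits precisely in the fusion configuration of \textbf{Theorem \ref{thm45-16-1714}} (Fusion rule (I)) for $(m+1,r-1)$, and a second fusion produces $\widetilde{T}^{\vec c}_{m+r}(z;p)$ with the prescribed Fusion rule (I) coefficient; the evaluated $f$-factor product then rearranges through \eqref{251-1624-16} into the required $f^{\vec c}_{m+1,r-1}$ together with the additional theta ratio $\tfrac{\theta_p(q_1q_3^{1-r})\theta_p(q_2q_3^{1-r})}{\theta_p(q_3^{1-r})\theta_p(q_3^{-r})}$ displayed in the lemma. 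The contribution carrying $\delta(q_3^{-m}w/z')\widetilde{T}^{\vec c}_{m+1}(w;p)$ admits no further fusion: its support combined with the outer limit yields $\delta(q_3^{r-m-1}w/z)\widetilde{T}^{\vec c}_{m+1}(w;p)\widetilde{T}^{\vec c}_{r-1}(z;p)$, and the evaluated $f$-factors collapse via \textbf{Proposition \ref{217}} to $f^{\vec c}_{m+1,r-1}(q_3^{(r-m)/2};p)$, matching the fourth RHS term.

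The principal technical obstacle is the rigorous handling of the product $\theta_p(q_3^{-(r-1)}z/z')\cdot\delta(aw/z')$ (for $a\in\{q_3,q_3^{-m}\}$) under the limit $z'\to q_3^{-(r-1)}z$: one must show that the apparent vanishing of $\theta_p$ at the coincident support $z/w=q_3^r$ (resp.\ $q_3^{r-m-1}$) is exactly compensated by the corresponding pole of the accompanying $f$-factors through the theta identities \eqref{250-1624-16}--\eqref{251-1624-16}, so that a well-defined residue of the form $\delta(\cdot)\cdot(\text{finite})$ emerges, in parallel with the residue extractions carried out in Section \ref{sec4fusion}. Once these coefficient verifications are completed, assembling the four pieces gives the lemma.
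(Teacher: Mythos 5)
Your proposal follows essentially the same route as the paper's proof of this lemma: multiply \eqref{eqn510-1505} by the same operator, use the basis-step relation \eqref{bassss-1612} to exchange $\widetilde{T}^{\vec{c}}_1(z^\prime;p)$ with $\widetilde{T}^{\vec{c}}_m(w;p)$ (producing the exchange piece plus the two delta pieces), recombine the structure functions via \textbf{Proposition \ref{217}}, and then evaluate the limit with Fusion rule (II) for the first two pieces, Fusion rule (I) for the $\delta\left(q_3^r\frac{w}{z}\right)$ piece, and a direct evaluation for the $\delta\left(q_3^{r-m-1}\frac{w}{z}\right)$ piece. The ``technical obstacle'' you flag at the end is in fact already discharged by the ingredients you cite: for the first delta piece the zero of $\theta_p\left(q_3^{-(r-1)}\frac{z}{z^\prime}\right)$ is exactly what Fusion rule (I) consumes, and for the second it cancels identically against the $\theta_p\left(q_3^{1-r}\frac{z}{z^\prime}\right)$ in the denominator produced by the recombination identities \eqref{250-1624-16}--\eqref{251-1624-16}.
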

\begin{proof}\mbox{}
With the help of \eqref{bassss-1612}, it can be shown that 
\begin{align}
&\theta_p\left(	q_3^{-(r-1)}\frac{z}{z^\prime}			\right)f^{\vec{c}}_{1,r-1}\left(	q_3^{\frac{2-r}{2}}		\frac{z}{z^\prime} ; p	\right)f^{\vec{c}}_{1,m}\left(		q_3^{\frac{1 - m}{2}} \frac{w}{z^\prime} ; p			\right)\widetilde{T}^{\vec{c}}_1(z^\prime ; p)
\times 
\operatorname{LHS \eqref{eqn510-1505}}
\notag \\
&=
\theta_p\left(	q_3^{-(r-1)}\frac{z}{z^\prime}			\right)
f^{\vec{c}}_{1,r-1}\left(	q_3^{\frac{2-r}{2}}		\frac{z}{z^\prime} ; p	\right)
f^{\vec{c}}_{1,m}\left(		q_3^{\frac{1 - m}{2}} \frac{w}{z^\prime} ; p			\right)
f^{\vec{c}}_{r - 1,m}\left(
q^{\frac{r - 1 - m}{2}}_{3}\frac{w}{z} ; p
\right)
\notag \\
&\hspace{0.4cm}\times
\widetilde{T}^{\vec{c}}_1(z^\prime ; p) 
\widetilde{	T	}^{\vec{c}}_{r-1}(z ; p)
\widetilde{	T		}^{\vec{c}}_m(w ; p) 
\notag \\
&-
\theta_p\left(	q_3^{-(r-1)}\frac{z}{z^\prime}			\right)
f^{\vec{c}}_{1,r-1}\left(	q_3^{\frac{2-r}{2}}		\frac{z}{z^\prime} ; p	\right)
f^{\vec{c}}_{m,r-1}\left(		q_3^{\frac{m - r +1}{2}}\frac{z}{w} ; p			\right)
f^{\vec{c}}_{m,1}\left(	q_3^{\frac{m-1}{2}}\frac{z^\prime}{w} ; p				\right)
\notag \\
&\hspace{0.4cm}\times
\widetilde{	T		}^{\vec{c}}_m(w ; p) 
\widetilde{T}^{\vec{c}}_1(z^\prime ; p) 
\widetilde{	T	}^{\vec{c}}_{r-1}(z ; p)
\notag \\
&-
\theta_p\left(	q_3^{-(r-1)}\frac{z}{z^\prime}			\right)
f^{\vec{c}}_{1,r-1}\left(	q_3^{\frac{2-r}{2}}		\frac{z}{z^\prime} ; p	\right)
f^{\vec{c}}_{m,r-1}\left(		q_3^{\frac{m - r +1}{2}}\frac{z}{w} ; p			\right)
\frac{
	\theta_p(q_1)\theta_p(q_2)
}{
	(p;p)^2_{\infty}\theta_p(q_3^{-1})
}
\notag \\
&\hspace{0.4cm}\times
\delta\left(q_3\frac{w}{z^\prime}\right)
\widetilde{T}^{\vec{c}}_{m+1}(z^\prime ; p)
\widetilde{	T	}^{\vec{c}}_{r-1}(z ; p)
\notag \\
&+
\theta_p\left(	q_3^{-(r-1)}\frac{z}{z^\prime}			\right)
f^{\vec{c}}_{1,r-1}\left(	q_3^{\frac{2-r}{2}}		\frac{z}{z^\prime} ; p	\right)
f^{\vec{c}}_{m,r-1}\left(		q_3^{\frac{m - r +1}{2}}\frac{z}{w} ; p			\right)
\frac{
	\theta_p(q_1)\theta_p(q_2)
}{
	(p;p)^2_{\infty}\theta_p(q_3^{-1})
}
\notag \\
&\hspace{0.4cm}\times
\delta\left(q_3^{-m}\frac{w}{z^\prime}\right)
\widetilde{T}^{\vec{c}}_{m+1}(w ; p)
\widetilde{	T	}^{\vec{c}}_{r-1}(z ; p). 
\end{align}
The above expression can be simplified further by noting from \eqref{250-1624-16}  \eqref{251-1624-16} that 
\begin{align}
f^{\vec{c}}_{1,r-1}\left(	q_3^{\frac{2-r}{2}}		\frac{z}{z^\prime} ; p	\right)
f^{\vec{c}}_{m,r-1}\left(		q_3^{\frac{m - r +1}{2}}q_3		\frac{z}{z^\prime} ; p			\right)
= 
\frac{
\theta_p\left(	q_1^{-1}\frac{z}{z^\prime}		\right)
\theta_p\left(	q_2^{-1}\frac{z}{z^\prime}		\right)
}{
\theta_p\left(	\frac{z}{z^\prime}		\right)
\theta_p\left(	q_3\frac{z}{z^\prime}		\right)
}
f^{\vec{c}}_{m+1,r-1}\left(q_3^{\frac{m - r + 2}{2}} \frac{z}{z^\prime} ; p\right), 
\end{align}
and 
\begin{align}
f^{\vec{c}}_{1,r-1}\left(	q_3^{\frac{2-r}{2}}		\frac{z}{z^\prime} ; p	\right)
f^{\vec{c}}_{m,r-1}\left(		q_3^{\frac{m - r +1}{2}}q_3^{-m}		\frac{z}{z^\prime} ; p			\right)
= 
\frac{
\theta_p\left(	q_1^{-1}	q_3^{1-r}\frac{z}{z^\prime}	\right)
\theta_p\left(	q_2^{-1}	q_3^{1-r}\frac{z}{z^\prime}	\right)
}{
\theta_p\left(		q_3^{1-r}\frac{z}{z^\prime}	\right)
\theta_p\left(	q_3	q_3^{1-r}\frac{z}{z^\prime}	\right)
}
f^{\vec{c}}_{m+1,r-1}\left(q_3^{\frac{2 - r - m}{2}} \frac{z}{z^\prime} ; p\right). 
\end{align}
So, we obtain that 
\begin{align}
	&\theta_p\left(	q_3^{-(r-1)}\frac{z}{z^\prime}			\right)
	f^{\vec{c}}_{1,r-1}\left(	q_3^{\frac{2-r}{2}}		\frac{z}{z^\prime} ; p	\right)
	f^{\vec{c}}_{1,m}\left(		q_3^{\frac{1 - m}{2}} \frac{w}{z^\prime} ; p			\right)\widetilde{T}^{\vec{c}}_1(z^\prime ; p)
	\times 
	\operatorname{LHS \eqref{eqn510-1505}}
	\notag \\
	&=
	\theta_p\left(	q_3^{-(r-1)}\frac{z}{z^\prime}			\right)
	f^{\vec{c}}_{1,r-1}\left(	q_3^{\frac{2-r}{2}}		\frac{z}{z^\prime} ; p	\right)
	f^{\vec{c}}_{1,m}\left(		q_3^{\frac{1 - m}{2}} \frac{w}{z^\prime} ; p			\right)
	f^{\vec{c}}_{r - 1,m}\left(
	q^{\frac{r - 1 - m}{2}}_{3}\frac{w}{z} ; p
	\right)
	\notag \\
	&\hspace{0.4cm}\times
	\widetilde{T}^{\vec{c}}_1(z^\prime ; p) 
	\widetilde{	T	}^{\vec{c}}_{r-1}(z ; p)
	\widetilde{	T		}^{\vec{c}}_m(w ; p) 
	\notag \\
	&-
	\theta_p\left(	q_3^{-(r-1)}\frac{z}{z^\prime}			\right)
	f^{\vec{c}}_{1,r-1}\left(	q_3^{\frac{2-r}{2}}		\frac{z}{z^\prime} ; p	\right)
	f^{\vec{c}}_{m,r-1}\left(		q_3^{\frac{m - r +1}{2}}\frac{z}{w} ; p			\right)
	f^{\vec{c}}_{m,1}\left(	q_3^{\frac{m-1}{2}}\frac{z^\prime}{w} ; p				\right)
	\notag \\
	&\hspace{0.4cm}\times
	\widetilde{	T		}^{\vec{c}}_m(w ; p) 
	\widetilde{T}^{\vec{c}}_1(z^\prime ; p) 
	\widetilde{	T	}^{\vec{c}}_{r-1}(z ; p)
	\notag \\
	&-
	\frac{
		\theta_p(q_1)\theta_p(q_2)
	}{
		(p;p)^2_{\infty}\theta_p(q_3^{-1})
	}
	\theta_p\left(	q_3^{-(r-1)}\frac{z}{z^\prime}			\right)
	\frac{
		\theta_p\left(	q_1^{-1}\frac{z}{z^\prime}		\right)
		\theta_p\left(	q_2^{-1}\frac{z}{z^\prime}		\right)
	}{
		\theta_p\left(	\frac{z}{z^\prime}		\right)
		\theta_p\left(	q_3\frac{z}{z^\prime}		\right)
	}
	f^{\vec{c}}_{m+1,r-1}\left(
	q_3^{\frac{m - r + 2}{2}} \frac{z}{z^\prime} ; p
	\right)
	\notag \\
	&\hspace{0.4cm}\times
	\delta\left(q_3\frac{w}{z^\prime}\right)
	\widetilde{T}^{\vec{c}}_{m+1}(z^\prime ; p)
	\widetilde{	T	}^{\vec{c}}_{r-1}(z ; p)
	\notag \\
	&+
	\frac{
		\theta_p\left(	q_1^{-1}	q_3^{1-r}\frac{z}{z^\prime}	\right)
		\theta_p\left(	q_2^{-1}	q_3^{1-r}\frac{z}{z^\prime}	\right)
	}{
		\theta_p\left(	q_3	q_3^{1-r}\frac{z}{z^\prime}	\right)
	}
	f^{\vec{c}}_{m+1,r-1}\left(
	q_3^{\frac{2 - r - m}{2}} \frac{z}{z^\prime} ; p
	\right)
	\frac{
		\theta_p(q_1)\theta_p(q_2)
	}{
		(p;p)^2_{\infty}\theta_p(q_3^{-1})
	}
	\notag \\
	&\hspace{0.4cm}\times
	\delta\left(q_3^{-m}\frac{w}{z^\prime}\right)
	\widetilde{T}^{\vec{c}}_{m+1}(w ; p)
	\widetilde{	T	}^{\vec{c}}_{r-1}(z ; p). 
\label{eqn514-1743}
\end{align}
From fusion relation in \textbf{Theorem \ref{fusion2-1615}}, it is clear that 
\begin{align}
&\lim_{z^\prime \rightarrow q_3^{-(r-1)}z}
\bigg[
\theta_p\left(	q_3^{-(r-1)}\frac{z}{z^\prime}			\right)
f^{\vec{c}}_{1,r-1}\left(	q_3^{\frac{2-r}{2}}		\frac{z}{z^\prime} ; p	\right)
f^{\vec{c}}_{1,m}\left(		q_3^{\frac{1 - m}{2}} \frac{w}{z^\prime} ; p			\right)
f^{\vec{c}}_{r - 1,m}\left(q^{\frac{r - 1 - m}{2}}_{3}\frac{w}{z} ; p\right)
\notag \\
&\hspace{2cm} \times
\widetilde{T}^{\vec{c}}_1(z^\prime ; p) 
\widetilde{	T	}^{\vec{c}}_{r-1}(z ; p)
\widetilde{	T		}^{\vec{c}}_m(w ; p) 
\bigg]
\notag \\
&= 
-
\frac{
	\theta_p(q_1)\theta_p(q_2)
}{
\theta_p(q_3^{-1})
}
f^{\vec{c}}_{r - 1,m}\left(q^{\frac{r - 1 - m}{2}}_{3}\frac{w}{z} ; p\right)
f^{\vec{c}}_{1,m}\left(		q_3^{\frac{1 - m}{2}} q_3^{r - 1}\frac{w}{z} ; p			\right)
\widetilde{	T	}^{\vec{c}}_{r}(z ; p)
\widetilde{	T		}^{\vec{c}}_m(w ; p) 
\notag \\
&= 
-
\frac{
	\theta_p(q_1)\theta_p(q_2)
}{
	\theta_p(q_3^{-1})
}
f^{\vec{c}}_{r,m}\left(q^{\frac{r  - m}{2}}_{3}\frac{w}{z} ; p\right)
\widetilde{	T	}^{\vec{c}}_{r}(z ; p)
\widetilde{	T		}^{\vec{c}}_m(w ; p). 
\end{align}
In the last equality, we used \eqref{248-1624-16}. Similarly, from fusion relations in
\textbf{Theorems \ref{thm45-16-1714}} and \textbf{\ref{fusion2-1615}}, it can be shown that 
\begin{align}
&\lim_{z^\prime \rightarrow q_3^{-(r-1)}z}
\bigg[
\theta_p\left(	q_3^{-(r-1)}\frac{z}{z^\prime}			\right)
f^{\vec{c}}_{1,r-1}\left(	q_3^{\frac{2-r}{2}}		\frac{z}{z^\prime} ; p	\right)
f^{\vec{c}}_{m,r-1}\left(		q_3^{\frac{m - r +1}{2}}\frac{z}{w} ; p			\right)
f^{\vec{c}}_{m,1}\left(	q_3^{\frac{m-1}{2}}\frac{z^\prime}{w} ; p				\right)
\notag \\
&\hspace{2.1cm}
\times
\widetilde{	T		}^{\vec{c}}_m(w ; p) 
\widetilde{T}^{\vec{c}}_1(z^\prime ; p) 
\widetilde{	T	}^{\vec{c}}_{r-1}(z ; p)
\bigg]
\notag \\
&= 
-
\frac{
	\theta_p(q_1)\theta_p(q_2)
}{
	\theta_p(q_3^{-1})
}
f^{\vec{c}}_{m,r}\left(q^{\frac{m  - r}{2}}_{3}\frac{z}{w} ; p\right)
\widetilde{	T		}^{\vec{c}}_m(w ; p)
\widetilde{	T	}^{\vec{c}}_{r}(z ; p), 
\end{align}
and 
\begin{align}
&\lim_{z^\prime \rightarrow q_3^{-(r-1)}z}
\theta_p\left(	q_3^{-(r-1)}\frac{z}{z^\prime}			\right)
\frac{
	\theta_p\left(	q_1^{-1}\frac{z}{z^\prime}		\right)
	\theta_p\left(	q_2^{-1}\frac{z}{z^\prime}		\right)
}{
	\theta_p\left(	\frac{z}{z^\prime}		\right)
	\theta_p\left(	q_3\frac{z}{z^\prime}		\right)
}
f^{\vec{c}}_{m+1,r-1}\left(
q_3^{\frac{m - r + 2}{2}} \frac{z}{z^\prime} ; p
\right)
\delta\left(q_3\frac{w}{z^\prime}\right)
\widetilde{T}^{\vec{c}}_{m+1}(z^\prime ; p)
\widetilde{	T	}^{\vec{c}}_{r-1}(z ; p)
\notag \\
&= 
-
\delta\left(q_3^r\frac{w}{z}\right)
\frac{
	\theta_p\left(	q_1q_3^{1 - r}		\right)
	\theta_p\left(	q_2q_3^{1 - r}		\right)
}{
	\theta_p\left(	q_3^{1 - r}		\right)
	\theta_p\left(	q_3^{-r}		\right)
}
\frac{
	\theta_p(q_1)\theta_p(q_2)
}{
	\theta_p(q_3^{-1})
}
\times
\prod_{\ell = 1}^{r - 2}
\frac{
\theta_p(q_1q_3^{-\ell})
\theta_p(q_2q_3^{-\ell})
}{
\theta_p(q_3^{-\ell - 1})
\theta_p(q_3^{-\ell})
}
\times 
\widetilde{	T	}^{\vec{c}}_{m+r}(z ; p).
\end{align}
Note that we could not apply the fusion relations to the last term in the RHS of \eqref{eqn514-1743}. Nevertheless, the limit 
$\lim_{z^\prime \rightarrow q_3^{-(r-1)}z}$ of this term can be computed directly: 
\begin{align*}
&\lim_{z^\prime \rightarrow q_3^{-(r-1)}z}
\frac{
	\theta_p\left(	q_1^{-1}	q_3^{1-r}\frac{z}{z^\prime}	\right)
	\theta_p\left(	q_2^{-1}	q_3^{1-r}\frac{z}{z^\prime}	\right)
}{
	\theta_p\left(	q_3	q_3^{1-r}\frac{z}{z^\prime}	\right)
}
f^{\vec{c}}_{m+1,r-1}\left(
q_3^{\frac{2 - r - m}{2}} \frac{z}{z^\prime} ; p
\right)
\delta\left(q_3^{-m}\frac{w}{z^\prime}\right)
\widetilde{T}^{\vec{c}}_{m+1}(w ; p)
\widetilde{	T	}^{\vec{c}}_{r-1}(z ; p)
\\
&= 
- 
\delta\left(q_3^{r - m - 1} \frac{w}{z}\right)
\frac{
	\theta_p(q_1)\theta_p(q_2)
}{
	\theta_p(q_3^{-1})
}
f^{\vec{c}}_{m+1,r-1}\left(q_3^{\frac{r - m}{2}} ; p\right)
\widetilde{T}^{\vec{c}}_{m+1}(w ; p)
\widetilde{	T	}^{\vec{c}}_{r-1}(z ; p). 
\end{align*}
We have proved the lemma. 
\end{proof}

\begin{lem}
\label{lemm53}
\mbox{}
\begin{align}
	&\lim_{z^\prime \rightarrow q_3^{-(r-1)}z}
	\bigg[
	\theta_p\left(	q_3^{-(r-1)}\frac{z}{z^\prime}			\right)
	f^{\vec{c}}_{1,r-1}\left(	q_3^{\frac{2-r}{2}}		\frac{z}{z^\prime} ; p	\right)
	f^{\vec{c}}_{1,m}\left(		q_3^{\frac{1 - m}{2}} \frac{w}{z^\prime} ; p			\right)\widetilde{T}^{\vec{c}}_1(z^\prime ; p)
	\times 
	\operatorname{RHS \eqref{eqn510-1505}}
	\bigg]
	\notag	\\
	&=
	\frac{
		\theta_p(q_1)\theta_p(q_2)
	}{
		(p;p)^2_{\infty}\theta_p(q_3^{-1})
	}
	\frac{\theta_p(q_1^{-1})\theta_p(q_2^{-1})}{\theta_p(q_3)}
	\sum_{k = 1}^{r-2}
	\left(
	\prod_{\ell = 1}^{k-1}
	\frac{\theta_p( q_1q_3^{-\ell})\theta_p( q_2q_3^{-\ell})}{\theta_p( q_3^{-\ell - 1})\theta_p( q_3^{-\ell})}
	\right)
	\delta\left(q_3^k\frac{w}{z}\right)
	f^{\vec{c}}_{r-k - 1,m+k}(q_3^{\frac{r-m - 1}{2}} ; p)
	\times
	\notag \\
	& 
	\hspace{0.3cm}\times
	f^{\vec{c}}_{1,m+k}\left(		q_3^{\frac{1 - m - k}{2}} \frac{z}{z^\prime}		; p 		\right)
	\widetilde{T}^{\vec{c}}_{r-k}(q_3^{-k}z ; p)\widetilde{T}^{\vec{c}}_{m+k}(q_3^kw ; p)
	\notag \\
	&- 
	\frac{
		\theta_p(q_1)\theta_p(q_2)
	}{
		(p;p)^2_{\infty}\theta_p(q_3^{-1})
	}
	\frac{\theta_p(q_1^{-1})\theta_p(q_2^{-1})}{\theta_p(q_3)}
	\sum_{k = 1}^{r-2}
	\left(
	\prod_{\ell = 1}^{k-1}
	\frac{\theta_p( q_1q_3^{-\ell})\theta_p( q_2q_3^{-\ell})}{\theta_p( q_3^{-\ell - 1})\theta_p( q_3^{-\ell})}
	\right)
	\delta\left(q_3^{r-m-k - 1}\frac{w}{z}\right)
	f^{\vec{c}}_{m+k, r-k-1}(q_3^{\frac{r-m-1}{2}} ; p)
	\notag \\
	&\hspace{0.3cm}\times
	f^{\vec{c}}_{1,r-1-k}\left(		q_3^{\frac{2 - r + k}{2}} \frac{z}{z^\prime} ; p			\right)
	\widetilde{T}^{\vec{c}}_{m+k+1}(q_3^{m+k+1-r} z ; p)
	\widetilde{T}^{\vec{c}}_{r-k - 1}(z ; p)
	\notag \\
	&+
	\frac{
		\theta_p(q_1)\theta_p(q_2)
	}{
		(p;p)^2_{\infty}\theta_p(q_3^{-1})
	}
	\left(
	\prod_{\ell = 1}^{r-2}
	\frac{\theta_p( q_1q_3^{-\ell})\theta_p( q_2q_3^{-\ell})}{\theta_p( q_3^{-\ell - 1})\theta_p( q_3^{-\ell})}
	\right)
	\frac{\theta_p(q_1^{-1})\theta_p(q_2^{-1})}{\theta_p(q_3)}
	\notag \\
	&\hspace{0.3cm}\times
	\delta\left(q_3^{r-1}\frac{w}{z}\right)
	f^{\vec{c}}_{1,m+r-1}\left(	q_3^{\frac{2 - r - m}{2}}	q_3^{r-1}		; p				\right)
	\widetilde{T}^{\vec{c}}_1(q_3^{-(r-1)}z ; p)
	\widetilde{T}^{\vec{c}}_{m+r - 1}(q_3^{r-1}w ; p)
	\notag \\
	&- 
	\frac{
		\theta_p(q_1)\theta_p(q_2)
	}{
		(p;p)^2_{\infty}\theta_p(q_3^{-1})
	}
	\left(
	\prod_{\ell = 1}^{r-2}
	\frac{\theta_p( q_1q_3^{-\ell})\theta_p( q_2q_3^{-\ell})}{\theta_p( q_3^{-\ell - 1})\theta_p( q_3^{-\ell})}
	\right)
	\frac{
		\theta_p\left(	q_1^{-1}q_3^{r-1}	\right)
		\theta_p\left(	q_2^{-1}q_3^{r-1}	\right)
	}{
		\theta_p\left(	q_3^{r-1}	\right)
		\theta_p\left(q_3^{r}	\right)
	}
	\frac{\theta_p(q_1^{-1})\theta_p(q_2^{-1})}{\theta_p(q_3)}
	\notag \\
	&\hspace{0.3cm}\times
	\delta\left(q_3^{-m}\frac{w}{z}\right)
	\widetilde{T}^{\vec{c}}_{m+r}(q_3^m z ; p).
\end{align}
\end{lem}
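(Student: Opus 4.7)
The strategy is to distribute the operator on the left of \textbf{Lemma \ref{lemm53}} across the sum appearing in the RHS of \eqref{eqn510-1505}, and then for each summand either apply one of the fusion rules (\textbf{Theorem \ref{thm45-16-1714}} or \textbf{Theorem \ref{fusion2-1615}}) to collapse a product $\widetilde{T}^{\vec{c}}_1 \widetilde{T}^{\vec{c}}_{r-k-1}$ into $\widetilde{T}^{\vec{c}}_{r-k}$, or—when this is not possible—extract a boundary contribution by direct computation.

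More concretely, I would proceed as follows. First, after distributing, each term carries a delta function $\delta(q_3^k w/z)$ or $\delta(q_3^{r-m-k-1}w/z)$ which localises $w$ in terms of $z$ inside all regular factors; I use this to simplify the $f^{\vec{c}}_{1,m}(q_3^{(1-m)/2}w/z';p)$ prefactor before taking the limit $z'\to q_3^{-(r-1)}z$. Next, for $1\le k\le r-2$ the factor $\widetilde{T}^{\vec{c}}_1(z';p)$ sits immediately to the left of $\widetilde{T}^{\vec{c}}_{r-k-1}$ at argument either $q_3^{-k}z$ or $z$, and the remaining $f$-prefactor, after using the identity \eqref{248-1624-16} from \textbf{Proposition \ref{217}} to rewrite $f^{\vec{c}}_{1,r-1}\cdot f^{\vec{c}}_{1,?}$, has exactly the shape required by the fusion rules (\textbf{Theorem \ref{thm45-16-1714}} for the first delta and \textbf{Theorem \ref{fusion2-1615}} for the second). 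Taking the limit $z'\to q_3^{-(r-1)}z$ then produces, up to the universal prefactor $\frac{\theta_p(q_1^{-1})\theta_p(q_2^{-1})}{\theta_p(q_3)}$, the sums over $k=1,\dots,r-2$ that are displayed in the statement; the residual $f$-factor $f^{\vec{c}}_{1,m+k}(q_3^{(1-m-k)/2}z/z';p)$ or $f^{\vec{c}}_{1,r-1-k}(q_3^{(2-r+k)/2}z/z';p)$ survives because it involves the variable $z'$ in a way that does not collide with the fusion pole.

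The case $k=r-1$ has to be treated separately because $\widetilde{T}^{\vec{c}}_{r-k-1}=\widetilde{T}^{\vec{c}}_0$ is the identity and no fusion is available. For the first delta the corresponding term is simply
\[\delta(q_3^{r-1}w/z)\, f^{\vec{c}}_{0,m+r-1}\bigl(q_3^{(r-m-1)/2};p\bigr)\,\widetilde{T}^{\vec{c}}_{m+r-1}(q_3^{r-1}w;p),\]
multiplied on the left by $\widetilde{T}^{\vec{c}}_1(z';p)$ together with the $f$-prefactors; direct evaluation of the limit $z'\to q_3^{-(r-1)}z$ (the $\theta_p(q_3^{-(r-1)}z/z')$ in front makes it finite) reproduces the third term of the claim. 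For the second delta, the analogous computation requires evaluating $f^{\vec{c}}_{1,r-1}\cdot f^{\vec{c}}_{1,m}$ at the point forced by the delta, which produces the extra ratio of theta functions $\theta_p(q_1^{-1}q_3^{r-1})\theta_p(q_2^{-1}q_3^{r-1})/\bigl(\theta_p(q_3^{r-1})\theta_p(q_3^{r})\bigr)$ appearing in the fourth term.

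The principal obstacle will be the bookkeeping of the $f^{\vec{c}}_{r,m}$ products. Each summand contains three such factors (from the left-multiplying operator and from the hypothesis), and before applying fusion one needs to reassemble them into the exact prefactor demanded by \textbf{Theorem \ref{thm45-16-1714}} or \textbf{Theorem \ref{fusion2-1615}}, namely $f^{\vec{c}}_{r-k,m+k}(q_3^{(r-k-m-k)/2}\cdot)$ times the residual $f^{\vec{c}}_{1,\bullet}$ that survives the limit; this reassembly is governed by the identities \eqref{248-1624-16}--\eqref{251-1624-16} in \textbf{Proposition \ref{217}}, and checking that every theta factor is accounted for (including the overall $(p;p)_\infty^{-2}$ normalisation) is the main calculational burden. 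Once this is in place, the boundary terms assemble into the last two lines of the lemma and the statement follows.
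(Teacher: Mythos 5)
Your overall outline (distribute the multiplying operator across the right-hand side of \eqref{eqn510-1505}, use the delta functions to localize $w$, recombine the $f$-factors with Proposition \ref{217}, split off the $k=r-1$ term, then fuse or evaluate directly) is the same as the paper's, but your treatment of the second family of terms, those carrying $\delta\left(q_3^{r-m-k-1}\frac{w}{z}\right)$ with $1\le k\le r-2$, would fail. There the current product is $\widetilde{T}^{\vec{c}}_1(z';p)\,\widetilde{T}^{\vec{c}}_{r-k-1}(z;p)\,\widetilde{T}^{\vec{c}}_{m+k}(q_3^{m+k+1-r}z;p)$, and you propose to fuse $\widetilde{T}^{\vec{c}}_1(z')$ with its neighbour $\widetilde{T}^{\vec{c}}_{r-k-1}(z)$ via a fusion rule. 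That fusion channel sits at $z'=q_3^{-(r-k-1)}z$, not at the limit point $z'=q_3^{-(r-1)}z$ (they differ by $q_3^{-k}$), so as $z'\to q_3^{-(r-1)}z$ the vanishing prefactor $\theta_p\left(q_3^{-(r-1)}\frac{z}{z'}\right)$ finds no compensating pole in that pair; moreover such a fusion would consume the factor $f^{\vec{c}}_{1,r-1-k}\left(q_3^{\frac{2-r+k}{2}}\frac{z}{z'};p\right)$ that you simultaneously (and correctly) claim survives as a residual, and it would output $\widetilde{T}^{\vec{c}}_{r-k}(z)\widetilde{T}^{\vec{c}}_{m+k}$ rather than the $\widetilde{T}^{\vec{c}}_{m+k+1}(q_3^{m+k+1-r}z;p)\widetilde{T}^{\vec{c}}_{r-k-1}(z;p)$ that actually appears in Lemma \ref{lemm53}. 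So your mechanism for this family is internally inconsistent and cannot reproduce the stated result.

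The missing idea is the paper's intermediate exchange step: since the delta function pins $w=q_3^{m+k+1-r}z$, one rewrites $f^{\vec{c}}_{r-k-1,m+k}(q_3^{\frac{m-r+1}{2}};p)\widetilde{T}^{\vec{c}}_{r-k-1}(z;p)\widetilde{T}^{\vec{c}}_{m+k}(q_3^{m+k+1-r}z;p)$ as $f^{\vec{c}}_{m+k,r-k-1}(q_3^{\frac{r-m-1}{2}};p)\widetilde{T}^{\vec{c}}_{m+k}(q_3^{m+k+1-r}z;p)\widetilde{T}^{\vec{c}}_{r-k-1}(z;p)$ (the delta-function terms of the already-established quadratic relation give no contribution at this localized point), so that $\widetilde{T}^{\vec{c}}_1(z')$ becomes adjacent to $\widetilde{T}^{\vec{c}}_{m+k}$, whose fusion point with $\widetilde{T}^{\vec{c}}_1$ is exactly $z'=q_3^{-(m+k)}q_3^{m+k+1-r}z=q_3^{-(r-1)}z$; only then does Theorem \ref{fusion2-1615} produce $\widetilde{T}^{\vec{c}}_{m+k+1}(q_3^{m+k+1-r}z)$ with the residual $f^{\vec{c}}_{1,r-1-k}$ left over. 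Two smaller points: for the first family the relevant fusion is again Theorem \ref{fusion2-1615} (the small-index current stands on the left, so Theorem \ref{thm45-16-1714}, stated with the large-index current on the left, is not the one to cite), and in the last term ($k=r-1$ with $\delta(q_3^{-m}\frac{w}{z})$) the factor $\widetilde{T}^{\vec{c}}_{m+r}(q_3^m z;p)$ is itself produced by a fusion of $\widetilde{T}^{\vec{c}}_1(z')$ with $\widetilde{T}^{\vec{c}}_{m+r-1}(q_3^m z)$, not by a mere evaluation of theta factors as your "analogous direct computation" suggests.
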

\begin{proof}
First note that RHS of \eqref{eqn510-1505} can be written as 
\begin{align}
&\frac{
	\theta_p(q_1)\theta_p(q_2)
}{
	(p;p)^2_{\infty}\theta_p(q_3^{-1})
}
\sum_{k = 1}^{r-2}
\left(
\prod_{\ell = 1}^{k-1}
\frac{\theta_p( q_1q_3^{-\ell})\theta_p( q_2q_3^{-\ell})}{\theta_p( q_3^{-\ell - 1})\theta_p( q_3^{-\ell})}
\right)
\biggl\{
\delta\left(q_3^k\frac{w}{z}\right)
f^{\vec{c}}_{r-k - 1,m+k}(q_3^{\frac{r-m - 1}{2}} ; p)\widetilde{T}^{\vec{c}}_{r-k - 1}(q_3^{-k}z ; p)\widetilde{T}^{\vec{c}}_{m+k}(q_3^kw ; p)
\notag \\ 
&\hspace{0.3cm}- 
\delta\left(q_3^{r-m-k - 1}\frac{w}{z}\right)
f^{\vec{c}}_{r-k - 1,m+k}(q_3^{\frac{m-r + 1}{2}} ; p)\widetilde{T}^{\vec{c}}_{r-k - 1}(z ; p)\widetilde{T}^{\vec{c}}_{m+k}(w ; p)
\biggr\}
\notag \\
&+
\frac{
	\theta_p(q_1)\theta_p(q_2)
}{
	(p;p)^2_{\infty}\theta_p(q_3^{-1})
}
\left(
\prod_{\ell = 1}^{r-2}
\frac{\theta_p( q_1q_3^{-\ell})\theta_p( q_2q_3^{-\ell})}{\theta_p( q_3^{-\ell - 1})\theta_p( q_3^{-\ell})}
\right)
\biggl\{
\delta\left(q_3^{r-1}\frac{w}{z}\right)\widetilde{T}^{\vec{c}}_{m+r - 1}(q_3^{r-1}w ; p)
- 
\delta\left(q_3^{-m}\frac{w}{z}\right)\widetilde{T}^{\vec{c}}_{m+r - 1}(w ; p)
\biggr\}.
\label{eqn517-1043}
\end{align}
We distinguish the term $k = r-1$ from other terms because the term $k = r-1$ only has one current, and there is no structure function appears in this term. So, we should treat it differently from other terms. Multiplying $\theta_p\left(	q_3^{-(r-1)}\frac{z}{z^\prime}			\right)
f^{\vec{c}}_{1,r-1}\left(	q_3^{\frac{2-r}{2}}		\frac{z}{z^\prime} ; p	\right)f^{\vec{c}}_{1,m}\left(		q_3^{\frac{1 - m}{2}} \frac{w}{z^\prime} ; p		 \right)
\widetilde{T}^{\vec{c}}_1(z^\prime ; p)$ through equation \eqref{eqn517-1043}, and use the following relations, which can be deduced from \textbf{Proposition \ref{217}}, 
\begin{align}
f^{\vec{c}}_{1,r-1}\left(	q_3^{\frac{2-r}{2}}		\frac{z}{z^\prime} ; p	\right)
f^{\vec{c}}_{1,m}\left(		q_3^{\frac{1 - m}{2}} q_3^{-k}\frac{z}{z^\prime} ; p		 \right)
&= 
f^{\vec{c}}_{1,r-1-k}\left(	q_3^{\frac{2 - r - k}{2}}\frac{z}{z^\prime}		; p 	\right)
f^{\vec{c}}_{1,m+k}\left(		q_3^{\frac{1 - m - k}{2}} \frac{z}{z^\prime}		; p 		\right), 
\\
f^{\vec{c}}_{1,r-1}\left(	q_3^{\frac{2-r}{2}}		\frac{z}{z^\prime} ; p	\right)
f^{\vec{c}}_{1,m}\left(		q_3^{\frac{1 - m}{2}} q_3^{m+k+1-r}\frac{z}{z^\prime} ; p		 \right)
&=
f^{\vec{c}}_{1,r-1-k}\left(		q_3^{\frac{2 - r + k}{2}} \frac{z}{z^\prime} ; p			\right)
f^{\vec{c}}_{1,m+k}\left(		q_3^{\frac{3 + m + k - 2r}{2}}		\frac{z}{z^\prime} ; p				\right),
\\
f^{\vec{c}}_{1,r-1}\left(	q_3^{\frac{2-r}{2}}		\frac{z}{z^\prime} ; p	\right)
f^{\vec{c}}_{1,m}\left(		q_3^{\frac{1 - m}{2}} q_3^{1 - r}\frac{z}{z^\prime} ; p		 \right)
&= 
\frac{
\theta_p\left(	q_1^{-1}q_3^{1-r}\frac{z}{z^\prime}		\right)
\theta_p\left(	q_2^{-1}q_3^{1-r}\frac{z}{z^\prime}		\right)
}{
\theta_p\left(	q_3^{1-r}\frac{z}{z^\prime}		\right)
\theta_p\left(	q_3 q_3^{1-r}\frac{z}{z^\prime}		\right)
}
f^{\vec{c}}_{1,m+r-1}\left(	q_3^{\frac{2 - r - m}{2}}\frac{z}{z^\prime} ; p				\right),
\\
f^{\vec{c}}_{1,r-1}\left(	q_3^{\frac{2-r}{2}}		\frac{z}{z^\prime} ; p	\right)
f^{\vec{c}}_{1,m}\left(		q_3^{\frac{1 - m}{2}} q_3^{m}\frac{z}{z^\prime} ; p		 \right)
&= 
\frac{
\theta_p\left(	q_1^{-1}\frac{z}{z^\prime}		\right)
\theta_p\left(	q_2^{-1}\frac{z}{z^\prime}		\right)
}{
\theta_p\left(	\frac{z}{z^\prime}		\right)
\theta_p\left(	q_3\frac{z}{z^\prime}		\right)
}
f^{\vec{c}}_{1,m+r-1}\left(q_3^{\frac{2 -r + m}{2}} \frac{z}{z^\prime} ; p\right), 
\end{align}
we then obtain that 
\begin{align}
&\theta_p\left(	q_3^{-(r-1)}\frac{z}{z^\prime}			\right)
f^{\vec{c}}_{1,r-1}\left(	q_3^{\frac{2-r}{2}}		\frac{z}{z^\prime} ; p	\right)
f^{\vec{c}}_{1,m}\left(		q_3^{\frac{1 - m}{2}} \frac{w}{z^\prime} ; p			\right)\widetilde{T}^{\vec{c}}_1(z^\prime ; p)
\times 
\operatorname{RHS \eqref{eqn510-1505}}
\notag \\
&=
\frac{
	\theta_p(q_1)\theta_p(q_2)
}{
	(p;p)^2_{\infty}\theta_p(q_3^{-1})
}
\sum_{k = 1}^{r-2}
\left(
\prod_{\ell = 1}^{k-1}
\frac{\theta_p( q_1q_3^{-\ell})\theta_p( q_2q_3^{-\ell})}{\theta_p( q_3^{-\ell - 1})\theta_p( q_3^{-\ell})}
\right)
\delta\left(q_3^k\frac{w}{z}\right)
f^{\vec{c}}_{r-k - 1,m+k}(q_3^{\frac{r-m - 1}{2}} ; p)
\theta_p\left(	q_3^{-(r-1)}\frac{z}{z^\prime}			\right)
\notag \\
& 
\hspace{0.3cm}\times
f^{\vec{c}}_{1,r-1-k}\left(	q_3^{\frac{2 - r - k}{2}}\frac{z}{z^\prime}		; p 	\right)
f^{\vec{c}}_{1,m+k}\left(		q_3^{\frac{1 - m - k}{2}} \frac{z}{z^\prime}		; p 		\right)
\widetilde{T}^{\vec{c}}_1(z^\prime ; p)
\widetilde{T}^{\vec{c}}_{r-k - 1}(q_3^{-k}z ; p)\widetilde{T}^{\vec{c}}_{m+k}(q_3^kw ; p)
\notag \\
&- 
\frac{
	\theta_p(q_1)\theta_p(q_2)
}{
	(p;p)^2_{\infty}\theta_p(q_3^{-1})
}
\sum_{k = 1}^{r-2}
\left(
\prod_{\ell = 1}^{k-1}
\frac{\theta_p( q_1q_3^{-\ell})\theta_p( q_2q_3^{-\ell})}{\theta_p( q_3^{-\ell - 1})\theta_p( q_3^{-\ell})}
\right)
\delta\left(q_3^{r-m-k - 1}\frac{w}{z}\right)
f^{\vec{c}}_{r-k - 1,m+k}(q_3^{\frac{m-r + 1}{2}} ; p)
\theta_p\left(	q_3^{-(r-1)}\frac{z}{z^\prime}			\right)
\notag \\
&\hspace{0.3cm}\times
f^{\vec{c}}_{1,r-1-k}\left(		q_3^{\frac{2 - r + k}{2}} \frac{z}{z^\prime} ; p			\right)
f^{\vec{c}}_{1,m+k}\left(		q_3^{\frac{3 + m + k - 2r}{2}}		\frac{z}{z^\prime} ; p				\right)
\widetilde{T}^{\vec{c}}_1(z^\prime ; p)
\widetilde{T}^{\vec{c}}_{r-k - 1}(z ; p)\widetilde{T}^{\vec{c}}_{m+k}(w ; p)
\notag \\
&+
\frac{
	\theta_p(q_1)\theta_p(q_2)
}{
	(p;p)^2_{\infty}\theta_p(q_3^{-1})
}
\left(
\prod_{\ell = 1}^{r-2}
\frac{\theta_p( q_1q_3^{-\ell})\theta_p( q_2q_3^{-\ell})}{\theta_p( q_3^{-\ell - 1})\theta_p( q_3^{-\ell})}
\right)
\times
\frac{
	\theta_p\left(	q_1^{-1}q_3^{1-r}\frac{z}{z^\prime}		\right)
	\theta_p\left(	q_2^{-1}q_3^{1-r}\frac{z}{z^\prime}		\right)
}{
	\theta_p\left(	q_3 q_3^{1-r}\frac{z}{z^\prime}		\right)
}
\notag \\
&\hspace{0.3cm}\times
\delta\left(q_3^{r-1}\frac{w}{z}\right)
f^{\vec{c}}_{1,m+r-1}\left(	q_3^{\frac{2 - r - m}{2}}\frac{z}{z^\prime} ; p				\right)
\widetilde{T}^{\vec{c}}_1(z^\prime ; p)
\widetilde{T}^{\vec{c}}_{m+r - 1}(q_3^{r-1}w ; p)
\notag \\
&- 
\frac{
	\theta_p(q_1)\theta_p(q_2)
}{
	(p;p)^2_{\infty}\theta_p(q_3^{-1})
}
\left(
\prod_{\ell = 1}^{r-2}
\frac{\theta_p( q_1q_3^{-\ell})\theta_p( q_2q_3^{-\ell})}{\theta_p( q_3^{-\ell - 1})\theta_p( q_3^{-\ell})}
\right)
\times
\theta_p\left(	q_3^{-(r-1)}\frac{z}{z^\prime}			\right)
\frac{
	\theta_p\left(	q_1^{-1}\frac{z}{z^\prime}		\right)
	\theta_p\left(	q_2^{-1}\frac{z}{z^\prime}		\right)
}{
	\theta_p\left(	\frac{z}{z^\prime}		\right)
	\theta_p\left(	q_3\frac{z}{z^\prime}		\right)
}
\notag \\
&\hspace{0.3cm}\times
\delta\left(q_3^{-m}\frac{w}{z}\right)
f^{\vec{c}}_{1,m+r-1}\left(q_3^{\frac{2 -r + m}{2}} \frac{z}{z^\prime} ; p\right)
\widetilde{T}^{\vec{c}}_1(z^\prime ; p)
\widetilde{T}^{\vec{c}}_{m+r - 1}(w ; p). 
\label{eqn522-1201}
\end{align}
It is important to note that because of the existence of delta function, the factor
\begin{align}
f^{\vec{c}}_{r-k - 1,m+k}(q_3^{\frac{m-r + 1}{2}} ; p)
\widetilde{T}^{\vec{c}}_{r-k - 1}(z ; p)
\widetilde{T}^{\vec{c}}_{m+k}(q_3^{m+k+1-r} z ; p)
\end{align} 
appearing in the second term of RHS of \eqref{eqn522-1201} can be written as 
\begin{align}
&f^{\vec{c}}_{r-k - 1,m+k}(q_3^{\frac{m-r + 1}{2}} ; p)
\widetilde{T}^{\vec{c}}_{r-k - 1}(z ; p)
\widetilde{T}^{\vec{c}}_{m+k}(q_3^{m+k+1-r} z ; p)
\notag \\
&\hspace{0.3cm}
= 
f^{\vec{c}}_{m+k, r-k-1}(q_3^{\frac{r-m-1}{2}} ; p)
\widetilde{T}^{\vec{c}}_{m+k}(q_3^{m+k+1-r} z ; p)
\widetilde{T}^{\vec{c}}_{r-k - 1}(z ; p). 
\end{align}
Taking the limit $\lim_{z^\prime \rightarrow q_3^{-(r-1)}z}$ of \eqref{eqn522-1201} and apply fusion relations, we then obtain that 
\begin{align}
	&\lim_{z^\prime \rightarrow q_3^{-(r-1)}z}
	\bigg[
	\theta_p\left(	q_3^{-(r-1)}\frac{z}{z^\prime}			\right)
	f^{\vec{c}}_{1,r-1}\left(	q_3^{\frac{2-r}{2}}		\frac{z}{z^\prime} ; p	\right)
	f^{\vec{c}}_{1,m}\left(		q_3^{\frac{1 - m}{2}} \frac{w}{z^\prime} ; p			\right)\widetilde{T}^{\vec{c}}_1(z^\prime ; p)
	\times 
	\operatorname{RHS \eqref{eqn510-1505}}
	\bigg]
\notag	\\
&=
\frac{
	\theta_p(q_1)\theta_p(q_2)
}{
	(p;p)^2_{\infty}\theta_p(q_3^{-1})
}
\frac{\theta_p(q_1^{-1})\theta_p(q_2^{-1})}{\theta_p(q_3)}
\sum_{k = 1}^{r-2}
\left(
\prod_{\ell = 1}^{k-1}
\frac{\theta_p( q_1q_3^{-\ell})\theta_p( q_2q_3^{-\ell})}{\theta_p( q_3^{-\ell - 1})\theta_p( q_3^{-\ell})}
\right)
\delta\left(q_3^k\frac{w}{z}\right)
f^{\vec{c}}_{r-k - 1,m+k}(q_3^{\frac{r-m - 1}{2}} ; p)
\notag \\
& 
\hspace{0.3cm}\times
f^{\vec{c}}_{1,m+k}\left(		q_3^{\frac{1 - m - k}{2}} \frac{z}{z^\prime}		; p 		\right)
\widetilde{T}^{\vec{c}}_{r-k}(q_3^{-k}z ; p)\widetilde{T}^{\vec{c}}_{m+k}(q_3^kw ; p)
\notag \\
&- 
\frac{
	\theta_p(q_1)\theta_p(q_2)
}{
	(p;p)^2_{\infty}\theta_p(q_3^{-1})
}
\frac{\theta_p(q_1^{-1})\theta_p(q_2^{-1})}{\theta_p(q_3)}
\sum_{k = 1}^{r-2}
\left(
\prod_{\ell = 1}^{k-1}
\frac{\theta_p( q_1q_3^{-\ell})\theta_p( q_2q_3^{-\ell})}{\theta_p( q_3^{-\ell - 1})\theta_p( q_3^{-\ell})}
\right)
\delta\left(q_3^{r-m-k - 1}\frac{w}{z}\right)
f^{\vec{c}}_{m+k, r-k-1}(q_3^{\frac{r-m-1}{2}} ; p)
\notag \\
&\hspace{0.3cm}\times
f^{\vec{c}}_{1,r-1-k}\left(		q_3^{\frac{2 - r + k}{2}} \frac{z}{z^\prime} ; p			\right)
\widetilde{T}^{\vec{c}}_{m+k+1}(q_3^{m+k+1-r} z ; p)
\widetilde{T}^{\vec{c}}_{r-k - 1}(z ; p)
\notag \\
&+
\frac{
	\theta_p(q_1)\theta_p(q_2)
}{
	(p;p)^2_{\infty}\theta_p(q_3^{-1})
}
\left(
\prod_{\ell = 1}^{r-2}
\frac{\theta_p( q_1q_3^{-\ell})\theta_p( q_2q_3^{-\ell})}{\theta_p( q_3^{-\ell - 1})\theta_p( q_3^{-\ell})}
\right)
\frac{\theta_p(q_1^{-1})\theta_p(q_2^{-1})}{\theta_p(q_3)}
\notag \\
&\hspace{0.3cm}\times
\delta\left(q_3^{r-1}\frac{w}{z}\right)
f^{\vec{c}}_{1,m+r-1}\left(	q_3^{\frac{2 - r - m}{2}}	q_3^{r-1}		; p				\right)
\widetilde{T}^{\vec{c}}_1(q_3^{-(r-1)}z ; p)
\widetilde{T}^{\vec{c}}_{m+r - 1}(q_3^{r-1}w ; p)
\notag \\
&- 
\frac{
	\theta_p(q_1)\theta_p(q_2)
}{
	(p;p)^2_{\infty}\theta_p(q_3^{-1})
}
\left(
\prod_{\ell = 1}^{r-2}
\frac{\theta_p( q_1q_3^{-\ell})\theta_p( q_2q_3^{-\ell})}{\theta_p( q_3^{-\ell - 1})\theta_p( q_3^{-\ell})}
\right)
\frac{
	\theta_p\left(	q_1^{-1}q_3^{r-1}	\right)
	\theta_p\left(	q_2^{-1}q_3^{r-1}	\right)
}{
	\theta_p\left(	q_3^{r-1}	\right)
	\theta_p\left(q_3^{r}	\right)
}
\frac{\theta_p(q_1^{-1})\theta_p(q_2^{-1})}{\theta_p(q_3)}
\notag \\
&\hspace{0.3cm}\times
\delta\left(q_3^{-m}\frac{w}{z}\right)
\widetilde{T}^{\vec{c}}_{m+r}(q_3^m z ; p).
\end{align}
\end{proof}

According to \textbf{Lemmas \ref{lemm52}} and \textbf{\ref{lemm53}}, we can deduce that 
\begin{align}
	&f^{\vec{c}}_{r,m}\left(
	q^{\frac{r - m}{2}}_{3}\frac{w}{z} ; p
	\right)
	\widetilde{	T	}^{\vec{c}}_r(z ; p)\widetilde{	T		}^{\vec{c}}_m(w ; p) 
	- f^{\vec{c}}_{m,r}\left(
	q^{\frac{m - r}{2}}_{3}\frac{z}{w} ; p
	\right)
	\widetilde{T}^{\vec{c}}_m(w ; p)\widetilde{T}^{\vec{c}}_r(z ; p)
	\notag \\
	&= 
	\frac{
		\theta_p(q_1)\theta_p(q_2)
	}{
		(p;p)^2_{\infty}\theta_p(q_3^{-1})
	}
	\sum_{k = 1}^{r}
	\left(
	\prod_{\ell = 1}^{k-1}
	\frac{\theta_p( q_1q_3^{-\ell})\theta_p( q_2q_3^{-\ell})}{\theta_p( q_3^{-\ell - 1})\theta_p( q_3^{-\ell})}
	\right)
	\biggl\{
	\delta\left(q_3^k\frac{w}{z}\right)
	f^{\vec{c}}_{r-k,m+k}(q_3^{\frac{r-m}{2}} ; p)\widetilde{T}^{\vec{c}}_{r-k}(q_3^{-k}z ; p)\widetilde{T}^{\vec{c}}_{m+k}(q_3^kw ; p)
	\notag \\ 
	&\hspace{0.3cm}- 
	\delta\left(q_3^{r-m-k}\frac{w}{z}\right)
	f^{\vec{c}}_{r-k,m+k}(q_3^{\frac{m-r}{2}} ; p)\widetilde{T}^{\vec{c}}_{r-k}(z ; p)\widetilde{T}^{\vec{c}}_{m+k}(w ; p)
	\biggr\}.
\end{align}
This closes the induction, and hence prove the \textbf{Theorem \ref{thm51-1228}}. 

\section{Partially symmetric polynomials associated to the elliptic corner VOA}
\label{sec6-poly}

In this section, we relate the correlation functions of the currents of elliptic corner VOA to a family of partially symmetric polynomials. We begin with a brief review on the star product. 

\begin{dfn}[\cite{Feigin97}]
For each $j \in \bb{Z}^{\geq 0}$, let $S^j$ be the vector space of symmetric functions in $j$ variables. The star product $\star : S^i \tens S^j \rightarrow S^{i+j}$ is defined as follows : for any $f \in S^i, g \in S^j$
\begin{align}
(f \star g)(x_1,\dots,x_{i+j}) := 
\operatorname{Sym}
\bigg[
f(x_1,\dots,x_i)
g(x_{i+1},\dots,x_{i+j})
\prod_{
\substack{
1 \leq \alpha \leq i 
\\
i + 1 \leq \beta \leq i+j
}
}
\frac{
	\theta_p\left(q^{-1}\frac{z_\beta}{z_\alpha}\right)
	\theta_p\left(t\frac{z_\beta}{z_\alpha}\right)
	\theta_p\left(qt^{-1}\frac{z_\beta}{z_\alpha}\right)
}{
	\theta_p\left(\frac{z_\beta}{z_\alpha}\right)^3
}
\bigg].
\end{align}
\end{dfn}

\begin{prop}[\cite{Feigin97}]
Define $S := \ddsum_{k = 0}^{\infty}S^k$. Then, $S$ is a unital associative algebra with respect to the star product $\star$. 
\end{prop}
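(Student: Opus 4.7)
The plan is to prove associativity by showing that both $(f \star g) \star h$ and $f \star (g \star h)$ reduce to a common, manifestly symmetric ``triple product'' expression. For $f \in S^i$, $g \in S^j$, $h \in S^k$ I would define the auxiliary expression
\begin{align*}
T(f,g,h) := \operatorname{Sym}\Bigl[\, f(x_1,\dots,x_i)\, g(x_{i+1},\dots,x_{i+j})\, h(x_{i+j+1},\dots,x_{i+j+k}) \prod_{\substack{\alpha,\beta \\ \text{in different blocks}}} \omega(x_\alpha,x_\beta)\,\Bigr],
\end{align*}
where $\omega(x_\alpha,x_\beta) := \theta_p(q^{-1}x_\beta/x_\alpha)\theta_p(t x_\beta/x_\alpha)\theta_p(qt^{-1}x_\beta/x_\alpha)/\theta_p(x_\beta/x_\alpha)^3$, the three blocks are $\{1,\dots,i\}$, $\{i+1,\dots,i+j\}$, $\{i+j+1,\dots,i+j+k\}$, and the Sym runs over $S_{i+j+k}$. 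The goal becomes the pair of equalities $(f \star g) \star h = T(f,g,h) = f \star (g \star h)$, which immediately yields associativity.

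The key structural observation is the multiplicativity of the cross factor across a coarsening of the block decomposition: the $\omega$-product between the merged block $\{1,\dots,i+j\}$ and the block $\{i+j+1,\dots,i+j+k\}$ is exactly the product of the $\omega$-factors between block~1 and block~3 and between block~2 and block~3. I would then exploit the fact that $f \star g$ is already symmetric in $x_1,\dots,x_{i+j}$, so that applying the outer Sym over $S_{i+j+k}$ to $(f\star g)\cdot h \cdot \prod \omega$ absorbs the inner Sym used to define $f \star g$ up to a combinatorial factor determined entirely by the convention used for Sym (and which is absorbed consistently in both bracketings). Together these two facts unfold $(f \star g) \star h$ into the manifest expression $T(f,g,h)$, and the symmetric calculation handles $f \star (g\star h)$.

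For the unit, $1 \in S^0$ plays the role of the identity: when one of the factors lies in $S^0$, the corresponding block is empty, so no cross $\omega$-factors appear, and the Sym acts on an expression already symmetric in all remaining variables; a routine check gives $1 \star f = f \star 1 = f$ for every $f \in S^j$. The main obstacle I anticipate is a careful bookkeeping of the symmetrization conventions: depending on whether one defines $\operatorname{Sym}$ as the full sum over $S_n$ or as the normalized average $\frac{1}{n!}\sum_{\sigma}$, the absorption of an inner Sym by the outer Sym produces a combinatorial factor (either $(i+j)!$ or $1/\binom{i+j+k}{i+j}$), which must be tracked uniformly in both bracketings so that it cancels from the comparison. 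Provided this bookkeeping is done consistently, the two bracketed expressions match $T(f,g,h)$ and associativity follows; the other potentially delicate point, well-definedness of the ratios $\omega(x_\alpha,x_\beta)$ inside Sym, is handled by the observation that Sym over the full symmetric group cancels the apparent poles $\theta_p(x_\beta/x_\alpha)^{-3}$ between variables in the same block (where they never actually appear, since the product is restricted to different-block pairs), leaving a genuine element of $S^{i+j+k}$.
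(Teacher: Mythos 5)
The paper itself gives no argument for this proposition; it is imported verbatim from \cite{Feigin97}, so your proposal is supplying a proof rather than paralleling one. Judged on its own terms, your skeleton is the standard argument and is essentially right: the cross kernel between the merged block $\{1,\dots,i+j\}$ and the block $\{i+j+1,\dots,i+j+k\}$ factors into the block-$1$/block-$3$ and block-$2$/block-$3$ kernels, this merged kernel and $h$ are invariant under permutations of the first $i+j$ variables, so the outer $\operatorname{Sym}$ absorbs the inner one and both bracketings collapse to the fully symmetrized triple product $T(f,g,h)$ with all pairwise cross factors; and $1\in S^0$ is visibly a two-sided unit since an empty block contributes no cross factors.

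The one step that would fail as written is your claim that the combinatorial factor produced by the absorption ``cancels from the comparison'' whichever convention one adopts. If $\operatorname{Sym}$ is the unnormalized sum over the full symmetric group, the absorption computation gives $(f\star g)\star h=(i+j)!\,T(f,g,h)$ but $f\star(g\star h)=(j+k)!\,T(f,g,h)$, and these factors do \emph{not} cancel: associativity genuinely fails for that reading of the definition. The proposition holds for the convention actually used in \cite{Feigin97}, namely $\operatorname{Sym}$ taken as the normalized average $\frac{1}{n!}\sum_{\sigma\in S_n}$ (equivalently, a sum over shuffles of the blocks), under which the inner symmetrization is absorbed with factor exactly $1$ (not $1/\binom{i+j+k}{i+j}$) and both bracketings equal $T(f,g,h)$ on the nose; so you must fix that convention at the outset rather than hope the factors wash out. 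A minor further point: your closing remark on poles is garbled --- the denominators $\theta_p(x_\beta/x_\alpha)^3$ occur precisely for $\alpha,\beta$ in \emph{different} blocks, and symmetrization does not cancel them; well-definedness is a matter of taking $S^j$ to be a space of symmetric meromorphic functions closed under the product (as in \cite{Feigin97}), not of any cancellation inside $\operatorname{Sym}$.
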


Next, for each $n \in \bb{Z}^{\geq 1}$, we define 
\begin{align}
\epsilon_n(z_1,\dots,z_n ; t ; p) := 
\prod_{
1 \leq i < j \leq n 
}
\frac{
\theta_p\left(t\frac{z_j}{z_i}\right)
\theta_p\left(t^{-1}\frac{z_j}{z_i}\right)
}{
\theta_p\left(\frac{z_j}{z_i}\right)^2
}
\in 
S^n. 
\label{eqn61-1810-16t}
\end{align}
Note that when $n = 1$, the product becomes an empty product, so $\epsilon_1(z) = 1$. The next proposition tells us that the star product among $\{\epsilon_n ~|~ n \in \bb{Z}^{\geq 1}\}$ is commutative. 

\begin{prop}[\cite{Feigin97}]
$(\epsilon_m \star \epsilon_n)(z_1,\dots,z_{m+n} ; t ; p) = (\epsilon_n \star \epsilon_m)(z_1,\dots,z_{m+n} ; t ; p) $
\label{prop63-1006}
\end{prop}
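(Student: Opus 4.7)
The plan is to reduce the commutativity to a shuffle-sum identity by extracting a common factor of $\epsilon_{m+n}$, and then to reduce the residual identity to a statement about a simpler kernel that does not involve the ``$\epsilon$ part'' of the star-kernel.

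First, I would observe that the integrand inside $\operatorname{Sym}$ in Definition 6.1 is already symmetric separately in $z_{1},\ldots,z_{m}$ and in $z_{m+1},\ldots,z_{m+n}$, because $\epsilon_{m}$, $\epsilon_{n}$, and the cross-kernel are each symmetric in these two blocks. Hence $\operatorname{Sym}$ reduces to the sum over shuffles, i.e.\ over subsets $S\subset\{1,\dots,m+n\}$ with $|S|=m$:
\begin{align*}
(\epsilon_{m}\star\epsilon_{n})(z)
=\sum_{|S|=m}\epsilon_{m}(z_{S})\,\epsilon_{n}(z_{S^{c}})
\!\!\prod_{\alpha\in S,\beta\in S^{c}}\!\!\Phi(z_{\beta}/z_{\alpha}),
\qquad
\Phi(x):=\frac{\theta_{p}(q^{-1}x)\theta_{p}(tx)\theta_{p}(qt^{-1}x)}{\theta_{p}(x)^{3}},
\end{align*}
and similarly for $\epsilon_{n}\star\epsilon_{m}$ with the cross-product taken over $(\alpha,\beta)\in S^{c}\times S$ (i.e.\ with the arguments of $\Phi$ inverted after a relabeling of the shuffle).

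Second, I would split $\Phi(x)=\phi(x)\,\psi(x)$ with $\phi(x)=\theta_{p}(tx)\theta_{p}(t^{-1}x)/\theta_{p}(x)^{2}$ and $\psi(x)=\theta_{p}(q^{-1}x)\theta_{p}(qt^{-1}x)/[\theta_{p}(x)\theta_{p}(t^{-1}x)]$. Using $\theta_{p}(y)=-y\,\theta_{p}(y^{-1})$ one verifies $\phi(x^{-1})=\phi(x)$, so $\epsilon_{n}$ is built purely out of $\phi$. Counting $\phi$-factors in each shuffle summand gives exactly one $\phi$-factor per unordered pair of $\{1,\ldots,m+n\}$, hence the \textbf{shuffle-independent} identity
\begin{align*}
\epsilon_{m}(z_{S})\,\epsilon_{n}(z_{S^{c}})
\!\!\prod_{\alpha\in S,\beta\in S^{c}}\!\!\phi(z_{\beta}/z_{\alpha})
=\epsilon_{m+n}(z_{1},\ldots,z_{m+n}).
\end{align*}
Pulling this common factor out of the sum yields $\epsilon_{m}\star\epsilon_{n}=\epsilon_{m+n}(z)\cdot R_{m,n}(z)$ and $\epsilon_{n}\star\epsilon_{m}=\epsilon_{m+n}(z)\cdot R_{n,m}(z)$, where
\begin{align*}
R_{m,n}(z):=\sum_{|S|=m}\prod_{\alpha\in S,\,\beta\in S^{c}}\psi(z_{\beta}/z_{\alpha}),
\qquad
R_{n,m}(z)=\sum_{|S|=m}\prod_{\alpha\in S,\,\beta\in S^{c}}\psi(z_{\alpha}/z_{\beta}),
\end{align*}
the second equality obtained after renaming the shuffle and using $\psi(x^{-1})=\theta_{p}(qx)\theta_{p}(q^{-1}tx)/[\theta_{p}(x)\theta_{p}(tx)]$. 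Thus the Proposition reduces to proving $R_{m,n}=R_{n,m}$, an identity among sums of theta-function products that is manifestly symmetric in $z_{1},\ldots,z_{m+n}$ on both sides.

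The \textbf{main obstacle} is establishing this last identity $R_{m,n}=R_{n,m}$. My preferred approach is a contour-integral argument: one writes each side as a sum of residues of an appropriate elliptic integrand of the form $\oint\!\!\cdots\!\!\oint \prod_{a,b}\psi(z_{b}/y_{a})\,\omega(y)$, whose contour can be deformed from ``first-$m$'' poles to ``last-$n$'' poles, the deformation being valid because $\psi$ has no wandering poles when combined with the symmetric measure $\omega$; matching residues exchanges the roles of $S$ and $S^{c}$ and converts $\psi(z_{\beta}/z_{\alpha})$ into $\psi(z_{\alpha}/z_{\beta})$ via $\theta_{p}(y)=-y\,\theta_{p}(y^{-1})$. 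An alternative, more algebraic route, is to realize $\epsilon_{m}$ and $\epsilon_{n}$ as correlation functions of commuting elliptic screening currents (as in Feigin's original construction), in which case the equality $\epsilon_{m}\star\epsilon_{n}=\epsilon_{n}\star\epsilon_{m}$ is inherited tautologically from the commutativity of the screening operators; this is the cleanest but requires setting up the auxiliary algebraic framework. In either route the nontrivial content lies in producing a single elliptic identity among theta products that propagates through the shuffle sum.
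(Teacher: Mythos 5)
The paper itself does not prove this proposition: it is quoted from Feigin--Odesskii \cite{Feigin97} without argument, so there is no in-paper proof to compare against; your proposal must therefore stand on its own. Its first half does: the factorization $\Phi(x)=\phi(x)\psi(x)$ with $\phi(x)=\theta_p(tx)\theta_p(t^{-1}x)/\theta_p(x)^2$, the check $\phi(x^{-1})=\phi(x)$, the resulting extraction of a common factor $\epsilon_{m+n}$ from every shuffle term, and the reduction of the proposition to $R_{m,n}=R_{n,m}$ are all correct (the omitted $\operatorname{Sym}$ normalization constants are identical on both sides, so harmless), and this is a genuine, clean repackaging of the statement.

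The gap is that the repackaged statement is the whole content of the proposition and you do not prove it. The identity $R_{m,n}=R_{n,m}$ is trivial for $m=n$ (replace $S$ by $S^c$) and is a nontrivial family of theta-function identities precisely when $m\neq n$; it is true, but nothing in your text establishes it. The contour-integral sketch is not checkable as written: no integrand $\omega$ is specified, no pole structure is analyzed, and in the elliptic setting one cannot simply ``deform from the first-$m$ poles to the last-$n$ poles'' --- the natural substitute is a Liouville-type argument, which requires first noting that $\psi(px)=\psi(x)$ (a fact you never verify, though it is exactly what makes each summand elliptic in every $z_i$), then showing that the difference $R_{m,n}-R_{n,m}$, viewed as an elliptic function of one $z_i$ with only simple poles along $z_i\in z_j\,p^{\mathbb{Z}}$ and $z_i\in t^{\pm1}z_j\,p^{\mathbb{Z}}$, has cancelling residues between the terms with $i\in S$ and $i\notin S$, and finally evaluating at a special point. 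None of this residue bookkeeping is done. The alternative route via commuting elliptic screening currents is likewise only named, not constructed. As it stands, the essential elliptic identity is assumed rather than proven, so the proposal is an honest reduction plus a plan, not a proof.
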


In the next theorem, we relate the correlation function of the currents of $q\widetilde{Y}^{\text{ell}}_{0,0,N}[\Psi;p]$ to symmetric functions. 

\begin{thm}[]\mbox{}
Let $\Lambda_{\bb{Q}(q,t)}(u_1,\dots,u_n)$ be the ring of symmetric polynomials in variables $u_1,\dots,u_n$ with coefficients in  $\bb{Q}(q,t)$. 
Then, 
\begin{align}
\prod_{1 \leq i < j \leq m}f^{(3^n)}_{11}\left(\frac{z_j}{z_i} ; p\right)
\times
\langle 0 |\widetilde{T}^{(3^n)}_{1}(z_1 ; p)\cdots \widetilde{T}^{(3^n)}_{1}(z_m ; p)|0\rangle
\in 
\Lambda_{\bb{Q}(q,t)}(u_1,\dots,u_n) \tens S, 
\end{align}
is symmetric in $u_1,\dots,u_n$ and $z_1,\dots,z_m$. 
\label{thm61-1624}
\end{thm}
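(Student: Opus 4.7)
The plan is to compute the correlation function explicitly via operator product expansions and then establish the symmetries in $z$ and in $u$ separately.

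First, specializing \eqref{eqn314-1625} to $m=1$ and $\vec{c}=(3^n)$ gives the prefactor $A(1,3;p)=-q_3\,\theta_p(q_3^{-1})/\theta_p(q_3)$, which equals $1$ by $\theta_p(w)=-w\,\theta_p(w^{-1})$; hence $\widetilde{T}^{(3^n)}_{1}(z;p)=\sum_{i=1}^{n}\widetilde{\Lambda}^{(3^n)}_{i}(z;p)$. Iterating \textbf{Proposition \ref{prp216-2149}} to normal-order $\widetilde{\Lambda}_{i_1}(z_1;p)\cdots\widetilde{\Lambda}_{i_m}(z_m;p)$ produces, for every pair $\alpha<\beta$, a factor $f^{(3^n)}_{11}(z_\beta/z_\alpha;p)^{-1}\,G_{i_\alpha i_\beta}(z_\beta/z_\alpha;p)$, where $G_{ij}$ equals $\Delta(q_3^{1/2}\cdot\,;p)$ for $i<j$, $\Delta(q_3^{-1/2}\cdot\,;p)$ for $i>j$, and $\gamma_3\equiv 1$ for $i=j$ (the crucial simplification coming from $c_i=3$). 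Taking the vacuum expectation value extracts the zero-mode product $u_{i_1}\cdots u_{i_m}$, and multiplying by $\prod_{i<j}f^{(3^n)}_{11}(z_j/z_i;p)$ cancels the OPE $f_{11}$-factors, producing
\begin{align*}
\Phi_m(z;u)=\sum_{(i_1,\dots,i_m)\in\{1,\dots,n\}^m}u_{i_1}\cdots u_{i_m}\prod_{\alpha<\beta}G_{i_\alpha i_\beta}(z_\beta/z_\alpha;p).
\end{align*}

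Second, to show $\Phi_m$ is symmetric in $z_1,\dots,z_m$, I would use the meromorphic identity $\Delta(w^{-1};p)=\Delta(w;p)$, immediate from the reflection of $\theta_p$, which rewrites as $G_{ji}(w/z;p)=G_{ij}(z/w;p)$. Under the adjacent transposition $z_k\leftrightarrow z_{k+1}$ I simultaneously relabel the dummy indices by $i_k\leftrightarrow i_{k+1}$: the monomial $u_{i_1}\cdots u_{i_m}$ is invariant, the central factor $G_{i_k,i_{k+1}}(z_{k+1}/z_k;p)$ maps to its equal $G_{i_{k+1},i_k}(z_k/z_{k+1};p)$, and every cross-pair involving exactly one of $\{k,k+1\}$ and one outside index is exchanged and its total contribution preserved. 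This yields $S_m$-invariance in $z$.

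Third, to show $\Phi_m$ is symmetric in $u_1,\dots,u_n$, I would group the sum by the multiplicity vector $(k_1,\dots,k_n)$ of the indices and establish that each coefficient of $u_1^{k_1}\cdots u_n^{k_n}$ is $S_n$-invariant. The basic non-trivial instance (comparing $u_1^{m-1}u_2$ and $u_1 u_2^{m-1}$) is the identity
\begin{align*}
\sum_{\alpha=1}^{m}\prod_{\beta\neq\alpha}\Delta(q_3^{1/2}z_\alpha/z_\beta;p)=\sum_{\alpha=1}^{m}\prod_{\beta\neq\alpha}\Delta(q_3^{-1/2}z_\alpha/z_\beta;p),
\end{align*}
which I would prove by applying the residue-sum theorem with the invariant form $dz/z$ on $\mathbb{C}^\times/p^{\mathbb{Z}}$ to the $p$-elliptic function $\Psi(z)=\prod_{\beta=1}^{m}\Delta(q_3^{1/2}z/z_\beta;p)$. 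The poles of $\Psi$ in the fundamental annulus lie at $z=z_\beta$ and $z=q_3^{-1}z_\beta$; explicit residue computations there yield the two sides of the identity multiplied by theta-function constants $K_1$ and $K_2$ respectively, and a short reflection argument using $q_1q_2q_3=1$ and $\theta_p(x)=-x\theta_p(x^{-1})$ shows $K_1+K_2=0$, producing the claim. The case of arbitrary multi-partitions would then be handled by induction on the number of distinct indices, reducing to the pairwise identity by conditioning on the positions of all but two of the indices.

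The principal obstacle is the third step: while the pairwise identity has a clean residue-theoretic proof, extending it to arbitrary multiplicity patterns is combinatorially delicate because the cross factors $G_{ij}$ with $i<j$ versus $i>j$ interact non-trivially with the arrangements of any selected pair of indices. A more conceptual alternative, in harmony with \textbf{Conjecture \ref{conj68-1258}}, would be to identify $\Phi_m$ directly with a generating series of elliptic Macdonald polynomials, for which the $u$-symmetry is built in by construction.
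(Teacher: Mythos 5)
Your computation of the correlation function, the identification of the pair factors (your $G_{ij}$ coincides with the paper's $C^{(i,j)}_{k,\ell}$ in \eqref{eqn66-1042} after setting $q_1=q$, $q_3=t^{-1}$), and the $z$-symmetry argument via $\Delta(w^{-1};p)=\Delta(w;p)$ together with relabelling of dummy indices are all sound, and they parallel the first half of the paper's proof. The genuine gap is in your third step, the $u$-symmetry. Fixing the positions of all indices other than two adjacent values $a,a+1$ does remove the spectator factors (they are insensitive to which of $a,a+1$ sits at a given position), but what remains is \emph{not} your pairwise identity: it is the two-colour identity with arbitrary multiplicities, namely that $\sum_{|A|=s}\prod_{\alpha\in A,\,\beta\notin A}\Delta\bigl(q_3^{1/2}z_\alpha/z_\beta;p\bigr)$ is unchanged when $s$ is replaced by $M-s$ (equivalently, when $q_3^{1/2}$ is replaced by $q_3^{-1/2}$). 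Your residue-theorem argument establishes only the case $s=1$ (or $M-1$); for general $(s,t)$ with $s,t\geq 2$ the function whose residues you would sum is no longer a product over a single running variable, and the proposed ``induction on the number of distinct indices'' does not reduce the general case to the $s=1$ case. This missing identity is precisely the commutativity $\epsilon_s\star\epsilon_t=\epsilon_t\star\epsilon_s$ of the Feigin--Odesskii elements, i.e. \textbf{Proposition \ref{prop63-1006}}, which is a nontrivial elliptic statement and is the crux of the theorem.

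For comparison, the paper's route (Lemma \ref{lem63-1711}) groups the sum by multiplicity vectors $(a_1,\dots,a_n)$ and identifies the $z$-dependent coefficient of $u_1^{a_1}\cdots u_n^{a_n}$ with $\frac{m!}{a_1!\cdots a_n!}\,(\epsilon_{a_1}\star\cdots\star\epsilon_{a_n})(z_1,\dots,z_m;t;p)$, after converting the $C$-factors into the $\gamma$-factors of \eqref{6.2eqn-1042} by pulling out an overall symmetric prefactor. Symmetry in $z$ is then automatic (the star product lands in $S^m$ by construction), and symmetry in $u$ follows from the quoted commutativity of the $\epsilon_k$'s. So to complete your argument you would either have to prove the general $(s,t)$ identity yourself (this is the content of the Feigin--Odesskii result, not a consequence of your $s=1$ case) or cite it; your suggested fallback of identifying $\Phi_m$ with a generating series of elliptic Macdonald polynomials is not available, since that is the content of \textbf{Conjecture \ref{conj68-1258}} rather than an established fact.
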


To prove this theorem, the following lemma is helpful. 

\begin{lem}
\label{lem63-1711}
Define\footnote{
To avoid cumbersome notation, we write $\gamma^{(i,j)}_{k,\ell}(q,t ; p)$ rather than 
$\gamma^{(i,j)}(z_k, z_\ell ;q,t ; p)$. 
}
\begin{align}
	\gamma^{(i,j)}_{k,\ell}(q,t ; p)
	= 
	\begin{cases}
		\displaystyle
		\frac{
			\theta_p\left(q^{-1}\frac{z_\ell}{z_k}\right)
			\theta_p\left(t\frac{z_\ell}{z_k}\right)
			\theta_p\left(qt^{-1}\frac{z_\ell}{z_k}\right)
		}{
			\theta_p\left(\frac{z_\ell}{z_k}\right)^3
		}
		\hspace{0.2cm}
		\text{ if } i < j 
		\\
		\displaystyle
		\frac{
			\theta_p\left(t\frac{z_\ell}{z_k}\right)
			\theta_p\left(t^{-1}\frac{z_\ell}{z_k}\right)
		}{
			\theta_p\left(\frac{z_\ell}{z_k}\right)^2
		}
		\hspace{2.2cm}
		\text{ if } i = j
		\\
		\displaystyle
		\frac{
			\theta_p\left(q\frac{z_\ell}{z_k}\right)
			\theta_p\left(t^{-1}\frac{z_\ell}{z_k}\right)
			\theta_p\left(q^{-1}t\frac{z_\ell}{z_k}\right)
		}{
			\theta_p\left(\frac{z_\ell}{z_k}\right)^3
		}
		\hspace{0.2cm}
		\text{ if } i > j 
	\end{cases}
\label{6.2eqn-1042}
\end{align}
Then, 
\begin{align}
\sum_{i_1 = 1}^{n}
\cdots
\sum_{i_m = 1}^{n}
\left(
\prod_{k = 1}^{m}u_{i_k}
\tens
\prod_{1 \leq \alpha < \beta \leq m}
\gamma^{(i_\alpha,i_\beta)}_{\alpha,\beta}(q,t ; p)
\right)
\end{align}
is symmetric in $u_1,\dots,u_n$ and $z_1,\dots,z_m$. 
\end{lem}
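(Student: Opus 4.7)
The plan is to re-organise the sum over tuples $(i_1,\ldots,i_m)\in\{1,\ldots,n\}^m$ according to the multiplicity vector $\lambda=(\lambda_1,\ldots,\lambda_n)$ with $\lambda_q=|\{k:i_k=q\}|$, and to identify each $\lambda$-block with an iterated star product of the functions $\epsilon_{\lambda_q}$ defined in \eqref{eqn61-1810-16t}. Symmetry in $z_1,\ldots,z_m$ will then be automatic, since any star product lies in $S^m$, while symmetry in $u_1,\ldots,u_n$ will follow from mutual commutativity of the $\epsilon$'s under $\star$ (iterating \textbf{Proposition \ref{prop63-1006}} via associativity of $\star$).

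The first step is to establish the reciprocity identities $\gamma^{(<)}(w^{-1})=\gamma^{(>)}(w)$ and $\gamma^{(=)}(w^{-1})=\gamma^{(=)}(w)$, both of which follow from $\theta_p(x^{-1})=-x^{-1}\theta_p(x)$; equivalently, $\gamma^{(i,j)}_{k,\ell}(q,t;p)=\gamma^{(j,i)}_{\ell,k}(q,t;p)$. This allows every cross-colour factor in $\prod_{\alpha<\beta}\gamma^{(i_\alpha,i_\beta)}_{\alpha,\beta}(q,t;p)$ (pairs with $i_\alpha\neq i_\beta$) to be brought into the canonical form $\gamma^{(<)}(z_\ell/z_k)$ with $k\in K_{q_1}$, $\ell\in K_{q_2}$, $q_1<q_2$, where $K_q:=\{k:i_k=q\}$, independently of the positional order of $k$ and $\ell$. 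For any tuple realising a given ordered set partition $(K_1,\ldots,K_n)$ of $\{1,\ldots,m\}$ with $|K_q|=\lambda_q$, one then obtains the factorisation
\[
\prod_{\alpha<\beta}\gamma^{(i_\alpha,i_\beta)}_{\alpha,\beta}(q,t;p)=\prod_{q=1}^{n}\epsilon_{\lambda_q}\bigl(\{z_k\}_{k\in K_q}\bigr)\cdot\prod_{q_1<q_2}\prod_{k\in K_{q_1},\,\ell\in K_{q_2}}\gamma^{(<)}(z_\ell/z_k),
\]
which is precisely one summand of the shuffle-type expansion of $\epsilon_{\lambda_1}\star\epsilon_{\lambda_2}\star\cdots\star\epsilon_{\lambda_n}$. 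Summing over all such ordered partitions reproduces that iterated star product up to a combinatorial normalisation independent of the tuple.

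Since the weight $\prod_{k=1}^{m}u_{i_k}=\prod_q u_q^{\lambda_q}$ depends only on $\lambda$, the entire sum of the lemma rewrites, up to an overall constant, as $\sum_\lambda\prod_q u_q^{\lambda_q}\otimes(\epsilon_{\lambda_1}\star\cdots\star\epsilon_{\lambda_n})(z_1,\ldots,z_m)$. Each star product is an element of $S^m$, so symmetry in $z_1,\ldots,z_m$ is immediate. For symmetry under $\sigma\in S_n$ acting on the $u$'s, I reindex the $\lambda$-sum by $\mu_j=\lambda_{\sigma^{-1}(j)}$: the summand becomes $\prod_q u_q^{\mu_q}\otimes(\epsilon_{\mu_{\sigma(1)}}\star\cdots\star\epsilon_{\mu_{\sigma(n)}})$, and iterating the commutativity of any two $\epsilon$'s (\textbf{Proposition \ref{prop63-1006}}) together with associativity of $\star$ rewrites this as $\prod_q u_q^{\mu_q}\otimes(\epsilon_{\mu_1}\star\cdots\star\epsilon_{\mu_n})$, identifying it with the $\mu$-term of the original sum. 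The principal technical obstacle is the combinatorial identification in the second paragraph: the $\operatorname{Sym}$ in the definition of $\star$ has to be unpacked carefully so that the intra-block $\epsilon$-factors, the canonical cross-block $\gamma^{(<)}$ pattern, and the multinomial counting all match up with the sum over tuples of fixed multiplicity vector.
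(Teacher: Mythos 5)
Your proposal is correct and is essentially the paper's own argument: group the tuples by their multiplicity vector, identify each block (using the reciprocity $\gamma^{(i,j)}_{k,\ell}=\gamma^{(j,i)}_{\ell,k}$ to put cross-colour factors in canonical form) with the shuffle expansion of $\epsilon_{\lambda_1}\star\cdots\star\epsilon_{\lambda_n}$ up to the multinomial factor $m!/(\lambda_1!\cdots\lambda_n!)$, then get symmetry in $z_1,\dots,z_m$ from membership in $S^m$ and symmetry in $u_1,\dots,u_n$ from \textbf{Proposition \ref{prop63-1006}}. One small imprecision: the normalisation is not an overall constant but the $\lambda$-dependent multinomial coefficient; since it is invariant under permuting the $\lambda_q$'s, your symmetry argument goes through unchanged.
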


\begin{proof}[Proof of lemma \ref{lem63-1711}]\mbox{}
\begin{align}
	&\sum_{i_1 = 1}^{n}
	\cdots
	\sum_{i_m = 1}^{n}
	\left(
	\prod_{k = 1}^{m}u_{i_k}
	\tens
	\prod_{1 \leq \alpha < \beta \leq m}
	\gamma^{(i_\alpha,i_\beta)}_{\alpha,\beta}(q,t ; p)
	\right)
	\notag	\\
	&
	=
	\underbrace{				
		\sum_{
			(a_1,\dots,a_n) \in 
			\left(
			\bb{Z}^{\geq 0}
			\right)^n
		}
	}_{
		a_1 + \dots + a_n = m
	}
	\left[
	\prod_{k = 1}^{n}u_k^{a_k}
	\tens
	\sum_{
		(I_1,\dots,I_n) \in \cals{J}(a_1,\dots,a_n)
	}
	\left(
	\prod_{k = 1}^{n}\epsilon_{a_k}(z_{I_k} ; t ; p)
	\times
	\prod_{1 \leq i < j \leq n}\prod_{\alpha \in I_i, \beta \in I_j}
	\gamma^{(i,j)}_{\alpha,\beta}(q,t ; p)
	\right)
	\right],
\end{align}
where 
\begin{align*}
\cals{J}(a_1,\dots,a_n) = 
\left\{
(I_1,\dots,I_n)
\;\middle\vert\;
\begin{array}{@{}l@{}}
	(1) \,\, I_1 \sqcup \cdots \sqcup I_n = 
	\{1,\dots,m\}
	\\
	(2) \,\, |I_k| = a_k \,\, ^\forall k = 1,\dots,n
\end{array}
\right\}. 
\end{align*}
On the other side, one can easily show by induction that 
\begin{align*}
&(\epsilon_{a_1} \star \cdots \star \epsilon_{a_n})
(z_1,\dots,z_{a_1 + \cdots + a_n} ; t ; p) 
\\
&= 
\frac{
a_1 ! \cdots a_n!
}{
(a_1 + \cdots + a_n)!
}
\sum_{
	(I_1,\dots,I_n) \in \cals{J}(a_1,\dots,a_n)
}
\left(
\prod_{k = 1}^{n}\epsilon_{a_k}(z_{I_k} ; t ; p)
\times
\prod_{1 \leq i < j \leq n}\prod_{\alpha \in I_i, \beta \in I_j}
\gamma^{(i,j)}_{\alpha,\beta}(q,t ; p)
\right). 
\end{align*}
Therefore, 
\begin{align}
	&\sum_{i_1 = 1}^{n}
	\cdots
	\sum_{i_m = 1}^{n}
	\left(
	\prod_{k = 1}^{m}u_{i_k}
	\tens
	\prod_{1 \leq \alpha < \beta \leq m}
	\gamma^{(i_\alpha,i_\beta)}_{\alpha,\beta}(q,t ; p)
	\right)
	\notag	\\
	&
	=
	\underbrace{				
		\sum_{
			(a_1,\dots,a_n) \in 
			\left(
			\bb{Z}^{\geq 0}
			\right)^n
		}
	}_{
		a_1 + \dots + a_n = m
	}
	\left[
	\prod_{k = 1}^{n}u_k^{a_k}
	\tens
	\frac{m!}{
	a_1 ! \cdots a_n!
	}
	(\epsilon_{a_1} \star \cdots \star \epsilon_{a_n})
	(z_1,\dots,z_{m} ; t ; p) 
	\right]. 
	\label{eqn61-1005}
\end{align}
Since the part of $z_1,\dots,z_m$ can be written in terms of star product, we can immediately conclude that \eqref{eqn61-1005} is symmetric in $z_1,\dots,z_m$. Moreover, by using \textbf{Proposition \ref{prop63-1006}}, we can also conclude that \eqref{eqn61-1005} is symmetric in $u_1, \dots, u_n$. 
\end{proof}

\begin{proof}[Proof of theorem \ref{thm61-1624}]\mbox{}
\begin{align}
	&\prod_{1 \leq i < j \leq m}f^{(3^n)}_{11}\left(\frac{z_j}{z_i} ; p\right)
	\times
	\langle 0 |\widetilde{T}^{(3^n)}_{1}(z_1 ; p)\cdots \widetilde{T}^{(3^n)}_{1}(z_m ; p)|0\rangle
	\notag	\\
	&= 
	\prod_{1 \leq i < j \leq m}f^{(3^n)}_{11}\left(\frac{z_j}{z_i} ; p\right)
	\times
	\sum_{i_1 = 1}^{n}\cdots \sum_{i_m = 1}^{n}
	\langle 0|
	\Lambda^{(3^n)}_{i_1}(z_1 ; p)
	\cdots \Lambda^{(3^n)}_{i_m}(z_m ; p)
	|0\rangle 
	\notag	\\
	&= 
	\sum_{i_1 = 1}^{n}\cdots \sum_{i_m = 1}^{n}
	\bigg[
	u_{i_1}\cdots u_{i_m}
	\tens
	\prod_{
		1 \leq k < \ell \leq m
	}
	C^{(i_k,i_\ell)}_{k,\ell}(q,t ; p)
	\bigg], 
\end{align}
where 
\begin{align}
	C^{(i,j)}_{k,\ell}(q,t ; p)
	:= 
	\begin{cases}
		\displaystyle
		\frac{
		\theta_p\left(q^{-1}\frac{z_\ell}{z_k}\right)
		\theta_p\left(qt^{-1}\frac{z_\ell}{z_k}\right)
		}{
		\theta_p\left(t^{-1}\frac{z_\ell}{z_k}\right)
		\theta_p\left(\frac{z_\ell}{z_k}\right)
		}
		\hspace{0.2cm}
		\text{ if } i < j 
		\\
		\displaystyle
		1
		\hspace{3.9cm}
		\text{ if } i = j
		\\
		\displaystyle
		\frac{
		\theta_p\left( q\frac{z_\ell}{z_k}\right)
		\theta_p\left( q^{-1}t\frac{z_\ell}{z_k}\right)
		}{
		\theta_p\left( t\frac{z_\ell}{z_k}\right)
		\theta_p\left( \frac{z_\ell}{z_k}\right)
		}
		\hspace{0.6cm}
		\text{ if } i > j 
	\end{cases}
\label{eqn66-1042}
\end{align}
Comparing \eqref{6.2eqn-1042} and \eqref{eqn66-1042}, we obtain that 
\begin{align}
C^{(i,j)}_{k,\ell}(q,t ; p)
= 
\frac{
\theta_p\left(\frac{z_\ell}{z_k}\right)^2
}{
\theta_p\left(t\frac{z_\ell}{z_k}\right)
\theta_p\left(t^{-1}\frac{z_\ell}{z_k}\right)
}
\gamma^{(i,j)}_{k,\ell}(q,t ; p).
\end{align}
Therefore, 
\begin{align}
&\prod_{1 \leq i < j \leq m}f^{(3^n)}_{11}\left(\frac{z_j}{z_i} ; p\right)
\times
\langle 0 |\widetilde{T}^{(3^n)}_{1}(z_1 ; p)\cdots \widetilde{T}^{(3^n)}_{1}(z_m ; p)|0\rangle
\notag \\
&= 
\left(
1 \tens 
\prod_{1 \leq k < \ell \leq m}
\frac{
	\theta_p\left(\frac{z_\ell}{z_k}\right)^2
}{
	\theta_p\left(t\frac{z_\ell}{z_k}\right)
	\theta_p\left(t^{-1}\frac{z_\ell}{z_k}\right)
}
\right)
\sum_{i_1 = 1}^{n}\cdots \sum_{i_m = 1}^{n}
\bigg[
u_{i_1}\cdots u_{i_m}
\tens
\prod_{
	1 \leq k < \ell \leq m
}
\gamma^{(i_k,i_\ell)}_{k,\ell}(q,t ; p)
\bigg]. 
\label{eqn64-1727}
\end{align}
From \textbf{Lemma \ref{lem63-1711}}, we conclude that the quantity in \eqref{eqn64-1727} is symmetric in $u_1,\dots,u_n$ and $z_1,\dots,z_m$. 
\end{proof}

\subsection{Conjecture on elliptic Macdonald polynomials}
In this subsection, we state a conjecture which relates the symmetric function in \textbf{Theorem \ref{thm61-1624}} to the elliptic Macdonald polynomials constructed recently in \cite{awata2020-2} \cite{fu20} \cite{Zen21} as a particular case of generalized Noumi-Shiraishi (GNS) polynomials \cite{awata2020}. In short, the GNS polynomials is defined for any function $\xi(z)$ that satisfies certain symmetry conditions. It turns out that these symmetry conditions are also satisfied by theta function $\theta_p(z)$. By substituting $\xi(z) = \theta_p(z)$ into the GNS polynomials, we then obtain the elliptic Macdonald polynomials. 
The precise definition of the elliptic Macdonald polynomials is given in the \textbf{Definition \ref{df66-1332}} below. 

\begin{dfn}[\cite{Zen21}]
\label{df66-1332}
For each partition $\lambda$ whose length is less than or equal to $n \in \bb{Z}^{\geq 1}$, the elliptic Macdonald polynomials $P_{\lambda}(y_1,\dots,y_n;q,t ; p)$ corresponding to the partition $\lambda$ is defined by 
\begin{align}
P_{\lambda}(y_1,\dots,y_n;q,t ; p)
:= 
\prod_{i = 1}^{n}y_i^{\lambda_i} \times
\sum_{
M \in \cals{M}_n
}
c^{\lambda}_{n}(M|q,t,p)
\prod_{
1 \leq i < j \leq n
}
\left(
\frac{y_j}{y_i}
\right)^{M_{ij}}, 
\end{align}
where
\begin{gather}
\cals{M}_n := 
\left\{
(M_{ij})_{1 \leq i , j \leq n} 
\;\middle\vert\;
\begin{array}{@{}l@{}}
(1) \,\, M_{ij} \in \bb{Z}^{\geq 0}
\\
(2) \,\, M_{k\ell} = 0 \text{ if } k \geq \ell
\end{array}
\right\}, 
\end{gather}
and 
\begin{align}
&c^{\lambda}_{n}(M|q,t,p)
\notag \\
&:= 
\prod_{k = 2}^{n}
\prod_{1 \leq i < j \leq k}
\frac{
	\Theta\left( q^{\lambda_j - \lambda_i + \sum_{a > k}(M_{ia} - M_{ja})		} t^{i - j +1}; q, p\right)_{M_{ik}}
	\Theta\left( q^{\lambda_j - \lambda_i - M_{jk} + \sum_{a > k}(M_{ia} - M_{ja})		} qt^{i - j -1}; q, p\right)_{M_{ik}}
}{
	\Theta\left( q^{\lambda_j - \lambda_i + \sum_{a > k}(M_{ia} - M_{ja})		} qt^{i - j}; q, p\right)_{M_{ik}}
	\Theta\left( q^{\lambda_j - \lambda_i - M_{jk} + \sum_{a > k}(M_{ia} - M_{ja})		} t^{i - j}; q, p\right)_{M_{ik}}
}.
\end{align}
\end{dfn}

From the definition, it is not obvious whether the elliptic Macdonald polynomials are symmetric polynomials. By using the nontrivial identity of theta function, the authors of \cite{awata2020-2} proved the next proposition. 

\begin{prop}[\cite{awata2020-2}]
The elliptic Macdonald polynomials $P_{\lambda}(y_1,\dots,y_n;q,t ; p)$ is a symmetric polynomials in $y_1,\dots,y_n$. 
\end{prop}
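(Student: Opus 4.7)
My plan is to verify symmetry by checking invariance under each adjacent transposition $s_k : y_k \leftrightarrow y_{k+1}$ for $k = 1,\dots,n-1$, since these generate $S_n$. Writing $P_\lambda(y;q,t;p) = \prod_i y_i^{\lambda_i} \sum_M c^\lambda_n(M|q,t,p)\prod_{i<j}(y_j/y_i)^{M_{ij}}$, the first step is to isolate the dependence on $y_k$ and $y_{k+1}$. Because $M$ is strictly upper triangular, the monomial couples $y_k$ only to the column $\{M_{ak}\}_{a<k}$ and row $\{M_{kb}\}_{b>k}$, and analogously for $y_{k+1}$. Expanding both $P_\lambda$ and $s_k P_\lambda$ in the same basis of monomials in $y_k/y_{k+1}$ (with the other $y_j$ as parameters) reduces the symmetry claim to an equality of coefficients for each total degree.

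Next, I would freeze all matrix entries $M_{ab}$ whose indices avoid both $k$ and $k+1$ (these are manifestly invariant under $s_k$) and also freeze the row/column sums $\sigma_a := M_{ak}+M_{a,k+1}$ for $a<k$ and $\tau_b := M_{kb}+M_{k+1,b}$ for $b>k+1$. After this freezing, matching coefficients of a fixed monomial in $y$ amounts to a finite identity in the remaining free indices $M_{ak}, M_{kb}, M_{k+1,b}$ together with $M_{k,k+1}$. A careful bookkeeping of the $\Theta(\cdot;q,p)_{M_{ik}}$ factors — separating those in the strip $k \in \{k,k+1\}$ from the rest, and observing that the latter are invariant under the relabeling induced by $s_k$ — collapses the needed identity to a finite sum over a single running index, effectively $M_{k,k+1}$, with the shift $\lambda_k \leftrightarrow \lambda_{k+1}$ compensated by translation of that index.

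The analytic heart of the argument will then be an elliptic hypergeometric summation of Frenkel–Turaev (terminating very-well-poised elliptic Jackson) type, namely the identity whose trigonometric limit is the $_{10}\phi_9$ summation underlying the Noumi–Shiraishi proof of symmetry for the ordinary Macdonald polynomials. Once this identity is invoked, invariance under $s_k$ follows for arbitrary $\lambda$, $n$, and $k$, and the general symmetry of $P_\lambda$ follows by composing simple transpositions.

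The main obstacle is the second step: one has to verify that the particular quadruple of $\Theta$-factors in $c^\lambda_n(M|q,t,p)$ — with arguments $q^{\lambda_j-\lambda_i+\cdots}t^{i-j+1}$, $q^{\lambda_j-\lambda_i-M_{jk}+\cdots}qt^{i-j-1}$, and their companions in the denominator — assemble, after the freezing above, into a well-poised and balanced configuration of exactly the shape required by Frenkel–Turaev. This balancing amounts to a delicate cancellation of the exponents of $q$ and $t$ across many factors; I expect this to be the only non-routine computation. The induction on $n$ (and the compatibility with stability under appending a variable $y_{n+1}$, corresponding to matrices with last row zero) is then parallel to the trigonometric case and presents no new difficulty.
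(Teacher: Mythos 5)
What you have written is a plan, not a proof, and the step you yourself flag as ``the main obstacle'' is exactly the content of the proposition. The paper does not prove this statement internally: it simply cites \cite{awata2020-2} and notes that the symmetry rests on a ``nontrivial identity of theta function.'' That identity is precisely the verification you defer --- that after freezing the entries of $M$ away from the pair $(k,k+1)$ and the sums $M_{ak}+M_{a,k+1}$, $M_{kb}+M_{k+1,b}$, the surviving $\Theta(\cdot;q,p)_{M_{ik}}$ factors assemble into a terminating, balanced, very-well-poised configuration to which a Frenkel--Turaev type summation applies, with the relabelling induced by $\lambda_k\leftrightarrow\lambda_{k+1}$ absorbed into a shift of the summation index. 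Nothing in your write-up establishes this: you do not exhibit the bijection (or weighted correspondence) between matrices $M$ contributing to $P_\lambda$ and to $s_kP_\lambda$ with a fixed monomial in the $y$'s, you do not check the balancing condition on the exponents of $q$ and $t$ (which in the elliptic case is not optional --- without ellipticity/balancing the theta-identity simply fails, unlike in the trigonometric limit where some of these constraints degenerate), and you do not justify the claim that the matching collapses to a sum over the single index $M_{k,k+1}$. That last reduction is itself doubtful as stated: the exponents in $c^{\lambda}_{n}(M|q,t,p)$ involve the partial sums $\sum_{a>k'}(M_{ia}-M_{ja})$ for every column $k'$, so the transposition $s_k$ perturbs factors in all columns to the right of $k$, and fixing only the sums $\sigma_a,\tau_b$ does not obviously decouple the remaining free entries $M_{ak}$, $M_{a,k+1}$, $M_{kb}$, $M_{k+1,b}$ from one another.

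In short, the strategy (generate $S_n$ by adjacent transpositions, reduce to an elliptic hypergeometric summation) is reasonable and is in the spirit of how symmetry is handled in the cited literature, but the argument as written has a genuine gap: the combinatorial matching of matrices and the verification of the well-poised balanced structure needed to invoke the elliptic summation are asserted, not carried out, and these are the only nontrivial parts of the proof. As it stands the proposal would not be accepted as a proof of the proposition; to complete it you would either have to perform that computation explicitly or, as this paper does, appeal to \cite{awata2020-2} where the required theta-function identity is established.
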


To state our conjecture, the map $\widetilde{\psi}^{(q,\xi)}_{\lambda}$ defined below is necesaary: for any partition $\lambda = (\lambda_1,\dots,\lambda_{\ell(\lambda)}) \in \operatorname{Par}(m)$ and for any $\xi \in \bb{Q}(q,t)$, define $\widetilde{\psi}^{(q,\xi)}_{\lambda}$ to be the map which sends a symmetric function $f(z_1,\dots,z_m)$ to
\begin{align}
	&f(y,q^{-1}y,\dots,(q^{-1})^{\lambda_1 - 1}y
	\notag	\\
	&\hspace{0.4cm} \xi y,q^{-1}\xi y,\dots,(q^{-1})^{\lambda_2 - 1}\xi y
	\notag \\
	&\hspace{2.8cm}\vdots
	\notag \\
	&\hspace{0.4cm} \xi^{\ell(\lambda) - 1} y,q^{-1}\xi^{\ell(\lambda) - 1} y,\dots,(q^{-1})^{\lambda_{\ell(\lambda)} - 1}\xi^{\ell(\lambda) - 1} y). 
\end{align}
It is clear that not all of the symmetric functions have a well-defined image under the map $\widetilde{\psi}^{(q,\xi)}_{\lambda}$. For example, the image of symmetric function 
\begin{align}
\frac{
\theta_p\left(t\frac{z_2}{z_1}\right)
\theta_p\left(t^{-1}\frac{z_2}{z_1}\right)
}{
\theta_p\left(q\frac{z_2}{z_1}\right)
\theta_p\left(q^{-1}\frac{z_2}{z_1}\right)
},
\end{align}
under $\widetilde{\psi}^{(q,\xi)}_{(2)}$ is not well-defined. However, we have the following conjecture.

\begin{conj}
\label{conj68-1258}
For each $m, n \in \bb{Z}^{\geq 1} \,\, (m \leq n)$, and  $\lambda \in \operatorname{Par}(m)$, 
there exist $\cals{N}_{\lambda}(z_1,\dots,z_m) \in \bb{Q}(q,t)(z_1,\dots,z_m)$ which make the limit\footnote{
Here we write $\widetilde{\psi}^{(q,\xi)}_{\lambda}$ rather than $1 \tens \widetilde{\psi}^{(q,\xi)}_{\lambda}$ to avoid the cumbersome notation. 
}
\begin{align}
	\lim_{\xi \rightarrow t}\,\,
	(
	\widetilde{\psi}^{(q,\xi)}_{\lambda}
	\comp 
	\bigg|_{
		\substack{
			q_1 = q, \\
			q_2 = q^{-1}t,\\
			q_3 = t^{-1} \\
		}
	}
	)
	\left(
	\cals{N}_{\lambda}(z_1,\dots,z_m ; p)
	\times
	\prod_{1 \leq i < j \leq m}f^{(3^n)}_{11}\left(\frac{z_j}{z_i} ; p\right)
	\times
	\langle 0 |\widetilde{T}^{(3^n)}_{1}(z_1 ; p)\cdots \widetilde{T}^{(3^n)}_{1}(z_m ; p)|0\rangle
	\right)
	\label{eqn61-1626}
\end{align}
exists and equal to the Macdonald polynomials $P_{\lambda}(u_1,\dots,u_n;q,t;p)$, i.e. 
\begin{align}
	&\lim_{\xi \rightarrow t}\,\,
	(\widetilde{\psi}^{(q,\xi)}_{\lambda}
	\comp 
	\bigg|_{
		\substack{
			q_1 = q, \\
			q_2 = q^{-1}t,\\
			q_3 = t^{-1} \\
		}
	}
	)
	\left(
	\cals{N}_{\lambda}(z_1,\dots,z_m ; p)
	\times
	\prod_{1 \leq i < j \leq m}f^{(3^n)}_{11}\left(\frac{z_j}{z_i} ; p\right)
	\times
	\langle 0 |\widetilde{T}^{(3^n)}_{1}(z_1 ; p)\cdots \widetilde{T}^{(3^n)}_{1}(z_m ; p)|0\rangle
	\right)
	\notag \\
	&= 
	P_{\lambda}(u_1,\dots,u_n;q,t ; p). 
	\label{eqn01-2042}
\end{align}
\end{conj}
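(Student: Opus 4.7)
The strategy is to compute both sides of \eqref{eqn01-2042} explicitly and match them term by term. On the left hand side, using \eqref{eqn314-1625} one expands $\widetilde{T}^{(3^n)}_{1}(z_k;p) = \sum_{i_k = 1}^{n} y_{i_k}(p)\,\widetilde{\Lambda}^{(3^n)}_{i_k}(z_k;p)$, so the vacuum expectation value becomes a sum over tuples $(i_1,\dots,i_m) \in \{1,\dots,n\}^m$ of products of two-point contractions $\langle 0|\widetilde{\Lambda}^{(3^n)}_{i_1}(z_1;p)\cdots\widetilde{\Lambda}^{(3^n)}_{i_m}(z_m;p)|0\rangle$, which by repeated application of \textbf{Proposition \ref{prp216-2149}} factor into products over pairs $1 \le k < \ell \le m$ of explicit theta-function factors depending on $(i_k,i_\ell)$ and $z_\ell/z_k$. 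Multiplication by $\prod_{i < j}f^{(3^n)}_{11}(z_j/z_i;p)$, as in the proof of \textbf{Theorem \ref{thm61-1624}}, converts these contractions into a sum of tensors $\prod_k u_{i_k} \tens \prod_{k < \ell} \gamma^{(i_k,i_\ell)}_{k,\ell}(q,t;p)$, manifestly symmetric in the $u$'s and $z$'s.

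First I would apply the parameter identification $q_1 = q,\ q_2 = q^{-1}t,\ q_3 = t^{-1}$; this is consistent with $q_1q_2q_3 = 1$ and, using \textbf{Proposition \ref{prp216-2149}}, converts the OPE kernels $\gamma_{c_i}$ into the standard Macdonald-type theta kernels appearing in \eqref{eqn61-1810-16t} and in the star product. Next I would study the specialization $\widetilde{\psi}^{(q,\xi)}_{\lambda}$: it partitions $\{z_1,\dots,z_m\}$ into $\ell(\lambda)$ ordered blocks of lengths $\lambda_1,\dots,\lambda_{\ell(\lambda)}$, with in-block variables related by ratios $q^{-1}$ and between-block leading variables related by ratios $\xi$. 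Under this specialization, a generic pair $(z_k,z_\ell)$ with $k,\ell$ in the same block makes factors of the form $\theta_p(t \cdot 1)\theta_p(t^{-1}\cdot 1)/\theta_p(1)^2$ appear in the $\xi \to t$ limit, producing both zeros and poles; the normalization $\mathcal{N}_\lambda(z_1,\dots,z_m;p)$ must be a product of $f^{(3^n)}_{11}$-like theta factors over such intra-block pairs, chosen exactly to cancel these singularities and to rescale each block so that its leading monomial is $u_{i}^{\lambda_i}$ for the corresponding label $i$.

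The heart of the matter is matching the surviving terms with the explicit formula in \textbf{Definition \ref{df66-1332}}. After specialization, a tuple $(i_1,\dots,i_m)$ reorganizes into a matrix $M = (M_{ik})$ encoding how many of the $\lambda_k$ variables in the $k$-th block are assigned to index $i$, together with the base assignment giving the leading $\prod_i u_i^{\lambda_i}$. Tuples in which the ordering of indices within and across blocks violates the condition $M_{k\ell} = 0$ for $k \ge \ell$ should drop out after the limit, because the OPE factor $\gamma^{(i_k,i_\ell)}_{k,\ell}$ specialized to ratio $q^{-a}$ inside a block or $\xi^{b}$ across blocks then vanishes due to $\theta_p(q^{-1}\cdot q^{-a})\theta_p(t\cdot \xi^b)$-type factors. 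The products of theta functions arising from the remaining configurations should then collapse, via the definition of $\Theta(\cdot;q,p)_n$, exactly into the coefficient $c^{\lambda}_n(M|q,t,p)$.

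The hard part will be carrying out the combinatorial bookkeeping of the $\xi \to t$ limit so as to pin down $\mathcal{N}_\lambda$ and the precise set of surviving tuples, and then verifying that the collapsed theta products match $c^{\lambda}_n(M|q,t,p)$ on the nose; even in the trigonometric limit $p \to 0$ this reproduces a nontrivial AKOS-type identity, and the elliptic case additionally requires the theta-function identities of \cite{awata2020-2} that underlie the symmetry of $P_\lambda(\cdot;q,t;p)$. As consistency checks I would verify the rectangular case $\lambda = (1^m)$, where the specialization is simply $z_k \mapsto \xi^{k-1}y$ and $c^\lambda_n(M|q,t,p)$ reduces to a product of ratios of theta functions directly comparable with the single-block OPE, and then $\lambda = (m)$ where only one chain of $q$-shifts appears; these already constrain $\mathcal{N}_\lambda$ sufficiently to guess, and later prove, its general form.
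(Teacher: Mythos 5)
The statement you are addressing is a \emph{conjecture} in the paper: the authors give no proof of \eqref{eqn01-2042}, explicitly defer it to future work, and offer as evidence only the single worked example $m=3$, $\lambda=(2,1)$, where a specific $\cals{N}_{(2,1)}$ is exhibited and the four ordered families of tuples $(i_1,i_2,i_3)$ are checked against $P_{(2,1)}$. Your submission is likewise not a proof but a strategy outline. Its first step — expanding $\langle 0|\widetilde{T}^{(3^n)}_1(z_1;p)\cdots \widetilde{T}^{(3^n)}_1(z_m;p)|0\rangle$ via \eqref{eqn314-1625} and \textbf{Proposition \ref{prp216-2149}} into $\sum u_{i_1}\cdots u_{i_m}\tens \prod_{k<\ell}C^{(i_k,i_\ell)}_{k,\ell}$ after multiplying by $\prod f^{(3^n)}_{11}$ — is exactly the computation already done in the proof of \textbf{Theorem \ref{thm61-1624}}, and your proposed consistency checks parallel the paper's example. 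But every genuinely hard point is deferred rather than carried out: you do not determine $\cals{N}_\lambda$ in general, you do not prove that the non-admissible tuples vanish in the $\xi\to t$ limit, and you do not perform the collapse of the surviving theta products into the coefficients $c^{\lambda}_n(M|q,t,p)$ of \textbf{Definition \ref{df66-1332}}; these assertions are precisely the content of the conjecture, so the proposal leaves the statement open.

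There is also a concrete inaccuracy in the one structural claim you do make. You assert that the singularities requiring the normalization $\cals{N}_\lambda$ come from pairs $(z_k,z_\ell)$ \emph{within} the same block, with factors of the shape $\theta_p(t)\theta_p(t^{-1})/\theta_p(1)^2$ appearing as $\xi\to t$. Under $\widetilde{\psi}^{(q,\xi)}_\lambda$ the intra-block ratios are fixed negative powers of $q$ and do not involve $\xi$ at all; such pairs produce at most \emph{zeros} (e.g.\ $\theta_p(q\cdot q^{-1})$ for adjacent decreasing-index assignments), which is what eliminates bad tuples. The poles that $\cals{N}_\lambda$ must cancel arise from \emph{cross-block} pairs, whose ratios tend to $t\,q^{\bb{Z}}$ as $\xi\to t$ and hit denominators such as $\theta_p(t^{-1}z_\ell/z_k)$ in $C^{(i,j)}_{k,\ell}$; consistently, the paper's $\cals{N}_{(2,1)}$ is built entirely from the inter-block pairs $(z_1,z_3)$ and $(z_2,z_3)$, not from the intra-block pair $(z_1,z_2)$. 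So even as a roadmap the proposal misidentifies where the regularization acts, and correcting this is a prerequisite for pinning down $\cals{N}_\lambda$ and the set of surviving configurations.
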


This conjecture will be explored in our next paper. However, in this paper, we give an example as a supporting evidence of the conjecture. 
\begin{exa}
	Consider $m = 3$ and partition $\lambda = (2,1)$. We can show that 
	\begin{align*}
		&\prod_{1 \leq i < j \leq 3}f^{(3^n)}_{11}\left(\frac{z_j}{z_i} ; p\right)
		\times
		\langle 0 |\widetilde{T}^{(3^n)}_{1}(z_1 ; p)\widetilde{T}^{(3^n)}_{1}(z_2 ; p) \widetilde{T}^{(3^n)}_{1}(z_3 ; p)|0\rangle
		\\
		&= 
		\underbrace{
			\sum_{i_1 = 1}^{n}
			\sum_{i_2 = 1}^{n}
			\sum_{i_3 = 1}^{n}
		}_{
			i_1 = i_2 < i_3
		}
		u_{i_1}u_{i_2}u_{i_3}
		\frac{
			\theta_p(q^{-1}z_3/z_1)
			\theta_p(qt^{-1}z_3/z_1)
		}{
			\theta_p(t^{-1}z_3/z_1)
			\theta_p(z_3/z_1)
		}
		\frac{
			\theta_p(q^{-1}z_3/z_2)
			\theta_p(qt^{-1}z_3/z_2)
		}{
			\theta_p(t^{-1}z_3/z_2)
			\theta_p(z_3/z_2)
		}
		\\
		&+
		\underbrace{
			\sum_{i_1 = 1}^{n}
			\sum_{i_2 = 1}^{n}
			\sum_{i_3 = 1}^{n}
		}_{
			i_1 < i_2 = i_3
		}
		u_{i_1}u_{i_2}u_{i_3}
		\frac{
		\theta_p(q^{-1}z_2/z_1)
		\theta_p(qt^{-1}z_2/z_1)
		}{
		\theta_p(t^{-1}z_2/z_1)
		\theta_p(z_2/z_1)
		}
		\frac{
		\theta_p(q^{-1}z_3/z_1)
		\theta_p(qt^{-1}z_3/z_1)
		}{
		\theta_p(t^{-1}z_3/z_1)
		\theta_p(z_3/z_1)
		}
		\\
		&+
		\underbrace{
			\sum_{i_1 = 1}^{n}
			\sum_{i_2 = 1}^{n}
			\sum_{i_3 = 1}^{n}
		}_{
			i_1 < i_2 < i_3
		}
		u_{i_1}u_{i_2}u_{i_3}
		\frac{
			\theta_p(q^{-1}z_2/z_1)
			\theta_p(qt^{-1}z_2/z_1)
		}{
			\theta_p(t^{-1}z_2/z_1)
			\theta_p(z_2/z_1)
		}
		\frac{
			\theta_p(q^{-1}z_3/z_1)
			\theta_p(qt^{-1}z_3/z_1)
		}{
			\theta_p(t^{-1}z_3/z_1)
			\theta_p(z_3/z_1)
		}
		\frac{
			\theta_p(q^{-1}z_3/z_2)
			\theta_p(qt^{-1}z_3/z_2)
		}{
			\theta_p(t^{-1}z_3/z_2)
			\theta_p(z_3/z_2)
		}
		\\
		&+
		\underbrace{
			\sum_{i_1 = 1}^{n}
			\sum_{i_2 = 1}^{n}
			\sum_{i_3 = 1}^{n}
		}_{
			i_1 < i_3 < i_2
		}
		u_{i_1}u_{i_2}u_{i_3}
		\frac{
			\theta_p(q^{-1}z_2/z_1)
			\theta_p(qt^{-1}z_2/z_1)
		}{
			\theta_p(t^{-1}z_2/z_1)
			\theta_p(z_2/z_1)
		}
		\frac{
			\theta_p(q^{-1}z_3/z_1)
			\theta_p(qt^{-1}z_3/z_1)
		}{
			\theta_p(t^{-1}z_3/z_1)
			\theta_p(z_3/z_1)
		}
		\frac{
			\theta_p(qz_3/z_2)
			\theta_p(q^{-1}tz_3/z_2)
		}{
			\theta_p(tz_3/z_2)
			\theta_p(z_3/z_2)
		}
		\\
		&+ \text{ other terms }
	\end{align*}
	Choose $
	\displaystyle
	\cals{N}_{(2,1)}(z_1,z_2,z_3 ; p) = 
	\frac{
		\theta_p(t^{-1}z_3/z_1)
		\theta_p(z_3/z_1)}{
		\theta_p(q^{-1}z_3/z_1)
		\theta_p(qt^{-1}z_3/z_1)}
	\frac{
		\theta_p(t^{-1}z_3/z_2)
		\theta_p(z_3/z_2)}{
		\theta_p(q^{-1}z_3/z_2)
		\theta_p(qt^{-1}z_3/z_2)
	}
	$. Then, 
	\begin{align*}
		&\lim_{\xi \rightarrow t} \,\,
		(\widetilde{\psi}^{(\xi)}_{(2,1)}			
		\comp 
		\bigg|_{
			\substack{
				q_1 = q, \\
				q_2 = q^{-1}t,\\
				q_3 = t^{-1} \\
			}
		}
		)
		\left(
		\cals{N}_{(2,1)}(z_1,z_2,z_3 ; p) 
		\times
		\prod_{1 \leq i < j \leq 3}f^{(3^n)}_{11}\left(\frac{z_j}{z_i} ; p\right)
		\times
		\langle 0 |\widetilde{T}^{(3^n)}_{1}(z_1 ; p)\widetilde{T}^{(3^n)}_{1}(z_2 ; p) \widetilde{T}^{(3^n)}_{1}(z_3 ; p)|0\rangle
		\right)
		\\
		&= 
		\underbrace{
			\sum_{i_1 = 1}^{n}
			\sum_{i_2 = 1}^{n}
			\sum_{i_3 = 1}^{n}
		}_{
			i_1 = i_2 < i_3
		}
		u_{i_1}u_{i_2}u_{i_3}
		+
		\underbrace{
			\sum_{i_1 = 1}^{n}
			\sum_{i_2 = 1}^{n}
			\sum_{i_3 = 1}^{n}
		}_{
			i_1 < i_2 = i_3
		}
		u_{i_1}u_{i_2}u_{i_3}
		+
		\underbrace{
			\sum_{i_1 = 1}^{n}
			\sum_{i_2 = 1}^{n}
			\sum_{i_3 = 1}^{n}
		}_{
			i_1 < i_2 < i_3
		}
		u_{i_1}u_{i_2}u_{i_3}
		\frac{
		\theta_p( q^{-2})
		\theta_p( t^{-1})
		}{
		\theta_p(q^{-1}t^{-1})
		\theta_p(q^{-1})
		}
		\\
		&+
		\underbrace{
			\sum_{i_1 = 1}^{n}
			\sum_{i_2 = 1}^{n}
			\sum_{i_3 = 1}^{n}
		}_{
			i_1 < i_3 < i_2
		}
		u_{i_1}u_{i_2}u_{i_3}
		\frac{
		\theta_p(q^{-2})
		\theta_p(t^{-1})
		}{
		\theta_p(q^{-1}t^{-1})
		\theta_p(q^{-1})
		}
		\frac{
			\theta_p(q^2t)
			\theta_p(t^2)
		}{
		\theta_p(qt^2)
		\theta_p(qt)
		}
		\frac{
			\theta_p(q)
			\theta_p(qt)
		}{
			\theta_p(t)
			\theta_p(q^2)
		}
		\\
		&
		= 
		P_{(2,1)}(u_1,\dots,u_n;q,t;p)
	\end{align*}
\end{exa}

\subsection{Conjecture on partially symmetric polynomials}

In the previous subsection, we have stated a conjecture relating the elliptic Macdonald polynomials to the correlation function of the currents of elliptic corner VOA whose $\vec{c} = (3^n)$. In this subsection, we state a conjecture related to the elliptic corner VOA in more general $\vec{c}$. To write the statement of the conjecture, the notion of partially symmetric polynomials defined below is necessary. 

\begin{dfn}
Let $g(x_1,\dots,x_n) := \sum_{(i_1,\dots,i_n)}c_{i_1,\dots,i_n}x_1^{i_1}\cdots x_n^{i_n}$ be a formal power series, and let $I_1, \dots, I_\ell$ be subsets of $\{1,\dots,n\}$ such that $I_1 \sqcup \cdots \sqcup I_\ell = \{1,\dots,n\}$. For each $k \in \{1,\dots,n\}$, let $\widetilde{I}_k = \{x_a \,| \, a \in I_k\}$. We say that the formal power series 
$g(x_1,\dots,x_n)$ is partially symmetric with respect to the set of variables $\widetilde{I}_1,\dots,\widetilde{I}_k$ if for any $\sigma \in S_n$ which satisfies the condition that $\sigma(I_k) = I_k$ for all 
$k \in \{1,\dots,n\}$, we have $g(x_{\sigma(1)},\dots,x_{\sigma(n)}) = g(x_1,\dots,x_n)$. 
\end{dfn}

Now we are ready to state our conjecture. 

\begin{conj}
\label{conj64-1623}
Let $\vec{c} = (c_1,\dots,c_n) = (1^L2^M3^N)$ and suppose that $q_{\vec{c}} \neq 1$. Then, 
\begin{align}
	\prod_{1 \leq i < j \leq m}f^{\vec{c}}_{11}\left(\frac{z_j}{z_i} ; p\right)
	\times
	\langle 0 |\widetilde{T}^{\vec{c}}_{1}(z_1 ; p)\cdots \widetilde{T}^{\vec{c}}_{1}(z_m ; p)|0\rangle
\end{align}
is symmetric in $z_1,\dots,z_m$ and partially symmetric with respect to the following set of variables 
\begin{align*}
	\{	u_{1},\dots,u_{L}	\}, \hspace{0.3cm}
	\{	u_{L+1},\dots,u_{L+M}	\}, \hspace{0.3cm}
	\{	u_{L+M+1},\dots,u_{L+M+N}	\}
\end{align*}
\end{conj}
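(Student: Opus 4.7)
The plan is to adapt the strategy of Theorem \ref{thm61-1624}: expand the correlation function via the OPE from Proposition \ref{prp216-2149}, decompose the resulting index sum by color block, and apply a single-color version of Theorem \ref{thm61-1624} to each block. First, substituting $\widetilde{T}^{\vec{c}}_1(z;p)=\sum_{i=1}^{n}y_i(p)\widetilde{\Lambda}^{\vec{c}}_i(z;p)$ and applying Proposition \ref{prp216-2149} pairwise yields
\begin{align*}
\prod_{1\le i<j\le m}f^{\vec{c}}_{11}\!\left(\tfrac{z_j}{z_i};p\right)\langle 0|\widetilde{T}^{\vec{c}}_1(z_1;p)\cdots\widetilde{T}^{\vec{c}}_1(z_m;p)|0\rangle =\sum_{(i_1,\dots,i_m)}\Big(\prod_{k=1}^{m}y_{i_k}(p)u_{i_k}\Big)\prod_{1\le k<\ell\le m}C^{(i_k,i_\ell)}_{k,\ell}(z;p),
\end{align*}
where $C^{(i,j)}_{k,\ell}$ equals $\Delta(q_3^{1/2}z_\ell/z_k;p)$ for $i<j$, $\gamma_{c_i}(z_\ell/z_k;p)$ for $i=j$, and $\Delta(q_3^{-1/2}z_\ell/z_k;p)$ for $i>j$. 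The crucial feature is that when $i\ne j$ the factor $C^{(i,j)}_{k,\ell}$ depends only on the sign of $i-j$, not on the colors $c_i,c_j$.

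Next, I would split the sum by the color assignment $\alpha:\{1,\dots,m\}\to\{1,2,3\}$, $\alpha_k:=c_{i_k}$. Writing $A_c:=\alpha^{-1}(c)$ and $I_c:=\{k:c_k=c\}$, the blocks $I_1=\{1,\dots,L\}$, $I_2=\{L+1,\dots,L+M\}$, $I_3=\{L+M+1,\dots,n\}$ are \emph{consecutive} in $\{1,\dots,n\}$, so whenever $\alpha_k\ne\alpha_\ell$ the relative order of $i_k\in I_{\alpha_k}$ and $i_\ell\in I_{\alpha_\ell}$ is fully determined by $(\alpha_k,\alpha_\ell)$; the cross-block $C$-factors then pull out of the inner sum. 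Since $y_i(p)$ depends only on $c_i$, the product $\prod_{k\in A_c}y_{i_k}(p)=y_c(p)^{|A_c|}$ is also $u$-independent. The expression therefore factors as
\begin{align*}
\sum_{\alpha}D_{\alpha}(z;p)\prod_{c=1}^{3}y_c(p)^{|A_c|}\,\Phi_c\big(\{z_k\}_{k\in A_c};\{u_j\}_{j\in I_c}\big),
\end{align*}
with $D_\alpha$ a product of $\Delta(q_3^{\pm 1/2})$-factors depending only on $(\alpha,z)$ and $\Phi_c$ the within-block sum whose diagonal factor is $\gamma_c$. Since only $\Phi_c$ carries $u$-dependence, partial symmetry in the three groups of $u$-variables is reduced to the $u$-symmetry of each $\Phi_c$ in $\{u_j\}_{j\in I_c}$.

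The $u$-symmetry of $\Phi_c$ is the single-color specialization of Theorem \ref{thm61-1624} to $\vec{c}'=(c^{|I_c|})$: for $c=3$ it is Theorem \ref{thm61-1624} itself, while for $c\in\{1,2\}$ one must generalize Lemma \ref{lem63-1711} by constructing a $z$-only factor $g_c(z)$ (an analog of $\theta_p(z)^2/[\theta_p(tz)\theta_p(t^{-1}z)]$) such that $\widetilde{\gamma}^{(i,j)}_{k,\ell}:=C^{(i,j)}_{k,\ell;c}/g_c(z_\ell/z_k)$ has a three-case Feigin--Odesskii form, establishing the analog $\epsilon_m^{(c)}\star\epsilon_n^{(c)}=\epsilon_n^{(c)}\star\epsilon_m^{(c)}$ of Proposition \ref{prop63-1006} in this deformed setting, and then rerunning the argument of Lemma \ref{lem63-1711}. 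For the $z$-symmetry, the identities $\Delta(q_3^{1/2}z^{-1};p)=\Delta(q_3^{-1/2}z;p)$ and $\gamma_c(z^{-1};p)=\gamma_c(z;p)$ (both consequences of $\theta_p(z)=(-z)\theta_p(z^{-1})$) show that an adjacent transposition $z_k\leftrightarrow z_{k+1}$ merely reindexes the sum over $(i_1,\dots,i_m)$, and one then verifies that the prefactor $\prod_{k<\ell}f^{\vec{c}}_{11}(z_\ell/z_k;p)$ supplies the exact exchange factor to make the normalized expression invariant.

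The principal obstacle is the single-color reduction for $c\in\{1,2\}$: constructing $g_c$ and verifying the generalized star-product commutativity for a kernel involving $\theta_p$ with parameters related asymmetrically to $(q_1,q_2,q_3)$ will likely require nontrivial theta-function identities going beyond the specialization $q_1=q,\,q_2=q^{-1}t,\,q_3=t^{-1}$ used for Theorem \ref{thm61-1624}. The Miura transformation defining $\widetilde{T}^{\vec{c}}_m$ explicitly distinguishes $q_3$ through $q_3^{-mD_z}$, so the hidden single-color symmetries for $c=1,2$ must be uncovered at the level of correlation functions rather than inherited from any symmetry of the algebra itself.
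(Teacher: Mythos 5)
The statement you are proving is a \emph{conjecture}: the paper offers no proof, only the remark that one should imitate the proof of Theorem \ref{thm61-1624} with ``coloured'' $\epsilon^{(i)}_n$'s, for which the star product is commutative only among equal colours, and it defers the precise argument to future work. Your plan follows exactly this intended route, and the reduction steps you carry out are sound: after the $f^{\vec{c}}_{11}$ prefactor cancels the $f^{\vec{c}}_{11}{}^{-1}$'s from Proposition \ref{prp216-2149}, the cross-colour pair factors depend only on the relative order of the blocks (because $(1^L2^M3^N)$ is ordered by colour), so they and the $y_{i}(p)$'s pull out of the inner sums, and partial $u$-symmetry correctly reduces to the $u$-symmetry of the single-colour block sums $\Phi_c$; likewise the $z$-symmetry argument via $\Delta(q_3^{1/2}x^{-1};p)=\Delta(q_3^{-1/2}x;p)$, $\gamma_c(x^{-1};p)=\gamma_c(x;p)$ and reindexing is the right mechanism (the prefactor's only role is the cancellation already performed).

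However, as a proof the proposal has a genuine gap, and it is precisely the gap the paper itself leaves open: the $u$-symmetry of $\Phi_c$ for $c\in\{1,2\}$. For colour $3$ the diagonal factor $\gamma_3\equiv 1$ and Theorem \ref{thm61-1624} applies, but for $c=1,2$ the diagonal is $\gamma_c(x;p)=\theta_p(q_cx)\theta_p(q_c^{-1}x)/[\theta_p(q_3x)\theta_p(q_3^{-1}x)]$ while the off-diagonal factors still single out $q_3$, so this is \emph{not} a relabelling of the colour-$3$ case; one needs a new kernel $g_c$ and a new commutativity statement for the corresponding coloured star product, i.e.\ a coloured analogue of Proposition \ref{prop63-1006} (Feigin--Odesskii), which amounts to a nontrivial elliptic identity. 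You name this obstacle but do not resolve it, so the proposal is a plausible plan consistent with the paper's sketch rather than a proof; closing it would require actually constructing $g_c$ and proving the same-colour commutativity (or otherwise establishing the symmetry of $\Phi_1,\Phi_2$ directly), which is the content the authors postpone to their future paper.
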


One can try to prove this conjecture by proceeding as in the proof of \textbf{Theorem \ref{thm61-1624}}, but this time $\epsilon_n$ defined in \eqref{eqn61-1810-16t} have to be changed to the \say{coloured} one $\epsilon^{(i)}_n$. In general, the star product among the coloured $\epsilon^{(i)}_n$ is not commutative. 
However, if both $\epsilon^{(i)}_n, \epsilon^{(j)}_m$ have the same colour, i.e. $c_i = c_j$, then the commutativity still holds. This is the rough idea of the proof of this conjecture. 
The precise proof and the study of the combinatorial properties of this partially symmetric polynomials will be reported in our 
future work.

\appendix

\section{Proof of lemma \ref{thm44-1334}}
\label{appA-1323}

In this appendix, we provide a detailed proof of the \textbf{Lemma \ref{thm44-1334}}. This prove is conducted by the method of mathematical induction on $r$. 

\subsection{Basis step}

We begin with the case where $r = 1$. In this case, we have to show that 
\begin{align}
	\vec{P}^{(n)}_{m+1} 
	= 
	\bigsqcup_{p = 1}^{n}
	\underbrace{			\bigsqcup_{j \in \bb{Z}^{\geq 0}}				}_{
		1 \leq j \leq p 
	}
	\varphi^{(1)}_{j}\big(		\bb{I}^{(n)}_{m,p}			\big). 
	\label{A1-eqn}
\end{align}
It is clear that 
\begin{align}
\vec{P}^{(n)}_{m+1} 
\supseteq
\bigcup_{p = 1}^{n}
\underbrace{			\bigcup_{j \in \bb{Z}^{\geq 0}}				}_{
	1 \leq j \leq p 
}
\varphi^{(1)}_{j}\big(		\bb{I}^{(n)}_{m,p}			\big)
\label{eqna2-2034}
\end{align}
Next, let $(m_1,\dots,m_n) \in \vec{P}^{(n)}_{m+1}$, and let $k$ be the smallest positive integer such that $m_k \neq 0$. 

In the case $m_k \geq 2$, we then obtain that 
\begin{align}
(m_1,\dots,m_n) = \varphi^{(1)}_{k}(m_1,\dots,m_{k} - 1, \dots, m_n) \in \varphi^{(1)}_{k}(\bb{I}^{(n)}_{m,k}).
\label{eqna3-2034}
\end{align}
Equation \eqref{eqna3-2034} together with \eqref{eqna2-2034} allows us to conclude that 
\begin{align}
	\vec{P}^{(n)}_{m+1} 
	= 
	\bigcup_{p = 1}^{n}
	\underbrace{			\bigcup_{j \in \bb{Z}^{\geq 0}}				}_{
		1 \leq j \leq p 
	}
	\varphi^{(1)}_{j}\big(		\bb{I}^{(n)}_{m,p}			\big). 
\end{align}

In the case $m_k = 1$, let $k^\prime$ be the smallest positive which is larger than $k$ such that $m_{k^\prime} \neq 0$. We then get that 
\begin{align}
	(m_1,\dots,m_n) = \varphi^{(1)}_{k}(m_1,\dots,m_{k} - 1, \dots, m_n) \in \varphi^{(1)}_{k}(\bb{I}^{(n)}_{m,k^\prime}). 
	\label{eqna5-2042}
\end{align}
Similarly, equation \eqref{eqna5-2042} together with \eqref{eqna2-2034} allows us to conclude that 
\begin{align}
	\vec{P}^{(n)}_{m+1} 
	= 
	\bigcup_{p = 1}^{n}
	\underbrace{			\bigcup_{j \in \bb{Z}^{\geq 0}}				}_{
		1 \leq j \leq p 
	}
	\varphi^{(1)}_{j}\big(		\bb{I}^{(n)}_{m,p}			\big). 
\end{align}
So we have shown that in any case $\vec{P}^{(n)}_{m+1}$ can be written as a union of $\varphi^{(1)}_{j}\big(		\bb{I}^{(n)}_{m,p}			\big)$. 

Next we show that this union is the disjoint union. So, we have to show that for any two different choices $(j,p) \neq (j^\prime,p^\prime)$, $\varphi^{(1)}_{j}\big(		\bb{I}^{(n)}_{m,p}			\big) \cap 
\varphi^{(1)}_{j^\prime}\big(		\bb{I}^{(n)}_{m,p^\prime}			\big) = \emptyset. $
Suppose for contradiction that $\varphi^{(1)}_{j}\big(		\bb{I}^{(n)}_{m,p}			\big) \cap 
\varphi^{(1)}_{j^\prime}\big(		\bb{I}^{(n)}_{m,p^\prime}			\big) \neq  \emptyset$, and pick $(m_1,\dots,m_n) \in \varphi^{(1)}_{j}\big(		\bb{I}^{(n)}_{m,p}			\big) \cap 
\varphi^{(1)}_{j^\prime}\big(		\bb{I}^{(n)}_{m,p^\prime}			\big)$. 
Let $k$ be the smallest position such that $m_k \neq 0$. 

In the case $m_k \geq 2$, we get that $j = p = k$ and $j^\prime = p^\prime = k$. In particular, we get that $(j,p) = (j^\prime,p^\prime)$. This leads to \textbf{contradiction} since we assumed that $(j,p) \neq (j^\prime,p^\prime)$. 

In the case $m_k = 1$, we get that $j = j^\prime = k$. 
From the assumption that $(j,p) \neq (j^\prime,p^\prime)$, we get that $p \neq p^\prime$. 
Since $(m_1,\dots,m_n) \in \varphi^{(1)}_{k}\big(		\bb{I}^{(n)}_{m,p}			\big) \cap 
\varphi^{(1)}_{k}\big(		\bb{I}^{(n)}_{m,p^\prime}			\big)$, there exist 
$(\widetilde{m}_1,\dots,\widetilde{m}_n) \in \bb{I}^{(n)}_{m,p}	$ and 
$(m^\prime_1,\dots,m^\prime_n) \in 	\bb{I}^{(n)}_{m,p^\prime}	$ such that 
\begin{align*}
	\varphi^{(1)}_{k}(\widetilde{m}_1,\dots,\widetilde{m}_n) =
	(m_1,\dots,m_n)
	=
	\varphi^{(1)}_{k}(m^\prime_1,\dots,m^\prime_n).
\end{align*}
Since $\varphi^{(1)}_{k}$ is injective, we then obtain that $(\widetilde{m}_1,\dots,\widetilde{m}_n) = (m^\prime_1,\dots,m^\prime_n)$. In particular, $\bb{I}^{(n)}_{m,p} \cap \bb{I}^{(n)}_{m,p^\prime} \neq \emptyset$. 
On the other hand, we know that if $p_1 \neq p_2$, then $\bb{I}^{(n)}_{m,p_1} \cap \bb{I}^{(n)}_{m,p_2} = \emptyset$. So, $\bb{I}^{(n)}_{m,p} \cap \bb{I}^{(n)}_{m,p^\prime} \neq \emptyset$ implies that $p = p^\prime$. This leads to \textbf{contradiction}.

Thus, we have proved the statement when $r = 1$. 

\subsection{Inductive step}

Now assume that the statement in \textbf{Theorem \ref{thm44-1334}}  is true for $1,2,\dots,r-1$ where $r \geq 2$ and for any $m \in \bb{Z}^{\geq 1}$. We will show that 
\begin{align}
	\vec{P}^{(n)}_{m+r} = 
	\bigsqcup_{k = 1}^{r}
	\bigsqcup_{p = 1}^{n}
	\underbrace{	\bigsqcup_{j_1,j_2,\dots,j_k	\in \bb{Z}^{\geq 1} }			}_{
		\substack{	
			\\
			1 \leq j_1 < j_2 < \cdots < j_k \leq p
		}
	}
	\underbrace{		\bigsqcup_{a_{j_1}, a_{j_2}, \dots, a_{j_k} \in \bb{Z}^{\geq 1}}		}_{
		\substack{	
		a_{j_1} + a_{j_2} + \cdots + a_{j_k} = r
		}
	}
	\varphi^{(	a_{j_1}, a_{j_2}, \dots, a_{j_k}		)}_{j_1,j_2,\dots,j_k}
	(\bb{I}^{(n)}_{m,p}). 
	\label{A5-1245}
\end{align}
As in the basis step, we have to show that $\vec{P}^{(n)}_{m+r}$ can be written as the unions in \eqref{A5-1245}, and show that these unions are disjoint unions. Let us first show the union part. 

It is clear that 
\begin{align}
	\vec{P}^{(n)}_{m+r} 
	\supseteq 
	\bigcup_{k = 1}^{r}
	\bigcup_{p = 1}^{n}
	\underbrace{	\bigcup_{j_1,j_2,\dots,j_k	\in \bb{Z}^{\geq 1} }			}_{
		\substack{	
			\\
			1 \leq j_1 < j_2 < \cdots < j_k \leq p
		}
	}
	\underbrace{		\bigcup_{a_{j_1}, a_{j_2}, \dots, a_{j_k} \in \bb{Z}^{\geq 1}}		}_{
		\substack{	
a_{j_1} + a_{j_2} + \cdots + a_{j_k} = r
		}
	}
	\varphi^{(	a_{j_1}, a_{j_2}, \dots, a_{j_k}		)}_{j_1,j_2,\dots,j_k}
	(\bb{I}^{(n)}_{m,p}). 
	\label{A6-1455}
\end{align}
Now pick $(m_1,\dots,m_n) \in \vec{P}^{(n)}_{m+r}$, and let $\ell_1 < \dots < \ell_k < \alpha$ be the positive integers which satisfies the following conditions : 
\begin{itemize}
	\item $m_{\ell_1} \neq 0, m_{\ell_2} \neq 0, \dots, m_{\ell_k} \neq 0, m_{\alpha} \neq 0$
	\item $m_i = 0$ for $i \in \{	1, 2, \dots, \alpha			\}\backslash \{\ell_1,\ell_2,\dots,\alpha\}$
	\item $m_{\ell_1} + \cdots + m_{\ell_k} \geq r$ but $m_{\ell_1} + \cdots + m_{\ell_{k-1}} < r$
\end{itemize}
It is not hard to see that this sequence of positive integers $\ell_1 < \dots < \ell_k < \alpha$ exists. 

In the case $m_{\ell_1} + \cdots + m_{\ell_k} > r$, then we get that 
\begin{align*}
	(m_1,\dots,m_n) = \varphi^{(	m_{\ell_1}, m_{\ell_2}, \dots, c_{\ell_k}		)}_{\ell_1,\ell_2,\dots,\ell_k}(0,\dots,0,m_{\ell_k} - c_{\ell_k},\dots,m_n) 
	\in 
	\varphi^{(	m_{\ell_1}, m_{\ell_2}, \dots, c_{\ell_k}		)}_{\ell_1,\ell_2,\dots,\ell_k}\big(
	\bb{I}^{(n)}_{m,\ell_k}
	\big),
\end{align*}
where $c_{\ell_k} = r - (m_{\ell_1} + \cdots + m_{\ell_{k-1}})$. 
On the other hand, if $m_{\ell_1} + \cdots + m_{\ell_k} = r$, then 
\begin{align*}
	(m_1,\dots,m_n) = \varphi^{(	m_{\ell_1}, m_{\ell_2}, \dots, m_{\ell_k}		)}_{\ell_1,\ell_2,\dots,\ell_k}(0,\dots,0,m_{\alpha},\dots,m_n) 
	\in 
	\varphi^{(	m_{\ell_1}, m_{\ell_2}, \dots, m_{\ell_k}		)}_{\ell_1,\ell_2,\dots,\ell_k}\big(	\bb{I}^{(n)}_{m,\alpha}		\big).
\end{align*}
So we have shown that 
\begin{align}
	\vec{P}^{(n)}_{m+r} 
	= 
	\bigcup_{k = 1}^{r}
	\bigcup_{p = 1}^{n}
	\underbrace{	\bigcup_{j_1,j_2,\dots,j_k	\in \bb{Z}^{\geq 1} }			}_{
		\substack{	
			\\
			1 \leq j_1 < j_2 < \cdots < j_k \leq p
		}
	}
	\underbrace{		\bigcup_{a_{j_1}, a_{j_2}, \dots, a_{j_k} \in \bb{Z}^{\geq 1}}		}_{
		\substack{	
a_{j_1} + a_{j_2} + \cdots + a_{j_k} = r
		}
	}
	\varphi^{(	a_{j_1}, a_{j_2}, \dots, a_{j_k}		)}_{j_1,j_2,\dots,j_k}
	(\bb{I}^{(n)}_{m,p}). 
\label{a9-eqn-2103}
\end{align}

Next, we show the unions in \eqref{a9-eqn-2103} are disjoint unions. To show this, we have to show the statements (2.\ref{2.1.1521}) - (2.\ref{2.1.1523}) below. 
\begin{enumerate}[(2.1)]
	\item \label{2.1.1521}
	For any two different choices $(k,p) \neq (k^\prime,p^\prime)$, we have 
	\begin{align*}
		\varphi^{(a_{j_1},\dots,a_{j_k})}_{j_1,\dots,j_k}(\bb{I}^{(n)}_{m,p}) \,\, \cap \,\, 
		\varphi^{(b_{\ell_1},\dots,b_{\ell_{k^\prime}})}_{\ell_1,\dots,\ell_{k^\prime}}(\bb{I}^{(n)}_{m,p^\prime}) = \emptyset. 
	\end{align*}
	for any $(a_{j_1},\dots,a_{j_k}), (b_{\ell_1},\dots,b_{\ell_{k^\prime}})$ satisfying the following conditions true:
	\begin{itemize}
		\item $a_{j_1}, \dots, a_{j_k} \neq 0$ and $a_{j_1} + \cdots + a_{j_k} = r$
		\item $b_{\ell_1}, \dots, b_{\ell_k^{\prime}} \neq 0$ and $b_{\ell_1} + \cdots + b_{\ell_{k^\prime}} = r$
	\end{itemize}
	\item \label{2.1.1522}
	For any fixed $(k,p)$, if $1 \leq j_1 < j_2 < \cdots < j_k \leq p$ and $1 \leq \ell_1 < \ell_2 < \cdots < \ell_k \leq p$ are different, then we have 
	\begin{align*}
		\varphi^{(a_{j_1},\dots,a_{j_k})}_{j_1,\dots,j_k}(\bb{I}^{(n)}_{m,p}) \,\, \cap \,\, 
		\varphi^{(b_{\ell_1},\dots,b_{\ell_{k}})}_{\ell_1,\dots,\ell_{k}}(\bb{I}^{(n)}_{m,p}) = \emptyset. 
	\end{align*}
	for any $(a_{j_1},\dots,a_{j_k}), (b_{\ell_1},\dots,b_{\ell_{k}})$ satisfying the following conditions true:
	\begin{itemize}
		\item $a_{j_1}, \dots, a_{j_k} \neq 0$ and $a_{j_1} + \cdots + a_{j_k} = r$
		\item $b_{\ell_1}, \dots, b_{\ell_k} \neq 0$ and $b_{\ell_1} + \cdots + b_{\ell_{k}} = r$
	\end{itemize}
	\item \label{2.1.1523}
	For any $(k,p)$ and $1 \leq j_1 < j_2 < \cdots < j_k \leq p$, if 
	$(a_{j_1},\dots,a_{j_k}) \neq (b_{j_1},\dots,b_{j_k})$, then 
	\begin{align*}
		\varphi^{(a_{j_1},\dots,a_{j_k})}_{j_1,\dots,j_k}(\bb{I}^{(n)}_{m,p}) 
		\,\, \cap \,\, 
		\varphi^{(b_{j_1},\dots,b_{j_k})}_{j_1,\dots,j_k}(\bb{I}^{(n)}_{m,p}) 
		= 
		\emptyset. 
	\end{align*}
\end{enumerate}

\underline{\textit{Proof of (2.1)} :}
Suppose for contradiction that $\varphi^{(a_{j_1},\dots,a_{j_k})}_{j_1,\dots,j_k}(\bb{I}^{(n)}_{m,p}) \,\, \cap \,\, 
\varphi^{(b_{\ell_1},\dots,b_{\ell_{k^\prime}})}_{\ell_1,\dots,\ell_{k^\prime}}(\bb{I}^{(n)}_{m,p^\prime}) \neq \emptyset$, and pick 
$(m_1,\dots,m_n) \in \varphi^{(a_{j_1},\dots,a_{j_k})}_{j_1,\dots,j_k}(\bb{I}^{(n)}_{m,p}) \,\, \cap \,\, 
\varphi^{(b_{\ell_1},\dots,b_{\ell_{k^\prime}})}_{\ell_1,\dots,\ell_{k^\prime}}(\bb{I}^{(n)}_{m,p^\prime})$. 
In particular, $(m_1,\dots,m_n) \in \varphi^{(a_{j_1},\dots,a_{j_k})}_{j_1,\dots,j_k}(\bb{I}^{(n)}_{m,p})$. This, in turn, tells us that $(m_1,\dots,m_n)$ have to lie in one of the following cases:

\textbf{Case $1$ : $k = 1$}

In this case, we get that $a_{j_1} = r$. Recall that now we consider the inductive step in which $r > 1$. So, we can regard 
$(m_1,\dots,m_n) \in \varphi^{(r-1)}_{j_1}(		\bb{I}^{(n)}_{m+1,j_1}			)$.

\textbf{Case $2$ : $k > 1$ and $a_{j_k} \geq 2$}

In this case, we can regard $(m_1,\dots,m_n) \in \varphi^{(a_{j_1},\dots,a_{j_k} - 1)}_{j_1,\dots,j_k}(\bb{I}^{(n)}_{m+1,j_k})$.

\textbf{Case $3$ : $k > 1$ and $a_{j_k} = 1$}

In this case, we can regard $(m_1,\dots,m_n) \in \varphi^{(a_{j_1},\dots,a_{j_{k-1}})}_{j_1,\dots,j_{k-1}}(\bb{I}^{(n)}_{m+1,j_k})$. 

\vspace{0.3cm}

By using the similar argument, we can see that $(m_1,\dots,m_n) \in \varphi^{(b_{\ell_1},\dots,b_{\ell_{k^\prime}})}_{\ell_1,\dots,\ell_{k^\prime}}(\bb{I}^{(n)}_{m,p^\prime})$ implies that 
$(m_1,\dots,m_n)$ have to lie in one of the following cases:

\textbf{Case $A$ : $k^\prime = 1$}

In this case, we can regard $(m_1,\dots,m_n) \in \varphi^{(r-1)}_{\ell_1}(		\bb{I}^{(n)}_{m+1,\ell_1}			)$.

\textbf{Case $B$ : $k^\prime > 1$ and $b_{\ell_{k^\prime}} \geq 2$}

In this case, we can regard $(m_1,\dots,m_n) \in \varphi^{(b_{\ell_1},\dots,b_{\ell_{k^\prime}} - 1)}_{\ell_1,\dots,\ell_{k^\prime}}(\bb{I}^{(n)}_{m+1,\ell_{k^\prime}})$. 

\textbf{Case $C$ : $k^\prime > 1$ and $b_{\ell_{k^\prime}} = 1$}

In this case, we can regard $(m_1,\dots,m_n) \in \varphi^{(b_{\ell_1},\dots,b_{\ell_{k^\prime - 1}} )}_{\ell_1,\dots,\ell_{k^\prime - 1}}(\bb{I}^{(n)}_{m+1,\ell_{k^\prime}})$. 

\vspace{0.3cm}
We have to show that for any $i \in \{1,2,3\}$ and $j \in \{A,B,C\}$, \textbf{Case $i$} $\cap$ \textbf{Case $j$} always leads to contradiction. Since the idea in the proof of various cases are similar, here we show only the cases $(i,j) = (2,A), (2,B), (2,C)$. 

In the case $(i,j) = (2,A)$, we see that $(m_1,\dots,m_n) \in \varphi^{(a_{j_1},\dots,a_{j_k} - 1)}_{j_1,\dots,j_k}(\bb{I}^{(n)}_{m+1,j_k})$ ($k \geq 1$ and $a_{j_k} \geq 2$), and $(m_1,\dots,m_n) \in \varphi^{(r-1)}_{\ell_1}(		\bb{I}^{(n)}_{m+1,\ell_1}			)$. From induction hypothesis, we obtain that 
\begin{align*}
	\varphi^{(a_{j_1},\dots,a_{j_k} - 1)}_{j_1,\dots,j_k}(\bb{I}^{(n)}_{m+1,j_k})
	\,\, \cap \,\, 
	\varphi^{(r-1)}_{\ell_1}(		\bb{I}^{(n)}_{m+1,\ell_1}			)
	= 
	\emptyset
\end{align*}
This leads to \textbf{contradiction}.

In the case $(i,j) = (2,B)$, we can regard $(m_1,\dots,m_n) \in \varphi^{(a_{j_1},\dots,a_{j_k})}_{j_1,\dots,j_k}(\bb{I}^{(n)}_{m,p}) \,\, \cap \,\, 
\varphi^{(b_{\ell_1},\dots,b_{\ell_{k^\prime}})}_{\ell_1,\dots,\ell_{k^\prime}}(\bb{I}^{(n)}_{m,p^\prime})$
as 
\begin{align}
	(m_1,\dots,m_n) \in 
	\varphi^{(a_{j_1},\dots,a_{j_k} - 1)}_{j_1,\dots,j_k}(\bb{I}^{(n)}_{m+1,j_k}) 
	\,\, \cap \,\, 
	\varphi^{(b_{\ell_1},\dots,b_{\ell_{k^\prime}} - 1)}_{\ell_1,\dots,\ell_{k^\prime}}(\bb{I}^{(n)}_{m+1,\ell_{k^\prime}})
\end{align}
where $k > 1, a_{j_k} \geq 2$ and $k^\prime > 1, b_{\ell_{k^\prime}} \geq 2$. If $k \neq k^\prime$, we can immediately conclude from the induction hypothesis that 
\begin{align}
	\varphi^{(a_{j_1},\dots,a_{j_k} - 1)}_{j_1,\dots,j_k}(\bb{I}^{(n)}_{m+1,j_k}) 
	\,\, \cap \,\, 
	\varphi^{(b_{\ell_1},\dots,b_{\ell_{k^\prime}} - 1)}_{\ell_1,\dots,\ell_{k^\prime}}(\bb{I}^{(n)}_{m+1,\ell_{k^\prime}})
	= \emptyset,
\end{align}
and this leads to contradiction. So, we get that $k = k^\prime$. Since $(k,p) \neq (k^\prime,p^\prime)$, we obtain that $p \neq p^\prime$. Without loss of generality, assume that $p < p^\prime$. We know that 
\begin{align}
	\varphi^{(a_{j_1},\dots,a_{j_k})}_{j_1,\dots,j_k}(\bb{I}^{(n)}_{m,p}) \,\, \cap \,\, 
	\varphi^{(b_{\ell_1},\dots,b_{\ell_{k}})}_{\ell_1,\dots,\ell_{k}}(\bb{I}^{(n)}_{m,p^\prime}) \neq \emptyset
	\label{a12-2139}
\end{align}
because it contains $(m_1,\dots,m_n)$ as an element. 
Since $p < p^\prime$, we get that $j_k = \ell_{k} = p$. Otherwise, the intersection \eqref{a12-2139} is empty. 
From the fact that $j_k = \ell_{k} = p$, we can conclude that for any $i \in \{1,\dots,k-1\}$, $j_i = \ell_{i} \text{ and } a_{j_i} = b_{\ell_{i}}$. 
Since $a_{j_1} + \cdots + a_{j_k} = b_{\ell_1} + \cdots + b_{\ell_{k}}$, we conclude that $a_{j_k} = b_{\ell_k}$, i.e. $a_p = b_p$. 

On the other hand, from the fact that $(m_1,\dots,m_n) \in \varphi^{(a_{j_1},\dots,a_{j_k})}_{j_1,\dots,j_k}(\bb{I}^{(n)}_{m,p}) \,\, \cap \,\, 
\varphi^{(b_{\ell_1},\dots,b_{\ell_{k}})}_{\ell_1,\dots,\ell_{k}}(\bb{I}^{(n)}_{m,p^\prime})$, there exists $m^\prime_p \in \bb{Z}^{\geq 1}$ such that $m_p = m^\prime_p + a_p = b_p$. So, $a_p \neq b_p$, and this leads to \textbf{contradiction}. 

In the case $(i,j) = (2,C)$, we see that $(m_1,\dots,m_n) \in \varphi^{(a_{j_1},\dots,a_{j_k} - 1)}_{j_1,\dots,j_k}(\bb{I}^{(n)}_{m+1,j_k}) \,\, \cap \,\, \varphi^{(b_{\ell_1},\dots,b_{\ell_{k^\prime - 1}} )}_{\ell_1,\dots,\ell_{k^\prime - 1}}(\bb{I}^{(n)}_{m+1,\ell_{k^\prime}})$. If $j_k \neq \ell_{k^\prime}$, we can immediately conclude from induction hypothesis that 
\begin{align*}
	\varphi^{(a_{j_1},\dots,a_{j_k} - 1)}_{j_1,\dots,j_k}(\bb{I}^{(n)}_{m+1,j_k}) \,\, \cap \,\, \varphi^{(b_{\ell_1},\dots,b_{\ell_{k^\prime - 1}} )}_{\ell_1,\dots,\ell_{k^\prime - 1}}(\bb{I}^{(n)}_{m+1,\ell_{k^\prime}})
	= 
	\emptyset.
\end{align*}
Thus, $j_k = \ell_{k^\prime}$. However, since $\ell_{k^\prime - 1} < \ell_{k^\prime} = j_k$, the tuples $(j_1,\dots,j_k)$ and $(\ell_1,\dots,\ell_{k^\prime - 1})$ are different. Thus, from induction hypothesis, we get that the intersection must be empty. This leads to \textbf{contradiction}.

\underline{\textit{Proof of (2.2)} :}
Since the sequences $1 \leq j_1 < j_2 < \cdots < j_k \leq p$ and $1 \leq \ell_1 < \ell_2 < \cdots < \ell_k \leq p$ are different, there exists $t \in \{1,\dots,k\}$ such that 
\begin{itemize}
	\item $j_t \neq \ell_t$
	\item $j_1 = \ell_1, \dots, j_{t-1} = \ell_{t-1}$
\end{itemize}
Suppose for contradiction that $\varphi^{(a_{j_1},\dots,a_{j_k})}_{j_1,\dots,j_k}(\bb{I}^{(n)}_{m,p}) \,\, \cap \,\, 
\varphi^{(b_{\ell_1},\dots,b_{\ell_{k}})}_{\ell_1,\dots,\ell_{k}}(\bb{I}^{(n)}_{m,p}) \neq \emptyset$, and pick 
\begin{align*}
	(m_1,\dots,m_n) \in \varphi^{(a_{j_1},\dots,a_{j_k})}_{j_1,\dots,j_k}(\bb{I}^{(n)}_{m,p}) \,\, \cap \,\, 
	\varphi^{(b_{\ell_1},\dots,b_{\ell_{k}})}_{\ell_1,\dots,\ell_{k}}(\bb{I}^{(n)}_{m,p}). 
\end{align*}

If $t \neq k$, then we get that 
\begin{itemize}
	\item $j_1 = \ell_1 < p \,\, , \dots, \,\, j_{t - 1} = \ell_{t - 1} < p$,
	\item $j_t \neq \ell_t$ and $j_t < p, \ell_t < p$. 
\end{itemize}
Without loss of generality, we assume that $j_t < \ell_t$. Since $(m_1,\dots,m_n) \in \varphi^{(a_{j_1},\dots,a_{j_k})}_{j_1,\dots,j_k}(\bb{I}^{(n)}_{m,p})$, we get that $m_{j_t} = a_{j_t}$. On the other hand, since 
$(m_1,\dots,m_n) \in \varphi^{(b_{\ell_1},\dots,b_{\ell_{k}})}_{\ell_1,\dots,\ell_{k}}(\bb{I}^{(n)}_{m,p})$, we obtain that $m_{j_t} = 0$. Thus, $a_{j_t} = 0$. Nevertheless, this \textbf{contradicts} to the fact that $a_{j_t} \neq 0$. 

If $t = k$, then we get that 
\begin{itemize}
	\item $j_1 = \ell_1 < p \,\, , \dots, \,\, j_{k - 1} = \ell_{k - 1} < p$,
	\item $j_{k} \neq \ell_k$.
\end{itemize}
Without loss of generality, we assume that $j_k < \ell_k$. Since $(m_1,\dots,m_n) \in \varphi^{(a_{j_1},\dots,a_{j_k})}_{j_1,\dots,j_k}(\bb{I}^{(n)}_{m,p})$, we get that $m_{j_k} \neq 0$. On the other hand, since $(m_1,\dots,m_n) \in \varphi^{(b_{\ell_1},\dots,b_{\ell_{k}})}_{\ell_1,\dots,\ell_{k}}(\bb{I}^{(n)}_{m,p})$, we obtain that $m_{j_k} = 0$. Therefore, $\varphi^{(a_{j_1},\dots,a_{j_k})}_{j_1,\dots,j_k}(\bb{I}^{(n)}_{m,p}) \,\, \cap \,\, 
\varphi^{(b_{\ell_1},\dots,b_{\ell_{k}})}_{\ell_1,\dots,\ell_{k}}(\bb{I}^{(n)}_{m,p}) = \emptyset$. 
This leads to \textbf{contradiction}. 

\underline{\textit{Proof of (2.3)} :}
Suppose for contradiction that $\varphi^{(a_{j_1},\dots,a_{j_k})}_{j_1,\dots,j_k}(\bb{I}^{(n)}_{m,p}) \,\, \cap \,\, 
\varphi^{(b_{j_1},\dots,b_{j_{k}})}_{j_1,\dots,j_{k}}(\bb{I}^{(n)}_{m,p}) \neq \emptyset$, and choose 
\begin{align*}
	(m_1,\dots,m_n) \in \varphi^{(a_{j_1},\dots,a_{j_k})}_{j_1,\dots,j_k}(\bb{I}^{(n)}_{m,p}) \,\, \cap \,\, 
	\varphi^{(b_{j_1},\dots,b_{j_{k}})}_{j_1,\dots,j_{k}}(\bb{I}^{(n)}_{m,p}).
\end{align*}
So, there exist $(m^\prime_1,\dots,m^\prime_n) \in \bb{I}^{(n)}_{m,p}$ and $(m^{\prime\prime}_1,\dots,m^{\prime\prime}_n) \in \bb{I}^{(n)}_{m,p}$ such that 
\begin{align*}
	&m_{j_1} = m^\prime_{j_1} + a_{j_1} = m^{\prime\prime}_{j_1} + b_{j_1} 
	\\
	&m_{j_2} = m^\prime_{j_2} + a_{j_2} = m^{\prime\prime}_{j_2} + b_{j_2} 
	\\
	&\hspace{2.5cm}\vdots
\end{align*}
If $j_1 < \cdots < j_k < p$, we then get that $(a_{j_1},\dots,a_{j_k}) = (b_{j_1},\dots,b_{j_k})$. This leads to contradiction since we assume in the case (2.3) that $(a_{j_1},\dots,a_{j_k}) \neq (b_{j_1},\dots,b_{j_k})$. Therefore, we get that
\begin{align*}
	j_1 < \cdots < j_{k-1} < p \text{ and } j_k = p.
\end{align*}
Thus, $a_{j_1} = b_{j_1}, \dots, a_{j_{k-1}} = b_{j_{k-1}}$. 
Consequently, $a_{j_k} = b_{j_k}$, and we get that $(a_{j_1},\dots,a_{j_k}) = (b_{j_1},\dots,b_{j_k})$. This leads to \textbf{contradiction}. 

\vspace{0.3cm}

So, we have shown the statement \eqref{A5-1245}. This closes the induction.


\end{document}